   \definecolor{orange}{rgb}{0.75,0.5,0.1}
   \definecolor{myblue}{rgb}{0.8,0.85,1}
   \definecolor{mydarkblue}{rgb}{0,0.08,0.45}
   \definecolor{myotherblue}{rgb}{0,0.1,0.7}
   \definecolor{myotherblue}{rgb}{0.5,0,0.5}
   \definecolor{dimgrey}{rgb}{0.2,0.4,1.0}
\lstdefinelanguage{Babybel}
{
  morekeywords={
    type, @type, @@@signature, def, open, let, rec, fun, function
  },
  basicstyle=\ttfamily,
  keepspaces=true,
  sensitive,
  morecomment=[l]{\%},
  morecomment=[n]{\%\{}{\}\%},
  morestring=[b]",
  showstringspaces=false
}[keywords,comments,strings]
\slshape\color{DimGrey},
\lstdefinelanguage{Beluga}
{
  morekeywords={and, block, case, of, mlam, fn, impossible, let, in, schema,
    some, rec, type, ctype, prop, stratified, inductive, coinductive, LF, if, then,
    else, total},
  keepspaces=true,
  sensitive,
  morecomment=[l]{\%},
  morecomment=[n]{\%\{}{\}\%},
  morestring=[b]"
}[keywords,comments,strings]
\slshape\color{DimGrey},
\lstdefinelanguage{Orca}
{
  morekeywords={
    set, spec, where, def, thm, data, ctx
  },
  basicstyle=\scriptsize\ttfamily,
  keywordstyle=\bfseries,
  keepspaces=true,
  sensitive,
  morecomment=[l]{(*)},
  morecomment=[n]{(*} {*)},
  morestring=[b]",
  showstringspaces=false
}[keywords,comments,strings]
\slshape\color{DimGrey},
\newcommand{\bnfas}{\mathrel{::=}}
\newcommand{\bnfalt}{\mathrel{\mid}}
\newcommand{\code}[1]{{\normalfont\texttt{#1}}}
\newcommand{\oft}{{\ensuremath{\,\mathrel{:}\,}}}
\newcommand{\checks}{\ensuremath{\mathrel{\Leftarrow}}}
\newcommand{\synths}{\ensuremath{\mathrel{\Rightarrow}}}
\newcommand{\wf}{\,\code{wf}} 
\newcommand{\tran}[1]{\ensuremath{\llbracket {#1} \rrbracket}}
\newcommand{\rl}[1]{\text{\small\code{#1}}}
\newcommand{\const}[1]{\ensuremath{\textbf{#1}}}
\newcommand{\cons}{\ensuremath{\mathrel{::}}}
\newcommand{\app}[2]{\ensuremath{{#1}\,{#2}}}
\newcommand{\lam}[1]{\ensuremath{\lambda\,{#1}.}}
\newcommand{\pit}[2]{\ensuremath{\Pi\,#1\oft#2.}}
\newcommand{\vs}{\\[1em]} 
\newcommand{\ssub}[2]{\ensuremath{[#1/#2]\,}}
\newcommand{\assub}[3]{\ensuremath{\app{\ssub{#1}{#2}}{#3}}}
\newcommand{\asisub}[2]{\ensuremath{\app{#1}{[#2]}}}
\theoremstyle{definition}
\newtheorem{thm}{Theorem}[section]
\newtheorem{lem}{Lemma}
\newcommand{\casee}[2]{\mbox{\code{case}}\,#1\,\mbox{\code{of}}\,#2}
\newcommand{\ibox}[1]{\ensuremath{[#1]}} 
\newcommand{\fne}[2]{\code{fn}\,#1 \Rightarrow #2}
\newcommand{\rece}[3]{\mbox{\code{rec}}\,#1\oft#2=#3}
\newcommand{\mlame}[2]{\mbox{\code{mlam}}\,#1\Rightarrow#2}
\newcommand{\oarr}[1]{\ensuremath{\overrightarrow{#1}}}
\newcommand{\wfctx}{\ \code{ctx}}
\newcommand{\wfsig}{\ \code{sig}}
\newcommand{\wftype}{\ \code{type}}
\newcommand{\wfkind}{\ \code{kind}}
\newcommand{\hsub}[4]{\ensuremath{[#3/#4]^{#1}_{#2}}}
\newcommand{\erased}[1]{\ensuremath{(#1)^-}}
\title{Proofs and Programs about Open Terms}
\author{Francisco Ferreira Ruiz\\ School of Computer Science\\ McGill University, Montr\'eal}
\date{December 2017}
\begin{document}
\pagenumbering{roman}

\maketitle
\thispagestyle{empty} \setcounter{page}{0}
\clearpage


\begin{abstract}
  Formal deductive systems are very common in computer science. They are
used to represent logics, programming languages, and security systems.
Moreover, writing programs that manipulate them and that reason about
them is important and common. Consider proof assistants, language
interpreters, compilers and other software that process input
described by formal systems. This thesis shows that contextual types
can be used to build tools for convenient implementation and reasoning
about deductive systems with binders. We discuss three aspects of
this: the reconstruction of implicit parameters that makes writing
proofs and programs with dependent types easier, the addition of
contextual objects to an existing programming language that make
implementing formal systems with binders easier, and finally, we
explore the idea of embedding the logical framework LF using
contextual types in fully dependently typed theory. These are three
aspects of the same message: programming using the right abstraction
allows us to solve deeper problems with less effort. In this sense we
want: easier to write programs and proofs (with implicit parameters),
languages that support binders (by embedding a syntactic framework
using contextual types), and the power of the logical framework LF
with the expressivity of dependent types.

\end{abstract}

\clearpage

\renewcommand{\abstractname}{R\'esum\'e}
\begin{abstract}
  Les systèmes critiques comme les systèmes embarqués dans les avions
requièrent un niveau de sécurité élevé qui peut seulement être obtenu
par des systèmes formels. Les compilateurs certifiés ainsi que les
assistants de preuve sont des programmes qui manipulent et qui
raisonnent sur ces systèmes. Dans cette thèse, nous utilisons les
types contextuels afin de construire de tels outils permettant
notamment de raisonner sur des systèmes formels avec lieurs (binders).
En particulier, nous abordons les trois points suivants : la
reconstruction des paramètres implicites, ce qui simplifie l'écriture
des preuves et programmes avec types dépendants ; l'ajout d'objets
contextuels à un langage de programmation existant qui facilitent la
mise en œuvre des systèmes formels avec des lieurs ; et l'intégration
des types contextuels avec des types dépendants au-dessus du cadre
logique (logical frameworks) LF. Ces trois facettes reflètent le
message suivant : le choix d'une abstraction adéquate nous permet de
construire des outils capables de résoudre des problèmes plus
complexes plus facilement. Ceci se traduit par la construction de
langages de programmation utilisant des paramètres implicites,
supportant des lieurs (en intégrant un cadre syntaxique utilisant des
types contextuels), et qui utilisent la puissance du cadre logique LF
jointe à une théorie des types a la Martin-L\"of.

\end{abstract}

\clearpage

\hfill
\parbox{0.65\linewidth}{%
\centering
A Juan, a la Nanana y a la Chongola.
}




\clearpage

\section*{Acknowledgements}

\footnotesize
Oh, the places I’ve seen! But getting to this point in the adventure
would not have been possible or as much fun without the great people
that supported me. First, I want to thank my doctoral advisor Brigitte
Pientka whose passion and drive will always be examples to me. And
Stefan Monnier for his confidence and always relevant guidance. I want
to thank Prakash Panangaden for showing me the excitement that new
ideas can provide. Laurie Hendren, a person that makes every
interaction an insightful occasion, and whose laughter can cheer you
up even through two doors!

Andrew Cave and I started at the same time, I will be for ever in debt
to all the insights I get from each discussion. Andrew’s discipline
and empathy are a treasure to me. David Thibodeau, we shared so many
discussions, pair programming, so many ideas. I could have not asked
for a better friend to share these adventures with.

Annie Ying who offered support and understanding that you can only get
from shared experiences. Pablo Duboue a dear friend that always has an
interesting project or two and he was kind enough to let me
participate in some.

The many friends that kept me sane: Stefan Knudsen, Shawn Otis, Rohan
Jacob-Rao, Steven Thephsourinthone, Agata Murawska and François Thiré,
Milena Scaccia, Caroline Berger, Yam Chhetri, Gayane Petrosyan, Aïna
Linn Georges, Vincent Foley, Eric Lavoie, Larry Diehl, Alanna De
Bortoli, Mathieu Boespflug, Tao Xue, \dots

I do not have space to enumerate my Argentine chosen family, your love
and support keep me going every day of my life!

Getting to this point was difficult. But when the night was darkest I
knew that Juan Buzzetti would be unconditionally there, no questions
asked and that made all the difference in the world.

Finally, I want to thank my fantastic defense comittee: Prakash
Panangaden, Clark Verbrugge, Stefan Monnier and Joerg Kienzle for their
kind and insightful questions. Finally, special thanks to the external
reviewer James Cheney.

I could not have done it without all of you. Thank you, so very much!
\normalsize

\clearpage

\section*{Contributions of the Author}

\begin{itemize}
\item Chapter~\ref{chp:recon} is an extended version
  \citep{Ferreira:2014} where I am first author and I developed the
  ideas under my co-author supervision.
\item Chapter~\ref{chp:babybel} is an extended version
  \citep{Ferreira:2017} where I am also first author and where I
  developed the idea and prototype under my advisor's (also my
  co-author) supervision.
\item Chapter~\ref{sec:ctandtt} is a heavily extended version of
  \citep{Ferreira:2017b}, work that I did with David Thibodeau and
  Brigitte Pientka. The idea started as a consequence of the work on
  chapter~\ref{chp:babybel} and we then developed together the
  implementation and the ideas that lead to the theory. This chapter
  spells out the theory for the first time.
\end{itemize}

\clearpage
\tableofcontents
\clearpage
\listoffigures
\clearpage

\pagenumbering{arabic}


\chapter{Introduction}

Proofs are fundamental in mathematics and computer science.  For the
purpose of this thesis we will consider that a proof is an irrefutable
argument in favour of a statement (i.e. a theorem). A formal proof is
one that is presented as a step by step argument in some foundational
system like ZFC set theory. The validity of a formal proof is reduced
to mechanically checking all of its steps in regards to the rules
established in its foundational system. On the one hand, formal proofs
are, as one might expect, very verbose and long to validate. On the
other hand, because they are a sequence of elementary steps in some
theory, they are easy to verify by computers.

A formal proof is straightforward to check but laborious to
construct. Therefore, proof assistant software was written as soon as
computers became fast enough to validate them. The
AUTOMATH~\citep{debruijn:1983} system was a trailblazer in this
field. The system was meant as a language to formalize mathematics and
it introduced many ideas that we take for granted today. It pioneered,
for example, ideas like using strongly typed $\lambda$-calculi as a
formalism. Today there are many proof assistants that are in use. They
are based on diverse formalisms, for example the
Isabelle~\citep{Paulson88cade} system that can use ZFC or Higher-Order
Logic, or systems based in type theory like Coq~\cite{CoqManual}, or
Agda~\citep{Norell:phd07} and others.

The job of a proof assistant is not only to validate proofs but also
to assist the user in describing the full formal proof. This is
usually achieved by allowing one to omit parts that can be
automatically reconstructed, invoking an automated decision procedure,
or tactics (a mechanism for defining programs that compute proofs).
Mechanizing a theorem is describing a theorem in a way that a proof
assistant is able to generate and validate the complete formal proof.

The history of proof assistants or interactive theorem provers while
not long is rich and eventful. A good reference is offered
by~\citet{Harrison:2014}. As mentioned before, there are proof
assistants based on several formalisms, among them several variations
on type theory. Some (non-exhaustive) examples and their formalisms
are:
\begin{itemize}
\item Church's simple theory of types~\citep{Church:1940}
  \begin{itemize}
  \item HOL4~\citep{Slind:2008},
  \item HOL light~\citep{Harrison09a} and
  \item Isabelle/HOL~\citep{Nipkow-Paulson-Wenzel:2002}),
  \end{itemize}

\item Martin-L\"of's type theory~\citep{Martin-Loef84a} (MLTT)
  \begin{itemize}
  \item the Nuprl~\citep{Constable:1986} proof assistant that is based on an extension of MLTT.
  \item Coq~\citep{CoqManual} that uses an extension of the calculus of
    constructions~\citep{Coquand:1988},
  \item Agda~\citep{Norell:phd07} roughly
    based on UTT~\citep{Luo:1994}
  \end{itemize}

\item the logical framework LF~\citep{Harper93jacm}
  \begin{itemize}
  \item Twelf~\citep{Pfenning99cade} that uses logic programming with
    LF definitions
  \item Beluga~\citep{Pientka:CADE15} that implements a reasoning
    language based on first order logic with induction on LF
    specifications.
  \end{itemize}
\end{itemize}

In this thesis, we will mostly discuss systems that take advantage of
the Curry-Howard correspondence~\citep{Howard80}, where propositions
correspond to types and proofs correspond to terms (some proof
assistants based on this idea are Agda, Coq and Beluga). In these
systems, higher-order statements become dependent functions, that is,
functions where the resulting type depends on the value of the
parameter. Theorems are then represented as function types and proofs
by induction are represented by well-founded recursion and pattern
matching for case analysis. Because type theory can be seen as a
programming language~\citep{NordstroemPetersonSmith90a} these proof
assistants can be used simultaneously as provers and as programming
environments. This is a key advantage of constructive logics and type
theory in particular, and it is instrumental to the subject matter of
this thesis.

Proof assistants based on MLTT or extensions of the Calculus of
Constructions like Coq are highly expressive and significant
mathematical results have been formalized in them. For example the
formalization of the odd order theorem by~\citet{Gonthier:2013} not
only fully mechanizes the existing proof, but in order to do so it
provides a library of algebraic definitions and theorems. With many
logics, expressivity is not a problem; however, having the right
setting and properties is crucial. For example, inductive types can be
internally encoded in the pure calculus of
constructions~\citep{Pfenning:1990}. Nevertheless, the calculus of
inductive constructions was designed to have inductive types as a
first-class construct in order to have nicer computational behaviour
and better usability~\citep{Paulin-Mohring:TLCA93}. A theme of this
work is to take advantage of good representations (namely, using the
logical framework LF and contextual types) to allow for a
straightforward representation of theorems and programs about
structures with binders and hypothetical judgments.

The kind of systems we want to represent are deductive systems. These
are systems specified by axioms and deduction rules and they often
include the notion of variables and binders. This class of systems is
very large, and includes, for example: programming languages, logics,
and their meta-theory together with their implementation. When dealing
with deductive systems we distinguish two related activities: the
first one is implementing formal proofs about deductive systems (e.g.:
a language is type safe, or a logic is normalizing). When talking
about some object language (i.e.: the programming language or logic
under study) we construct proofs about aspects of its meta-theory. We
refer to this as \emph{reasoning about the specification of the
  language} or working on the meta-theory of the specified deductive
system. Second, we want to specify and perform computations over these
systems, for example: compilers, evaluators and normalizers. We will
refer to this as \emph{computing with specification language} or
simply writing programs that manipulate objects in the deductive
systems.

Deductive systems that represent programming languages and logics
typically have the idea of bound variables, that is some terms
introduce new variables. When writing proofs or programs about such
specifications one usually needs to recursively inspect the expression
under a binder. Consider the first order formula: $\forall x . P(x)$
where one finds the predicate $P$ that may contain free occurrences of
$x$ a variable that is bound outside by the universal quantifier. In
this situation the sub-term might have free variables (i.e.: a
variable bound outside of the term). Such terms are called open terms.
Implementing and reasoning about open terms of systems with variables
and binders is a common activity for computer scientists. Some typical
examples of this are implementing new programming languages, reasoning
about the meta-theory of languages and logics, and implementing proof
assistants. Therefore, when working with open objects (terms with free
variables) one finds oneself needing to represent variables, binders
and substitutions.

The meaning of free variables cannot be ignored. One solution commonly
used while reasoning both on paper and formally with a proof
assistant, is to track free variables with a context that gives
meaning to all the free variables in the term. Here we say the context
binds the variables. This thesis explores a particular approach to
this problem. Concretely, we will explore the use of contextual
types~\citep{Nanevski:ICML05}, that represent the type of an
expression together with the context that gives meaning to all its
free variables, to program and reason about higher-order abstract
syntax (HOAS). Higher-order representations like HOAS, are used to
represent binders reusing the function space, and function application
to represent substitution, this approach is exemplified by the logical
framework LF. While contextual types and the logical framework LF
provide support for specifying formal systems, one can reason about
these structures by implementing pattern matching and recursion over
them. This has been done before, for example languages like
Beluga~\citep{Pientka:CADE15} and other languages like
Delphin~\citep{Poswolsky:DelphinDesc08}. Actually, Delphin does not
have an idea of first class contexts but it does manipulate terms
using HOAS.

This thesis shows that contextual types can be used to build tools for
convenient implementation and reasoning about deductive systems with
binders. We discuss the specification, reasoning and programming with
open terms from the following points of view:
\begin{itemize}
\item How to reconstruct types in dependently typed systems. This is
  important to make the system accessible. Otherwise dependently typed
  programs are very verbose. We describe a formal algorithm for the
  sound reconstruction of implicit parameters (i.e.: parameters that
  the user does not write and the system infers) in systems with
  dependent types and a rich index domain like Beluga.

\item How to integrate contextual types in an industrial strength
  functional programming language (in this case
  OCaml~\citep{ocamlManual}) to allow for type safe programming with
  binders.

\item How to extend Martin-L\"of's type theory~\citep{Martin-Loef84a}
  (MLTT) as the reasoning/programming language for a system with
  contextual types.
\end{itemize}

These three points of view are intimately related. The objective of
implicit parameter reconstruction is to simplify writing proofs about
specifications and dependent pattern matching (which can be seen as
case analysis). Integrating contextual object with existing
programming languages allows for ease in writing programs that
manipulate objects with variables and binders. To conclude, combining
contextual objects and specifications with fully dependently typed
language allows for implementing proofs about specifications,
implementing programs over specifications, and particularly for
implementing proofs about said programs.

\section*{Main Contributions}

\subsection*{Reconstruction of Implicit Parameters}

This chapter presents the design of a source language with index types
and dependent pattern matching together with an elaboration phase that
reconstructs omitted arguments. We differentiate between implicit
arguments; those that the user does not write, and explicit arguments,
which must be provided by the user. This language and its elaboration
describe the corresponding reconstruction of the computational
language of the Beluga system.

The elaboration is type directed and it infers omitted arguments to
produce a closed well-typed program. Notably, we describe the
reconstruction of pattern matching expressions. Specifically, using
the design of pattern matching inspired by Beluga that provides nested
dependent pattern matching without type annotations. This is an
important distinction from other systems that either do not provide
nested dependent matching statements (such as
Agda~\citep{Norell:phd07} or Idris~\citep{Brady:JFP13}) or require
annotations for the return type (as the Coq~\citep{CoqManual} proof
assistant).

Finally, we prove that this is sound, that is, that the successful
elaboration of a term implies that it is well-typed in the target
language. Part of this work was published in \citet{Ferreira:2014}.

\subsection*{Contextual Types and Programming Languages}

Implementing contextual types in a current and existing programming
language brings some of the power of Beluga and HOAS or $\lambda$-tree
definitions~\citep{Miller:1999} to existing (simply-typed) functional
programming languages. Using contextual types and a syntactic
framework (SF) based on modal S4~\citep{Nanevski:ICML05,Davies:ACM01},
programmers can manipulate open objects by pattern matching with a
type system that guarantees the binders do not escape their scopes. We
show that this language extension can be translated into a language
that supports Generalized Abstract Data Types (i.e.: GADTs) using a
deep embedding of SF.

The other contribution is Babybel, a prototype implementation of these
ideas. Babybel is implemented as a syntax extension of the OCaml
language. It takes advantage of OCaml's type system to ensure that the
translation is type preserving and allows the users to take advantage
of GADTs to represent some inductive predicates over syntax, like
context relations. Finally, part of this work was published
in~\citet{Ferreira:2017}.

\subsection*{Contextual Types and  Type Theory}

The main contribution of this chapter is a calculus that integrates
the logical framework LF for specifications with Martin-L\"of's type
theory as a reasoning/computation language. This is achieved using
contextual types and it can be seen as an extension of the technique
for adding contextual types to programming languages. Morally it
presents an extended version of the Beluga language with fully
dependent typing. MLTT does not have a phase separation between
type-checking and evaluation allowing for the interleaving of
computation and specification. Additionally, because MLTT allows for
reasoning about functions, the theory permits writing computations and
proving properties about the computations.

A further contribution, is the Orca prototype that implements these
ideas. The design of the Orca language also provides an interesting
type directed syntax reconstruction to be able to disambiguate between
computational terms and specification terms which provides a nicer
user experience. Some of these ideas were presented at \citet{Ferreira:2017b}.

\chapter{Deductive Systems with Binders}\label{chp:dedsys}

\section{Introduction}

Deductive systems presented using axioms and deduction rules are
designed to formally describe logics, programming languages, and proof
assistants. Therefore they are widely used in the study of programming
languages and their properties. Examples abound: from the description
of modal logic systems in~\citet{Pfenning01mscs}, to the presentation
of the core calculus of a reactive programming
language~\citep{Cave:POPL14}, to the specification of the addition of
dependent types to Haskell, a real world general purpose
language~\citep{Weirich:2017}

Most of the formal systems that we discuss in this thesis require
variables and binders (the place where new variables are introduced).
This a delicate aspect of the presentation of a formal system, where
one must be aware of issues like variables not escaping their scopes,
comparing terms up-to the renaming of bound variables (i.e.:
$\alpha$-equivalent terms), and that the substitution operation shall
not capture free variables. Consider for example, the untyped
$\lambda$-calculus, its syntax is:

\begin{displaymath}
  \begin{array}{rrclr}
    \mbox{Terms} & M, N & \bnfas & x & \mbox{variables}\\
    & & \bnfalt & \lam x M & \mbox{function abstraction}\\
    & & \bnfalt & \app M N & \mbox{application}
  \end{array}
\end{displaymath}

Note that $x$ is a name from a set that contains countably many
distinct variable names. An important operation in the
$\lambda$-calculus is that of substitution, that allows the instantiation of
a variable with a term. We write $\assub M x N$ to say replace
every occurrence of variable $x$ for term $M$ in term $N$. The usual
definition is done inductively on the structure of $N$ in the
following way:

\begin{displaymath}
  \begin{array}{rrcll}
    \ssub M x& y & = & M &\mbox{when $x = y$} \\
    \ssub M x& y & = & y &\mbox{when $x \neq y$} \\
    \ssub M x& \lam z N & = & \lam z {\assub M x N} &\mbox{with $z$ not free in $M$}\\
    \ssub M x& \app N {N'} & = & \app {(\assub M x N)} {(\assub M x {N'})} \\
  \end{array}
\end{displaymath}

The definition is straightforward except that at first sight, the
reader might think that it is not a total operation. After all, in the
abstraction case, what is one supposed to do if $z$ does appear free
in $M$? (i.e.: a variable is free if its binder is not part of the
term). The answer is that the substitution needs to be applied to a
term where $z$ has been renamed (i.e.: the substitution continues with
an $\alpha$-equivalent version of the term). This issue is the central
idea in capture avoiding substitution. While this is usually left
implicit in a description, it remains an issue in implementations and
mechanized formal presentations.

\section{The Representation of Binders}

Representing variables as string and adding side conditions when
necessary for substitution works well enough for paper or black-board
presentations. However, names as strings are less common in computer
implementations. For example, N. G. de~Bruijn, when designing
Automath~\citep{deBruijNAMS:1991} (one of the first proof assistants)
proposed a nameless representation~\citep{de1972lambda} where a
variable is the distance, in number of binders, to the place where the
variable is bound. On one hand this eliminates issues with name
capture, but on the other hand, humans usually find these terms very
difficult to understand. This idea was later refined
by~\citet{Altenkirch:TLCA93} as well-scoped de~Bruijn indices where
they use dependent types to enforce scoping invariants.

An alternative approach is using nominal logic~\citep{Gabbay:LICS99}
or categorical approaches such as~\citep{Fiore:2008}, to give a
precise mathematical definition the ideas of $\alpha$-equality and
name capture.

A final approach, and the one that is discussed in this thesis the
most, is using different versions of the $\lambda$-calculus as
representation frameworks and reusing their function space to
introduce binders. This allows for a simple implementation of
substitution as one simply reuses the notion of substitution of the
underlying calculus. This was first used by Church in
\citep{Church:1940}, but the logical framework LF~\citep{Harper93jacm}
takes full advantage of the idea. LF proposes a dependently typed
typed $\lambda$-calculus as a representation logic that is able to
encode syntax together with judgments. This technique is usually known
as Higher-Order Abstract Syntax (HOAS) because of the reutilization of
the the function space to implement binders. A related technique with
similar presentation is the use of
$\lambda$-trees~\citep{Miller:1999}.

\section{Higher Order Abstract Syntax}

The idea of representing binders using the function space of
$\lambda$-calculi to represent binders and dependent types to
represent judgments is a key insight provided by LF and allows for the
mechanization of deductive systems in a direct and high-level way. The
use of the function space of LF frees the user from thinking about the
representation of binders, the implementation of substitution and even
proving some substitution lemmas as all this infrastructure is
inherited from the framework itself.
\filbreak

In regular programming languages (that provide higher-order functions)
it is possible to represent binders using the function space of the
language, like in this OCaml example:

\lstset{language=Caml}
\begin{lstlisting}
type exp =
  | Lam of (exp -> exp) (* abstractions *)
  | App of exp * exp (* applications *)

let omega = Lam (fun x -> App (x, x)) (* little omega *)

let rec eval : exp -> exp = function
  | Lam f -> Lam f (* abstractions are values *)
  | App (e1, e2) ->
     begin match eval e1 with
     | Lam f -> eval (f e2) (* function application is substitution *)
     | stuck -> App (e1, e2)
     end
\end{lstlisting}

This short example is enough to show two of the crucial aspects of
HOAS. The first appears in the declaration of the type
\lstinline!exp!, where the constructor for abstractions uses a
function to represent the body of the $\lambda$-expression that
contains a variable. For that reason it is introduced as a
higher-order function. The second aspect is that in this setting the
function application represents substitution, and we can notice this
when reducing applications. To perform the substitution it suffices to
apply the function because binders are implemented by functions.

This short example is also enough to show two important problems: the
first is that the OCaml function space is too rich and there are many
functions that do not represent terms in the $\lambda$-calculus. For
example:
\begin{lstlisting}
let exotic : exp =
  Lam (fun x -> match x with App (_, e) -> e | e -> e)
\end{lstlisting}
This exotic term pattern matches on the shape of its argument
(dropping all the terms in function position) while in the
$\lambda$-calculus variables can only be substituted for a term and
cannot analyze the shape of any term. Thus, this representation, while
it might be convenient, is not adequate and it is not a replacement
for LF. The right framework should provide an adequate representation
(i.e. a function space that is weak enough so that no exotic terms
exist) and should allow for the intensional inspection of the
represented functions. LF is designed with these features in mind
(together with the ability to represent judgments and completely
represent deductive systems).

The other problem is that because functions operate as black boxes, it
is not easy to operate on open terms as the only operation on
functions is application, and thus the only way to get to the body of
the abstraction is to perform a substitution. The issue is that the
function space in OCaml behaves like an extensional function (i.e.:
one can only observe the result of a function application). When
operating with open terms one would need an intensional function space
that gives access to the structure of the implementation of the
function~\citep{Pfenning:LICS01}.

One possible solution is to separate the language that describes
computation from the language that is used for specifications. This
approach was started by \citet{Schurmann:TCS01} when they proposed a
computation language with primitive recursion and a simply typed
language for specifications. This idea was pushed forward in
Delphin~\citep{Poswolsky:DelphinDesc08} when they added support for
dependent types and the logical framework LF. Finally, the
Beluga~\citep{Pientka:POPL08} system uses contextual types together
with logical framework LF. Contextual types describe potentially open
terms (terms with free variables) together with their contexts (that
provide a binding to all the free variables of a term). This allows
for proofs and programs about open objects and that inspect terms
under binders by keeping track of their contexts. This thesis explores
some extensions and implementation concerns for these ideas.

\section{Logical Framework LF}\label{sec:LF}

The logical framework LF provides the means to represent the syntax
and judgments of deductive systems through the use of a dependently
typed calculus related to Martin-L\"of's system of
arities~\citep{NordstroemPetersonSmith90a}.

In LF, judgment and syntactic categories are represented by types, and
constructors represent respectively the inference rules and terms of
the object language. For example Figure~\ref{fig:stlclf} shows how to
represent the simply typed $\lambda$-calculus (STLC) using the
concrete syntax of the Beluga system. There are two types, one for
each syntactic category (i.e.: \lstinline~tp~ for types and
\lstinline~tm~ for term) and the typing judgment is represented by a
dependent type (\lstinline~oft~). In the representation of $lambda$
expressions in terms, notice how binding is represented by a function
(HOAS). The most interesting case is the \lstinline~oft~ judgment that
relates a term to its type. The judgment contains three constructors
that correspond to the typing rules for applications, abstractions and
the constant respectively. Because of the use of HOAS, there is no
need for a rule for variables, as the typing assumptions for variables
are added to the context by the rule \lstinline~t-lam~.

\begin{figure}
  \begin{lstlisting}[language=Beluga]
LF tp : type =
| b :  tp
| arr : tp -> tp -> tp
;

LF tm : type =
| app : tm -> tm -> tm
| lam : (tm -> tm) -> tm
| c : tm
;

LF oft : tm -> tp -> type =
| t-app : oft M (arr S T) -> oft N S -> oft (app M N) T
| t-lam : ({x:tm} oft x S -> oft (M x) T) ->
          oft (lam M) (arr S T)
| t-c : oft c b
;
  \end{lstlisting}
  \caption[The STLC in LF]{The simply typed $\lambda$-calculus in LF}
  \label{fig:stlclf}
\end{figure}

The logical framework LF is a dependently typed theory related to the
$\lambda P$ vertex of the $\lambda$-cube~\citep{Barendregt:1992}. The
particular presentation we use is often called Canonical LF because it
only allows for normal (i.e.: canonical) forms. Regular substitution
might introduce non-normal forms, so substitution is done in an
``hereditary'' way. Hereditary substitutions continue reducing to
avoid the introduction of non-normal forms. This technique was
introduced in the context of Concurrent LF by \citep{Watkins02tr} and
\citep{Cervesato02tr}. In particular this presentation is based on
work by \citet{HarperLicata:JFP07}.

The syntax is as follows:
\begin{displaymath}
  \begin{array}{lrrl}
    \mbox{Kinds} & K & \bnfas & \code{type} \bnfalt \pit x A K\\
    \mbox{Base Types} & P & \bnfas & \const a \bnfalt \app P M\\
    \mbox{Types} & A, B & \bnfas & P \bnfalt \pit x A B\\
    \mbox{Normal Terms} & M & \bnfas & \lam x M \bnfalt R\\
    \mbox{Neutral Terms} & R & \bnfas & \const c \bnfalt x \bnfalt \app R M\\
    \mbox{Contexts} & \Psi & \bnfas & \cdot \bnfalt \Psi,x\oft A\\
    \mbox{Signature} & \Sigma & \bnfas & \cdot \bnfalt \Sigma, \const a \oft K \bnfalt
                                         \Sigma, \const c \oft A \\
  \end{array}
\end{displaymath}

Kinds classify the types and are either $\code{type}$ for
non-dependent kinds (used to represent our syntactic categories) and
$\pit x A K$ for judgments. Base types are either atomic types
$\const a$ or type constructors applied to terms $\app P M$. Types
classify terms and are: base types $P$, or $\pit x A B$ a dependent
function space, when $x$ does not appear in $B$ we write $A\to B$ to
indicate the simply typed function space. Terms are split between
normal and neutral terms, with the objective of preventing $\beta$
reducible terms. So normal terms contain function abstractions
$\lam x M$ and neutral terms. And neutral terms are constructors
$\const c$, variables $x$ bound in abstractions or dependent
functions, and applications of neutral terms to normal terms
$\app R M$. Finally, contexts $\Psi$ store typing assumptions for
bound variables, and the signature $\Sigma$ contains the user
definitions (that represent the object language as in
Figure~\ref{fig:stlclf}).

Well typed LF terms are defined by these judgments:
\begin{itemize}
\item Well formed signatures, contexts and kinds and types are given by:
  \begin{itemize}
  \item $\boxed{\vdash \Sigma\ \code{sig}}$: $\Sigma$ is a valid signature.
  \item $\boxed{\vdash_\Sigma \Psi \wfctx}$: $\Psi$ is a well formed context in signature $\Sigma$.
  \item $\boxed{\Psi \vdash_\Sigma K \wfkind}$: Kind $K$ is well formed in context $\Psi$.
  \item $\boxed{\Psi \vdash_\Sigma A \wftype}$: Type $A$ is well formed in context $\Psi$.
  \end{itemize}
\item Well kinded base types and well typed terms are given by:
  \begin{itemize}
  \item $\boxed{\Psi \vdash P \synths K}$: Base type $P$ synthesizes kind $K$ in $\Psi$.
  \item $\boxed{\Psi \vdash M \checks A}$: Normal term $M$ checks against type $A$ in $\Psi$.
  \item $\boxed{\Psi \vdash R \synths A}$: Neutral term $P$ synthesizes type $A$ in $\Psi$.
  \end{itemize}
\end{itemize}

We assume signatures are well formed, and we omit them in the rules
because they do not change during typing. Similarly, for contexts, we
remark that the rules only extend contexts with well formed
assumptions, so we assume that one starts with a well-formed context
(empty or otherwise) and the rules preserve that property. We refer to
\citet{Harper93jacm} and \citet{HarperLicata:JFP07} for a more
detailed discussion of these issues.
\begin{figure}
  \centering
  \begin{displaymath}
    \begin{array}{c}
      \multicolumn{1}{l}
      {\boxed{\vdash \Sigma \wfsig}: \mbox{$\Sigma$ is a valid signature.}}\vs

      \infer[\rl{s-empty}]
      {\vdash \cdot \wfsig}
      {}

      \quad

      \infer[\rl{s-type}]
      {\vdash \Sigma, \const c \oft A \wfsig}
      {\vdash \Sigma \wfsig
      & \vdash_\Sigma A \wftype}

      \vs

      \infer[\rl{s-con}]
      {\vdash \Sigma, \const a \oft K \wfsig}
      {\vdash \Sigma \wfsig
      & \cdot \vdash_\Sigma K \wfkind}

      \vs

      \multicolumn{1}{l}
      {\boxed{\vdash_\Sigma \Psi \wfctx}: \mbox{$\Psi$ is a well formed context in signature $\Sigma$.}}\vs

      \infer[\rl{c-empty}]
      {\vdash_\Sigma \cdot \wfctx}
      {}

      \quad

      \infer[\rl{c-hyp}]
      {\vdash_\Sigma \Psi, x\oft A\wfctx}
      {\vdash_\Sigma \Psi \wfctx
      & \cdot \vdash A \wftype}
    \end{array}
  \end{displaymath}
  \caption{Well formed LF signatures and contexts}
  \label{fig:wflf}
\end{figure}

\begin{figure}
  \centering
  \begin{displaymath}
    \begin{array}{c}
      \multicolumn{1}{l}
      {\boxed{\Psi \vdash_\Sigma K \wfkind}: \mbox{Kind $K$ is well formed in context $\Psi$.}}\vs

      \infer[\rl{k-type}]
      {\Psi\vdash_\Sigma \code{type} \wfkind}
      {}

      \quad

      \infer[\rl{k-pi}]
      {\Psi\vdash_\Sigma \pit x A K \wfkind}
      {\Psi\vdash_\Sigma A \wftype
      & \Psi,x\oft A\vdash_\Sigma K \wfkind}

      \vs

      \multicolumn{1}{l}
      {\boxed{\Psi \vdash_\Sigma A \wftype}: \mbox{Type $A$ is well formed in context $\Psi$.}}\vs

      \infer[\rl{t-base}]
      {\Psi\vdash_\Sigma P \wftype}
      {\Psi\vdash P \synths \code{type}}

      \quad

      \infer[\rl{t-pi}]
      {\Psi\vdash_\Sigma \pit x A B \wftype}
      {\Psi\vdash_\Sigma A \wftype
      & \Psi,x\oft A\vdash_\Sigma B\wftype}
    \end{array}
  \end{displaymath}
  \caption{Well formed LF types and kinds}
  \label{fig:wftypkind}
\end{figure}

\begin{figure}
  \centering
  \begin{displaymath}
    \begin{array}{c}
      \multicolumn{1}{l}
      {\boxed{\Psi \vdash P \synths K}: \mbox{Base type $P$ synthesizes kind $K$ in $\Psi$.}}\vs

      \infer[\rl{p-con}]
      {\Psi\vdash \const a \synths K}
      {a\oft K \in \Sigma}

      \quad

      \infer[\rl{p-app}]
      {\Psi\vdash \app P M \synths \hsub k {\erased A} M x K}
      {\Psi\vdash P \synths \pit x A K
      & \Psi\vdash M\checks A}

      \vs

      \multicolumn{1}{l}
      {\boxed{\Psi \vdash M \checks A}: \mbox{Normal term $M$ checks against type $A$ in $\Psi$.}}\vs

      \infer[\rl{t-lam}]
      {\Psi\vdash \lam x M \checks \pit x A B}
      {\Psi,x\oft A\vdash M \checks B}

      \quad

      \infer[\rl{t-neu}]
      {\Psi\vdash R \checks A}
      {\Psi\vdash R \synths A}

      \vs

      \multicolumn{1}{l}
      {\boxed{\Psi \vdash R \synths A}: \mbox{Neutral term $P$ synthesizes type $A$ in $\Psi$.}}\vs

      \infer[\rl{t-con}]
      {\Psi\vdash \const c \synths A}
      {\const c \oft A \in \Sigma}

      \quad

      \infer[\rl{t-var}]
      {\Psi\vdash x \synths A}
      {x \oft A \in \Psi}

      \vs

      \infer[\rl{t-app}]
      {\Psi\vdash \app R M \synths \hsub a {\erased A} M x B}
      {\Psi\vdash R \synths \pit x A B
      & \Psi\vdash M \checks A}
    \end{array}
  \end{displaymath}
  \caption{Typing LF}
  \label{fig:typlf}
\end{figure}

Figures~\ref{fig:wflf}, \ref{fig:wftypkind}, and~\ref{fig:typlf} show
the deduction rules for each judgment. They are needed to establish
when a term is well formed and well typed. Logics and deductive
systems are represented by the canonical forms (i.e.: normal forms) of
LF terms.
There are at least
two approaches to compare the normal forms:
\begin{itemize}
\item Defining a type directed equality to compare terms up-to
  $\beta$- reduction and $\eta$-expansion\citep{Harper03tocl}.
\item Defining Canonical LF where all terms are in canonical form, and
  using hereditary substitutions~\citep{Watkins02tr} to preserve this
  invariant. \citet{HarperLicata:JFP07} explain this approach and show
  a proof of adequacy for these encodings. As we already mentioned,
  this is the approach we present here.
\end{itemize}

\begin{figure}
  \centering
  \begin{displaymath}
    \begin{array}{rcl}
      \erased{\pit x A B} & = & \erased{A} \to \erased{B} \\
      \erased{\app P M} & = & \erased P \\
      \erased{\const a} & = & \const a
    \end{array}
  \end{displaymath}
  \caption{Type Approximations}
  \label{fig:typapprox}
\end{figure}

\filbreak

We define hereditary substitutions for kinds, types and terms. This
requires defining six operations:
\begin{itemize}
\item $\boxed{\hsub k \alpha M x \, K = K'}$: Hereditary substitution in kinds.
\item $\boxed{\hsub a \alpha M x \, A = A'}$: Hereditary substitution in types.
\item $\boxed{\hsub p \alpha M x \, P = P'}$: Hereditary substitution in base types.
\item $\boxed{\hsub m \alpha M x \, M' = M''}$: Hereditary substitution in normal terms.
\item $\boxed{\hsub r \alpha M x \, R = M' \oft \alpha'}$: Hereditary
  substitution in a neutral term producing a normal term.
\item $\boxed{\hsub r \alpha M x \, R = R'}$: Hereditary substitution in a neutral term producing a neutral term.
\end{itemize}
The superscript in the substitution indicates the syntactic domain
(and it can often be omitted) where it applies, and the subscript is a
type approximation of the resulting types that guides the process and
the termination argument. The type approximation is defined in
Figure~\ref{fig:typapprox}. The computation rules for the hereditary
substitution in types are presented in Figure~\ref{fig:heredsubst1}
and Figure~\ref{fig:heredsubst2} presents the rules for terms. The
definition is similar to regular substitution, but it critically
differs when applying a substitution to a neutral term produces a
normal term. In this situation, if the neutral term is an application
then the substitution needs to continue until a normal form is reached
(otherwise an unrepresentable $\beta$-reduction would be created),
notice how the type approximations get progressively smaller.

\begin{figure}
  \centering
\newcommand{\tsub}[1]{\ensuremath{\hsub #1 \alpha M x}}
\begin{displaymath}
  \begin{array}{rlcl}
    \multicolumn{4}{l}{
    \boxed{\hsub k \alpha M x \, K = K'} : \text{Hereditary substitution in kinds.}
    }\vs

    \tsub k & \code{type} & = & \code{type}\\

    \tsub k & \pit y A K & = & \pit y {(\tsub a A)} {(\tsub k K)}\\
    & & & \text{with } x \neq y \text{ and } y \notin FV(M)\vs

    \multicolumn{4}{l}{
    \boxed{\hsub a \alpha M x \, A = A'} : \text{Hereditary substitution in types.}
    }\vs

    \tsub a & P & = & \tsub p P\\
    \tsub a & \pit y A B & = & \pit y {(\tsub a A)} {(\tsub a B)} \\
    & & & \text{with } x \neq y \text{ and } y \notin FV(M)\vs

    \multicolumn{4}{l}{
    \boxed{\hsub p \alpha M x \, P = P'} : \text{Hereditary substitution in base types.}
    }\vs

    \tsub p & \const a & = & \const a\\
    \tsub p & \app P M' & = & \app{(\tsub p P)} {(\tsub m M')}\vs
  \end{array}
\end{displaymath}
  \caption{Hereditary Substitutions on Types}
  \label{fig:heredsubst1}
\end{figure}

\begin{figure}
  \centering
\newcommand{\tsub}[1]{\ensuremath{\hsub #1 \alpha M x}}
\begin{displaymath}
  \begin{array}{rlcl}
    \multicolumn{4}{l}{
    \boxed{\hsub m \alpha M x \, M' = M''} : \text{Hereditary substitution in normal terms.}
    }\vs

    \tsub m & \lam y M' & = & \lam y {(\tsub m M')} \\
    & & & \text{with } x \neq y \text{ and } y \notin FV(M)\\
    \tsub m & R & = & M' \text{ where } \tsub r R = M' \oft \alpha'\vs

    \multicolumn{4}{l}{
    \boxed{\hsub r \alpha M x \, R = M' \oft \alpha'} : \text{H. subst. in a neutral producing a normal term.}
    }\vs

    \tsub r & x & = & M \oft \alpha\\
    \tsub r & \app R M_1 & = & \hsub m {\alpha_1} {(\tsub m M_1)} y M_2 \oft \alpha_2 \\
            & & & \text{when } \tsub r R = \lam y M_2 \oft \alpha_1 \to \alpha_2\vs

    \multicolumn{4}{l}{
    \boxed{\hsub r \alpha M x \, R = R'} : \text{H. subst. in a neutral producing a neutral term.}
    }\vs

    \tsub r & y & = & y \quad \text{with } x \neq y\\
    \tsub r & \app R M' & = & \app {R'} {(\tsub m M')}\\
    & & & \text{when } \tsub r R = R'

  \end{array}
\end{displaymath}
  \caption{Hereditary Substitutions on Terms}
  \label{fig:heredsubst2}
\end{figure}

\section[Programming with Proofs: The Beluga System]
{Programming with Proofs:\\\quad\quad\quad The Beluga System}\label{sec:belugalang}

The Beluga~\citep{Pientka:IJCAR10} system\footnote{Available at:
  \url{https://github.com/Beluga-lang/Beluga}} is a proof
assistant/programming language that manipulates LF terms and inductive
data to implement proofs the meta-theory of formal systems specified
in LF, and to implement programs that manipulate these terms, shining
examples are translators and compilers, for example a type safe
closure conversion and hoisting implementation~\citep{Belanger:CPP13},
or normalization by evaluation for the simply typed
$\lambda$-calculus~\citep{Cave:POPL12}.

Beluga is a dependently typed programming language where programs
directly correspond to first-order logic proofs over a specific
domain. More specifically, a proof by cases (and more generally by
induction) corresponds to a (total) functional program with dependent
pattern matching. We hence separate the language of programs from the
language of our specific domain about which we reason. The language is
similar to indexed type systems (see \citep{Zenger:TCS97,XI99popl});
however, unlike these aforementioned systems, Beluga's index domain is
much richer (contextual LF terms) and it allows pattern matching on
index objects, i.e. we support case-analysis on objects in our domain.

\lstset{language=Beluga}

Figure~\ref{fig:typeuniq} shows the type uniqueness theorem for the
calculus from Figure~\ref{fig:stlclf}. The property we want to
establish is that if a term $M$ has a derivation for type $S$ and one
for type $T$ then $S$ and $T$ are necessarily equal. To be able to
state the theorem first we define \lstinline!eq-tp! to formally say
when two types are equal. And then we define a context schema (that is
the classifier or ``type'' of contexts) where we say that each entry
contains a variable together with its type derivation. The actual
proof is a total recursive function (following the Curry-Howard
correspondence) where the statement of the theorem is the type of the
function, and the proof is its implementation. In this case, the type
is:
\begin{lstlisting}
  (g : ctx) [g |- oft M S[]] -> [g |- oft M T[]] ->
    [|-eq-tp S T]
\end{lstlisting}

That can be read as: give a context \lstinline!g! where each variable
is stored together with its type derivation, and a proof that term
\lstinline!M! has type \lstinline!S! and \lstinline!T! (where the
square brackets around the types \lstinline!T! and \lstinline!S! mean
that they are closed objects and do not depend on the context) the
function shows that both types are equal. We see in
Figure~\ref{fig:typeuniq} that to implement the proof, the function
uses case analysis and well-founded recursion to implement induction,
recursive function calls on smaller arguments to appeal to the
induction hypothesis and let expressions, or irrefutable pattern
matching to implement inversion. The application case follows by
inversion and the invoking the induction hypothesis, the case for the
constructor (i.e.: \lstinline!t-c!) is the base case and it just needs
to perform inversion on the other derivation. The more interesting
cases are \lstinline!t-lam! for functions and the variable case
(\lstinline!#p.u!). The case for $\lambda$ derivation uses an
inductive call done in an extended context where we have to be careful
extend the context appropriately, which is what we do in the
substitution we apply to the variables \lstinline!D! and
\lstinline!E!. The resulting recursive call is:
\begin{lstlisting}
unique [g, b:block x:tm, u:oft x _ |- D[..,<<b.x,b.u>>]]
       [g, b                       |- E[..,<<b.x,b.u>>]]
\end{lstlisting}
Finally, the variable case also follows by inversion using the
information stored in the context.

\begin{figure}
  \centering
\begin{lstlisting}
LF eq-tp: tp -> tp -> type =
% two types are equal only if they are the same
| refl: eq-tp T T;

% the context stores together the variable x
% and its type derivation
schema ctx = some [t:tp] block (x:tm, u:oft x t);

% Thm: If a term M has types S and also T, then S = T
rec unique : (g : ctx) [g |- oft M S[]] -> [g |- oft M T[]] -> [|- eq-tp S T] =
/ total d (unique _ _ _ _ d _) / % requires well founded recursive calls and totality
fn d => fn e => case d of

| [g |- t-app D1 D2] =>
  % by inversion on the other derivation
  let [g |- t-app E1 E2] = e in
  let [|- refl] = unique [g |- D1] [g |- E1] in % by i.h.
  [|- refl]

| [g |- t-c] =>
  % by inversion
  let [g |- t-c] = e in
  [|- refl]

| [g |- t-lam \x.\u. D] =>
  let [g |- t-lam \x.\u.E] = e in % by inversion
  let [|- refl] =
      % by i.h. in the extended ctx
      unique [g, b:block x:tm, u:oft x _ |- D[..,b.x,b.u]]
             [g, b |-  E[..,b.x,b.u]]
  in
  [|- refl]

| [g |- #p.u] => % a variable in the context
  let [g |- #p.u] = e in % the type derivation in the context
  [|- refl];
\end{lstlisting}
  \caption{Type Uniqueness Proof}
  \label{fig:typeuniq}
\end{figure}

\begin{figure}
  \centering
  \begin{tikzpicture}
    \matrix [row sep = 0mm]{
      \node [draw, text width=5cm, align=center, text height=0.6cm, text depth=0.5cm] {LF Specifications}; \\
      \node [text width=5cm, align=center, text height=0.3cm, pattern=crosshatch, pattern color=lightgray] {Contextual Objects}; \\
      \node [draw, text width=5cm, align=center, text height=0.6cm, text depth=0.5cm] {Compuational Language};\\
    };
  \end{tikzpicture}
  \caption{The Beluga Language}
  \label{fig:beldiagram}
\end{figure}
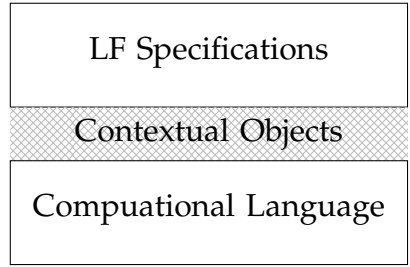
\begin{center}
\end{center}

The Beluga language reasons about \emph{specifications} in the logical
framework LF, by embedding them using \emph{contextual objects} in a
functional programming language. The idea, illustrated in
Figure~\ref{fig:beldiagram}, is that proofs are represented by
programs, as in the proofs as programs methodology \citep{Howard80},
and the specifications are embedded using contextual
objects~\citep{Nanevski:ICML05} and contextual types that combine the
type of LF objects with the context they are valid in.

\begin{figure}
  \begin{displaymath}
    \begin{array}{lrcl}
      \mbox{Contextual Objects} & C & \bnfas & \hat\Psi\vdash M \bnfalt \Psi \\
      \mbox{Contextual Types} & U & \bnfas & \Psi \vdash A \bnfalt \Psi \vdash \#A \bnfalt G\\
      \mbox{Contextual Schemas} & G & \bnfas &
          \exists \overrightarrow{(x:A)}\mathrel{.} B \bnfalt G + \exists \overrightarrow{(x:A)}\mathrel{.} B \\
      \mbox{Substitutions} & \sigma & \bnfas & \cdot \bnfalt \code{id} \bnfalt \sigma, M\\
      \mbox{Terms w/Meta vars.} & M & \bnfas & \dots \bnfalt \asisub u \sigma \bnfalt \asisub {\#p} \sigma \\
      \mbox{Meta Substitutions} & \theta & \bnfas &  \cdot \bnfalt \theta, C/X\\
      \mbox{Meta Context} & \Delta & \bnfas & \cdot \bnfalt \Delta, X\oft U\\
    \end{array}
  \end{displaymath}
  \caption{Contextual Objects}
  \label{fig:ctxobjs}
\end{figure}

One of the crucial features of Beluga is that it handles open objects
(i.e.: objects with free variables bound in a context). The technical
device to achieve this is contextual objects and types and it is based
on contextual modal type theory~\citep{Nanevski:ICML05}. A contextual
object is a term together with the context that describes all the free
variables of the term. In a sense, one could say that there are no
free variables, just variables bound in some context. As a
consequence, LF terms and types are combined with a context (in the
case of terms, the context just describes the names of the assumptions
that appear on the type, this is meant to support
$\alpha$-conversions and it is denoted with the hatted contexts as in $\hat\Psi$).

Figure~\ref{fig:ctxobjs} shows the syntax of the contextual objects,
together with the addition of meta-variables to be able to describe
incomplete LF terms. We use $X$ to represent meta-variables, and $u$
and $\#p$ when talking about a specific one. Moreover, we use
$\asisub u \sigma$ where the substitution expresses how $u$ relates to
its context. The case for $\#p$ is similar, but we use the sharp sign
to represent parameter variables that are a kind of meta-variable that
can only be instantiated by variables from the context $\Psi$. We
write contextual terms together with a context where types have been
erased and only names remain. The typing rule for contextual term
indicated that when a term: $\hat\Psi\vdash M$ is matched against a
contextual type: $\Psi\vdash A$, typing continues with the LF typing
judgment using the adequate contexts, namely:
$\Delta;\Psi\vdash M \checks A$.

Finally, contexts are also first-class objects, and they are
classified by contextual schemas\footnote{Contextual schemas play the
  role of types for contexts.}. Context schemas are formed by elements
$\exists \overrightarrow{(x\oft A)}\mathrel{.} B$ where we say that
assumptions are of type $B$. If $B$ is a type family its indices are
provided by the existential. When schemas describe contexts that may
contain assumptions of different type, they appear as sum types.
Similarly one could, as the implementation has, add products to
express contexts that grow by blocks of assumptions.
Figure~\ref{fig:ctxobjtp} shows the typing for contextual terms and
contexts.

\begin{figure}
  \begin{displaymath}
    \begin{array}{c}
      \multicolumn{1}{l}
      {\boxed{\Delta \vdash C \oft U}: \mbox{$C$ is of type $U$ in $\Delta$}}\vs

      \infer
      {\Delta\vdash (\hat\Psi\vdash M) \oft (\Psi\vdash A)}
      {\Delta;\Psi\vdash M\checks A}

      \quad

      \infer
      {\Delta\vdash\cdot\oft G}
      {}

      \vs

      \infer
      {\Delta\vdash \Psi,x\oft A \oft G}
      {\Delta\vdash \Psi \oft G
      & \mbox{exists}\ \oarr{(x\oft B')}\mathrel{.} B\in G
        \ \mbox{and}\ \Delta;\Psi\vdash\oarr M\oft \oarr B
        \ \mbox{s.t.}\ A = \asisub B {\oarr M / \oarr x}}
    \end{array}
  \end{displaymath}
  \caption{Typing of Contextual Objects}
  \label{fig:ctxobjtp}
\end{figure}

With the machinery to represent LF objects together with their context
we can describe now the reasoning language. This language can be seen
as as a term assignment of first-order logic with inductive types. Its
syntax is presented in Figure~\ref{fig:reasonlang}. It supports
inductive type families~\citep{Dybjer:1994} where the indices come
from the index domain and two kinds of functions, dependent function
space from contextual types to computational types, an arrow type for
computational functions, a box type to represent contextual-LF objects
and finally, the type for fully applied type families from the context.

The terms contain abstractions and applications for each function
space and notable pattern matching through the case expressions to
analyze inductive data and contextual objects.

Finally, the signature stores the definition of types and constructors
and recursive functions. Beluga supports general recursion, but it
implements a termination checker~\citep{Pientka:TLCA15} to ensure that
functions and therefore proofs are well founded.

\begin{figure}
  \centering
\begin{displaymath}
  \begin{array}{lrcl}
    \mbox{Kinds} & K & \bnfas &  \pit X U K \bnfalt \code{ctype} \\

    \mbox{Types} & T & \bnfas & \pit X U T \bnfalt
                                T_1\to T_2 \bnfalt \ibox U \bnfalt \app {\const a} {\vec C} \vs

    \mbox{Expressions} & E & \bnfas & \fne x E
                                      \bnfalt \mlame X E \bnfalt \app {E_1} E_2\\
                 & & &  \bnfalt \app{E_1} {\ibox C} \bnfalt
                       \ibox C \bnfalt
                       \casee E {\vec{B}} \bnfalt E \oft T
                       \bnfalt x \bnfalt \const{c}\\

    \mbox{Branches} & \vec{B} & \bnfas & B \bnfalt
      (B \mid \vec{B})\\

    \mbox{Branch} & B & \bnfas & \Pi\Delta;\Gamma. Pat \oft\theta \mapsto E\\

    \mbox{Pattern} & Pat & \bnfas & x \bnfalt \ibox C \bnfalt \app{\const c} {\oarr{Pat}} \vs

    \mbox{Signature} & \Sigma & \bnfas & {\const c}\oft T \mid {\const a} \oft K \mid \rece f T E\vs

    \mbox{Context} & \Gamma & \bnfas & \cdot \bnfalt \Gamma, x \oft T \\


  \end{array}
\end{displaymath}
  \caption{Reasoning Language}
  \label{fig:reasonlang}
\end{figure}






\chapter{Reconstruction of Implicit Parameters}\label{chp:recon}

\newcommand{\wvec}[1]{\overrightarrow{#1}}

\newcommand{\ofth}{\ensuremath{\mathrel{\hat{:}}}} 

\newcommand{\boxto}{\ensuremath{{\to}}}

\newcommand{\recon}{\ensuremath{\rightsquigarrow}}
\newcommand{\reconChk}[1]{\ensuremath{: {#1}\rightsquigarrow}}
\newcommand{\reconBBranch}[1]{\ensuremath{\mid{#1}\overset\square\rightsquigarrow}}

\newcommand{\rn}[1]{\texttt{#1}}

\newcommand{\abstr}{\ensuremath{\mathrel{\searrow}}}

\newcommand{\fv}{\ensuremath{_f}}

\newcommand{\vnor}{\ensuremath{\mathrel{\vdash_N}}}
\newcommand{\vneu}{\ensuremath{\mathrel{\vdash_n}}}

\newcommand{\vdashfree}{\ensuremath{\vdash_{\mbox{f}}}}

\newcommand{\B}{B}
\newcommand{\yux}[2]{#1\,#2} 

\newcommand{\piesrc}[2]{\ensuremath{\{#1\}\,#2}} 
\newcommand{\pie}[2]{\ensuremath{\Pi^e #1.\,#2}}
\newcommand{\piei}[2]{\ensuremath{\Pi^i #1.\,#2}}
\newcommand{\pieboth}[2]{\ensuremath{\Pi^{\{e,i\}} #1.\,#2}}
\newcommand{\pieg}[2]{\ensuremath{\Pi^* #1.\,#2}}

\newcommand{\ctype}{\mbox{\lstinline!ctype!}}
\newcommand{\itype}{\mbox{\lstinline!type!}}

\newcommand{\elabto}{\ensuremath{\rightsquigarrow}}

\newcommand{\subst}[1]{\ensuremath{[#1]}}
\newcommand{\asub}[2]{\ensuremath{\subst{#1}#2}}

\newcommand{\hole}[1]{\ensuremath{?#1}}
\newcommand{\hsubst}[1]{\ensuremath{\llbracket#1\rrbracket}}
\newcommand{\ahsub}[2]{\ensuremath{\hsubst{#1}#2}}
\newcommand{\ids}[1]{\ensuremath{\code{id}(#1)}}
\newcommand{\hs}[2]{\ensuremath{?#1\subst{#2}}}
\newcommand{\rclo}[2]{\ensuremath{\Lbag#1\mathrel{;}#2\Rbag}}
\newcommand{\cns}{\ensuremath{\mathrel{::}}}
\newcommand{\lowers}{\ensuremath{\mathrel{\searrow}}}
\newcommand{\raises}{\ensuremath{\mathrel{\nearrow}}}


\newcommand{\spineret}{\mathrel{\rangle}}


\newcommand{\ift}[3]{\code{if}\,#1\,\code{then}\,#2\,\code{else}\,#3}

\definecolor{light-gray}{gray}{0.95}

\newcommand{\genHole}{\mathsf{genHole}\;}
\newcommand{\elsyn}{\rl{el-syn}}
\newcommand{\elvar}{\rl{el-var}}
\newcommand{\elconst}{\rl{el-const}}
\newcommand{\elfn}{\rl{el-fn}}
\newcommand{\elmlam}{\rl{el-mlam}}
\newcommand{\elmlami}{\rl{el-mlam-i}}
\newcommand{\elcase}{\rl{el-case}}
\newcommand{\eltypcase}{\rl{el-case-index-obj}}
\newcommand{\elimp}{\rl{el-impl}}
\newcommand{\elimpd}{\rl{el-impl-done}}
\newcommand{\elapp}{\rl{el-app}}
\newcommand{\elmapp}{\rl{el-mapp}}
\newcommand{\elmappi}{\rl{el-mapp-underscore}}
\newcommand{\elann}{\rl{el-annotated}}
\newcommand{\elrec}{\rl{el-rec}}
\newcommand{\elbox}{\rl{el-box}}
\newcommand{\eltyp}{\rl{el-typ}}
\newcommand{\elkind}{\rl{el-kind}}

\newcommand{\elbranch}{\rl{el-branch}}
\newcommand{\elsubst}{\rl{el-subst}}

\newcommand{\elpvar}{\rl{el-pvar}}
\newcommand{\elpindex}{\rl{el-pindex}}
\newcommand{\elpsyn}{\rl{el-psyn}}
\newcommand{\elpcon}{\rl{el-pcon}}
\newcommand{\elpann}{\rl{el-pann}}
\newcommand{\elspempty}{\rl{el-sp-empty}}
\newcommand{\elspcmp}{\rl{el-sp-cmp}}
\newcommand{\elspex}{\rl{el-sp-explicit}}
\newcommand{\elspim}{\rl{el-sp-implicit}}



\newcommand{\ep}{\ensuremath{\epsilon}}

\newcommand{\steps}{\ensuremath{\mathrel{\Downarrow}}}


\newcommand{\ltyped}{\emph{Typed}\xspace}
\newcommand{\luntyped}{\emph{Untyped}\xspace}


\newcommand{\onestar}{\ensuremath{(\dagger)}}
\newcommand{\twostars}{\ensuremath{(\ddagger)}}


\section{Introduction}

Dependently typed programming languages allow programmers to express a
rich set of properties and statically verify them via type checking.
To make programming with dependent types practical, these systems
provide a source language where programmers can omit (implicit)
arguments which can be reasonably easy inferred and elaborate the
source language into a well-understood core language, an idea going
back to \citet{Pollack90}. However, this elaboration is rarely
specified formally for dependently typed languages which support
recursion and pattern matching. For Agda, a full dependently typed
programming language based on Martin L{\"o}f type theory, Norell
\citeyearpar[Chapter 3]{Norell:phd07} describes a bi-directional type
inference algorithm, but does not treat the elaboration of recursion
and pattern matching. For the fully dependently typed language Idris,
\citet{Brady:JFP13} describes the elaboration between source and
target, but no theoretical properties such as soundness are
established. A notable exception is~\citet{Asperti:2012} that
describes a sound bi-directional elaboration algorithm for the
Calculus of (Co)Inductive Constructions (CCIC) implemented in Matita.

In this chapter, we concentrate on a computational language with index
types. Specifically, our source language is inspired by the Beluga
language~\citep{Pientka:CADE15,
  Pientka:POPL08,Pientka:IJCAR10,Cave:POPL12} (presented in
Section~\ref{sec:belugalang} where we specify formal systems in the
logical framework LF \citep{Harper93jacm} (our index language) and
write proofs about LF objects as total recursive functions using
pattern matching.

More generally, our language may be viewed as a smooth extension of
simply typed languages, like Haskell or OCaml to nested dependent
pattern matching. Moreover, taking advantage of the separation between
types and terms, it is easy to allow impure programs, for example to
allow non-termination, partial functions, and polymorphism. All this
while reaping some of the benefits of dependent
types. 

Viewing our language through the Curry-Howard correspondence, it closely
corresponds to a proof term assignment for first-order logic over a
specific domain. Our dependent pattern matching construct corresponds
to case-analysis on predicates in first-order logic. Pattern matching
on index objects corresponds to case-analysis on an object from the
domain. Writing total functions in this language corresponds to
inductive proofs in first-order logic over a given domain. This domain
is kept somewhat abstract, but in the case of Beluga it is the logical
framework LF. The reconstruction algorithm for LF is presented
in~\citep{Pientka:JFP13}. However, the logical framework LF is not a
programming language and it does not contain features like pattern
matching that we consider in this work. 

The main contribution of this chapter is the design of a source
language for dependently typed programs where we omit implicit
arguments together with a sound bi-directional elaboration algorithm
from the source language to a fully explicit core language. This
language supports dependent pattern matching without requiring type
invariant annotations, and dependently-typed case expressions can be
nested as in simply-typed pattern matching. Throughout our
development, we will keep the index language abstract and state
abstractly our requirements such as decidability of equality and
typing. There are many interesting choices of index languages. For
example choosing arithmetic would lead to a DML~\citep{Xi:JFP} style
language ; choosing an authorization logic would let us manipulate
authorization certificates (similar to \emph{Aura}~\citep{Jia:2008});
choosing LF style languages (like Contextual LF
\citep{Nanevski:ICML05}) we obtain Beluga; choosing substructural
variant of it like CLF~\citep{Watkins02tr} we are in principle able to
manipulate and work with substructural specifications.

A central question when elaborating dependently typed languages is
what arguments may the programmer omit. In dependently-typed systems
such as Agda or Coq, the programmer declares constants of a given
(closed) type and labels arguments that can be freely omitted when
subsequently using the constant. Both, Coq and Agda, give the user the
possibility to locally override the implicit arguments and provide
instantiations explicitly.

In contrast, we follow here a simple, lightweight recipe which comes
from the implementation of the logical framework \emph{Elf}
\citep{Pfenning89lics} and its successor \emph{Twelf}
\citep{Pfenning99cade}: \emph{programmers may leave some index
  variables free when declaring a constant of a given type;
  elaboration of the type will abstract over these free variables at
  the outside; when subsequently using this constant, the user must
  omit passing arguments for those index variables which were left
  free in the original declaration.} Following this recipe,
elaboration of terms and types in the logical framework has been
described in~\citet{Pientka:JFP13}. Here, we will
consider a dependently typed functional programming language which
supports pattern matching on index objects.

The key challenge in elaborating recursive programs which support
case-analysis is that pattern matching in the dependently typed
setting refines index arguments and hence refines types. In contrast
to systems such as Coq and Agda, where we must annotate
case-expressions with an invariant, i.e. the type of the scrutinee,
and the return type, our source language does not require such
annotations. Instead we will infer the type of the scrutinee and for
each branch, we infer the type of the pattern and compute how the
pattern refines the type of the scrutinee. This makes our source
language lightweight and closer in spirit to simply-typed functional
languages. Our elaboration of source expressions to target expressions
is type-directed, inferring omitted arguments and producing a closed
well-typed target program. Finally, we prove soundness of our
elaboration, i.e. if elaboration succeeds our resulting program type
checks in our core language. Our framework provides post-hoc
explanation for elaboration found in the programming and proof
environment, Beluga~\citep{Pientka:IJCAR10}, where we use as the index
domain terms specified in the logical framework LF
\citep{Harper93jacm}.

We describe reconstruction as follows: We first give the grammar of our
source language. Showing example programs, we explain informally
what elaboration does. We then revisit our core language, describe the
elaboration algorithm formally and prove soundness. 


\section{Source Language}

We consider here a dependently typed language where types are indexed
by terms from an index domain. Our language is similar to Beluga a
dependently typed programming environment where we can embed LF
objects into computation-level types and computation-level programs
which analyze and pattern match on LF objects. However, in our
description, as in for example~\citet{Cave:POPL12}, we will keep the
index domain abstract, but only assume that equality in the index
domain is decidable and unification algorithms exist. This will allow
us to focus on the essential challenges when elaborating a dependently
typed language in the presence of pattern matching.

The syntax of our source language that allows programmers to omit some
arguments is as follows:
\begin{displaymath}
  \begin{array}{lr@{~~}c@{~~}l}
    \mbox{Kinds} & k & \bnfas & \ctype \bnfalt \piesrc {X\oft u} k\\

    \mbox{Atomic Types} & p & \bnfas & \yux {\const a} {\wvec{\ibox c}} \\

    \mbox{Types} & t & \bnfas & p \bnfalt \ibox u \bnfalt \piesrc {X\oft u} t \bnfalt t_1\to t_2 \\[0.5em]

    \mbox{Expressions} & e & \bnfas & \fne x e \bnfalt \mlame X e \bnfalt x
    \bnfalt \const{c} \bnfalt \ibox c \bnfalt \\
      & & & \yux {e_1} e_2 \bnfalt \yux{e_1} \ibox c \bnfalt
      e\;\_\bnfalt \casee e {\vec{b}} \bnfalt e \oft t\\

    \mbox{Branches} & \vec{b} & \bnfas & b \bnfalt (b \mid \vec{b})\\

    \mbox{Branch} & b & \bnfas & pat \mathrel \mapsto e\\

    \mbox{Pattern} & pat & \bnfas & x \bnfalt \ibox c \bnfalt \yux{\const c} \oarr{pat} \bnfalt pat\oft t \\[0.5em]
     \mbox{Declarations} & d & \bnfas & \rece f t e \mid \const{c}\oft t \mid \const{a}\oft k\\

  \end{array}
\end{displaymath}

As a convention we will use lowercase $c$ to refer to index level
objects, lowercase $u$ for index level types, and upper case letters
$X, Y$ for index-variables. Index objects can be embedded into
computation expressions by using a box modality written as $\ibox{c}$.
Our language supports functions ($\fne x e$), dependent functions
($\mlame X e$), function application ($\yux{e_1} e_2$), dependent
function application ($\yux{e_1}\ibox c$), and case-expressions. We
also support writing underscore ($\_$) instead of providing explicitly
the index argument in a dependent function application ($e\;\_$). Note
that we are overloading syntax: we write $e\;\ibox{c}$ to describe the
application of the expression $e$ of type $\ibox{u}\to t$ to the
expression $\ibox{c}$; we also write $e\;\ibox{c}$ to describe the
dependent application of the expression $e$ of type $\{X\oft u\}t$ to
the (unboxed) index object $c$.This ambiguity can be easily resolved
using type information. Note that in our language the dependent
function type and the non-dependent function type do not collapse,
since we can only quantify over objects of our specific domain instead
of arbitrary propositions (types).

We may write type annotations anywhere in the program ($e \oft t$
and in patterns $pat\oft t$); type annotations are particularly useful to
make explicit the type of a sub-expression and name index variables occurring in
the type. This allows us to resurrect index variables which are kept
implicit. In patterns, type annotations are useful since they provide hints to
type elaboration regarding the type of pattern variables.

A program signature $\Sigma$ consists of kind declarations
($\const a\oft k$), type declarations ($\const c \oft t$) and
declarations of recursive functions ($\rece f t e$). This can be
extended to allow mutual recursive functions in a straightforward way.

One may think of our source language as the language obtained after
parsing where for example let-expressions have been translated into
case-expression with one branch.

Types for computations include non-dependent function types
($t_1 \to t_2$) and dependent function types ($\{X\oft u\} t$); we can
also embed index types into computation types via $\ibox u$ and
indexed computation-level types by an index domain written as
$\yux {\const a} \wvec{\ibox c}$. We also include the grammar for
computation-level kinds which emphasizes that computation-level types
can only be indexed by terms from an index domain $u$. We write
$\ctype$ (that reads as ``computational type'') for the base kind,
since we will use $\itype$ for kinds of the index domain.

We note that we only support one form of dependent function type
$\{X\oft u\}t$; the source language does not provide any means for
programmers to mark a given dependently typed variable as implicit as
for example in Agda. Instead, we will allow programmers 
to leave some index variables occurring in computation-level types
free; elaboration will then infer their types and abstract over them
explicitly at the outside. The programmer must subsequently omit
providing instantiation for those ``free'' variables. We will explain
this idea more concretely below.

\subsection{Well-formed Source Expressions}

\begin{sidewaysfigure}
  \begin{displaymath}
    \begin{array}{c}
      \multicolumn{1}{l}{\boxed{\vdash  d\wf}~~\mbox{Declaration $d$ is well-formed}}\vs

      \infer[\rl{wf-rec}]
      {\vdash\rece f t e \wf}
      {\cdot;\; f\vdash e\wf &
        \cdot\vdashfree t\wf}
\quad

      \infer[\rl{wf-types}]
      {\vdash c : t \wf}
      {\cdot\vdashfree t\wf}

\quad

      \infer[\rl{wf-kinds}]
      {\vdash a : k \wf}
      {\cdot\vdashfree k\wf}

\vs

      \multicolumn{1}{l}{\boxed{\delta;\gamma\vnor e\wf}~~
                               \mbox{Normal expression $e$ is well-formed in context $\delta$ and $\gamma$}}
\vs

      \infer[\rl{wf-fn}]
      {\delta;\gamma\vnor\fne x e\wf}
      {\delta;\gamma,x\vnor e\wf}

      \quad

      \infer[\rl{wf-mlam}]
      {\delta;\gamma\vnor\mlame X e\wf}
      {\delta, X;\gamma\vnor e\wf}

      \vs

      \infer[\rl{wf-box}]
      {\delta;\gamma\vnor\ibox c\wf}
      {\delta\vdash c\wf}

      \quad

      \infer[\rl{wf-case}]
      {\delta;\gamma\vnor\casee e {\vec b}\wf}
      {\delta;\gamma\vneu e\wf &
        \mbox{for all $b_n$ in $\vec b$ .}\ \delta;\gamma\vdash b_n\wf}

      \quad

      \infer[\rl{wf-neu}]
      {\delta;\gamma\vnor e\wf}
      {\delta;\gamma\vneu e\wf}

      \vs

      \multicolumn{1}{l}{\boxed{\delta;\gamma\vneu
          e\wf}~~\mbox{Neutral expression $e$ is well-formed in context $\delta$ and $\gamma$}}\vs

      \infer[\rl{wf-ann}]
      {\delta;\gamma\vneu e\oft t\wf}
      {\delta;\gamma\vnor e\wf &
        \delta\vdash t\wf}

      \quad

      \infer[\rl{wf-app}]
      {\delta;\gamma\vneu\yux{e_1}{e_2}\wf}
      {\delta;\gamma\vneu e_1\wf &
        \delta;\gamma\vnor e_2\wf}

      \vs

      \infer[\rl{wf-appe}]
      {\delta;\gamma\vneu\yux{e}{\ibox c}\wf}
      {\delta;\gamma\vneu e\wf &
        \delta \vdash c\wf}

      \quad

      \infer[\rl{wf-apph}]
      {\delta;\gamma\vneu\yux{e_1}\_\wf}
      {\delta;\gamma\vneu e_1\wf}



      \vs

      \multicolumn{1}{l}{\boxed{\delta;\gamma\vdash
          pat\mapsto e\wf}~~\mbox{Branch is well-formed in $\delta$ and $\gamma$}}
      \vs

      \infer[\rl{wf-branch}]
      {\delta;\gamma\vdash pat\mapsto e\wf}
      {\delta' ;\gamma' \vdash pat\wf
      & \delta,\delta';\gamma,\gamma'\vneu e\wf
      & \text{ where } \delta', \gamma' \text{ introduce fresh vars}}

      \vs

      \multicolumn{1}{l}{\boxed{\delta ; \gamma \vdash pat\wf}~~
        \mbox{Pattern $pat$ is well-formed in $\delta$ and $\gamma$}}
      \vs

      \infer[\rl{wf-p-var}]
      {\delta ; x \vdash x\wf}
      {}

      \quad

      \infer[\rl{wf-p-con}]
      {\delta_1,\ldots, \delta_n ; \gamma_1, \ldots, \gamma_n \vdash\yux{\const c}{\wvec{Pat}}\wf}
      {\mbox{for all $p_i$ in $\wvec{Pat}$.}~~\delta_i ; \gamma_i \vdash p_i\wf}

      \vs

      \infer[\rl{wf-p-i}]
      {\delta ; \cdot \vdash\ibox c\wf}
      {\delta \vdash c\wf}

      \quad

      \infer[\rl{wf-p-ann}]
      {\delta ; \gamma \vdash pat\oft t\wf}
      {\delta ; \gamma \vdash pat\wf &
        \delta \vdash t\wf}
    \end{array}
  \end{displaymath}
  \caption{Well-formed Source Expressions}
  \label{fig:wf}
\end{sidewaysfigure}

\begin{figure*}[bht]
  \begin{displaymath}
    \begin{array}{c}
      \multicolumn{1}{l}
      {\boxed{\delta\vdash k\wf}~~\mbox{
          Kind $k$ is well-formed and closed with respect to $\delta$}}\vs

      \infer{\delta\vdash\ctype\wf}{}

      \quad

      \infer{\delta\vdash\piesrc{X\oft u} k\wf}{\delta\vdash u\wf & \delta,X\oft u\vdash k\wf}\vs

      \multicolumn{1}{l}
      {\boxed{\delta\vdash t\wf}~~\mbox{
          Type $t$ is well-formed and closed with respect to $\delta$}}\vs

      \infer{\delta\vdash {\const a} \wvec{\ibox c}\wf}{\delta;\vdash c_i\wf & \mbox{for all $c_i$ in $\wvec c$}}

      \quad

      \infer{\delta\vdash \ibox u\wf}{\delta\vdash u\wf}

      \vs

      \infer{\delta\vdash\piesrc{X\oft u} t\wf}{\delta\vdash u\wf & \delta,X\oft u\vdash t\wf}

      \quad

      \infer{\delta\vdash t_1\to t_2\wf}{\delta\vdash t_1\wf & \delta\vdash t_2\wf}
    \end{array}
  \end{displaymath}
  \caption{Well-formed Kinds and Types}
  \label{fig:wftk}
\end{figure*}

Before elaborating source expressions, we state when a given source
expression is accepted as a well-formed expression. In particular, it
will highlight that free index variables are only allowed in
declarations when specifying kinds and declaring the type of constants
and recursive functions. We use $\delta$ to describe the list of index
variables and $\gamma$ the list of program variables. We rely on two judgments
from the index language:
\vspace{-0.2cm}
\begin{displaymath}
\begin{array}{ll}
\delta\vdash~~ c\wf   & \mbox{Index object $c$ is well formed and} \\
                      & \mbox{ closed with respect to $\delta$}   \\
\delta\vdashfree c\wf & \mbox{Index object $c$ is well formed with respect to $\delta$} \\
                      & \mbox{and may contain free index variables}   \\
\end{array}
\end{displaymath}
\vspace{-0.15cm}

We describe declaratively the well-formedness of declarations and
source expressions in Fig.~\ref{fig:wf}. The distinction between
normal and neutral expressions forces a type annotation where a
non-normal program would occur. The normal vs. neutral term
distinction is motivated by the bidirectional type-checker presented
in Section~\ref{sec:typtarget}. The rules for well-formed types and
kinds are given in Figure~\ref{fig:wftk}.

In branches, pattern variables from $\gamma$ must occur linearly while
we put no such requirement on variables from our index language listed
in $\delta$. The judgment for well formed patterns synthesizes
contexts $\delta$ and $\gamma$ that contain all the variables bound in
the pattern (this presentation is declarative, but algorithmically the
two contexts result from the well-formed judgment). Notice that the
rules \rl{wp-p-i} and \rl{wp-con} look similar but they operate on
different syntactic categories and refer to the judgment for
well-formed index terms provided by the index level language. They
differ in that the one for patterns synthesizes the $\delta$ context
that contains the meta-variables bound in the pattern.

\subsection{Some Example Programs}

We next illustrate writing programs in our language and explain the
main ideas behind elaboration. We use Beluga syntax in our examples,
in these examples we want to focus on the elaboration of computations
not the index language.

\subsubsection{Translating Untyped Terms to Intrinsically Typed Terms}

We implement a program to translate a simple language with numbers,
booleans and some primitive operations to its intrinsically typed
counterpart. This illustrates declaring an index domain, using index
computation-level types, and explaining the use and need to pattern
match on index objects. We translate a source language that we will
call \luntyped into its typed counter part, we call this
representation \ltyped. This translation is basically a
type-checker for terms written in \luntyped.
\filbreak

We first define the syntax of \luntyped using the recursive
datatype \lstinline!UTm!. Note the use of the keyword
\lstinline!ctype! to define a computation-level recursive data-type.

\begin{lstlisting}
inductive UTm : ctype =
| UNum  : Nat -> UTm
| UPlus : UTm -> UTm -> UTm
| UTrue : UTm
| UFalse: UTm
| UNot  : UTm -> UTm
| UIf   : UTm -> UTm -> UTm -> UTm
;
\end{lstlisting}

Terms can be of type \lstinline!nat! for numbers or \lstinline!bool!
for booleans. Our goal is to define \ltyped our language of typed
terms using a computation-level type family \lstinline!Tm! which is
indexed by objects \lstinline!nat! and \lstinline!bool! which are
constructors of our index type \lstinline!tp!. Note that
\lstinline!tp! is declared as having the kind \lstinline!type! which
implies that this type lives at the index level and that we will be
able to use it as an index for computation-level type families.
\begin{lstlisting}
LF tp : type =
| nat  : tp
| bool : tp
;
\end{lstlisting}

\filbreak
Using indexed families we can now define the type \lstinline!Tm! that
specifies only type correct terms of the language \ltyped, by indexing
terms by their type using the index level type \lstinline![|- tp]!.
The square brackets define a box for index language terms and types
and because Beluga's index language contains contextual types the
empty turnstyle indicates that types are closed.
\begin{lstlisting}
inductive Tm : [|- tp] -> ctype =
| Num       : Nat -> Tm [|- nat]
| Plus      : Tm [|- nat] -> Tm [|- nat] -> Tm [|- nat]
| True      : Tm [|- bool]
| False     : Tm [|- bool]
| Not       : Tm [|- bool] -> Tm [|- bool]
| If        : Tm [|- bool] -> Tm [|- T] -> Tm [|- T] -> Tm [|- T]
;
\end{lstlisting}

When the \lstinline!Tm! family is elaborated, the free variable
\lstinline!T! in the \lstinline!If! constructor will be abstracted
over by an implicit $\Pi$-type, as in the Twelf~\citep{Pfenning99cade}
tradition. Because \lstinline!T! was left free by the programmer, the
elaboration will add an implicit quantifier; when we use the constant
\lstinline!If!, we now must omit passing an instantiation for \lstinline!T!. For
example, we must write \lstinline!(If True (Num 3) (Num 4))! and elaboration
will infer that \lstinline!T! must be \lstinline!nat!.

One might ask how we can provide the type explicitly - this is
possible indirectly by providing type annotations. For example:
\begin{lstlisting}
If e (e1:TM[|- nat]) e2
\end{lstlisting}
will fix the type of \lstinline!e1! to be \lstinline!Tm [|- nat]!.

\filbreak
Our goal is to write a program to translate an untyped term
\lstinline!UTm!  to its corresponding typed
representation. Because this operation might fail for ill-typed
\lstinline!UTm! terms we need an option type to reflect the
possibility of failure.

\begin{lstlisting}
inductive TmOpt : ctype =
| None : TmOpt
| Some : {T : [|- tp]} Tm [|- T] -> TmOpt
;
\end{lstlisting}

A value of type \lstinline!TmOpt! will either be empty
(i.e. \lstinline!None!) or some term of type \lstinline!T!. We chose to make
\lstinline!T! explicit here by quantifying over it explicitly using the
curly braces. When returning a \lstinline!Tm! term, the program must now provide
the instantiation of \lstinline!T! in addition to the actual term.

So far we have declared types and constructors for our language. These
declarations will be available in a global signature. The next step is
to declare a function that will take \luntyped terms into \ltyped
terms if at all possible. Notice that for the function to be type
correct it has to respect the specification provided in the
declaration of the type \lstinline!Tm!. We only show a few interesting
cases below.

\begin{lstlisting}
rec typecheck : UTm -> TmOpt =
fn e => case e of
| UNum n => Some [|- nat] (Num n)
| UNot e => (case typecheck e of
  | Some [|- bool] x => Some [|- bool] (Not x)
  | other => None)
| UIf c e1 e2 =>  (case typecheck c of
  | Some [|- bool] c' => (case (typecheck e1 , typecheck e2) of
    | (Some [|- T] e1' , Some [|- T] e2') =>
      Some [|- T] (If c' e1' e2')
    | other => None)
  | other => None)
% ... the cases for UPlus, UTrue and UFalse are similar
;
\end{lstlisting}

In the \lstinline!typecheck! function the cases for numbers, plus,
true and false are completely straightforward. The case for negation
(i.e. constructor \lstinline!UNot!) is interesting because we need to
pattern match on the result of type-checking the sub-expression
\lstinline!e! to match its type to \lstinline!bool! otherwise we
cannot construct the intrinsically typed term, i.e. the constructor
\lstinline!Not! requires a boolean term, this requires matching on
index level terms. Additionally the case for \lstinline!UIf! is also
interesting because we not only need a boolean condition but we also
need to have both branches of the \lstinline!UIf! term to be of the
same type. Again we use pattern matching on the indices to verify that
the condition is of type \lstinline!bool! but notably we use
non-linear pattern matching to ensure that the type of the branches
coincides. Therefore, by using non-linear patterns we can force two
meta-variables to match against the same term. Notably, note that
\lstinline!If! has an implicit argument (the type \lstinline!T!) which
will be inferred during elaboration and the fact that it is used in
both branches implies that in an if expression the ``then'' branch and
the ``else'' branch are of the same type.

In the definition of type \lstinline!TmOpt! we chose to explicitly
quantify over \lstinline!T!, however another option would have been to
leave it implicit.  When pattern matching on \lstinline!Some e!, we
would need to resurrect the type of the argument \lstinline!e! to be
able to inspect it and check whether it has the appropriate type. We
can employ type annotations, as shown in the code below, to constrain
the type of \lstinline!e!.

\begin{lstlisting}
| UIf c e1 e2 =>  (case typecheck c of
  | Some (c' <<: [|- bool]>>) => (case (typecheck e1, typecheck e2) of
    | (Some (e1' <<: [|- T]>>) , Some (e2' <<: [|- T]>>) =>
      Some (If c' e1' e2')
    | other => None)
  | other => None)
\end{lstlisting}

\filbreak
In this first example there is not much to elaborate. The missing
argument in \lstinline!If! and the types of variables in patterns are
all that need to be elaborated.

\subsubsection{Type-preserving Evaluation}\label{sec:tpeval}

Our previous program used dependent types sparingly; most notably
there were no dependent types in the type declaration given to the
function \lstinline!typecheck!. We now discuss the implementation of
an evaluator, which evaluates type correct programs to values of the
same type, to highlight writing dependently typed functions. Because
we need to preserve the type information, we index the values by their
types in the following manner:

\begin{lstlisting}
inductive Val : [|- tp] -> ctype =
| VNum  : Nat -> Val [|- nat]
| VTrue : Val [|- bool]
| VFalse: Val [|- bool]
;
\end{lstlisting}

We define a type preserving evaluator below:

\begin{lstlisting}
rec eval : Tm [|- T] -> Val [|- T] = fn e => case e of
| Num n => VNum n
| Plus e1 e2 => (case (eval e1 , eval e2) of
  | (VNum x , VNum y) => VNum (add x y))
| Not e => (case eval e of
  | VTrue => VFalse
  | VFalse => VTrue)
| If e e1 e2 =>  (case eval e of
  | VTrue => eval e1
  | VFalse => eval e2)
| True => VTrue
| False => VFalse
;
\end{lstlisting}

\filbreak
First, we specify the type of the evaluation function as:
\begin{lstlisting}
Tm[|- T] -> Val [|- T]
\end{lstlisting}
where \lstinline!T! remains free. The user provided type has a free
variable (i.e. \lstinline!T!) that elaboration will abstract over
after inferring its type. The type of the variable should be fixed by
the places where it is used. Elaboration will therefore abstract over
it in the outside by adding it as an implicit parameter (introduced by
$\Pi^i$, which is an abstraction that we indicate as implicit since
the user will not provide instantiations for and that elaboration will
reconstruct). We then elaborate the body of the function against
\lstinline!$\Pi^i$T:|- tp. Tm [|- T] -> Val [|- T]!. It will first
need to introduce the appropriate dependent function abstraction in
the program before we introduce the non-dependent function $\fne x e$.
Moreover, we need to infer omitted arguments in the pattern in
addition to inferring the type of pattern variables in the
\lstinline!If! case. Since \lstinline!T! was left free in the type
given to \lstinline!eval!, we must also infer the omitted argument in
the recursive calls to \lstinline!eval!. Finally, we need to keep
track of refinements the pattern match induces: our scrutinee has type
\lstinline!Tm [|- T]!; pattern matching against \lstinline!Plus e1 e1!
which has type \lstinline!Tm [|- nat]! refines \lstinline!T! to
\lstinline!nat!.

\subsubsection{A Certifying Evaluator}\label{sec:certeval}

So far in our examples, we have used a simply typed index language. We
used our index language to specify natural numbers, booleans, and a
tagged enumeration that contained labels for the \lstinline!bool! and
\lstinline!nat! types. In this example we go one step further, and use
a dependently typed specification, in fact we take advantage of LF as our index
level language as used in Beluga. Using LF at the index language we
specify the simply-typed lambda calculus and its operational semantics
in Figure~\ref{fig:certeval}. These rules provide a call by name
operational semantic chosen for no reason other than to save one
evaluation rule compare to call by value. Using these specifications
we write a recursive function that returns the value of the program
together with a derivation tree that shows how the value was computed.
This example requires dependent types at the index level and
consequently the elaboration of functions that manipulate these
specifications has to be more powerful.

\begin{figure}
  \centering
\begin{displaymath}
  \begin{array}{lr@{~~}c@{~~}l}
    \mbox{Types} & T & \bnfas & \top \bnfalt T_1\to T_2 \\[0.5em]

    \mbox{Terms} & M, N & \bnfas & () \bnfalt x \bnfalt \lambda x\oft T. M \bnfalt M N\\[.5em]

    \mbox{Context} & \Gamma & \bnfas & \cdot \bnfalt \Gamma, x\oft T\\[.5em]

    \mbox{Values} & V & \bnfas & \top \bnfalt \lambda x\oft T. M
  \end{array}
\end{displaymath}
  \begin{displaymath}
    \begin{array}{c}
      \multicolumn{1}{l}{\boxed{\Gamma\vdash M\oft T}~~\mbox{Term $M$
          has type $T$ in context $\Gamma$}}\vs

      \infer{\Gamma\vdash()\oft\top}{}

      \quad

      \infer{\Gamma\vdash x\oft T}{x\oft T \in \Gamma}

      \quad

      \infer{\Gamma\vdash\lambda x\oft T_1. M\oft T_2} {\Gamma,x\oft T_1\vdash M\oft T_2}

      \vs

      \infer{\Gamma\vdash M N\oft T} {\Gamma\vdash M \oft T_1\to T & \Gamma\vdash N\oft T_1}

      \vs

      \multicolumn{1}{l}{\boxed{M\Downarrow V}~~\mbox{Term $M$
          steps to value $V$}}\vs

      \infer{\top \steps \top}{}

      \quad

      \infer{\lambda x\oft T.M\steps\lambda x\oft T.M}{}

      \vs

      \infer
      {M N \steps N'}
      {M\steps \lambda x:T.M' & [N/x]M'\steps N' }

    \end{array}
  \end{displaymath}
\caption{Example: A Simply-typed $\lambda$-calculus}
  \label{fig:certeval}
\end{figure}

As in the previous example, we define the types of terms of our
language using the index level language. As opposed to the type
preserving evaluator, in this case we define our intrinsically typed
terms also using the index level language (which will be LF for this
example). We take advantage of LF to represent binders in
$\lambda$-terms, and use dependent types to represent well-typed terms
only.

\begin{lstlisting}
LF tp : type =
| unit : tp
| arr : tp -> tp -> tp
;

LF term : tp -> type =
| one : term unit
| lam : (term A -> term B) -> term (arr A B)
| app : term (arr A B) -> term A -> term B
;
\end{lstlisting}

These datatypes represent an encoding of well typed terms of a
simply-typed lambda calculus with \lstinline!unit! as a base type.
Using LF we can also describe what constitutes a value and a big-step
operational semantics. We use the standard technique in LF to
represent binders with the function space (usually called Higher Order
Abstract Syntax, HOAS~\citep{Pfenning88pldi}) and type families to only
represent well-typed terms, thus this representation combines the
syntax for terms with the typing judgment from
Figure~\ref{fig:certeval}.

\begin{lstlisting}
LF value : tp -> type =
| v-one : value unit
| v-lam : (term A -> term B) -> value (arr A B)
;

LF big-step : term T -> value T -> type =
| e-one : big-step one v-one
| e-lam : big-step (lam M) (v-lam M)
| e-app : big-step M (v-lam M') ->
          big-step (M' N) N' ->
          big-step (app M N) N'
;
\end{lstlisting}

The \lstinline!value! type simply states that \lstinline!one! and
lambda terms are values, and the type \lstinline!big-step! encodes the
operational semantics where each constructor corresponds to one of the
rules in Figure~\ref{fig:certeval}. The constructors \lstinline!e-one!
and \lstinline!e-lam! simply state that \lstinline!one! and lambdas
step to themselves. On the other hand rule \lstinline!e-app! requires
that in an application, the first term evaluates to a lambda
expression (which is always the case as the terms are intrinsically
well typed) and then it performs the substitution and continues
evaluating the term to a value. Note how the substitution is performed
by an application as we reuse the LF function space for binders as
typically done with HOAS.


To implement a certifying evaluator we want the \lstinline!eval!
function to return a value and a derivation tree that shows how we
computed this value. We encode this fact in the \lstinline!Cert!
data-type that encodes an existential or dependent pair that combines a
value with a derivation tree.

\begin{lstlisting}
inductive Cert : [|- term T] -> ctype =
| Ex : {N: [|- value T]} [|- big-step M N] -> Cert [|- M]
;
\end{lstlisting}

In the \lstinline!Ex! constructor we have chosen to explicitly
quantify over \lstinline!N!, the value of the evaluation, and left the
starting term \lstinline!M! implicit. However another option would
have been to leave both implicit, and use type annotations when
pattern matching to have access to both the term and its value.

Finally the evaluation function simply takes a term and returns a
certificate that contains the value the terms evaluates to, and the
derivation tree that led to that value.

\begin{lstlisting}
rec eval : {M : [|- term T]} Cert [|- M] =
mlam M => case [|- M] of
| [|- one] => Ex [|- v-one] [|- e-one]
| [|- lam (\x.M)] => Ex [|- v-lam (\x.M)] [|- e-lam]
| [|- app M N] =>
  let Ex [|- v-lam (\x. M')] [|- D]  = eval [|- M] in
  let Ex [|- N'][|- D'] = eval [|- M'[N]] in
  Ex [|- N'] [|- e-app D D']
;
\end{lstlisting}

Elaboration of \lstinline!eval! starts by the type
annotation. Inferring the type of variable \lstinline!T! and
abstracting over it, resulting in:
\begin{lstlisting}
  <<$\Pi^i$T:[|- tp].>> {M : [|- term T]} Cert [|- M]
\end{lstlisting}
The elaboration proceeds with the body, abstracting over the inferred
dependent argument with \lstinline!mlam T => ...! When elaborating the
case expression, the patterns in the index language will need
elaboration. In this work we assume that each index language comes
equipped with an appropriate notion of elaboration (described in
\citep{Pientka:JFP13} for the logical framework LF). For example,
index level elaboration will abstract over free variables in
constructors and the pattern for lambda terms becomes
\lstinline![lam <<A B>> (\x. M)]! when the types for parameters and body
are added.
Additionally, in order to keep the core language as lean as possible
we desugar \lstinline!let! expressions into \lstinline!case!
expressions. For example, in the certifying evaluator, the following
code from \lstinline!eval!:
\begin{lstlisting}
  let Ex [|- v-lam M'] [|- D]  = eval [|- M] in
  let Ex [|- N'][|- D'] = eval [|- M' [N]] in
  Ex [|- N'] [|- e-app D D']
\end{lstlisting}
is desugared into:
\begin{lstlisting}
 (case eval [|- M] of | Ex [|- v-lam M'] [|- D] =>
    (case eval [|- M' [N]] of
    | Ex [|- N'][|- D'] => Ex [|- N'] [|- \e-app D D']))
\end{lstlisting}

We will come back to this example and discuss the fully elaborated
program in the next section.


\begin{figure}
  \centering
\begin{displaymath}
  \begin{array}{lrcl}
    \mbox{Kinds} & K & \bnfas &  \ctype \bnfalt \pie {X\oft U} K
    \bnfalt \piei {X\oft U} K\\


    \mbox{Types} & T & \bnfas & \pie {X\oft U} T \bnfalt \piei {X\oft U} T \\
    & & & P \bnfalt U \bnfalt T_1\to T_2 \bnfalt \const a \bnfalt \yux T C\\[1em]

    \mbox{Expressions} & E & \bnfas & \fne x E
      \bnfalt \mlame X E \\
      & & & \bnfalt \yux {E_1} E_2 \bnfalt \yux{E_1} C \bnfalt
      C \bnfalt\\
      & & & \casee E {\vec{B}} \bnfalt x \bnfalt E \oft T
      \bnfalt \const{c}\\

    \mbox{Branches} & \vec{B} & \bnfas & B \bnfalt
      (B \mid \vec{B})\\

    \mbox{Branch} & B & \bnfas & \Pi^i\Delta;\Gamma. Pat \oft\theta \mapsto E\\

    \mbox{Pattern} & Pat & \bnfas & x \bnfalt C \bnfalt \yux{\const c}
    \oarr{Pat}
\\[1em]

    \mbox{Declarations} & D & \bnfas & {\const c}\oft T \mid {\const a} \oft K \mid \rece f T E
\\[1em]

    \mbox{Context} & \Gamma & \bnfas & \cdot \bnfalt \Gamma, x \oft T \\

    \mbox{Index-Var-Context} & \Delta & \bnfas & \cdot \bnfalt \Delta, X\oft U\\

    \mbox{Refinement} & \theta & \bnfas & \cdot \bnfalt \theta, C/X
    \bnfalt \theta, X/X \\
  \end{array}
\end{displaymath}

  \caption{Target Language}
  \label{fig:target}
\end{figure}

\section{Target Language}
The \emph{target language} is similar to the computational language
described in~\citet{Cave:POPL12} which has a well developed
meta-theory including descriptions of coverage
\citep{Dunfield:coverage08, JacobRao:2017} and termination
\citep{Pientka:TLCA15}. The target language (see
Fig.~\ref{fig:target}), which is similar to our source language, is
indexed by fully explicit terms of the index level language; we use
$C$ for fully explicit index level objects, and $U$ for elaborated
index types; index-variables occurring in the target language will be
represented by capital letters such as $X, Y$. Moreover, we rely on a
substitution which replaces index variables $X$ with index objects.
The main difference between the source and target language is in the
description of branches. In each branch, we make the type of the
pattern variables (see context $\Gamma$) and variables occurring in
index objects (see context $\Delta$) explicit. We associate each
pattern with a refinement substitution $\theta$ which specifies how
the given pattern refines the type of the scrutinee.

\subsection{Typing of the Target Language}\label{sec:typtarget}

\begin{sidewaysfigure}
  \centering\scriptsize
  \begin{displaymath}
    \begin{array}{c} 
      \multicolumn{1}{l}{\boxed{\vdash  D \wf}
        ~~\mbox{Target declaration $D$ is well-formed}}\vspace{.5em}\\

      \infer[\rl{t-rec}]
      {\vdash\rece f T E \wf}
      {\cdot \vdash T \checks \ctype & \cdot~;~ f\oft T \vdash E\checks T}

\qquad
      \infer[\rl{t-type}]
      {\vdash {\const c} \oft T \wf}
      {\cdot \vdash T \checks \ctype}

\qquad
      \infer[\rl{t-kind}]
      {\vdash {\const a} \oft K \wf}
      {\cdot \vdash K \checks \mathsf{kind}}

\\[1em]
      \multicolumn{1}{l}{\boxed{\Delta\vdash T : K}
        ~~\mbox{$T$ is a well-kinded type of kind $K$}}\vspace{.5em}\\

      \infer[\rl{k-arr}]
      {\Delta\vdash T_1 \to T_2 \oft \ctype}
      {\Delta\vdash T_1 \oft \ctype & \Delta\vdash T_2 \oft \ctype}

      \quad

      \infer[\rl{k-pi}]
      {\Delta\vdash \pieboth {X\oft U} T \oft \ctype}
      {\Delta\vdash U \oft \itype & \Delta,X\oft U\vdash T \oft \ctype}

      \vs

      \infer[\rl{k-idx}]
      {\Delta\vdash U\oft\ctype}
      {\Delta\vdash U \oft \itype}

      \quad

      \infer[\rl{k-app}]
      {\Delta\vdash \yux T C \oft \pieboth {X\oft U} K}
      {\Delta\vdash C \checks U & \Delta,X\oft U\vdash T \oft K}

      \quad

      \infer[\rl{k-con}]
      {\Delta\vdash \const a \oft K}
      {\Sigma(\const a) = K}

      \\[1em]
      \multicolumn{1}{l}{\boxed{\Delta;\Gamma\vdash E\synths T}
        ~~\mbox{$E$ synthesizes type $T$}}\vspace{.5em}\\

      \infer[\rl{t-app}]
      {\Delta;\Gamma\vdash\yux {E_1}{E_2}\synths T}
      {\Delta;\Gamma\vdash E_1 \synths S\to T &
        \Delta;\Gamma\vdash E_2 \checks S}

      \qquad

      \infer[\rl{t-app-index}]
      {\Delta;\Gamma\vdash\yux E C\synths \asub{C/X}T}
      {\Delta;\Gamma\vdash E\synths\pieg{X\oft U}T ~~* = \{i,e\} &
        \Delta\vdash C\oft U}

      \vs

      \infer[\rl{t-const}]
      {\Delta;\Gamma\vdash \const c \synths T}
      {\Sigma(\const c) = T}

      \quad

      \infer[\rl{t-var}]
      {\Delta;\Gamma\vdash x \synths T}
      {\Gamma(x) = T}

      \qquad

      \infer[\rl{t-ann}]
      {\Delta;\Gamma\vdash E\oft T\synths T}
      {\Delta;\Gamma\vdash E\checks T}

      \qquad

      \vs 
      \multicolumn{1}{l}{\boxed{\Delta;\Gamma\vdash E\checks
          T}~~\mbox{$E$ type checks against type $T$}}\vs

      \infer[\rl{t-syn}]
      {\Delta;\Gamma\vdash E \checks T}
      {\Delta;\Gamma\vdash E \synths T}

      \qquad

      \infer[\rl{t-fn}]
      {\Delta;\Gamma\vdash (\fne x E) \checks T_1\to T_2}
      {\Delta;\Gamma,x\oft T_1\vdash E \checks T_2}

      \vs

      \infer[\rl{t-mlam}]
      {\Delta;\Gamma\vdash (\mlame X E) \checks \pieg{X\oft U} T}
      {\Delta,X\oft U;\Gamma\vdash E \checks T ~~~ * = \{i,e\}}

      \qquad

      \infer[\rl{t-case}]
      {\Delta;\Gamma\vdash\casee E {\wvec{\B}}\checks T}
      {\Delta;\Gamma\vdash E\synths S &
        \Delta;\Gamma\vdash \wvec{\B}\checks S\to T}

      \vs 
      \multicolumn{1}{l}{\boxed{\Delta;\Gamma \vdash\Pi\Delta';\Gamma'. Pat
          \oft\theta\mapsto E\checks T}~~\mbox{Branch $\B = \Pi\Delta';\Gamma'. Pat
          \oft\theta\mapsto E$
          checks against type $T$}}\vs

      \infer[\rl{t-branch}]
      {\Delta;\Gamma\vdash
        \Pi\Delta';\Gamma'. Pat \oft\theta\mapsto E
        \checks S\to T}
      {\Delta'\vdash\theta\oft\Delta &
        \Delta';\Gamma'\vdash Pat\checks\asub{\theta}{S} &
        \Delta';\asub{\theta}\Gamma,\Gamma'\vdash E\checks\asub{\theta}{T}}

      \vs 
      \multicolumn{1}{l}{\boxed{\Delta;\Gamma\vdash Pat\checks
          T}~~\mbox{Pattern $Pat$ checks against $T$}}\vs

      \infer[\rl{t-pindex}]
      {\Delta;\Gamma\vdash C\checks U}
      {\Delta\vdash C\checks U}

      \qquad

      \infer[\rl{t-pvar}]
       {\Delta;\Gamma\vdash x \checks T}
      {\Gamma(x) = T}

      \qquad

       \infer[\rl{t-pcon}]
       {\Delta;\Gamma\vdash\yux{\const{c}}{\wvec{Pat}}\checks S}
       {\Sigma(\const c)=T &
        \Delta;\Gamma\vdash\wvec{Pat}\checks T\spineret S}

      \vs

      \multicolumn{1}{l}{\boxed{\Delta;\Gamma\vdash \wvec{Pat}\checks
          T\spineret S}~~\mbox{Pattern spine $\wvec{Pat}$ checks against $T$ and
          has result type $S$}}
      \vs

      \infer[\rl{t-spi}]
      {\Delta;\Gamma\vdash C\,\wvec{Pat}\checks\pie{X\oft U} T\spineret S}
      {\Delta \vdash C \checks  U &
        \Delta;\Gamma\vdash\wvec{Pat}\checks[C/X]T\spineret S}

      \quad

      \infer[\rl{t-sarr}]
      {\Delta;\Gamma\vdash Pat\,\wvec{Pat}\checks T_1\to T_2\spineret S}
      {\Delta;\Gamma\vdash Pat\checks T_1 &
        \Delta;\Gamma\vdash\wvec{Pat}\checks T_2\spineret S}

      \quad

      \infer[\rl{t-snil}]
      {\Delta;\Gamma\vdash \cdot \checks S\spineret S}
      {}

    \end{array}
  \end{displaymath}
  \caption{Typing of Computational Expressions}
  \label{fig:comptyp}
\end{sidewaysfigure}

The kinding and typing rules for our core language are given in
Fig.~\ref{fig:comptyp}. We use a bidirectional type
system~\citep{Pierce:2000} for the target language which is similar to
the one in~\citet{Cave:POPL12} but we simplify the presentation by
omitting recursive types. Instead we assume that constructors together
with their types are declared in a signature $\Sigma$. We choose a
bi-directional type-checkers because it minimizes the need for
annotations by propagating known typing information in the checking
phase (judgment $\Delta;\Gamma\vdash E\checks T$) and inferring the
types when it is possible in the synthesis phase (judgment
$\Delta;\Gamma\vdash E \synths T$).

We rely on the fact that our index domain comes with rules which check
that a given index object is well-typed. This is described by the
judgment: $\Delta \vdash C : U$.

We check the introductions forms for functions $\fne x e$ and
dependent functions $\mlame x e$ against their respective types.
Dependent functions check against both $\pie {X\oft U} T$ and
$\piei {X\oft U} T$ where types are annotated with $e$ for explicit
quantification and $i$ for implicit quantification filled in by
elaboration. Their corresponding eliminations, application
$\yux {E_1} {E_2}$ and dependent application $\yux E \ibox C$,
synthesize their type. We rely in this rule on the index-level
substitution operation and we assume that it is defined in such a way
that normal forms are preserved\footnote{In Beluga, this is for
  example achieved by relying on hereditary
  substitutions\citep{Cave:POPL12}.}.

To type-check a case-expressions $\casee E {\vec{B}}$ against $T$, we synthesize
a type $S$ for $E$ and then check each branch against $S \to T$.
A branch $\Pi \Delta';\Gamma'.Pat \oft \theta\mapsto E$ checks against $S \to
T$, if: 1) $\theta$ is a refinement substitution mapping all index variables
declared in $\Delta$ to a new context $\Delta'$, 2)
the pattern $Pat$ is compatible with the type $S$ of the scrutinee,
i.e. $Pat$ has type $\asub{\theta}S$, and the body $E$ checks against
$\asub{\theta}T$ in the index context $\Delta'$ and the program context
$\asub{\theta}\Gamma, \Gamma_i$. Note that the refinement substitution
effectively performs a context shift.

We present the typing rules for patterns in spine format which will simplify our
elaboration and inferring types for pattern variables. We start checking a
pattern against a given type and check index objects and variables against the
expected type. If we encounter $\yux{\const{c}}{\wvec{Pat}}$ we look up the type $T$
of the constant $\const{c}$ in the signature and continue to check the spine
$\wvec{Pat}$ against $T$ with the expected return type $S$. Pattern spine typing
succeeds if all patterns in the spine $\wvec{Pat}$ have the corresponding type
in $T$ and yields the return type $S$.

\filbreak
\subsection{Elaborated Examples}

In Section~\ref{sec:certeval} we give an evaluator for a simply typed
lambda calculus that returns the result of the evaluation together
with the derivation tree needed to compute the value. The elaborated
version of function \lstinline~eval~ is:

\begin{lstlisting}
rec eval : <<$\piei$ T:[|- tp].>> {M : [|- term T]} Cert <<[|- T]>>[|- M] =
<<mlam T => >>mlam M => case [|- M] of
| <<$\Pi^i$.;.>> [|- one]<<: unit/T>> => Ex <<[|- unit]>> [|- v-one] [|- e-one]
| <<$\Pi^i$.T1:[|- tp], T2:[|- tp], M : [x:T1|-term T2];.>>
  [|- lam <<T1 T2>> (\x.M)] <<: arr T1 T2/ T>> =>
  Ex <<[|- arr T1 T2]>>
     [|- v-lam <<T1 T2>> (\x. M)]
     [|- e-lam <<(\x. M)>>]
| <<$\Pi^i$.T1:[|- tp], T2:[|- tp],
  M:[|- term (arr T1 T2)], N:[|- term T1];.>>
  [|- app <<T1 T2>> M N]<<: T2/T>> =>
    (case eval <<[|- arr T1 T2]>> [|- M] of
    | <<$\Pi^i$.T1:[|- tp],T2:[|- tp],M':[x:T1|-term T2],
      D: [|- big-step (arr T1 T2) M'
            (v-lam (\x.M))];.>>
      Ex <<[|- arr T1 T2]>>[|- v-lam <<(arr T1 T2)>> M'][|- D]<<:.>>=>
      (case eval <<[|- T2]>> [|- M' [N]] of
     | <<$\Pi^i$.T2:[|- tp], N':[|- val T2],
       D':[|- big-step T2 (M' [N]) N'];.>>
       Ex <<[|- T2]>> [|- N'][|- D']<<:.>> =>
         Ex <<[|- T2]>> [|- N']
            [|- e-app <<M M' N N'>> D D']))
;
\end{lstlisting}

To elaborate a recursive declaration we start by reconstructing the
type annotation given to the recursive function. In this case the user
left the variable \lstinline!T! free which becomes an implicit
argument and we abstract over this variable with
\lstinline!<<$\piei$ T:[|- tp]>>! marking it implicit. Notice however how
the user explicitly quantified over \lstinline!M! this means that
callers of \lstinline!eval! have to provide the term \lstinline!M!
while parameter \lstinline !T! will be omitted and inferred at each
calling point. Next, we elaborate the function body given the fully
elaborated type. We therefore add the corresponding abstraction\\
\lstinline!<<mlam T=> >>! for the implicit parameter.

Elaboration proceeds recursively on the term. We reconstruct the
case-expression, considering first the scrutinee \lstinline![M]! and we
infer its type to be \lstinline![term T]!. We elaborate the branches
next. Recall that each branch in the source language consists of a
pattern and a body. Moreover, the body can refer to any variable in
the pattern or variables introduced in outer patterns. However, in the
target language branches abstract over the context $\Delta;\Gamma$ and
add a refinement substitution $\theta$. The body of the branch refers
to variables declared in the branch contexts only.  In each branch, we
list explicitly the index variables and pattern variables.  For
example in the branch for \lstinline![lam M]! we added \lstinline!T1!
and \lstinline!T2! to the index context $\Delta$ of the branch,
index-level reconstruction adds these variables to the
pattern. 
The refinement substitution moves terms from the outer context to the
branch context, refining the appropriate index variables as expressed
by the pattern. For example in this branch, the substitution refines
the type \lstinline!<<[T]>>!  to the type \lstinline!<<[arr T1 T2]>>!. And in
the \lstinline!<<[one]>>!  branch it refines the type
\lstinline!<<[T]>>! to \lstinline!<<[unit]>>!.

As we mentioned before, elaboration adds an implicit parameter to the
type of function \lstinline!eval!, and the user is not allowed to
directly supply an instantiation for it. Implicit parameters have to be
inferred by elaboration. In the recursive calls to \lstinline!eval!, we add
the parameter that represents the type of the term being evaluated.

The output of the elaboration process is a target language term that
can be type checked with the rules from Figure~\ref{fig:comptyp}.

If elaboration fails it can either be because the source level program
describes a term that would be ill-typed when elaborated, or in some
cases, elaboration fails because it cannot infer all the implicit
parameters. This might happen if unification for the index language is
undecidable, as is for example the case for contextual LF.  In this
case, annotations are needed when the term falls outside the strict
pattern fragment where unification is decidable; this is rarely a
problem in practice. For other index languages where unification is
decidable, we do not expect such annotations to be necessary.






\section{Description of Elaboration}

Elaboration of our source-language to our core target language is
guided by the expected target type using a bi-directional algorithm.
Recall that we mark in the target type the arguments which are
implicitly quantified (e.g.: $\piei {X \oft U} T$). This annotation is
added when we elaborate a source type with free variables. If we check
a source expression against $\piei {X \oft U} T$ we insert the
appropriate mlam-abstraction in our target. When we switch between
synthesizing a type $S$ for a given expression and checking an
expression against an expected type $T$, we will rely on unification
to make them equal. A key challenge is how to elaborate
case-expressions where pattern matching a dependently typed expression
of type $\tau$ against a pattern in a branch might refine the type
$\tau$. Our elaboration is parametric in the index domain, hence we
keep our definitions of holes, instantiation of holes and unification
abstract and only state their definitions and properties.

\begin{figure*}
  \centering
  \begin{displaymath}
  \begin{array}{c}
    \infer[\eltyp]
    {\vdash {\const c} : t \recon \piei {(\Delta_i,\ahsub{\ep}\Delta)}\ahsub{\ep}T}
    {\cdot;\cdot\mid\cdot \vdash t \recon T / \Theta ; \Delta; \cdot &
      \Delta_i \vdash \ep : \Theta}

    \vs

    \infer[\elkind]
    {\vdash {\const a} : k \recon \piei {(\Delta_i,\ahsub{\ep}\Delta)}\ahsub{\ep}K}
    {\cdot;\cdot\mid\cdot \vdash k \recon K / \Theta ; \Delta; \cdot &
      \Delta_i \vdash \ep : \Theta}



    \vs

    \infer[\elrec]
    {\ \vdash \rece f t e \recon
      \rece f {\piei {(\Delta_i,\ahsub{\ep}\Delta)} \ahsub{\ep}T} E }
    {
    \begin{array}{c}
        \cdot; \cdot\mid\cdot \vdash t \recon T/\Theta;\Delta;\cdot \\
        \Delta_i \vdash \ep : \Theta\\
         \cdot; f\oft \piei {\Delta_i,\ahsub{\ep}\Delta} T \vdash
             \rclo e \cdot \reconChk{\piei {(\Delta_i,\ahsub{\ep}\Delta)} \ahsub{\ep}T} E/\cdot ; \cdot
    \end{array}
    }

  \end{array}
  \end{displaymath}
  \caption{Elaborating Declarations}
  \label{fig:eldec}
\end{figure*}

\subsection{Elaboration of Index Objects}
To elaborate a source expression, we insert holes for omitted index
arguments and elaborate index objects which occur in it. We
characterize holes with contextual objects as
in~\citep{Nanevski:ICML05, Pientka:TOCL09}. Contextual objects encode
the dependencies on the context that the hole might have. We hence
make a few requirements about our index domain. We assume:

\begin{enumerate}
\item A function $\genHole (\hole Y \oft \Delta\vdash U)$ that generates a term
  standing for a hole of type $U$ in the context $\Delta$, i.e. its
  instantiation may refer to the index variables in $\Delta$. If the
  index language is first-order, then we can characterize holes for
  example by meta-variables \citep{Nanevski:ICML05}. If our index
  language is higher-order, for example if we choose contextual LF as
  in Beluga, we characterize holes using meta$^\textbf{2}$-variables as
  described in~\citet{Boespflug:LFMTP11}. As is common in these meta-variable
  calculi, holes are associated with a delayed substitution $\theta$
  which is applied as soon as we know what $Y$ stands for.

\filbreak
\item A typing judgment for guaranteeing that index objects with holes are
  well-typed:
\[
\begin{array}{ll}
\Theta ; \Delta \vdash C \oft U  & \mbox{Index object $C$ has index type $U$ in
  context $\Delta$}\\
& \mbox{and all holes in $C$ are declared in $\Theta$}
\end{array}
\]

where $\Theta$ stores the hole typing assumptions:

\[
\begin{array}{lrcl}
  \mbox{Hole Context} & \Theta & \bnfas &  \cdot
  \bnfalt \Theta,\hole X\oft \Delta\vdash U
\end{array}
\]

\item Unification algorithm which finds the most general unifier for
  two index objects. In Beluga, we rely on the higher-order
  unification; more specifically, we solve eagerly terms which fall
  into the pattern fragment \citep{Miller91iclp,Dowek96jicslp} and
  delay others~\citep{Abel:TLCA11}. A most general unifier exists if
  all unification constraints can be solved. Our elaboration relies on
  unifying computation-level types which in turn relies on unifying
  index-level terms; technically, we in fact rely on two unification judgments: one finding
  instantiations for holes in $\Theta$, the other finding most general
  instantiations for index variables defined in $\Delta$ such that two
  index terms become equal. We use the first one during elaboration
  when unifying two computation-level types; the second one is used when
  computing the type refinement in branches.

\[
  \begin{array}{rcll}
      \Theta;\Delta & \vdash & C_1 \doteq C_2 /\Theta';\rho
& \quad \mbox{where:}~
      \Theta'\vdash\rho\oft\Theta \\
      \Delta & \vdash & C_1 \doteq C_2 /\Delta';\theta
& \quad \mbox{where:}~
      \Delta'\vdash\theta\oft\Delta
  \end{array}
\]

where $\rho$ describes the instantiation for holes in $\Theta$. If unification
succeeds, then we have $\ahsub\rho {C_1} =\ahsub\rho {C_2}$ and
$\asub{\theta}C_1 = \asub{\theta}C_2$ respectively.

\item Elaboration of index objects themselves. If the index language is simply
  typed, the elaboration has nothing to do; however, if as in Beluga, our index
  objects are objects described in the logical framework LF, then we need to
  elaborate them and infer omitted arguments following
  \citep{Pientka:JFP13}. 
\end{enumerate}

\subsubsection{Index Language Elaboration}

In this chapter we focus on the elaboration of the computational
language, therefore we assume the existence of the counter part for the
index domain. In this section we summarize the requirements on the
index domain. 







\subsubsection*{Well-typed Index Objects (target)}

\[
\begin{array}{ll}
\Delta \vdash C : U  & \mbox{Index object $C$ has index type $U$ in context $\Delta$}
\end{array}
\]

Substitution $C/X$ in an index object $C'$ is defined as $[C/X] C'$.

\subsubsection*{Well-typed Index Objects with Holes}

\[
\begin{array}{ll}
\Theta ; \Delta \vdash C : U  & \mbox{Index object $C$ has index type $U$ in
  context $\Delta$}\\
& \mbox{and all holes in $C$ are well-typed wrt $\Theta$}
\end{array}
\]

\[
  \begin{array}{lrcl}
    \mbox{Hole types} &  & \bnfas & \Delta\vdash U \\
    \mbox{Hole Contexts} & \Theta & \bnfas & \cdot \bnfalt \Theta,\hole{X}\oft \Delta\vdash U\\
    \mbox{Hole Inst.} & \rho & \bnfas & \cdot \bnfalt \rho,\hat\Delta\vdash C/\hole X \\
  \end{array}
\]

When we insert hole variables for omitted arguments in a given context
$\Delta$, we rely on the abstract function
$\genHole (\hole Y : \Delta\vdash U)$ which returns an index term containing
a new hole variable. When instantiating a hole we only need the names
of the variables in the context for $\alpha$-conversion, we represent
this as $\hat\Delta$. We assume the existence of an operation
$\ids\Theta$ that computes the identity substitution on the hole
context $\Theta$ by replacing each variable with itself.

\[
\begin{array}{lcl}
\genHole (\hole Y:\Delta\vdash U) & =  & C
\qquad \mbox{where $C$ describes a hole. }
\end{array}
\]


\subsubsection*{Unification of index objects}
The notion of unification that elaboration needs depends on the
index level language. As we mentioned, we require that equality on our index
domain is decidable; for elaboration, we also require that there is a decidable
unification algorithm which makes two terms equal. In fact, we need two forms:
one which allows us to infer instantiations for holes and another which unifies
two index objects finding most general instantiations for index variables such
that the two objects become equal. We
use the first one during elaboration, the second one is used to make two index
objects equal as for example during matching.

\[
  \begin{array}{rcll}
      \Theta;\Delta & \vdash & C_1 \doteq C_2 /\Theta';\rho
& \quad \mbox{where:}~
      \Theta'\vdash\rho\oft\Theta \\
      \Delta & \vdash & C_1 \doteq C_2 /\Delta';\theta
& \quad \mbox{where:}~
      \Delta'\vdash\theta\oft\Delta
  \end{array}
\]

where $\rho$ describes the instantiation for holes in $\Theta$. If unification
succeeds, then we have $\ahsub\rho {C_1} =\ahsub\rho {C_2}$ and
$\asub{\theta}C_1 = \asub{\theta}C_2$ respectively.


\subsubsection{Elaboration of index objects}
We describe the elaboration of index objects themselves. There are two
related judgements of elaboration for index objects that we use:

\begin{displaymath}
  \begin{array}{rl@{~~}l}
     \Theta;\Delta & \vdash c : U  ~~~~~~~~~ \recon &  C / \Theta';\Delta';\rho \\
     \Theta;\Delta & \vdash \rclo c \theta  \reconChk U &  C/\Theta';\rho \\

  \end{array}
\end{displaymath}

The first judgment elaborates the index object $c$ by checking it against
$U$. We thread through a context $\Theta$ of holes and a context of index
variables $\Delta$, we have seen so far. The object $c$ however may contain
additional free index variables whose type we infer during elaboration. All variables occurring
in $C$ will be eventually declared with their corresponding type in $\Delta'$.
As we elaborate $c$, we may refine holes and add additional holes. $\rho$
describes the mapping between $\Theta$ and $\Theta'$, i.e. it records
refinement of holes. Finally, we know that $\Delta' = \ahsub{\rho}\Delta,
\Delta_0$, i.e. $\Delta'$ is an extension of $\Delta$. We use the first
judgment in elaborating patterns and type declarations in the signature.

The second judgment is similar to the first, but does not allow free index
variables in $c$. We elaborate $c$ together with a refinement substitution
$\theta$, which records refinements obtained from earlier branches.
When we encounter an index variable, we look up what it is mapped to in $\theta$
and return it. Given a hole context $\Theta$ and a index variable context
$\Delta$, we elaborate an index term $c$ against a given type $U$. The result is
two fold: a context $\Theta'$ of holes is related to the original hole context
$\Theta$ via the hole instantiation $\rho$. We use the second judgment to
elaborate index objects embedded into target expressions.

\subsection{Elaborating Declarations}
We begin our discussion of elaborating source programs in a top-down
manner starting with declarations, the entry point of the
algorithm. Types and kinds in declarations may contain free variables
and there are two different tasks: we need to fill in omitted
arguments, infer the type of free variables and abstract over the free
variables and holes which are left over in the elaborated type /
kind. We rely here on the fact that the index language provides a way
of inferring the type of free variables.

To abstract over holes in a given type $T$, we employ a lifting operation:
$\Delta \vdash \ep : \Theta$ which maps each hole to a fresh index variable.

\[
\begin{array}{l}
\infer{\cdot\vdash \cdot : \cdot}{} \qquad
\infer{
\Delta, X:U~~\vdash \ep,~(\vdash X)/X :\Theta, X:(\vdash U)}
{\Delta \vdash \ep : \Theta}
\end{array}
\]

We require that holes are closed (written as $\vdash U$ and $\vdash X$
respectively where the context associated with a hole is empty);
otherwise lifting fails. \footnote{In practice this seems to be not an
  important restriction, for instance none of Beluga's examples need
  to reconstruct open holes.} In other words, holes are not allowed to
depend on some local meta-variables.

We use double brackets (i.e. $\ahsub{\ep}M$) to represent the
application of the lifting substitutions and hole instantiation
substitutions. We use this to distinguish them from regular
substitutions such as the refinement substitutions in the target
language.

Elaborating declarations requires three judgments. One for constants
and one for kinds to be able to reconstruct inductive type
declarations, and one for recursive functions. These judgments are:

\begin{displaymath}
  \begin{array}{rll@{~~}l}
     \vdash & \const c\oft t &  ~~~~~~~~~ \recon &  T \\
     \vdash & \const a\oft k &  ~~~~~~~~~ \recon  & K \\
     \vdash & \rece f t e    &  ~~~~~~~~~ \recon  & \rece f T E
  \end{array}
\end{displaymath}

The elaboration of declarations succeeds when the result does not
contain holes.

Figure~\ref{fig:eldec} shows the rules for elaborating declarations.
To elaborate a constant declaration ${\const c} : t$ we elaborate the
type $t$ to a target type $T$ where free index variables are listed in
$\Delta$ and the remaining holes in $T$ are described in $\Theta$. We
then lift all the holes in $\Theta$ to proper declarations in
$\Delta_i$ via the lifting substitution $\ep$. The final elaborated
type of the constant $\const c$ is:
$\piei {(\Delta_i, \ahsub{\ep}\Delta)} \ahsub{\ep}T$. Note that both
the free variables in the type $t$ and the lifted holes described in
$\Delta_i$ form the implicit arguments and are marked with $\Pi^i$.
For example in the certifying evaluator from
Section~\ref{sec:certeval}, the type of the constructor \lstinline!Ex!
is reconstructed to:

\begin{lstlisting}
$\Pi^i$T:[|- tp],M:[|- term T].$\Pi^e$N:[|- value T]. [|- big-step T M N]
    -> Cert [|- T][|- M]
\end{lstlisting}

The elaboration of kinds follows the same principle.
Section~\ref{sec:elabkindandtype} explains the details for the
elaboration of types and kinds.

To elaborate recursive function declarations, we first elaborate the type $t$
abstracting over all the free variables and lifting the remaining holes to
obtain $\piei {(\Delta_i, \ahsub{\ep}\Delta)} \ahsub{\ep}T$. Second, we assume
$f$ of this type and elaborate the body $e$ checking it against
$\piei {(\Delta_i, \ahsub{\ep}\Delta)} \ahsub{\ep}T$. We note that we always
elaborate a source expression $e$ together with a possible refinement
substitution $\theta$. In the beginning, $\theta$ will be empty. We describe
elaboration of source expressions in Section~\ref{sec:elabexpr}.

\clearpage
\begin{sidewaysfigure}
  \begin{displaymath}
    \begin{array}{c}
      \multicolumn{1}{l}
      {\boxed{\Theta;\Delta\fv\mid\Delta\vdash k\recon K/\Theta';\Delta\fv';\rho'}~~\mbox{
          Elaborate kind $k$ to target kind $K$}}\vs

      \infer[\rl{el-k-ctype}]
      {\Theta;\Delta\fv\mid\Delta\vdash\ctype\recon\ctype/\Theta;\Delta\fv;\ids\Theta}
      {}

      \vs

      \infer[\rl{el-k-pi}]
      {\Theta;\Delta\fv\mid\Delta\vdash\piesrc{X\oft u} k\recon
        \pie {X\oft(\ahsub{\rho'} U)} K/\Theta'';\Delta\fv'';\rho''\circ\rho'}
      {\Theta;\Delta\fv\mid\Delta\vdash u\recon U/\Theta';\Delta\fv';\rho' &
        \Theta';\Delta\fv'\mid\Delta,X\oft U\vdash k\recon K/\Theta'';\Delta\fv'';\rho''}

      \vs

      \multicolumn{1}{l}
      {\boxed{\Theta;\Delta\fv\mid\Delta\vdash t\recon T/\Theta';\Delta\fv';\rho'}~~\mbox{
          Elaborate type $t$ to target type $T$}}\vs

      \infer[\rl{el-t-arr}]
      {\Theta;\Delta\fv\mid\Delta\vdash t_1\to t_2 \recon
        (\ahsub{\rho''}T_1)\to T_2/\Theta'';\Delta\fv'';\rho''\circ\rho'}
      {\Theta;\Delta\fv\mid\Delta\vdash t_1\recon T_1/\Theta';\Delta\fv';\rho'&
      \Theta';\Delta'\fv\mid\Delta\vdash t_2\recon T_2/\Theta'';\Delta\fv'';\rho''}

    \vs

    \infer[\rl{el-t-idx}]
    {\Theta;\Delta\fv\mid\Delta\vdash\ibox u\recon\ibox U/\Theta';\Delta\fv';\rho'}
    {\Theta;\Delta\fv\mid\Delta\vdash u\recon U/\Theta';\Delta\fv';\rho'}

    \vs

    \infer[\rl{el-t-pi}]
    {\Theta;\Delta\fv\mid\Delta\vdash\piesrc{X\oft u} t \recon
      \pie {X\oft(\ahsub{\rho'}U)}T/\Theta'';\Delta\fv'';\rho''\circ\rho' }
    {\Theta;\Delta\fv\mid\Delta\vdash u\recon U/\Theta';\Delta\fv';\rho' &
      \Theta';\Delta\fv'\mid\Delta,X\oft U\vdash t\recon T/\Theta'';\Delta\fv'';\rho''}

    \vs

    \infer[\rl{el-t-con}]
    {\Theta;\Delta\fv\mid\Delta\vdash\yux{\const{a}}{\wvec{\ibox c}}\recon
      \yux{\const{a}}{\wvec{C}}/\Theta';\Delta\fv';\rho'}
    {\Sigma(\const a) = K &
      \Theta;\Delta\fv\mid\Delta\vdash\wvec{\ibox c}\reconChk K
      \wvec{\ibox C}/\Theta';\Delta\fv';\rho'}

    \vs

    \multicolumn{1}{l}
    {\boxed{\Theta;\Delta\fv\mid\Delta\vdash\wvec{\ibox c}\reconChk K
        \wvec{\ibox C}/\Theta';\Delta\fv';\rho'}~~\mbox{
        Elaborate spine $\wvec{\ibox c}$ checking against kind $K$
        to spine $\wvec{\ibox C}$}}\vs

    \infer[\rl{el-t-sp-explicit}]
    {\Theta;\Delta\fv\mid\Delta\vdash\yux{\ibox c}{\wvec{\ibox c}}\reconChk{\pie{X\oft U} K}
      \yux {(\ahsub{\rho'}\ibox C)} {\wvec{\ibox C}}
      /\Theta'';\Delta\fv'';\rho''\circ\rho'}
    {\Theta;\Delta\fv \mid \Delta\vdash c\reconChk U C/\Theta';\Delta\fv';\rho' &
      \Theta';\Delta\fv'\mid\Delta\vdash\wvec{\ibox c}\reconChk {\asub{C/X} K}
      \wvec{\ibox C}/\Theta'';\Delta\fv'';\rho''}

    \vs

    \infer[\rl{el-t-sp-implicit}]
    {\Theta;\Delta\fv \mid \Delta\vdash\wvec c\reconChk{\piei{X\oft U} K}
      \yux{(\ahsub{\rho'}C)}{\wvec C}
      /\Theta';\Delta\fv';\rho'}
    {\genHole (\hole Y\oft (\Delta\fv,\Delta).U) = C &
      \Theta;\Delta\fv\mid\Delta \vdash \wvec c\reconChk {\asub{C/X} K}
      \wvec C/\Theta';\Delta\fv';\rho'}

    \vs

    \infer[\rl{el-t-sp-empty}]
    {\Theta;\Delta\fv\mid\Delta\vdash\cdot\reconChk\ctype\cdot/\Theta;\Delta\fv;\ids\Theta}
    {}

    \end{array}
  \end{displaymath}
  \caption{Elaborating Kinds and Types in Declarations}
  \label{fig:reckt}
\end{sidewaysfigure}
\clearpage

\subsection{Elaborating Kinds and Types in Declarations}\label{sec:elabkindandtype}

Recall that programmers may leave index variables free in type and kind
decarations. Elaboration must infer the type of the free index variables in
addition to reconstructing omitted arguments. We require that the index language
provides us with the following judgments:

\[
\begin{array}{l}
  \Theta;\Delta\fv\mid\Delta\vdash u\recon U/\Theta';\Delta\fv';\rho'\\
  \Theta;\Delta\vdash\rclo u\theta\recon U/\Theta';\rho'\\
\end{array}
\]

Hence, we assume that the index language knows how to infer the type of free
variables, for example. In Beluga where the index language is LF, we fall back
to the ideas described by \citet{Pientka:JFP13}.

The first judgment collects free variables in $\Delta\fv$ that later in elaboration
will become implicit parameters. The context $\Delta\fv$ is threaded
through in addition to the hole context $\Theta$.

The judgments for elaborating computation-level kinds and types are similar:

\begin{enumerate}
\item $\Theta;\Delta\fv\mid\Delta\vdash k\recon K/\Theta';\Delta\fv';\rho'$
\item $\Theta;\Delta\fv\mid\Delta\vdash t\recon T/\Theta';\Delta\fv';\rho'$
\item $\Theta;\Delta\fv\mid\Delta\vdash\wvec{\ibox c}\reconChk K
        \wvec{\ibox C}/\Theta';\Delta\fv';\rho'$
\end{enumerate}

We again collect free index variables in $\Delta\fv$ which are
threaded through together with the holes context $\Theta$ (see
Figure~\ref{fig:eldec} and Figure~\ref{fig:reckt}). As notation note,
we use $\Delta_f$ to store the free index variables, $\Delta$ for the
bound ones, and $\Theta$ for holes.

\begin{figure}
  \centering
\begin{displaymath}
  \begin{array}{c}
\multicolumn{1}{l}{\boxed{\Theta;\Delta;\Gamma\vdash \rclo e \theta \reconChk T
    E/\Theta';\rho}}:\\
\multicolumn{1}{l}{\mbox{Elaborate source $\rclo e \theta$ to target
    expression $E$ checking against type $T$}}\vs

\infer[\elbox]
{\Theta;\Delta;\Gamma\vdash\rclo{\ibox{c}}\theta\reconChk{\ibox U}\ibox C/\Theta';\rho}
{\Theta;\Delta\vdash\rclo c \theta\reconChk U C/\Theta';\rho}

\vs

    \infer[\elfn]
    {\Theta;\Delta;\Gamma\vdash\rclo{\fne x e}\theta \reconChk{T_1\to T_2} \fne x E/\Theta';\rho }
    {\Theta;\Delta;\Gamma,x\oft T_1\vdash \rclo e\theta \reconChk{T_2} E/\Theta';\rho}

    \vs

    \infer[\elmlami]
    {\Theta;\Delta;\Gamma\vdash\rclo{e}\theta \reconChk{\piei{X\oft U}T} \mlame X E/\Theta';\rho }
    {\Theta;\Delta,X\oft U;\Gamma\vdash \rclo e{\theta, X/X} \reconChk{T} E/\Theta';\rho}

   \vs

    \infer[\elmlam]
    {\Theta;\Delta;\Gamma\vdash\rclo{\mlame X e}\theta \reconChk{\pie{X\oft U}T} \mlame X E/\Theta';\rho }
    {\Theta;\Delta,X\oft U;\Gamma\vdash \rclo e{\theta, X/X} \reconChk{T} E/\Theta';\rho}

    \vs

    \infer[\elcase]
    {\Theta;\Delta;\Gamma\vdash\rclo{\casee e {\vec b}}\theta\reconChk T
      \casee E {\vec B} / \cdot;\rho}
    {\begin{array}{c}
       \Theta;\Delta;\Gamma\vdash \rclo e\theta\recon E\oft S/ \cdot ;\rho \\
      \ahsub\rho\Delta;\ahsub\rho\Gamma\vdash \rclo{\vec b}{\ahsub\rho\theta}\reconChk{S\to \ahsub{\rho}T}\vec B
     \end{array}
    }

\vs

\infer[\elsyn]{\Theta;\Delta;\Gamma\vdash \rclo e\theta \reconChk T
                \ahsub{\rho'} E/\Theta_2;\rho'\circ\rho}
  {\Theta;\Delta;\Gamma\vdash \rclo e\theta \recon E\oft T_1/\Theta_1;\rho
    &
    \Theta_1 ;\ahsub{\rho}{\Delta} \vdash T_1 \doteq \ahsub\rho T/\Theta_2;\rho'}
  \end{array}
\end{displaymath}
  \caption{Elaboration of Expressions (Checking Mode)}
  \label{fig:elexp}
\end{figure}

\subsection{Elaborating Source Expressions}\label{sec:elabexpr}
We elaborate source expressions bidirectionally.  Expressions such as
non-dependent functions and dependent functions are elaborated by
checking the expression against a given type; expressions such as
application and dependent application are elaborated to a
corresponding target expression and at the same time synthesize the
corresponding type. This approach seeks to propagate the typing
information that we know in the checking rules, and in the synthesis
phase, to take advantage of types that we can infer.

\begin{displaymath}
  \begin{array}{ll@{~~}c@{~~}ll}
    \multicolumn{1}{l}{\mbox{Synthesizing:}} &
    \Theta;\Delta;\Gamma & \vdash & \rclo e \theta ~~~~ ~~\recon E\oft T/\Theta';\rho
\\    \multicolumn{1}{l}{\mbox{Checking:}}
&
    \Theta;\Delta;\Gamma & \vdash & \rclo e \theta \reconChk T E ~~~/\Theta';\rho
  \end{array}
\end{displaymath}

We first explain the judgment for elaborating a source expression $e$
by checking it against $T$ given holes in $\Theta$, index variables
$\Delta$, and program variables $\Gamma$. Because of pattern matching,
index variables in $\Delta$ may get refined to concrete index terms.
Abusing slightly notation, we write $\theta$ for the map of free
variables occurring in $e$ to their refinements and consider a source
expression $e$ together with the refinement map $\theta$, written as
$\rclo e \theta$. The result of elaborating $\rclo e \theta$ is a
target expression $E$, a new context of holes $\Theta'$, and a hole
instantiation $\rho$ which instantiates holes in $\Theta$, i.e.
$\Theta' \vdash \rho : \Theta$. The result $E$ has type
$\ahsub{\rho} T$. It is important to notice here, that $\rho$ contains
instances for holes, while $\theta$ contains refinements for
meta-variables (i.e. instances for meta-variables refined by dependent
pattern matching).

The result of elaboration in synthesis mode is similar; we return the
target expression $E$ together with its type $T$, a new context of
holes $\Theta'$ and a hole instantiation $\rho$, s.t.
$\Theta' \vdash \rho : \Theta$. The result is well-typed, i.e. $E$ has
type $T$.

We give the rules for elaborating source expressions in checking
mode in Fig.~\ref{fig:elexp} and in synthesis mode in Fig.~\ref{fig:elexpsyn}.
To elaborate a function (see rule $\elfn$) we simply elaborate the body
extending the context $\Gamma$. There are two cases when we elaborate an
expression of dependent function type. In the rule $\elmlam$, we elaborate
a dependent function $\mlame X e$ against $\pie{X\oft U}T$ by elaborating the
body $e$ extending the context $\Delta$with the declaration $X \oft U$. In the
rule $\elmlami$, we elaborate an expression $e$ against $\piei{X\oft U}T$ by
elaborating $e$ against $T$ extending the context $\Delta$ with  the declaration
$X \oft U$. The result of elaborating $e$ is then wrapped in a dependent
function.

\filbreak
When switching to synthesis mode, we elaborate $\rclo e \theta$ and
obtain the corresponding target expression $E$ and type $T'$ together
with an instantiation $\rho$ for holes in $\Theta$. We then unify the
synthesized type $T'$ and the expected type $\ahsub{\rho}T$ obtaining
an instantiation $\rho'$ and return the composition of the
instantiation $\rho$ and $\rho'$. When elaborating an index object
$\ibox c$ (see rule $\elbox$), we resort to elaborating $c$ in our
indexed language which we assume.

One of the key cases is the one for case-expressions. In the rule
$\elcase$, we elaborate the scrutinee synthesizing a type $S$; we then
elaborate the branches. Note that we verify that $S$ is a closed type,
i.e. it is not allowed to refer to holes. To put it differently, the
type of the scrutinee must be fully known. This is done to keep a type
refinement in the branches from influencing the type of the scrutinee.
The practical impact of this restriction is difficult to quantify,
however this seems to be the case for the programs we want to write. A
justification is that it is not a problem in any of the
examples of the Beluga implementation. For a similar reason, we
enforce that the type $T$, the overall type of the case-expression, is
closed; were we to allow holes in $T$, we would need to reconcile the
different instantiations found in different branches.

We omit the special case of pattern matching on index objects to save
space and because it is a refinement on the $\elcase$ rule where we
keep the scrutinee when we elaborate a branch. We then unify the
scrutinee with the pattern in addition to unifying the type of the
scrutinee with the type of the pattern. In the implementation of
Beluga, case-expressions on computation-level expressions (which do
not need to track the scrutinee and are described in this chapter) and
case-expressions on index objects (which do keep the scrutinee when
elaborating branches) are handled separately.

When elaborating a constant, we look up its type $T_c$ in the
signature $\Sigma$ and then insert holes for the arguments marked
implicit in its type (see Fig.~\ref{fig:elexpsyn}). Recall that all
implicit arguments are quantified at the outside, i.e.
$T_c = \piei {X_n\oft U_n} \ldots \piei {X_1 \oft U_1} S$ where $S$
does not contain any implicit dependent types $\Pi^i$. We generate for
each implicit declaration $X_k\oft U_k$ a new hole which can depend on
the currently available index variables $\Delta$. When elaborating a
variable, we look up its type in $\Gamma$ and because the variable can
correspond to a recursive function with implicit parameters we insert
holes for the arguments marked as implicit as in the constant case.

Elaboration of applications in the synthesis mode threads through the hole
context and its instantiation, but is otherwise straightforward. In each of the
application rules, we elaborate the first argument of the application obtaining
a new hole context $\Theta_1$ together with a hole instantiation $\rho_1$. We
then apply the hole instantiation $\rho_1$ to the context $\Delta$ and $\Gamma$
and to the refinement substitution $\theta$, before elaborating the second part.

\clearpage
\begin{sidewaysfigure}
  \centering\small
  \begin{displaymath}
    \begin{array}{c}
      \multicolumn{1}{l}{\boxed{\Theta ; \Delta \vdash E : T \recon E' : T' / \Theta' }}\\
      \multicolumn{1}{l}{\mbox{Apply $E$ to holes for representing omitted arguments based on $T$
          obtaining a term $E'$ of type $T'$}}\vs
      \infer[\elimp]
      {\Theta ; \Delta \vdash E : \piei {X\oft U}T  \recon E' \oft T'  ~/~\Theta'}
      {\genHole (\hole Y{:}\Delta\vdash U) = C &
        (\Theta,\hole Y\oft \Delta\vdash U) ;~ \Delta
        \vdash E~\ibox{C} : \asub{C/X}T \recon E' \oft T' ~/~\Theta'}
      \vs
      \infer[\elimpd]{\Theta ; \Delta  \vdash E \oft S \recon E \oft S ~/~\Theta}{
        S \neq \piei {X\oft U}T}
      \vs
      \multicolumn{1}{l}{\boxed{\Theta;\Delta;\Gamma\vdash \rclo e \theta \recon
          E\oft T/\Theta';\rho }~~
        \mbox{Elaborate source $\rclo e \theta$ to target $E$ and synthesize type
          $T$}}\vs
      \infer[\elvar]
      {\Theta ; \Delta; \Gamma \vdash \rclo x \theta \recon E' \oft T'
        ~/~\Theta'; \ids{\Theta'}}
      {\Gamma(x) = T
        &
        \Theta;\Delta;\Gamma\vdash x\oft T\recon E'\oft T'/\Theta'}

      \quad

      \infer[\elconst]{\Theta ; \Delta; \Gamma \vdash \rclo {\const c} \theta \recon
        E : T
        ~/~\Theta'; \ids{\Theta'}}{\Sigma (c) = T_c &
        \Theta ; \Delta \vdash \const c : T_c \recon E : T ~/~\Theta'
      }

      \vs

      \infer[\elapp]
      {\Theta;\Delta;\Gamma\vdash\rclo{\yux {e_1}{e_2}}\theta \recon
        \yux{E_1}{E_2} : \ahsub{\rho_2}T ~/~\Theta_2;\rho_2\circ\rho_1}
      {\Theta;\Delta;\Gamma\vdash \rclo{e_1}\theta \recon E_1 \oft S \to T ~/~ \Theta_1;\rho_1
        &
        \Theta_1;\ahsub{\rho_1}\Delta;\ahsub{\rho_1}\Gamma\vdash
        \rclo{e_2}{\ahsub{\rho_1}\theta} \reconChk{\ahsub{\rho_1} S}
        E_2 ~/~\Theta_2;\rho_2
      }

      \vs

      \infer[\elmapp]
      {\Theta;\Delta;\Gamma\vdash\rclo{\yux {e}{\ibox c}}\theta \recon
        \yux{E_1}{\ibox C}\oft \asub{C/X}(\ahsub{\rho_2}T)/\Theta_2;\rho_2\circ\rho_1}
      {
        \Theta;\Delta;\Gamma\vdash \rclo {e} \theta\recon E_1\oft \pie{X\oft U}T/\Theta_1;\rho_1
        &
        \Theta_1;\ahsub{\rho_1}\Delta\vdash \rclo c {\ahsub{\rho_1}\theta}\reconChk U C/\Theta_2;\rho_2
      }
      \vs

      \infer[\elmappi]
      {\Theta;\Delta;\Gamma\vdash\rclo{\yux {e}\_}\theta  \recon E~\ibox{C} ~\oft~
        \asub{C/X}T ~~/ ~~\Theta_1,\hole Y\oft (\ahsub{\rho_1}\Delta).U~;~\rho}
      {\Theta;\Delta;\Gamma\vdash \rclo {e} \theta\recon E \oft \pie{X\oft U}T
        /\Theta_1;\rho &
        \genHole (\hole Y: (\ahsub{\rho}\Delta).U) = C
      }

      \vs

      \infer[\elann]
      {\Theta;\Delta;\Gamma\vdash \rclo{e\oft t}\theta\recon
        (E\oft T)\oft T /\Theta_2;\rho_2\circ\rho_1}
      {
        \Theta;\Delta\vdash \rclo t\theta \recon T/\Theta_1;\rho_1 &
        \Theta_1;\ahsub{\rho_1}\Delta;\ahsub{\rho_1}\Delta\vdash
        \rclo e{\ahsub{\rho_1}\theta}\reconChk T E/\Theta_2;\rho_2
      }

      \vs
      \multicolumn{1}{l}{
        \mbox{Where } \ids \Theta\ \mbox{returns the identity substitution for context $\Theta$ such as:}
        \Theta \vdash \ids\Theta \oft \Theta}

    \end{array}
  \end{displaymath}
  \caption{Elaboration of Expressions (Synthesizing Mode)}
  \label{fig:elexpsyn}
\end{sidewaysfigure}
\clearpage


\subsubsection{Elaborating Branches}

We give the rules for elaborating branches in Fig.~\ref{fig:elpat}.
Recall that a branch $pat \mapsto e$ consists of the pattern $pat$ and
the body $e$. We elaborate a branch under the refinement
$\theta$, because the body $e$ may contain index variables declared
earlier and which might have been refined in earlier branches.

Intuitively, to elaborate a branch, we need to elaborate the pattern
and synthesize the type of index and pattern variables bound inside of
it. In the dependently typed setting, pattern elaboration needs to do
however a bit more work: we need to infer implicit arguments which
were omitted by the programmer (e.g: the constructor \lstinline!Ex!
takes the type of the expression, and the source of evaluation as
implicit parameter \lstinline!Ex [T] [M]...!) and we need to establish
how the synthesized type of the pattern refines the type of the
scrutinee.

Moreover, there is a mismatch between the variables the body $e$ may
refer to (see rule $\rl{wf-branch}$ in Fig.~\ref{fig:wf}) and the
context in which the elaborated body $E$ is meaningful (see rule
$\rl{t-branch}$ in Fig.~\ref{fig:comptyp}).  While our source
expression $e$ possibly can refer to index variables declared prior,
the elaborated body $E$ is not allowed to refer to any index variables
which were declared at the outside; those index variables are replaced
by their corresponding refinements. To account for these additional
refinements, we not only return an elaborated pattern
$\Pi\Delta_r;\Gamma_r. Pat \oft\theta_r$ when elaborating a pattern
$pat$ (see rule $\rl{el-subst}$ in Fig.~\ref{fig:elpat}), but in
addition return a map $\theta_e$ between source variables declared
explicitly outside and their refinements.

\begin{figure}
  \centering\small
  \begin{displaymath}
    \begin{array}{c}
    \multicolumn{1}{l}{\boxed{\Delta;\Gamma\vdash \rclo b\theta \reconChk{S\to T} B}~~
      \mbox{Elaborate source branch $\rclo b \theta$ to branch $B$}}
      \vs

    \infer[\elbranch]
    {\Delta;\Gamma\vdash \rclo{pat \mapsto e}\theta \reconChk{S\to T}
      \Pi\Delta_r;\Gamma_r.{Pat}\oft\theta_r\mapsto E}
      {\begin{array}{c}
        \Delta\vdash pat \reconChk{S} \Pi \Delta_r ; \Gamma_r . Pat: \theta_r \mid \theta_e \\
        \cdot ; \Delta_r;\asub{\theta_r}\Gamma,\Gamma_r\vdash\rclo e{\theta_r\circ\theta,~\theta_e}\reconChk{\asub{\theta_r}{T}} E/\cdot;\cdot
      \end{array}}

\vs

    \multicolumn{1}{l}{\boxed{
    \Delta\vdash pat\reconChk{T} \Pi \Delta_r ; \Gamma_r . Pat: \theta_r \mid \theta_e}}\vs

    \infer[\elsubst]
    {\Delta\vdash pat\reconChk{S}
     \Pi \Delta_r ; \asub{\theta_p}\ahsub{\ep}\Gamma_p .
     \asub{\theta_p}\ahsub{\ep}Pat : \theta_r \mid \theta_e}
    {\begin{array}{c}
       \cdot;\cdot\vdash pat\recon \Pi \Delta_p;\Gamma_p .Pat : S_p/\Theta_p;\cdot \\
       \Delta_p' \vdash \ep : \Theta_p \\
       \ahsub \ep S_p \doteqdot S/\Delta_r;\theta_R
      \end{array}
    }

\vs

\multicolumn{1}{l}{\mbox{where~$\theta_R = \theta_r, \theta_p$~s.t.~
  $\Delta_r \vdash \theta_p : (\Delta'_p,\ahsub{\ep}\Delta_p)$ %
and~$\theta_p = \theta_i, \theta_e$~s.t.~
  $\Delta_r \vdash \theta_i : \Delta'_p$
}}\\[1em]
\end{array}
  \end{displaymath}
   \caption{Branches and Patterns}
  \label{fig:elpat}
\end{figure}

Elaborating a pattern is done in three steps (see rule \elsubst):
\begin{enumerate}
\item First, given $pat$ we elaborate it to a target pattern $Pat$
  together with its type $S_p$ synthesizing the type of index
  variables $\Delta_p$ and the type of pattern variables $\Gamma_p$
  together with holes ($\Theta_p$) which denote omitted
  arguments. This is accomplished by the first premise of the rule
  \elsubst:
  \[
  \cdot;\cdot\vdash pat\recon \Pi \Delta_p;\Gamma_p . Pat : S_1/\Theta_p; \cdot
  \]

  Our pattern elaboration judgment (Figure~\ref{fig:elpattern})
  threads through the hole context and the context of index variables,
  both of which are empty in the beginning. Because program variables
  occur linearly, we do not thread them through but simply combine
  program variable contexts when needed. The result of elaborating
  $pat$ is a pattern $Pat$ in our target language where $\Delta_p$
  describes all index variables in $Pat$, $\Gamma_p$ contains all
  program variables and $\Theta_p$ contains all holes, i.e. most
  general instantiations of omitted arguments. We describe pattern
  elaboration in detail in Section~\ref{sec:elpat}.

\item Second, we abstract over the hole variables in $\Theta_p$ by
  lifting all holes to fresh index variables from $\Delta_p'$. This is
  accomplished by the second premise of the rule \elsubst using the
  lifting substitution $\Delta'_p\vdash \ep \oft \Theta_p$.

\item Finally, we compute the refinement substitution $\theta_r$ which
  ensures that the type of the pattern $\ahsub \ep S_p$ is compatible
  with the type $S$ of the scrutinee. We note that the type of the
  scrutinee could also force a refinement of holes in the
  pattern. This is accomplished by the judgment:
  \[
  \Delta,(\Delta_p',\ahsub \ep\Delta_p) \vdash
  \ahsub \ep S_1 \doteqdot T_1/\Delta_r;\theta_R
  \qquad
  \theta_R = \theta_r,\theta_p
  \]

  We note because $\theta_R$ maps index variables from $\Delta,
  (\Delta_p',\ahsub \ep \Delta_p)$ to $\Delta_r$, it can be split in two
  parts: $\theta_r$ that provides refinements for variables $\Delta$ in the
  type of the scrutinee; $\theta_p$ provides possible refinements of
  the pattern forced by the scrutinee. This can happen, if the
  scrutinee's type is more specific than the type of the pattern.
\end{enumerate}

\subsubsection{Elaborating Patterns} \label{sec:elpat}
Pattern elaboration is bidirectional.
The judgments for elaborating patterns by checking them against a given type and
synthesizing their type are:
\[
\begin{array}{ll@{}c@{}l}
\mbox{Synthesizing:} & \Theta;\Delta & \vdash & pat~~~~~~\recon \Pi \Delta';\Gamma . Pat\oft
T~/~\Theta';\rho
\\
\mbox{Checking:} & \Theta;\Delta & \vdash & pat\reconChk{T} \Pi \Delta';\Gamma
. Pat~~~~~/~\Theta';\rho
\end{array}
\]

As mentioned earlier, we thread through a hole context $\Theta$
together with the hole substitution $\rho$ that relates:
$\Theta'\vdash\rho\oft\Theta$.  Recall that as our examples show
index-level variables in patterns need not to be linear and hence we
accumulate index variables and thread them through as well. Program
variables on the other hand must occur linearly, and we can simply
combine them.  The elaboration rules are presented in
Figure~\ref{fig:elpattern}. In synthesis mode, elaboration returns a
reconstructed pattern $Pat$, a type $T$ where $\Delta'$ describes
the index variables in $Pat$ and $\Gamma'$ contains all program
variables occurring in $Pat$.  The hole context $\Theta'$ describes
the most general instantiations for omitted arguments which have been
inserted into $Pat$. In checking mode, we elaborate $pat$ given a type
$T$ to the target expression $Pat$ and index variable context
$\Delta'$, pattern variable context $\Gamma'$ and the hole context
$\Theta'$.

Pattern elaboration starts in synthesis mode, i.e. either elaborating an
annotated pattern ($e:t$) (see rule \elpann) or
a pattern ${\const c}\;\wvec{pat}$ (see rule \elpcon). To reconstruct patterns that start with a
constructor we first look-up the constructor in the signature $\Sigma$
to get its fully elaborated type $T_c$ and then elaborate the arguments $\wvec{pat}$
against $T_c$. Elaborating the spine of arguments is guided by the type
$T_c$. If $T_c = \piei{X\oft U}T$, then we generate a new hole for the omitted
argument of type $U$. If $T_c = T_1 \to T_2$, then we elaborate the first
argument in the spine $pat\;\wvec{pat}$  against $T_1$ and the remaining
arguments $\wvec{pat}$ against $T_2$. If $T_c = \pie {X\oft U} T$, then we elaborate the first
argument in the spine $\ibox{c}\;\wvec{pat}$  against $U$ and the remaining
arguments $\wvec{pat}$ against $\asub{C/X}T$. When the spine is empty, denoted
by $\cdot$, we simply return the final type and
check that the constructor was fully applied by ensuring that the type $S$
we reconstruct against is either of index level type, i.e. $\ibox U$,
or a recursive type, i.e. $\const{a}\wvec{\ibox C}$.

\begin{sidewaysfigure}
  \centering
\begin{displaymath}
  \begin{array}{c}
\multicolumn{1}{l}{\mbox{Pattern (synthesis mode)}~~~\boxed{
        \Theta;\Delta\vdash pat\recon \Pi \Delta';\Gamma . Pat\oft T~/~\Theta'~;~\rho}}\\[1em]

    \infer[\elpcon]
    {\Theta;\Delta\vdash\yux{\const{c}}{\wvec{pat}}\recon
      \Pi \Delta';\Gamma . \yux{\const{c}}{\wvec{Pat}}\oft S~/~\Theta'; \rho}
    {\Sigma(c)=T &
      \Theta;\Delta\vdash\wvec{pat}\reconChk{T} \Pi \Delta';\Gamma.\wvec{Pat} \spineret
      S ~/~\Theta';\rho}

\qquad

\\[1em]
     \infer[\elpann]
     {\Theta;\Delta\vdash(pat\oft t)\recon \Pi \Delta'';\Gamma  . Pat\oft
       \ahsub{\rho'}T~/~\Theta''; \rho'}
     {\cdot;\cdot \vdash\rclo t {\cdot}\recon
       T/\Theta';\Delta';\cdot &
       (\Theta,\Theta')~;~ (\Delta, \Delta')\vdash pat\reconChk T
       \Pi \Delta'';\Gamma . Pat~/~\Theta''; \rho'}

\\[1em]
\multicolumn{1}{l}{\mbox{Pattern (checking mode)}~~~\boxed{
        \Theta;\Delta\vdash pat\reconChk{T} \Pi \Delta';\Gamma. Pat~/~\Theta';\rho}}\\[1em]
    \infer[\elpvar]
    {\Theta;\Delta\vdash x\reconChk{T}
      \Pi \Delta~;~x\oft T. x ~/~\Theta ; \ids\Theta}
    {}

\qquad

    \infer[\elpindex]
    {\Theta;\Delta\vdash\ibox c\reconChk{\ibox U} \Pi \Delta';\cdot ~.~ [C]/\Theta'~;~\rho}
    {\Theta;\Delta\vdash c\reconChk{U}C/\Theta';\Delta';\rho}

    \vs
    \infer[\elpsyn]
    {\Theta;\Delta\vdash pat\reconChk{T} \Pi \ahsub{\rho'}\Delta';\ahsub{\rho'}\Gamma~.~\ahsub\rho{Pat}~/~\Theta''~;~\rho'\circ\rho}
    {
        \Theta;\Delta\vdash pat\recon
        \Pi \Delta';\Gamma . Pat\oft S~/~\Theta';\rho \qquad
        \Theta';\Delta'\vdash S \doteq \ahsub\rho T~/~\Theta'';\rho'
}

    \vs
\multicolumn{1}{l}{\mbox{Pattern Spines}~~ \boxed{\Theta;\Delta\vdash\wvec{pat}\reconChk{T}
     \Pi \Delta';\Gamma.\wvec{Pat} \spineret S~/~\Theta';\rho}}
\\[1em]
    \infer[\elspempty]
    {\Theta;\Delta\vdash\cdot\reconChk{T} \Pi \Delta;\cdot ~.~\cdot \spineret T~/~\Theta;\ids\Theta}
    {\mbox{either}~T = \ibox U~\mbox{or}~ T = \const{a}\;\wvec{\ibox{C}}}

    \vs

    \infer[\elspcmp] 
    {\Theta;\Delta\vdash pat~~\wvec{pat}\reconChk{T_1\to T_2}
   \Pi \Delta'';(\Gamma,\Gamma') ~.~ (\ahsub{\rho'}Pat)~~\wvec{Pat} \spineret S~/~\Theta'';\rho'\circ\rho}
    {
        \Theta;\Delta\vdash pat\reconChk{T_1}\Pi \Delta';\Gamma . Pat/\Theta';\rho \qquad
        \Theta';\Delta'\vdash\wvec{pat}\reconChk{\ahsub\rho{T_2}}
        \Pi \Delta'';\Gamma' . \wvec{Pat} \spineret S~/~\Theta'';\rho'
}

    \vs

    \infer[\elspex] 
    {\Theta;\Delta\vdash\yux{\ibox c}{\wvec{pat}}\reconChk{\pie{X\oft U}T}
      \Pi \Delta'';\Gamma~.~\yux{(\ahsub{\rho'}\ibox{C})}{\wvec{Pat}}
      \spineret S
      ~/~\Theta'';\rho'\circ\rho}
    {
        \Theta;\Delta\vdash c\reconChk{U} C/\Theta';\Delta';\rho \quad
        \Theta';\Delta' \vdash\wvec{pat}\reconChk{\asub{C/X}{\ahsub\rho T}}
        \Pi \Delta'';\Gamma . \wvec{Pat} \spineret S~/~\Theta'';\rho'
}

    \vs

    \infer[\elspim]
    {\Theta;\Delta\vdash\wvec{pat}\reconChk{\piei{X\oft U}T}
    \Pi \Delta';\Gamma . (\ahsub{\rho}C) ~~\wvec{Pat} \spineret S~/~\Theta';\rho}
    {
      \genHole (\hole Y\oft\Delta.U) = C \qquad
        \Theta,\hole Y\oft{\Delta.U};\Delta\vdash\wvec{pat}\reconChk{\asub{C/X}T}
        \Pi \Delta';\Gamma . \wvec{Pat} \spineret S~/~\Theta';\rho
}
  \end{array}
\end{displaymath}
  \caption{Elaboration of Patterns and Pattern Spines}
  \label{fig:elpattern}
\end{sidewaysfigure}

For synthesizing the patterns with a type annotation, first we
elaborate the type $t$ in an empty context using a judgment that
returns the reconstructed type $T$, its holes and index variables
(contexts $\Theta'$ and $\Delta'$). Once we have the type we
elaborate the pattern checking against the type $T$.

To be able to synthesize the type of pattern variables and return it, we check variables
against a given type $T$ during elaboration (see rule \elpvar). For index level
objects, rule \elpindex\ we defer to the index level elaboration that the index
domain provides\footnote{Both, elaboration of pattern variables and of index objects can
be generalized by for example generating a type skeleton in the rule \elsubst\ given
the scrutinee's type. This is in fact what is done in the implementation
of Beluga.}.
Finally, when elaborating a pattern against a given type it is possible to
switch to synthesis mode using rule \elpsyn, where first we elaborate the
pattern synthesizing its type $S$ and then we make sure that $S$ unifies
against the type $T$ it should check against.


\section{Soundness of Elaboration}

We establish soundness of our elaboration: if from a well-formed
source expression, we obtain a well-typed target expression $E$ which
may still contain some holes then $E$ is well-typed for any ground
instantiation of these holes. In fact, our final result of elaborating
a recursive function and branches must always return a closed
expression.

\begin{thm}[Soundness]~
  \begin{enumerate}
  \item If $\Theta;\Delta;\Gamma\vdash \rclo e \theta \reconChk T E/\Theta_1;\rho_1$
    then for any grounding hole instantiation $\rho_g$ s.t.
    $\cdot \vdash \rho_g : \Theta_1$ and $\rho_0 = \rho_g \circ \rho_1$, we have:\\
    $\ahsub{\rho_0}\Delta;\ahsub{\rho_0}\Gamma\vdash \ahsub{\rho_g}E \checks \ahsub{\rho_0}T$.

  \item If $\Theta;\Delta;\Gamma\vdash \rclo e \theta \recon E\oft T/\Theta_1;\rho_1$
    then for any grounding hole instantiation $\rho_g$ s.t. $\cdot \vdash \rho_g :
    \Theta_1$ and $\rho_0 = \rho_g\circ\rho_1$, we have:\\
    $\ahsub{\rho_0}\Delta;\ahsub{\rho_0}\Gamma\vdash \ahsub{\rho_g}E \synths\ahsub{\rho_g}T$.

  \item If $\Delta;\Gamma\vdash \rclo {pat\mapsto e} \theta\reconChk {S\to T}
    \Pi\Delta';\Gamma' . Pat : \theta' \mapsto E$
    then $\Delta;\Gamma\vdash\Pi\Delta';\Gamma' . Pat : \theta' \mapsto E\checks S\to T$.
  \end{enumerate}
\end{thm}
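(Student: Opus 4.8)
The plan is to prove the three statements \emph{simultaneously} by structural induction on the elaboration derivations, since the checking, synthesis, and branch judgments are mutually recursive (e.g.\ \elcase\ appeals to synthesis for the scrutinee and to the branch judgment for each $\vec b$, while \elsyn\ appeals back to synthesis). Because the branch rule \elbranch\ delegates to pattern elaboration, I would first establish, as a companion lemma proved by induction on the pattern judgments of Figure~\ref{fig:elpattern}, that pattern and pattern-spine elaboration are sound: whenever $\Theta;\Delta\vdash pat\recon \Pi\Delta';\Gamma.Pat\oft T/\Theta';\rho$ (and analogously in checking and spine modes), any grounding of $\Theta'$ yields a target pattern that checks under the rules \rl{t-pvar}, \rl{t-pindex}, \rl{t-pcon}, \rl{t-spi}, \rl{t-sarr}, \rl{t-snil}. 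The soundness of index-object elaboration and of the two unification judgments is assumed from the index domain.

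Before the induction I would record a handful of bookkeeping lemmas about hole instantiations. The central one is \emph{functoriality of grounding}: if $\Theta_2\vdash\rho_2:\Theta_1$ and $\cdot\vdash\rho_g:\Theta_2$, then $\cdot\vdash\rho_g\circ\rho_2:\Theta_1$ and $\ahsub{\rho_g}{\ahsub{\rho_2}{M}}=\ahsub{\rho_g\circ\rho_2}{M}$ for every syntactic category $M$; this is what lets me ground an output context $\Theta_1$ along the accumulated composition $\rho_0=\rho_g\circ\rho_1$. I also need a \emph{grounding substitution lemma} stating that applying a grounding hole instantiation to a well-typed-with-holes judgment produces a well-typed closed judgment, which in turn rests on the assumed facts that $\genHole(\hole Y\oft\Delta\vdash U)$ returns an index object of type $U$ in $\Delta$ and that unification is sound, i.e.\ $\ahsub{\rho}{C_1}=\ahsub{\rho}{C_2}$ when $\Theta;\Delta\vdash C_1\doteq C_2/\Theta';\rho$. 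Throughout, the invariant to keep straight is the asymmetry already visible in the statement: the input contexts $\Delta,\Gamma$ and (in checking mode) the input type $T$ are grounded by the full composition $\rho_0$, whereas a synthesized output type lives already in $\Theta_1$ and is grounded by $\rho_g$ alone.

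With these in hand, most cases are routine and follow the same pattern. For \elfn\ I apply the induction hypothesis for part~1 to the premise, observe $\ahsub{\rho_0}{(\Gamma,x\oft T_1)}=\ahsub{\rho_0}\Gamma,x\oft\ahsub{\rho_0}{T_1}$, and conclude by \rl{t-fn}; \elmlam\ and \elmlami\ are identical but extend $\Delta$ and close with \rl{t-mlam}. For \elapp\ I ground $\Theta_2$ by $\rho_g$, ground $\Theta_1$ by $\rho_g\circ\rho_2$, apply the synthesis and checking hypotheses to the two premises, and splice them with \rl{t-app}, using functoriality to rewrite $\ahsub{\rho_g\circ\rho_2}{\ahsub{\rho_1}{\Delta}}$ as $\ahsub{\rho_0}{\Delta}$. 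The \elconst\ and \elvar\ cases reduce to a small separate lemma for the implicit-insertion judgment $\Theta;\Delta\vdash E\oft T\recon E'\oft T'/\Theta'$ (rules \elimp, \elimpd), showing each inserted $\genHole$ is typed by \rl{t-app-index}. The \elsyn\ case is where the unification-soundness equality is consumed: after grounding, $\ahsub{\rho'}{T_1}=\ahsub{\rho'}{\ahsub{\rho}{T}}$ lets me convert the synthesized type into the checked type and invoke \rl{t-syn}. In \elcase\ the output hole context is empty, so grounding is trivial, the closedness side conditions on $S$ and $T$ discharge immediately, and I combine the synthesis hypothesis for the scrutinee with the branch hypothesis under \rl{t-case}.

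I expect the genuine obstacle to be part~3 together with rule \elsubst. There I must match the algorithmic construction --- elaborate the pattern to $\Pi\Delta_p;\Gamma_p.Pat\oft S_p$ with leftover holes $\Theta_p$, lift those holes to fresh index variables via $\Delta_p'\vdash\ep:\Theta_p$, and then solve $\ahsub{\ep}{S_p}\doteqdot S$ to obtain $\Delta_r;\theta_R$ --- against the declarative requirements of \rl{t-branch}, namely $\Delta_r\vdash\theta\oft\Delta$, $\Delta_r;\Gamma_r\vdash Pat\checks\asub{\theta}{S}$, and a body that checks against $\asub{\theta}{T}$. The difficulty is threefold: the proof must juggle the two distinct substitution mechanisms --- hole instantiation $\ahsub{\cdot}{\cdot}$ versus index refinement $\asub{\cdot}{\cdot}$ --- and show they commute in the required places; it must justify the split $\theta_R=\theta_r,\theta_p$ into the refinement of the scrutinee's variables and the forced refinement of the (now lifted) pattern holes, verifying each half is well-typed into $\Delta_r$; and it must show that the body elaborated under $\theta_r\circ\theta,\theta_e$ indeed checks in the context $\asub{\theta_r}\Gamma,\Gamma_r$ demanded by \rl{t-branch}, which is exactly where the auxiliary map $\theta_e$ relating explicitly-bound outer variables to their refinements earns its keep. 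Getting these substitutions to line up --- and confirming that the closedness of $S$ guaranteed by \elcase\ prevents any refinement from leaking back into the scrutinee's type --- is the technical heart of the argument.
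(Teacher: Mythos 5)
Your proposal is correct and follows essentially the same route as the paper: a simultaneous induction over the three mutually recursive elaboration judgments, supported by a separate pattern/pattern-spine elaboration lemma, a small lemma for the implicit-insertion judgment (\elimp/\elimpd) used in the \elvar\ and \elconst\ cases, and the assumed soundness of index-level elaboration and unification, with the \elsubst/\elbranch\ case and the split $\theta_R=\theta_r,\theta_p$ correctly identified as the technical heart. The only detail you leave implicit is that the pattern elaboration lemma must be strengthened with an extra ``further refinement'' instantiation (composed onto $\rho_1$) for the spine cases of its own induction to go through, which the paper's appendix version makes explicit.
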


To establish the soundness of elaboration of case-expressions and
branches, we rely on pattern elaboration which abstracts over the
variables in patterns as well as over the holes which derive from the
instantiations inferred for omitted arguments. We abstract over these
holes using a lifting substitution $\ep$. The proofs for this theorem
and related lemmas are in Appendix~\ref{sec:completesoundnessproof}).

\begin{lem}[Pattern elaboration]\mbox{\\} 
  \begin{enumerate}
  \item If $\Theta;\Delta\vdash pat\recon\Pi\Delta_1;\Gamma_1 . Pat\oft T/\Theta_1;\rho_1$ and
    $\ep$ is a ground lifting substitution, such as $\Delta_i\vdash\ep\oft\Theta_1$  then:\\
    $\Delta_i,\ahsub{\ep}\Delta_1;\ahsub{\ep}\Gamma_1\vdash
    \ahsub{\ep}Pat\checks\ahsub{\ep}T$.

  \item If $\Theta;\Delta\vdash pat\reconChk{T}\Pi\Delta_1;\Gamma_1 . Pat/\Theta_1;\rho_1$ and
    $\ep$ is a ground lifting substitution, such as $\Delta_i\vdash\ep\oft\Theta_1$ then:\\
    $\Delta_i,\ahsub{\ep}\Delta_1;\ahsub{\ep}\Gamma_1\vdash
    \ahsub{\ep}Pat\checks\ahsub{\ep}\ahsub{\rho_1} T$.

  \item If $\Theta;\Delta\vdash\wvec{pat}\reconChk{T}\Pi\Delta_1;\Gamma_1 . \wvec{Pat}\spineret S/\Theta_1;\rho_1$ and
    $\ep$ is a ground lifting substitution, such as $\Delta_i\vdash\ep\oft\Theta_1$  then:\\
    $\Delta_i,\ahsub{\ep}\Delta_1;\ahsub{\ep}\Gamma_1\vdash
    \ahsub{\ep}\wvec{Pat}\checks\ahsub{\ep}\ahsub{\rho_1} T\spineret\ahsub{\ep}S$.
  \end{enumerate}
\end{lem}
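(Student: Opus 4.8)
The plan is to prove the three statements simultaneously, by mutual structural induction on the pattern-elaboration derivations, matching every elaboration rule of Figure~\ref{fig:elpattern} with the corresponding target typing rule of Figure~\ref{fig:comptyp}: \rl{t-pvar}, \rl{t-pindex} and \rl{t-pcon} for patterns, and \rl{t-snil}, \rl{t-sarr} and \rl{t-spi} for spines. The invariant that guides the whole argument is the difference in how the result type is reported across the modes. In synthesis (part~1) the type $T$ is produced \emph{over} the final hole context $\Theta_1$, so after grounding we must check $\ahsub{\ep}{Pat}$ against $\ahsub{\ep}{T}$; in checking and spine mode (parts~2 and~3) the type $T$ is an input living over the initial $\Theta$, and must first be transported along the accumulated instantiation $\rho_1$, so we check against $\ahsub{\ep}{\ahsub{\rho_1}{T}}$. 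Each case has to respect this asymmetry whenever a checking rule appeals to a synthesis premise, and conversely.

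The leaves are immediate. For \elpvar we have $\rho_1=\ids\Theta$, which acts as the identity, $\ahsub{\ep}{x}=x$, and the program context supplies $x\oft\ahsub{\ep}{T}$, so \rl{t-pvar} applies directly. For \elpindex the goal reduces to the assumed soundness of index-level elaboration, after which \rl{t-pindex} concludes. For \elspempty, $\ids\Theta$ is again the identity and \rl{t-snil} closes the goal, using the side condition that the residual type is either $\ibox U$ or $\const a\,\wvec{\ibox C}$. The constructor rule \elpcon is the first place the induction hypothesis does real work: applying part~3 to the spine premise and using that the constant's type $T=\Sigma(\const c)$ is closed (so neither $\rho$ nor $\ep$ touches it), the spine checks against $T$ with residual $\ahsub{\ep}{S}$; since $\ep$ leaves the head constant untouched, $\ahsub{\ep}{(\const c\,\wvec{Pat})}=\const c\,\ahsub{\ep}{\wvec{Pat}}$ and \rl{t-pcon} assembles the result.

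The substantive cases are those that chain two sub-derivations and therefore compose two hole instantiations: \elpsyn, with $\rho_1=\rho'\circ\rho$, and the spine rules \elspcmp, \elspex and \elspim. In each of these a grounding $\ep$ of the final context $\Theta''$ must be pushed back to ground the intermediate context $\Theta'$ feeding the first premise. I take that grounding to be the composite $\ep\circ\rho'$, relying on a \emph{functoriality} lemma for hole instantiations: if $\Delta_i\vdash\ep\oft\Theta''$ and $\Theta''\vdash\rho'\oft\Theta'$, then $\ep\circ\rho'$ grounds $\Theta'$ into $\Delta_i$ and $\ahsub{\ep}{\ahsub{\rho'}{M}}=\ahsub{\ep\circ\rho'}{M}$ for every object $M$ over $\Theta'$, in each syntactic domain. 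For \elpsyn I additionally use the soundness guarantee of the unification judgment, $\ahsub{\rho'}{S}=\ahsub{\rho'}{\ahsub{\rho}{T}}$; grounding this equation by $\ep$ converts the type $\ahsub{\ep}{\ahsub{\rho'}{S}}$ delivered by the part~1 hypothesis into the required $\ahsub{\ep}{\ahsub{\rho'\circ\rho}{T}}$. For \elspim the inserted hole $\hole Y$ is grounded by $\ep$ along with the rest of $\Theta'$, so the prepended argument $\ahsub{\rho}{C}$ becomes a genuine index term and the step is discharged by the dependent spine rule \rl{t-spi}, read as covering both the explicit and the implicit dependent quantifier, consistently with \rl{k-pi} and its uniform $\pieboth{X\oft U}{T}$.

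The main obstacle is exactly this substitution bookkeeping: keeping the lifting/grounding $\ep$, the hole instantiations $\rho$, and the index refinements $C/X$ appearing in \elspex and \elspim mutually coherent across the composition boundaries, and proving the functoriality identity $\ahsub{\ep}{\ahsub{\rho}{M}}=\ahsub{\ep\circ\rho}{M}$ uniformly in each domain. A related subtlety is that, although the lemma is stated for a \emph{ground lifting} substitution, the groundings that arise by composition (the $\ep\circ\rho'$ above) send holes to arbitrary closed index terms rather than to fresh variables; the induction therefore genuinely establishes the statement for arbitrary groundings into a hole-free context, with the strict lifting recovered as the special case actually consumed by rule \elsubst in Figure~\ref{fig:elpat}. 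The argument stays modular over the index language: the clauses \elpindex, \elspex and \elspim bottom out in the assumed soundness and substitution properties of index-level elaboration recalled in Section~\ref{sec:elabkindandtype}, which I treat as a black box.
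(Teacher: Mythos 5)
Your proposal is correct and follows the same route as the paper's proof: a simultaneous structural induction over the three elaboration judgments, matching each rule of Figure~\ref{fig:elpattern} against the corresponding pattern-typing rule of Figure~\ref{fig:comptyp}, using soundness of unification in \elpsyn, the closedness of signature types in \elpcon, and reading \rl{t-spi} as covering the implicit quantifier in \elspim. The one place you genuinely diverge is in how the induction hypothesis is strengthened to survive the composite cases (\elpsyn, \elspcmp, \elspex): the paper's appendix restates the lemma with an explicit \emph{further refinement substitution} $\rho_r$ with $\Theta_2\vdash\rho_r\oft\Theta_1$, so that the conclusions read, e.g., $\ahsub{\ep}{\ahsub{\rho_r}{Pat}}\checks\ahsub{\ep}{\ahsub{\rho_r\circ\rho_1}{T}}$ and $\ep$ remains a genuine lifting throughout, whereas you keep the statement as given but reinterpret the grounding of the intermediate hole context as the arbitrary closed instantiation $\ep\circ\rho'$. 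Modulo the substitution-composition identity $\ahsub{\ep}{\ahsub{\rho_r}{M}}=\ahsub{\ep\circ\rho_r}{M}$ --- which both arguments need and which you rightly isolate as the functoriality obligation --- the two strengthenings instantiate to the same statements at every node of the induction, so your version goes through; the paper's factored form is marginally more convenient in practice because the second premise of \elspcmp and \elspex already hands you an instantiation typed as $\Theta_2\vdash\rho_2\oft\Theta_1$, which slots directly into the $\rho_r$ position without re-establishing that the composite is a well-typed grounding of $\Theta_1$.
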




\section{Related Work}

Our language contains indexed families of types that are related to
Zenger's work~\citep{Zenger:TCS97} and the Dependent ML
(DML)~\citep{Xi:JFP} and Applied Type System (ATS)~\citep{Xi03:ATS,
  Chen05:CombiningProgramsWithProofs}. The objective in these systems
is: a program that is typable in the extended indexed type system is
already typable in ML. By essentially erasing all the type annotations
necessary for verifying the given program is dependently typed, we
obtain a simply typed ML-like program. In contrast, our language
supports pattern matching on index objects. Our elaboration, in
contrast to the one given in \citet{Xi:JFP}, inserts omitted
arguments producing programs in a fully explicit dependently typed
core language. This is different from DML-like systems which treat
\emph{all} index arguments as implicit and do not provide a way for
programmers to manipulate and pattern match directly on index objects.
Allowing users to explicitly access and match on index arguments
changes the game substantially.

Elaboration from implicit to explicit syntax for dependently typed
systems has first been mentioned by \citet{Pollack90} although no
concrete algorithm to reconstruct omitted arguments was given.
\citet{Luther:IJCAR01} refined these ideas as part of the TYPELab
project. He describes an elaboration and reconstruction for the
calculus of constructions without treating recursive functions and
pattern matching. There is in fact little work on elaborating
dependently-typed source language supporting recursion and pattern
matching. For example, the Agda bi-directional type inference
algorithm described in~\citet{Norell:phd07} concentrates on a core
dependently typed calculus enriched with dependent pairs, but omits
the rules for its extension with recursion and pattern
matching\footnote{\citet{Norell:phd07} contains extensive discussions
  on pattern matching and recursion, but the chapter on elaboration
  does not discuss them.}. Idris, a dependently typed language
developed by \citet{Brady:JFP13} uses a different technique. Idris
starts by adding holes for all the implicit variables and it tries to
instantiate these holes using unification. However, the language uses
internally a tactic based elaborator that is exposed to the user who
can interactively fill the holes using tactics. He does not prove
soundness of the elaboration, but conjectures that given a type
correct program its elaboration followed by a reverse elaboration
produces a matching source level program.

A notable example, is the work by \citet{Asperti:2012} on describing a
bi-directional elaboration algorithm for the Calculus of (Co)Inductive
Constructions (CCIC) implemented in Matita. Their setting is very
different from ours: CCIC is more powerful than our language since the
language of recursive programs can occur in types and there is no
distinction between the index language and the programming language
itself. Moreover in Matita, we are only allowed to write total
programs and all types must be positive. For these reasons their
source and target language is more verbose than ours and refinement,
i.e. the translation of the source to the target, is much more complex
than our elaboration. The difference between our language and Matita
particularly comes to light when writing case-expressions. In Matita
as in Coq, the programmer might need to supply an invariant for the
scrutinee and the overall type of the case expression as a type
annotation. Each branch then is checked against the type given in the
invariant. Sometimes, these annotations can be inferred by using
higher-order unification to find the invariant. In contrast, our
case-expressions require no type annotations and we refine each branch
according to refinement imposed by the pattern in each branch. The
refinement is computed with help from higher-order unification. This
makes our source and target language more light-weight and closer to a
standard simply typed functional language.

Finally, refinement in Matita may leave some holes in the final
program which then can be refined further by the user using for
example tactics. We support no such interaction; in fact, we fail, if
holes are left-over and the programmer is asked to provide more
information.

Agda, Matita and Coq require users to abstract over all variables
occurring in a type and the user statically labels arguments the user
can freely omit. To ease the requirement of declaring all variables
occurring in type, many of these systems such as Agda supports simply
listing the variables occurring in a declaration without the
type. This however can be brittle since it requires that the user
chose the right order. Moreover, the user has the possibility to
locally override the implicit arguments mechanism and provide
instantiations for implicit arguments explicitly. This is in contrast
to our approach where we guide elaboration using type annotations and
omit arguments based on the free variables occurring in the declared
type, similarly to Idris which abstracts and makes implicit all the
free variables in types.

This work is also related to type inference for Generalized Algebraic
Data Types (i.e: GADTs) such as~\citep{Schrijvers:2009}. Here the
authors describe an algorithm where they try to infer the types of
programs with GADTs when the principal type can be inferred and
requiring type annotations for the cases that lack a principal type or
it can not be inferred. This is in contrast to our system which always
requires a type annotation at the function level. On the other hand
our system supports a richer variety of index languages (some index
languages can be themselves dependently typed as with Contextual LF in
Beluga). Moreover we support pattern matching on index terms, a
feature that is critical to enable reasoning about objects from the
index level. Having said that, the approach to GADTs
from~\citep{Schrijvers:2009} offers interesting ideas for future work,
first making the type annotations optional for cases when they can be
inferred, and providing a declarative type systems that helps the
programmer understand when will the elaboration succeed to infer the
types.


\section{Conclusion}

In this chapter we describe a surface language for writing dependently
typed programs where we separate the language of types and index
objects from the language of programs. Total programs in our language
correspond to first-order inductive proofs over a specific index
domain where we mediate between the logical aspects and the
domain-specific parts using a box modality. Our programming language
supports indexed data-types, dependent pattern matching and
recursion. Programmers can leave index variables free when declaring
the type of a constructor or recursive program as a way of stating
that arguments for these free variables should be inferred by the
type-directed elaboration. This offers a lightweight mechanism for
writing compact programs which resemble their ML counterparts and
information pertaining to index arguments can be omitted. In
particular, our handling of case-expressions does not require
programmers to specify the type invariants the patterns and their
bodies must satisfy and case expressions can be nested due to the
refinement substitutions that mediate between the context inside and
outside a branch. Moreover, we seamlessly support nested pattern
matching inside functions in our surface and core languages (as
opposed to languages such as Agda or Idris where the former supports
pattern matching lambdas that are elaborated as top-level functions
and the latter only supports simply typed nested pattern matching).

The proposed case-expression can be nested and does not require
annotating its return type. Notably, it refines all the variables in
the context, this allows the programmer to write pattern matching on
dependent types using variables from the context that have been
refined by the current branch. We think this is a powerful feature for
the users of our language, nonetheless it is at the expense of some
complexity in the theory of the language, namely the elaboration of
terms together with a substitution that contains the current
refinements.

To guide elaboration and type inference, we allow type
annotations which indirectly refine the type of sub-expressions; type
annotations in patterns are also convenient to name index variables which do not
occur explicitly in a pattern.

We prove our elaboration sound, in the sense that if elaboration
produces a fully explicit term, this term will be well-typed. Finally, our
elaboration is implemented in Beluga, where we use as the index domain
contextual LF, and has been shown practical (see for example
the implementation of a type-preserving compiler \citep{Belanger:CPP13}). We
believe our language presents an interesting point in the design space
for dependently typed languages in general and sheds light into
how to design and implement a dependently typed language where we have
a separate index language, but still want to support pattern matching
on these indices.

\chapter[Contextual Types and Programming Languages]{Contextual Types and Programming Languages}\label{chp:babybel}
\chaptermark{In Programming Languages}


\newcommand{\todo}[1]{\footnote{#1}}
\newenvironment{metanote}{\begin{quote}\message{note!}[\begingroup\it}%
                         {\endgroup]\end{quote}}

\setlength{\parskip}{0cm}
\setlength{\parindent}{1em}


\newcommand{\addref}{\xspace}

\newcommand{\many}[1]{\overrightarrow{#1}}
 \newcommand{\eval}[1]{\ensuremath{\ulcorner #1 \urcorner}}




\newcommand{\boxd}[1]{\ensuremath{\square\,#1}}
\newcommand{\prodt}[2]{\ensuremath{#1\times #2}}
\newcommand{\termbox}[1]{\ensuremath{\{#1\}}}
\newcommand{\shift}[1]{\ensuremath{\uparrow^{#1}}}
\newcommand{\sub}[3]{\ensuremath{[#1]}}
\newcommand{\bclo}[1]{\ensuremath{[#1]}}
\newcommand{\evalsto}[1]{\ensuremath{\bclo{#1}\Downarrow}}
\newcommand{\dotsub}[3]{\ensuremath{#1,#2/#3}}
\newcommand{\HPsi}{\ensuremath{\hat{\Psi}}}
\newcommand{\mvar}[1]{\ensuremath{\code{'#1}}}
\newcommand{\pvar}[1]{\ensuremath{\code{#1}}}
\newcommand{\metavar}[2]{\ensuremath{\app{\mvar{#1}}{\sub{#2}{}{}}}} 
\newcommand{\pmetavar}[1]{\ensuremath{\code{'#1}}}
\newcommand{\pparvar}[1]{\ensuremath{\code{\##1}}}
\newcommand{\ppparvar}[1]{\ensuremath{\code{\#\##1}}}
\newcommand{\wkn}[2]{\ensuremath{\code{wkn}_{\hat #1}^{\hat #2}}}
\newcommand{\lookup}[0]{\ensuremath{\code{lookup}}}

\newcommand{\babybel}[0]{Babybel\xspace}
\newcommand{\debruijn}{de~Bruijn\xspace}

\newcommand{\constr}[1]{C}

\newcommand{\pair}[2]{\ensuremath{(#1,\, #2)}}
\newcommand{\lete}[3]{\ensuremath{\code{let}\, #1 = #2\mathrel{\code{in}} #3}}
\newcommand{\letpe}[4]{\ensuremath{\lete {\pair #1 #2} #3 #4}}
\newcommand{\fune}[3]{\ensuremath{\code{fun}\, #1(#2) = #3}}
\newcommand{\branche}[2]{\ensuremath{\mid pat_{#1} \mapsto #2}}
\newcommand{\cbranche}[3]{\ensuremath{\mid [#1 \vdash #2] \mapsto #3}}
\newcommand{\matche}[2]{\ensuremath{\code{match}\, #1\,\code{with}\ #2}}
\newcommand{\cmatche}[2]{\ensuremath{\code{cmatch}\, #1\,\code{with}\ #2}}
\newcommand{\branches}{\ensuremath{\many b}}
\newcommand{\cbranches}{\ensuremath{\many c}}
\newcommand{\unitt}{\ensuremath{\code{unit}}}
\newcommand{\unite}{\ensuremath{()}}
\newcommand{\n}[1]{\code{#1}} 


\newcommand{\tapp}[2]{\ensuremath{#1[#2]}}
\newcommand{\tp}[2]{\ensuremath{\tapp{\n{#1}}{#2}}}
\newcommand{\aconstr}[3]{\ensuremath{\app{\tapp {\constr #1} {#2}} {#3}}}
\newcommand{\fixe}[3]{\ensuremath{\code{fix}\, #1\oft #2 = #3}}
\newcommand{\lame}[1]{\ensuremath{\lambda #1\mathrel{.}}}
\newcommand{\Lame}[1]{\ensuremath{\Lambda #1\mathrel{.}}}
\newcommand{\alle}[1]{\ensuremath{\forall{#1}\mathrel{.}}}
\newcommand{\sml}{\ensuremath{\text{Core-ML}}\xspace}
\newcommand{\smlhoas}{\ensuremath{\text{Core-ML}^{\text{hoas}}}\xspace}
\newcommand{\smlgadt}{\ensuremath{\text{Core-ML}^{\text{gadt}}}\xspace}
\newcommand{\smlg}{\smlgadt}
\newcommand{\smlgsf}{\ensuremath{\smlg + \text{SF}}\xspace}
\newcommand{\wfj}{\,\code{wf}}
\newcommand{\ptj}[2]{\ensuremath{#1\oft #2\mathrel{\downarrow}}} 
\newcommand{\embe}[1]{\ensuremath{#1}}



\newcommand{\unif}{\doteq}
\newcommand{\nunif}{\,\cancel{\doteq}\,}

\section{Introduction}

Writing programs that manipulate other programs is a common activity
for a computer scientist, either when implementing interpreters,
writing compilers, or analyzing phases for static analysis. This is so
common that we have programming languages that specialize in writing
these kinds of programs. In particular, ML-like languages are
well-suited for this task thanks to recursive data types and pattern
matching. However, when we define syntax trees for realistic input
languages, there are more things on our wish list: we would like
support for representing and manipulating variables and tracking their
scope; we want to compare terms up-to $\alpha$-equivalence (i.e. the
renaming of bound variables); we would like to avoid implementing
capture avoiding substitutions, which is tedious and error-prone. ML
languages typically offer no high-level abstractions or support for
manipulating variables and the associated operations on abstract
syntax trees.

Over the past decade, there have been several proposals to add support
for defining and manipulating syntax trees into existing programming
environments. For example: FreshML~\citep{Shinwell:ICFP03}, the
related system Romeo~\citep{Stansifer:2014}, and
C$\alpha$ml~\citep{Pottier:2006} use Nominal Logic~\citep{Pitts:2003}
as a basis and the Hobbits library for Haskell~\citep{Westbrook:2011}
uses a name based formalism. In this chapter, we show how to extend an
existing (functional) programming language to define abstract syntax
trees with variable binders based on higher-order abstract syntax
(HOAS) (sometimes also called $\lambda$-trees~\citep{Miller:1999}).
Specifically, we allow programmers to define object languages in the
simply-typed $\lambda$-calculus where programmers use the intentional
function space of the simply typed $\lambda$-calculus to define
binders (as opposed to the extensional function space of ML). Hence,
HOAS representations inherit $\alpha$-renaming from the simply-typed
$\lambda$-calculus and we can model object-level substitution for HOAS
trees using $\beta$-reduction in the underlying simply-typed
$\lambda$-calculus. We further allow programmers to express whether a
given sub-tree in the HOAS tree is closed by using the necessity
modality of S4 \citep{Davies:ACM01}. This additional expressiveness is
convenient to describe closed abstract syntax trees.

This work follows the work of HOAS representations in the logical
framework LF~\citep{Harper93jacm}. On the one hand we restrict it to
the simply-typed setting to integrate it smoothly into existing
simply-typed functional programming languages such as OCaml, and on
the other hand we extend its expressiveness by allowing programmers to
distinguish between closed and open parts of their syntax trees. As we
analyze HOAS trees, we go under binders and our sub-trees may not
remain closed. To model the scope of binders in sub-trees we pair a
HOAS tree together with its surrounding context of variables following
ideas from Beluga \citep{Pientka:POPL08,Nanevski:ICML05}. In addition,
we allow programmers to pattern match on such contextual objects, i.e.
an HOAS tree together with its surrounding context. In essence, this
work shows that the fundamental ideas underlying Beluga
\citep{Pientka:IJCAR10,Pientka:CADE15} can be smoothly integrated into
existing programming environments.

%
%

In this chapter the contribution is two-fold: First, we present a
general methodology for adding support for HOAS tree definitions and
first-class contexts to an existing (simply-typed) programming
language. In particular, programmers can define simply-typed HOAS
definitions in the syntactic framework (SF) based on modal S4
following~\citep{Nanevski:ICML05,Davies:ACM01}. In addition,
programmers can manipulate and pattern match on well-scoped HOAS trees
by embedding HOAS objects together with their surrounding context into
the programming language using contextual types
\citep{Pientka:POPL08}. The result is a programming language that can
express computations over open HOAS objects. We describe our technique
abstractly and generically using a language that we call \sml. In
particular, we show how \sml with first-class support for HOAS
definitions and contexts can be translated in into a language \smlgadt
that supports Generalized Abstract Data Types (GADTs) using a deep
(first-order) embedding of SF and first-class contexts (see
Fig.~\ref{fig:diagram} for an overview). We further show that our
translation preserves types.

\begin{figure}
  \centering
\begin{tikzpicture}[node distance=0.5cm, auto]
\tikzset{
    mynode/.style={rectangle,draw=black, inner sep=0.3em, text centered, align=center},
    myarrow/.style={->, >=latex', shorten >=1pt, thick},
    myarrow2/.style={->, >=latex', thick, double equal sign distance},
    mylabel/.style={text width=7em, text centered}
}

\node[mynode] (sml) {~\\\sml\\~};
\node[mynode, below right=of sml] (ctyp) {+ Contextual\\Types};
\node[mynode, below left=of ctyp] (sf) {Syntactic\\Framework};
\node[mynode, right=of ctyp] (smlct) {~\\\smlgadt\\~};

\draw[myarrow] (sml.east) -| (ctyp.north);
\draw[myarrow] (sf.east) -| (ctyp.south);
\draw[myarrow2] (ctyp.east) -- (smlct.west);

\end{tikzpicture}
  \caption{Adding Contextual Types to ML}
  \label{fig:diagram}
\end{figure}
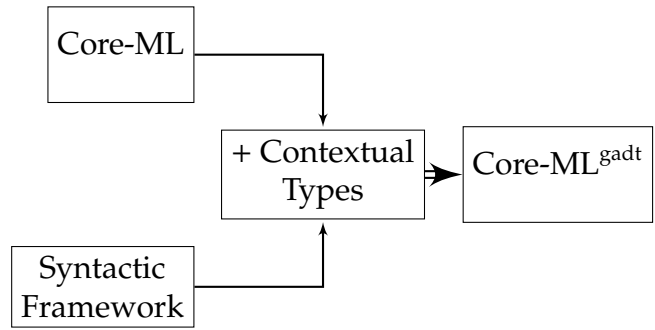

Second, we show how this methodology can be realized in OCaml by
describing our prototype \babybel \footnote{available at
  \url{www.github.com/fferreira/babybel/}}. In our implementation of
\babybel we take advantage of the sophisticated type system, in
particular GADTs, that OCaml provides to ensure our translation is
type-preserving. By translating HOAS objects together with their
context to a first-order representation in OCaml with GADTs we can
also reuse OCaml's first-order pattern matching compilation allowing
for a straightforward compilation. Programmers can also exploit
OCaml's impure features such as exceptions or references when
implementing programs that manipulate HOAS syntax trees.
The \babybel prototype includes implementations of a type-checker, an
evaluator, closure conversion (shown in Section~\ref{sec:cloconv}
together with a variable counting example and a syntax desugaring
examples) and a continuation passing style translation. These examples
demonstrate that this approach allows programmers to write programs
that operate over abstract syntax trees in a manner that is safe and
effective.

Finally, notice that \babybel's language extensions do not have to be
OCaml specific. The same approach could be implemented in Haskell, and
other typed functional programming languages.




\section{Main Ideas} \label{sec:examples}
\lstset{language=Babybel}

In this section, we show some examples that illustrate the use of
\babybel, our proof of concept implementation where we embed the
syntactic framework SF inside OCaml. To smoothly integrate SF into
OCaml, \babybel defines a PPX filter (a mechanism for small syntax
extensions for OCaml). In particular, we use attributes and quoted strings to
implement our syntax extension. 






\filbreak
\subsection{Example Programs}
\subsubsection{Removing Syntactic Sugar}\label{sec:synsugar}
In this example, we describe the compact and elegant implementation of
a compiler phase that de-sugars programs functional programs with
let-expressions by translating them into function applications. We
first specify the syntax of a simple functional language that we will
transform. To do this we embed the syntax specification using this
tag:
\begin{lstlisting}
[@@@signature {def| ... |def}]
\end{lstlisting}
Inside the \lstinline!@@@signature! block we will embed our SF specifications.

Our source language is defined using the type \lstinline!tm!. It consists of constants
(written  as \lstinline!cst!),  pairs (written as
\lstinline!pair!), functions (built using \lstinline!lam!),
applications (built using \lstinline!app!), and let-expressions.

\begin{lstlisting}
[@@@signature {def|
tm : type.
cst     : tm.
pair    : tm -> tm -> tm.
lam     : (tm -> tm) -> tm.
fst     : tm -> tm.
snd     : tm -> tm.
letpair : tm -> (tm -> tm -> tm) -> tm.
letv    :  tm -> (tm -> tm) -> tm.
app     : tm -> tm -> tm.
|def}]
\end{lstlisting}

 Our
definition of the source language
exploits HOAS using the function space of our syntactic framework SF
to represent binders in our object language.
For example, the constructor \lstinline!lam!  takes as an
argument a term of type \lstinline!tm -> tm!. Similarly, the definition of
let-expressions models variable binding by falling back to the function
space of our meta-language, in our case the syntactic framework SF. As a
consequence, there is no constructor for variables in our syntactic
definition and moreover we can reuse the substitution operation from
the syntactic framework SF to model substitution in our object language.
This avoids building up our own infrastructure for variables bindings.

We now show how to simplify programs written in our source language by
replacing uses of \lstinline!letpair! in terms with projections, and
uses of \lstinline!letv! by $\beta$ reduction. 

\begin{center}
  \lstinline!letv!\ $M$ \lstinline!(\x.!$N$\lstinline!)!
   $\equiv$
  N\lstinline![!$M$\lstinline!/x]!
  \\
  \lstinline!letpair!\ $M$\ \lstinline!(\x.\y.! $N$\lstinline!)!
  $\equiv$
  N\lstinline![(fst!\ $M$ \lstinline!)/x,(snd!\ $M$\lstinline!)/y]!\\
\end{center}

To implement this simplification phase we implement an OCaml
program \lstinline!rewrite!: it analyzes the structure of our terms,
calls itself on the sub-terms, and eliminates the use of the
let-expressions into simpler constructs. 
As we traverse terms, our sub-terms may not remain
closed. For simplicity, we use the same language as source and target
for our transformation.
We therefore specify the type of the function
\lstinline!rewrite! using contextual types pairing the type
\lstinline!tm! together with a context \lstinline!$\gamma$! in which the term
is meaningful inside  the tag \lstinline![@type .... ]!.

\begin{lstlisting}
rewrite[@type $\gamma$.[$\gamma$ |- tm]->[$\gamma$ |- tm]]
\end{lstlisting}

The type can be read: for all contexts \lstinline!$\gamma$!, given a
\lstinline!tm! object in the context \lstinline!$\gamma$!, we return a
\lstinline!tm! object in the same context.
In general, contextual types associate a context and a type in the syntactic
framework SF. For example if we want to specify a term in the empty
context we would write \lstinline![ |- tm]!  or for a term that
depends on some context with at least one variable and potentially
more we would write \lstinline![$\gamma$,x:tm |- tm]!.

\filbreak
We now implement the function \lstinline!rewrite! by pattern matching on the
structure of a contextual term. In \babybel, contextual terms are written
inside boxes (\lstinline!{t|...|t}!) and contextual patterns inside
(\lstinline!{p|...|p}!).
\begin{lstlisting}
let rec rewrite[@type $\gamma$.[$\gamma$ |- tm]->[$\gamma$ |- tm]]
= function
| {p| cst |p} -> {t|cst|t}
| {p| pair 'm 'n |p} ->
    let mm, nn = rewrite m, rewrite n
    in {t|pair 'mm 'nn|t}
| {p| fst 'm |p} ->
    let mm = rewrite m in {t|fst 'mm|t}
| {p| snd 'm |p} ->
    let mm = rewrite m in {t|snd 'mm|t}
| {p| app 'm 'n |p} ->
    let mm,nn = rewrite m, rewrite n
    in {t|app 'mm 'nn|t}
| {p| lam (\x. 'm) |p} ->
    let mm = rewrite m in {t|lam (\x. 'mm)|t}
| {p| #x |p} -> {t|#x|t}
| {p| letpair 'm (\f.\s. 'n) |p} ->
      let mm = rewrite m in
      rewrite {t|'n [snd 'mm;fst 'mm]|t}
| {p| letv 'm (\x. 'n) |p} -> rewrite {t|'n['m]|t}
\end{lstlisting}

Note that we are pattern matching on potentially open terms. Although
we do not write the context \lstinline!$\gamma$! explicitly, in
general patterns may mention their context (i.e.:
\lstinline!{p|_ |- cst|p}!\footnote{The underscore means that there
  might be a context but we do not bind any variable for it because
  the term does not explicitly mention them and contexts are not
  available at run-time.}. As a guiding principle, we may omit writing
contexts, if they do not mention variables explicitly and are
irrelevant at run-time.
Inside patterns or terms, we specify incomplete terms using quoted
variables (e.g.: \lstinline!'n!). Quoted variables are an 'unboxing'
of a computational expression inside the syntactic framework SF. The
quote signals that we are mentioning a computational variable inside
SF. Quoted variables can depend on all the variables in scope in scope
where they are defined. For example, in the pattern:
\begin{lstlisting}
{p| letpair 'm (\f.\s. 'n) |p}
\end{lstlisting}
 the variable \lstinline!'n!
may depend on bound variables \lstinline!f! and \lstinline!s!.

The computationally interesting cases are the let-expressions. For
them, we perform the rewriting according to the two rules given
earlier. The syntax of the substitutions puts in square brackets the
terms that will be substituted for the variables. We consider contexts
and substitutions ordered, this allows for efficient implementations
and more lightweight syntax (e.g.: substitutions omit the name of the
variables because contexts are ordered). Importantly, the substitution
is an operation that is eagerly applied and not part of the
representation. Consequently, the representation of the terms remains
normal and substitutions cannot be written in patterns. We come back
to this design decision later. When pattern matching on open terms,
one needs to match against variables, this is accomplished with the
pattern for variables from the context that we denote as
\lstinline!{p| #x |p}!.

To translate contextual SF objects and contexts, \babybel takes
advantage of OCaml's advanced type system. In particular, we use
Generalized Abstract Data Types~\citep{ChH03Pha,Xi03:guarded} to index
types with the contexts in which they are valid. Type indices, in
particular contexts, are then erased at run-time. When the contexts
are relevant at run-time, we need to provide a term to explicitly
represent the contexts. In Section~\ref{sec:cloconv} there is an
example of this.

\subsubsection{Finding the Path to a Variable}\label{sec:path}
In this example, we compute the path to a specific variable in an abstract
syntax tree describing a lambda-term. This will show how to specify
particular context shapes, how to pattern match on variables, how to
manage our contexts, and how the \babybel extensions interact
seamlessly with OCaml's impure features. 
For this example, we concentrate on the fragment of terms that
consists only of abstractions and application which we repeat here.

\begin{lstlisting}
[@@@signature {def|
tm : type.
app : tm -> tm -> tm.
lam : (tm -> tm) -> tm.
|def}]
\end{lstlisting}

To find the first occurrence of a particular variable in the HOAS
tree, we use backtracking that we implement using the user-defined
OCaml exception \lstinline!Not_found!. To model the path to a
particular variable occurrence in the HOAS tree, we define an OCaml
data type \lstinline!step! that describes the individual steps we take
and finally model a \lstinline!path! as a list of individual steps.

\begin{lstlisting}
exception Not_found
type step
= Here  (*the path ends here*)
| AppL  (*take left on app*)
| AppR  (*take right on app*)
| InLam (*go inside the body of the term*)
type path = step list
\end{lstlisting}

The main function \lstinline!path_aux! takes as input a term that lives in a
context with at least one variable and returns a path to the occurrence
of the top-most variable or an empty list, if the variable is not
used. Its type is:
\begin{lstlisting}
[@type $\gamma$. [$\gamma$, x : tm |- tm] -> path].
\end{lstlisting}
We again quantify over all contexts
\lstinline!$\gamma$! and require that the input term is meaningful in a context
with at least one variable. This specification simply excludes closed
terms since there would be no top-most variable. Note also how
we mix in the type annotation to this function both contextual types
and OCaml data types.

\begin{lstlisting}
let rec path_aux [@type $\gamma$.[$\gamma$, x:tm |- tm] -> path]
= function
| {p|_, x |- x|p}-> [Here]
| {p|_, x |- #y|p}-> raise Not_found
| {p|_, x |- lam (\y. 'm)|p}->
      InLam::(path_aux{t|_,x,y |-'m[_;y;x]|t})
| {p|_, x |- app 'm 'n|p}->
      try AppL::(path_aux m)
      with _ -> AppR::(path_aux n)
\end{lstlisting}

All patterns in this example make the context explicit, as we pattern
match on the context to identify whether the variable we encounter
refers to the top-most variable declaration in the context. The
underscore simply indicates that there might be more variables in the
context. The first case, matches against the bound variable
\lstinline!x!. As mentioned before, the second case has a special
pattern with the sharp symbol that matches against any variable in the
context \lstinline!_, x!. Because of the first pattern if it had been
\lstinline!x! it would have matched the first case. Therefore, it
simply raises the exception to backtrack to the last choice we had.
The case for abstractions is interesting, since we have to go under
its binder and the variable that we are looking for is no longer the
top most declaration of the context. Hence we must apply a
substitution to swap the two variables at the top of the context.

The case for lambda expressions is interesting because the recursive
call happens in an extended context. Furthermore, in order to keep the
variable we are searching for on top, we need to swap the two top-most
variables. For that purpose, we apply the \lstinline![_ ; y; x]!
substitution. In this substitution the underscore stands for the
identity on the rest of the context, or more precisely, the
appropriate shift in our internal representation that uses \debruijn
indices. Once elaborated, this substitution becomes
\lstinline![^2 ; y; x]! where the shift by two is because we are
swapping variables as opposed to instantiating them with closed terms.

The final case is for applications. We first look on the left side and
if that raises an exception we catch it and search again on the right.
We again use quoted variables (e.g.: \lstinline!'m!) to bind and refer
to ML variables in patterns and terms of the syntactic framework and
more generally be able to describe incomplete terms.

\begin{lstlisting}
let get_path [@type $\gamma$.[$\gamma$, x:tm |- tm] -> path]
= fun t -> try path_aux t with _ -> []
\end{lstlisting}
The \lstinline!get_path! function has the same type as the
\lstinline!path_aux! function. It simply handles the exception and
returns an empty path in case that variable \lstinline!x! is not found
in the term.

\subsubsection{Closure Conversion} \label{sec:cloconv}

In the final example, we describe the implementation of a naive
algorithm for closure conversion for untyped $\lambda$-terms
following~\citep{Cave:POPL12}. We take advantage of the syntactic
framework SF to represent source terms (using the type family
\lstinline!tm!) and closure-converted terms (using the type family
\lstinline!ctm!). In particular, we use SF's closed modality box to
ensure that all functions in the target language are closed (this is
written with curly braces: \lstinline!{}!). This is impossible when we
simply use LF as the specification framework for syntax as in
\citep{Cave:POPL12}. We omit here the definition of lambda-terms, our
source language, that was given in the previous section and
concentrate on the target language \lstinline!ctm!.
\begin{figure}
  \centering
\begin{lstlisting}
[@@@signature {def|
ctm: type.  % closed term
btm: type.  % binder term
env : type. % environment

capp   : ctm -> ctm -> ctm.
clam   : {btm} -> ctm.
clo    : ctm -> env -> ctm.

embed  : ctm -> btm.
bind   : (ctm -> btm) -> btm.

empty  : env.
dot    : env -> ctm -> env.
|def}]
\end{lstlisting}
  \caption{Closure Converted Language}
  \label{fig:clolang}
\end{figure}

Concretely, \lstinline!tm! is the type of our input language.
Applications are again represented by the constructor \lstinline!app! that
takes two terms, the first represents the function and the second its
parameter.

One option would be to simply use the same representation for the
target language that we used for the source language (i.e.: the
untyped $\lambda$-calculus). A better option is to use a new
representation that highlights that functions in the target language
do no depend on their environments. To this effect we declare in
Figure~\ref{fig:clolang} two types, \lstinline!ctm! to represent terms
after the conversion and \lstinline!btm! to represent the bodies of
functions. We take advantage of the expressive power of the
specification framework SF to define the closed bodies of the
functions \lstinline!btm! and the converted terms \lstinline!ctm!.

Applications in the target language are defined using the constructor
\lstinline!capp! and simply take two target terms to form an
application. But functions (constructor \lstinline!clam!) take a
\lstinline!btm! object wrapped in \lstinline!{}! braces. This means
that the object inside the braces is closed. The curly braces denote
the internal closed modality of the syntactic framework. As the
original functions may depend on variables in the environment, we need
closures where we pair a function with an environment that provides
the appropriate instances for variables. We define our own environment
explicitly, because they are part of the target language and the
built-in substitution is an operation on terms that is eagerly
computed away. Inside the body of the function, we need to bind all
the variables from the environment that the body uses such that later
we can instantiate them applying the substitution. This is achieved by
defining multiple bindings using constructors \lstinline!bind! and
\lstinline!embed! inside the term.

When writing a function that translates between representations, the
open terms depend on contexts that store assumptions of different
representations. Therefore, it is often the case that one needs to
relate these contexts. In our example here we define a context
relation that keeps the input and output contexts in sync using a GADT
data type \lstinline!rel! in OCaml where we model contexts as types.
The relation statically checks correspondence between contexts, but it
is also available at run-time (i.e.~ after type-erasure). It states
that for each variable in the source contexts there is a corresponding
one in the target context.
\filbreak
\begin{lstlisting}
type (_ , _) rel  =
    Empty : ([.], [.]) rel
  | Both  : ([$\gamma$], [$\delta$]) rel ->
      ([$\gamma$, x:tm], [$\delta$, y:ctm]) rel

exception Error of string

let rec lookup
  [@type $\gamma$ $\delta$.[$\gamma$ |- tm]->($\gamma$, $\delta$) rel->[$\delta$ |- ctm]] =
fun t -> function
| Both r' -> begin match t with
  | {p| _,x |- x |p} -> {t|_,x |- x|t}
  | {p| _,x |- ##v |p} -> let v1 = lookup {t|#v|t} r'
                      in {t|_, x |- 'v1 [_]|t}
  | _ -> raise Error (``Term that is not a variable'')
| Empty -> raise Error (``Term is not a variable'')
end
\end{lstlisting}

The function \lstinline!lookup! searches for related variables in the
context relation. If we have a source context
\lstinline!$\gamma$,x:tm! and a target context
\lstinline!$\delta$,y:ctm!, then we consider two variable cases: In
the first case, we use matching to check that we are indeed looking
for the top-most variable \lstinline!x! and we simply return the
corresponding target variable. If we encounter a variable from the
context, written as \lstinline!##v!, then we recurse in the smaller
context stripping off the variable declaration \lstinline!x!. Note
that \lstinline!##v! denotes a variable from the context
\lstinline!_!, that is not \lstinline!x!, while \lstinline!#v!
describes a variable from the context \lstinline!_, x!, i.e. it could
be also \lstinline!x!. The recursive call returns the corresponding
variable \lstinline!v1! in the target context that does not include
the variable declaration \lstinline!x!. We hence need to weaken
\lstinline!v1! to ensure it is meaningful in the original context. We
therefore associate \lstinline!'v1! with the identity substitution for
the appropriate context, namely: \lstinline![_]!. In this case, it
will be elaborated into a one variable shift in the internal syntax
(i.e.: $\uparrow^1$). The last case returns an exception whenever we
are trying to look up in the context something that is not a variable.

As we cannot express at the moment in the type annotation that the
input to the lookup function is indeed only a variable from the
context $\gamma$ and not an arbitrary term, we added another fall-through
case for when the context is empty. In this case the input term cannot
be a variable, as it would be out of scope.

Finally, we implement the function \lstinline!conv! which takes an
untyped source term in a context \lstinline!$\gamma$! and a relation of
source and target variables, described by \lstinline!($\gamma$, $\delta$) rel! and
returns the corresponding target term in the target context
\lstinline!$\delta$!.

\filbreak
\begin{lstlisting}
let rec close [@type $\gamma$ $\delta$. ($\gamma$, $\delta$) rel->[$\delta$ |- btm]->[btm]]
= fun r m -> match r with
| Empty -> m
| Both r -> close r {t|bind (\x. 'm)|t}

let rec envr [@type $\gamma$ $\delta$. ($\gamma$, $\delta$)rel->[$\delta$ |- env]]
= fun r -> match r with
| Empty -> {t|empty|t}
| Both r ->
  let s = envr r in {t|_, x |- dot ('s[_]) x|t}

let rec conv [@type $\gamma$ $\delta$.($\gamma$, $\delta$)rel->[$\gamma$ |-tm]->[$\delta$ |-ctm]]
= fun r m -> match m with
| {p| lam (\x. 'm)  |p} ->
  let mc = conv (Both r) m in
  let mb = close r {t|bind(\x. embed 'mc)|t}
  in let s = envr r in {t|clo (clam {'mb}) 's|t}
| {p|#x|p} -> lookup {t|#x|t} r
| {p|app 'm 'n|p} -> let mm, nn = conv r m, conv r n in
                 {t|capp 'mm 'nn|t}
\end{lstlisting}




The core of the translation is defined in functions \lstinline!conv!,
\lstinline!envr!, and \lstinline!close!. The main function is
\lstinline!conv!. It is implemented by recursion on the source
term. There are three cases: i) source variables simply get translated
by looking them up in the context relation, ii) applications just get
recursively translated each term in the application, and iii)
lambda expressions are translated recursively by
converting the body of the expression in the extended context (notice the
recursive call with \lstinline!Both r!)  and then turning
the lambda expression into a closure.

In the first step we generate the closed body by the function
\lstinline!close! that adds the multiple binders (constructors
\lstinline!bind! and \lstinline!embed!) and generates the
closed term.  Note that the return type \lstinline![btm]! of \lstinline!close! guarantees that
the final result is indeed a closed term, because we omit the context.
For clarity, we could have written \lstinline![ |- btm]!.

Finally, the function \lstinline!envr!
computes the substitution (represented by the type
\lstinline!env!) for the closure.



The implementation of closure conversion shows how to enforce closed
terms in the specification, and how to make contexts and their
relationships explicit at run-time using OCaml's GADTs. We believe it
also illustrates well how HOAS trees can be smoothly manipulated and
integrated into OCaml programs that may use effects.

\section[\sml: A Small Functional Language]{\sml : A Functional Language with Pattern Matching and Data Types} \label{sec:sml}

We now introduce \sml, a functional language based on ML with pattern
matching and data types. In Section~\ref{sec:mlct} we will extend this
language to also support contextual types and terms in our syntactic
framework SF.

We keep the language design of \sml minimal in the interest of
clarity. However, our prototype implementation which we describe in
Section~\ref{sec:impl} supports interaction with all of OCaml's
features such as exceptions, references and GADTs.

\begin{displaymath}
  \begin{array}{lrcll}
    \mbox{Types} & \tau & \bnfas & D \bnfalt \tau_1 \to \tau_2 \\[.2em]

    \mbox{Expressions} & e & \bnfas & \embe i \bnfalt \fune f x e \bnfalt\\[.2em]
    & & & \lete x {i} {e} \bnfalt
                  \matche i \branches\\[.2em]

    \mbox{Neutral Exp.} & i & \bnfas & \app i e \bnfalt \app {\constr k} {\many{e}} \bnfalt
                                       x \bnfalt e\oft\tau\\[.2em]

    \mbox{Patterns}  & pat & \bnfas & \app {\constr{k}} {\many {pat}} \bnfalt x \\[.2em]

    \mbox{Branches} & b & \bnfas & \branche {} e \\[.2em]
    \mbox{Contexts} & \Gamma & \bnfas & \cdot \bnfalt \Gamma, x\oft \tau\\
    \mbox{Signature} & \Xi & \bnfas & \cdot \bnfalt \Xi, D \oft\code{type} \bnfalt \Xi, \constr k \oft \many {\tau} \to D\\
  \end{array}
\end{displaymath}

In \sml, we declare data-types by adding type formers ($D$) and type
constructors ($\constr k$) to the signature ($\Xi$). Constructors must
be fully-applied. In addition all functions are named and recursive.
The language supports pattern matching with nested patterns where
patterns consist of just variables and fully applied constructors. We
assume that all patterns are linear (i.e. each variable occurs at most
once) and that they are covering.
\begin{figure}
  \centering
\begin{displaymath}
  \begin{array}{c}
    \multicolumn{1}{l}{\boxed{\Gamma \vdash e \checks \tau}: \mbox{Expression $e$ checks against type $\tau$ in context $\Gamma$}}\\[.5em]

    \infer[\rl{t-rec}]
    {\Gamma \vdash \fune f x e \checks \tau \to \tau'}
    {\Gamma, f: \tau\to\tau', x: \tau \vdash e \checks \tau'}

    \quad

    \infer[\rl{t-let}]
    {\Gamma \vdash \lete x {i} {e} \checks \tau}
    {\Gamma \vdash i \synths \tau'
    & \Gamma,x\oft \tau' \vdash e \checks \tau}

    \vs

    \infer[\rl{t-match}]
    {\Gamma \vdash \matche i \branches \checks \tau}
    {\Gamma \vdash i \synths \tau'
    & \forall b_k\in\many b \mathrel{.} \Gamma \vdash b_k \checks \tau' \to \tau}

    \vs

    \infer[\rl{t-emb}]
    {\Gamma\vdash\embe i \checks \tau}
    {\Gamma\vdash i \synths \tau' & \tau=\tau'}

    \vs

    \multicolumn{1}{l}{\boxed{\Gamma \vdash i \synths \tau}: \mbox{Neutral expr. $i$ synthesizes type $\tau$ in context $\Gamma$}}\\[.5em]

    \infer[\rl{t-ann}]
    {\Gamma\vdash e\oft\tau \synths \tau}
    {\Gamma\vdash e \checks \tau}

    \quad

    \infer[\rl{t-app}]
    {\Gamma\vdash \app{i}{e} \synths \tau}
    {\Gamma\vdash i \synths \tau'\to\tau
    & \Gamma\vdash e \checks \tau'}

    \vs

    \infer[\rl{t-var}]
    {\Gamma \vdash x \synths \tau}
    {\Gamma(x) = \tau}

    \vs

    \infer[\rl{t-constr}]
    {\Gamma \vdash \app {\constr k}  {\many e} \synths D}
    {\Xi(\constr k) = \many{\tau} \to D
    & \forall \tau_i \in \many{\tau}\mathrel{.} \forall e_i \in \many{e} \mathrel{.} \Gamma \vdash e_i \checks \tau_i}

    \vs
    \multicolumn{1}{l}{\boxed{\Gamma \vdash\branche {} e \checks \tau_1\to\tau_2}: \mbox{Branch checks against $\tau_1$ and $\tau_2$ in $\Gamma$}}\\[.5em]

    \infer[\rl{t-branch}]
    {\Gamma \vdash \branche {} e \checks \tau'\to\tau}
    {\vdash\ptj {pat} {\tau'} {\Gamma'}
    & \Gamma,\Gamma' \vdash e \checks \tau}

    \vs
    \multicolumn{1}{l}{\boxed{\vdash pat \oft \tau \downarrow \Gamma} : \mbox{Pattern $pat$ is of type $\tau$ and binds variables in context $\Gamma$}}\\[.5em]

    \infer[\rl{t-pat-var}]
    {\vdash \ptj x \tau {x\oft \tau}}
    {}

    \vs

    \infer[\rl{t-pat-con}]
    {\vdash\ptj {\app{\constr k}{\many {pat}}} D {\Gamma_1,...,\Gamma_i}}
    {\Xi(\constr k) = \many{\tau}\to D
    & ~~\forall \tau_i \in \many{\tau}\mathrel{.} \forall pat_i \in \many{pat} \mathrel{.}~~ \vdash\ptj {pat_i} {\tau_i} {\Gamma_i}}
  \end{array}
\end{displaymath}
  \caption{\sml Typing Rules}
  \label{fig:smltyp}
\end{figure}

The bi-directional typing rules for \sml have access to a signature
$\Xi$ and are standard (see Fig.~\ref{fig:smltyp}) and the signature
remains unchanged throughout the rules.

There are three things to notice in the rules:
\begin{enumerate}
\item The rule \rl{t-constr} ensures that the constructor $\constr k$
  is fully applied.

\item When type checking pattern matching in rule \rl{t-match} the
  branches are checked against the type that the pattern needs to have
  (i.e.: $\tau'$) and the type the body needs to have (i.e.: $\tau$).
  The use of the arrow in this context is an overloading of syntax to
  mean the type of the pattern and the type of the body.

\filbreak
\item Finally, in the rule for branches(\rl{t-branch}) we check the
  body in the context $\Gamma$ extended with the context from the
  pattern of the branch (context $\Gamma'$).
\end{enumerate}
\begin{figure}
  \centering
\begin{displaymath}
  \begin{array}{lrcll}
    \mbox{Values} & v & \bnfas & \app {\constr k} {\many v} \bnfalt
                                 (\fune f x e)\bclo\rho\\
    \mbox{Environments} & \rho & \bnfas & \cdot \bnfalt \rho, v/x\\
    \mbox{Closures} & L & \bnfas & e \bclo\rho \\
  \end{array}
\end{displaymath}
\begin{displaymath}
  \begin{array}{c}
    \multicolumn{1}{l}{\boxed{e\evalsto\rho v}: \mbox{Expression $e$ evaluates to value $v$ in environment $\rho$}}\\[.5em]
    \infer[\rl{e-app}]
    {(\app {i}{e})\evalsto\rho v}
    {i\evalsto\rho (\fune f x e')\bclo{\rho'}
    & e\evalsto\rho v'
    & e' \evalsto{\rho',v'/x} v}

    \vs

    \infer[\rl{e-var}]
    {x \evalsto\rho v}
    {\rho(x) = v}

    \quad

    \infer[\rl{e-let}]
    {(\lete x i {e}) \evalsto\rho v}
    {i \evalsto\rho v_1
    & e \evalsto{\rho,v_1/x} v}

    \vs

    \infer[\rl{e-match-fail}]
    {(\matche i {{\branche 1 {e_b}}\cons\branches}) \evalsto\rho v}
    {i \evalsto\rho v_1
    & \vdash {pat_1} \nunif v
    & (\matche i \branches) \evalsto\rho v}

    \vs

    \infer[\rl{e-match-succ}]
    {(\matche i {{\branche 1 {e_b}}\cons\branches})\evalsto\rho v}
    {i \evalsto\rho v_1
    & \hat\Gamma\vdash {pat_1} \unif v /\rho'
    & e_b \evalsto{\rho,\rho'} v}

    \vs

    \infer[\rl{e-constr}]
    {(\app{\constr k}{\many e}\,) \evalsto{\rho} \app{\constr k}{\many v}}
    {\many e\, \evalsto\rho \many v}

    \quad

    \infer[\rl{e-ann}]
    {e\oft \tau \evalsto\rho v}
    {e \evalsto\rho v}

    \quad

    \infer[\rl{e-emb}]
    {\embe i \evalsto\rho v}
    {i \evalsto\rho v}

    \vs

    \infer[\rl{e-fun}]
    {(\fune f x e) \evalsto\rho (\fune f x e)\bclo\rho}
    {}
  \end{array}
\end{displaymath}
  \caption{\sml Big-Step Operational Semantics}
  \label{fig:smlsos}
\end{figure}

In Figure~\ref{fig:smlsos} we define the operational semantics using
an environment based approach. We define:
\begin{itemize}
\item values, that are either constructors applied to other values or
  recursive functions.
\item environments that assign a value to each variable in the context.
\item closures that represent pending computations in the
  environment. We write $e\bclo\rho$ for a closure that consists of
  expression $e$ in context $\rho$.
\end{itemize}
We use an environment based operational semantics to avoid having to
define substitutions at the level of \sml and later, substitution of
quoted variables. This agrees with \babybel that in its implementation
we do not need to define quoted variable substitution as we reuse
OCaml's substitution so this style of semantics seems fitting.

The evaluation judgment $e\evalsto\rho v$ means that the expression
$e$ in environment $\rho$ evaluates in a big step to value $v$. The
more interesting rules are the ones for pattern matching that use
first-order matching as defined in Fig.~\ref{fig:fomatch}. As
traditional, in a \code{match} expression, patterns are matched branch
by branch until one matches and then the body is executed in the
extended environment that resulted from the matching. 

We characterize both matching and failure to match. Successful
matching is defined by the judgment $\Gamma \vdash pat \unif v/\rho$
that means that the pattern $pat$ with unification variables in
$\Gamma$ and the value $v$ match producing the substitution $\rho$.
Failure to match is defined by the judgment $\vdash pat \nunif v$ that
states that the pattern $pat$ with unification variables in $\Gamma$
and the value $v$ cannot be matched. Since $v$ is always ground we do
not need unification but matching. We also will not address the
details of pattern matching compilation but merely state that it is
possible to implement it in an efficient manner using decision
trees~\citep{Augustsson:FPCA85}.

\begin{figure}
  \centering
\begin{displaymath}
  \begin{array}{c}
    \multicolumn{1}{l}{\boxed{\hat\Gamma \vdash pat \unif v/\rho} : \mbox{Value $v$ matches $pat$ producing substitution $\rho$ }}\\[.5em]
    \infer[\rl{m-v}]
    {x \vdash x \unif v / (\cdot,v/x)}
    {}

    \vs

    \infer[\rl{m-c}]
    {\hat\Gamma \vdash \app{\constr k}{\many{pat}} \unif \app{\constr k}{\many v} / \rho_0,\dots\rho_{n-1} }
    {n = |\many{pat}|
    & \hat\Gamma = \hat\Gamma_0,\dots, \hat\Gamma_{n-1}
    & \forall i < n . \hat\Gamma_i \vdash \ pat_i \unif v_i /\rho_i}

    \vs

    \multicolumn{1}{l}{\boxed{\vdash pat \nunif v} : \mbox{Value $v$ does not match pattern $pat$}}\\[.5em]

    \infer[\rl{f-c}]
    {\vdash \app{\constr k}{\many{pat}} \nunif \app{\constr m}{\many{pat}}}
    {k \neq m}

    \quad

    \infer[\rl{f-f}]
    {\vdash pat \nunif \fune f x e}
    {}

    \vs

    \infer[\rl{f-r}]
    {\vdash \app{\constr k}{\many{pat}} \nunif \app{\constr k}{\many v}}
    {n = |\many{pat}| & \exists i < n .\ pat_i \nunif v_i }
  \end{array}
\end{displaymath}
  \caption{First-order Matching}
  \label{fig:fomatch}
\end{figure}

\section{A Syntactic Framework}\label{sec:sf}

In this section we describe the Syntactic Framework (SF) based on the
modal logic S4~\citep{Davies:ACM01}. Our framework characterizes only
normal forms. All computation is delegated to the ML layer, that will
perform pattern matching and substitutions on terms.

\subsection{The definition of SF}

The Syntactic Framework (SF) is a simply typed $\lambda$-calculus
based on S4 where the type system forces all variables to be of base
type, and all constants declared in a signature $\Sigma$ to be fully
applied. This simplifies substitution as variables of base type cannot
be applied to other terms, and in consequence, there is no need for
hereditary substitution in the specification language. Finally, the
syntactic framework supports the box type to describe closed terms
\citep{Pfenning01mscs}. It can also be viewed as a restricted version of
the contextual modality in~\citep{Nanevski:ICML05} which could be an
interesting extension to our work.

Having closed objects enforced at the specification level is not
strictly necessary. However, being able to state that some objects are
closed in the specification has two distinct advantages: first, the
user can specify some objects as closed so their contexts are always
empty. This removes the need for some unnecessary substitutions.
Second, it allows us to encode more fine-grained invariants and is
hence an important specification tool (i.e. when implementing closure
conversion in Section~\ref{sec:cloconv}).
\begin{displaymath}
  \begin{array}{lrcll}
    \mbox{Types} & A,B & \bnfas & \const a \bnfalt A \to B \bnfalt \boxd A\\
    \mbox{Terms} & M,N & \bnfas & \app {\const c} {\many M} \bnfalt
                                  \lam x M \bnfalt \termbox{M} \bnfalt x \\
    \mbox{Contexts} & \Psi,\Phi & \bnfas & \cdot \bnfalt \Psi, x\oft \const a\\
    \mbox{Signature} & \Sigma & \bnfas & \cdot \bnfalt \Sigma, \const a \oft K \bnfalt \Sigma,\, \const c \oft A
  \end{array}
\end{displaymath}

Fig.~\ref{fig:sftyp} shows the typing rules for the syntactic
framework. Note that constructors always are fully applied (as per
rule \rl{t-con}), and that all variables are of base type as enforced
by rules \rl{t-var} and \rl{t-lam}.

The specification framework's terms are manipulated by the computational language, in
the resulting system any concrete use of terms in the syntactic
framework will be done by pattern matching and the application of
substitutions in the computational language therefore we will not have
any elimination forms in the syntactic framework.



\begin{figure}
  \centering
\begin{displaymath}
  \begin{array}{c}
    \multicolumn{1}{l}{\boxed{\Psi \vdash M \oft A} : \mbox{$M$ has type $A$ in context $\Psi$}}\\[.5em]

    \infer[\rl{t-lam}]
    {\Psi \vdash \lam x M \oft \const a \to A}
    {\Psi, x\oft \const a \vdash M \oft A}

    \quad

    \infer[\rl{t-box}]
    {\Psi \vdash \termbox{M} \oft \boxd A}
    {\cdot \vdash M \oft A}

    \quad

    \infer[\rl{t-var}]
    {\Psi \vdash x \oft \const a}
    {\Psi(x) = \const a}

\vs

    \infer[\rl{t-con}]
    {\Psi \vdash \app {\const c}{\many M} \oft \const a}
    {\Sigma(\const c) = A & \Psi \vdash \many M\oft A/\const a}

\vs

    \multicolumn{1}{l}{\boxed{\Psi \vdash \many{M} \oft A / B} :
      \mbox{spine $\many{M}$ checks against  type $A$ and has target type $B$}}\\[.5em]

    \quad

    \infer[\rl{t-sp-em}]
    {\Psi\vdash \cdot \oft \const a /\const a}
    {}

\quad

    \infer[\rl{t-sp}]
    {\Psi\vdash \app N {\many M} \oft A \to B/\const a}
    {\Psi \vdash N \oft A & \Psi\vdash\many M\oft B/\const a}

  \end{array}
\end{displaymath}
  \caption{Syntactic Framework Typing}
  \label{fig:sftyp}
\end{figure}



\subsection{Contextual Types}

We use contextual types~\citep{Nanevski:ICML05} to embed possibly open
SF objects in \sml and ensure that they are well-scoped. We use
contextual types where we pair the type $A$ of an SF object together
with its surrounding context $\Psi$ in which it makes sense. This
follows the design of Beluga \citep{Pientka:POPL08,Cave:POPL12}.
\begin{displaymath}
  \begin{array}{lrcll}
    \mbox{Contextual Types} & U & \bnfas & [\Psi \vdash A]\\
    \mbox{Type Erased Contexts} & \hat\Psi & \bnfas & \cdot \bnfalt \hat\Psi, x\\
    \mbox{Contextual Objects} & C & \bnfas & [\hat\Psi \vdash M]\\
  \end{array}
\end{displaymath}

Contextual objects, written as $[\hat{\Psi} \vdash M]$ pair the term
$M$ with the variable name context $\hat\Psi$ to allow for
$\alpha$-renaming of variables occurring in $M$. Note how the
$\hat\Psi$ context just corresponds to the context with the typing
assumptions erased.

When we embed contextual objects in a programming language we want to
refer to variables and expressions from the ambient language, in order
to support incomplete terms. Following
\citep{Nanevski:ICML05,Pientka:POPL08}, we extend our syntactic
framework SF with two ideas: first, we have incomplete terms with
meta-variables to describe holes in terms. As in Beluga, there are two
different kinds: {\textit{quoted variables}} $\mvar u$ represent a
hole in the term that may be filled by an arbitrary term. In contrast,
\emph{parameter variables} represent a hole in a term that
may be filled only with some bound variable from the context.
Concretely, a parameter variable may be $\pparvar x$ and describe any
concrete variable from a context $\Psi$. We may also want to restrict
what bound variables a parameter variable describes. For example, if
we have two sharp signs (i.e. $\ppparvar x$) the top-most variable
declaration is excluded. Intuitively, the number of sharp signs, after
the first, in front of \lstinline!x! correspond to a weakening (or in
de Bruijn lingo the number of shifts). Second, substitution operations
allow us to move terms from one context to another.

We hence extend the syntactic framework SF with quoted variables,
parameter variables and closures, written as $M\sub\sigma\Phi\Psi$.
We annotate the substitution with its domain and range to simplify the
typing rule, however our prototype omits these typing annotations and
lets type inference infer them.

\begin{displaymath}
  \begin{array}{lrcll}
    \mbox{Parameter Variables} & \pvar v & \bnfas & \pparvar x \bnfalt \ppparvar x\\
    \mbox{Terms} & M & \bnfas & \dots \bnfalt \mvar u \bnfalt \pvar v
\bnfalt M \sub\sigma\Phi\Psi\\
    \mbox{Substitutions} & \sigma & \bnfas & \cdot \bnfalt \dotsub
    \sigma M x\\
    \mbox{Ambient Ctx.} & \Gamma & \bnfas & \dots \bnfalt \Gamma, \code u \oft [\Psi \vdash \const a] \\
  \end{array}
\end{displaymath}

In addition, we extend the context $\Gamma$ of the ambient language
\sml to keep track of assumptions that have a contextual type.

Finally, we extend the typing rules of the syntactic framework SF to
include quoted variables, parameter variables, closures, and
substitutions. We keep all the previous typing rules for SF from
Section~\ref{sec:sf}  where we thread through the ambient $\Gamma$,
but the rules remain unchanged otherwise.

\begin{displaymath}
  \begin{array}{c}
    \multicolumn{1}{l}{\boxed{\Gamma;\Psi \vdash_v \pvar v \oft \const a}:
      \mbox{Parameter Variable \pvar v   has type $\const a$ in contexts
        $\Psi$ and $\Gamma$}}
\vs
    \infer[\rl{t-pvar-v}]
    {\Gamma;\Psi \vdash_v \pparvar x \oft \const a}
    {\Gamma(x) = [\Psi \vdash \const a]}

    \quad

    \infer[\rl{t-pvar-\#}]
    {\Gamma;\Psi,y\oft\_ \vdash_v \pparvar v \oft \const a}
    {\Gamma;\Psi \vdash_v \pvar v \oft \const a}

\vs

    \multicolumn{1}{l}{\boxed{\Gamma;\Psi \vdash M \oft A}: \mbox{Term
        $M$ has type $A$ in contexts $\Psi$ and $\Gamma$}}\\[.5em]
    \infer[\rl{t-qvar}]
    {\Gamma;\Psi \vdash \mvar u \oft \const a}
    {\Gamma(\code{u}) = [\Psi \vdash \const a]}

    \quad

    \infer[\rl{t-pvar}]
    {\Gamma;\Psi \vdash \pvar v \oft \const a}
    {\Gamma;\Psi \vdash_v \pvar v \oft \const a}

   \vs

    \infer[\rl{t-sub}]
    {\Gamma;\Psi \vdash M\sub\sigma\Phi\Psi \oft A}
    {\Gamma;\Psi \vdash \sigma \oft \Phi
    & \Gamma; \Phi \vdash M \oft A}

\vs

    \multicolumn{1}{l}{\boxed{\Gamma;\Psi \vdash \sigma \oft \Psi'}:
      \mbox{Substitution $\sigma$ from $\Psi'$ to$\Psi$ in the amb. ctx. $\Gamma$}} \vs

    \infer[\rl{t-empty-sub}]
    {\Gamma;\Psi \vdash \cdot \oft \cdot}
    {}

\quad

    \infer[\rl{t-dot-sub}]
    {\Gamma;\Psi \vdash \dotsub \sigma M x \oft (\Psi', x \oft \const a)}
    {\Gamma;\Psi \vdash \sigma \oft \Psi' & \Gamma;\Psi \vdash M \oft \const a}
  \end{array}
\end{displaymath}

The rules for quoted and parameter variables (\rl{t-qvar} and
\rl{t-pvar} respectively) might seem very restrictive as we can only
use a meta-variable of type $\Psi \vdash \const a$ in the same context
$\Psi$. As a consequence meta-variables often occur as a closure
paired with a substitution (i.e.: $\metavar u \sigma$). This leads to
the following admissible rule:
\begin{displaymath}
\begin{array}{c}
    \infer[\rl{t-qvar-adm}]
    {\Gamma; \Psi \vdash \metavar u \sigma \oft \const a}
    {\Gamma(\code{u}) = [\Phi \vdash \const a] & \Delta;\Psi \vdash \sigma \oft \Phi}
\end{array}
\end{displaymath}


We stress
that $M \sub\sigma\Phi\Psi$ it is an operation that is applied and not part of the syntax of
our terms.

Note that when we compile SF objects, this substitution will be
eagerly applied in order to keep terms in the syntactic framework in
normal form.

Substitution is straightforward to define. We write it here as a
prefix, to stress that it is an operation that is applied and not part
of the term. Substitutions ($\sub {\_}{\_}{\_}\_$) is defined as:
\begin{displaymath}
  \begin{array}{rcl}
    \sub\sigma\Phi\Psi (\lam x M) & = & \lam x {(M \sub{\dotsub \sigma x x} {\Phi,\_}\Psi)} \\
    \sub\sigma\Phi\Psi \, \termbox M & = & \termbox M \\
    \sub\sigma\Phi\Psi \, \mvar u & = & \mvar u\\
    \sub\sigma\Phi\Psi \, \pparvar x & = & \pparvar x\\
    \sub\sigma\Phi\Psi\, x & = &  \lookup\ x\ \sigma \\
    \sub\sigma {\Psi'}\Psi (M \sub\phi \Phi{\Psi'}) & = & \sub{\sigma \circ \phi}\Phi\Psi \, M\\
    \sub\sigma\Phi\Psi (\app {\const c} {\vec M}) & = & \app {\const c} {\vec N} \quad \text{where}\ {N_i = (\sub\sigma\Phi\Psi {M_i})}\\
    \sub\sigma\cdot\cdot \termbox M & = & \termbox M
  \end{array}
\end{displaymath}
We distinguish the actual operation of applying a substitution from
the explicit substitution in a term, by writing the substitution in a
prefix position.

Composing substitutions is defined as:
\begin{displaymath}
  \begin{array}{rrccl}
    \sigma & \circ & \dotsub \phi M x & = & \dotsub {(\sigma \circ \phi)} {(\sub \sigma{\_}{\_} \, M)} x \\
    \sigma & \circ & \cdot & = & \sigma\\
  \end{array}
\end{displaymath}

The lookup of variables in substitutions is defined as:

\begin{displaymath}
  \begin{array}{rrrcl}
    \lookup & x & \dotsub \sigma M x  & = & M \\

    \lookup & x & \dotsub \sigma M y  & = & \lookup\ x\ \sigma \\
  \end{array}
\end{displaymath}

Note that looking up in an empty substitution is not defined as it is
ill-typed.




The next step is to define the embedding of this framework in a
programming language that will provide the computational power to
analyze and manipulate contextual objects.

\filbreak
\section{\sml with Contextual Types} \label{sec:mlct}

To embed contextual SF objects into \sml,  we extend the syntax of
\sml as follows:

\begin{displaymath}
  \begin{array}{lrcl}
    \mbox{Types} & \tau & \bnfas & \dots \bnfalt [\Psi \vdash \const a]\\

    \mbox{Expressions} & e & \bnfas & \dots \bnfalt [\hat\Psi \vdash M]
                          \bnfalt \cmatche e \cbranches\\

    \mbox{Patterns}  & pat & \bnfas & \dots \bnfalt [\hat\Psi \vdash R]\\

    \mbox{Contextual Branches} & c & \bnfas & \dots \bnfalt \cbranche \Psi R e \\
  \end{array}
\end{displaymath} 

In particular, we allow programmers to use the expression:
\begin{displaymath}
\cmatche e \cbranches
\end{displaymath}
to directly pattern match on the syntactic structures they define in
SF.

\subsection{SF Objects as SF Patterns}

We allow programmers to analyze SF objects directly via pattern
matching. The grammar of SF patterns follows the grammar of SF
objects.
\begin{displaymath}
  \begin{array}{lrcl}
    \mbox{SF Parameter Pattern} & \pvar w & \bnfas & \pparvar p \bnfalt \ppparvar {w}\\
    \mbox{SF Patterns} & R & \bnfas & \lam x R \bnfalt \termbox{R} \bnfalt x \bnfalt
                                   \app{\const c}{\many R} \bnfalt
    \pmetavar u \bnfalt \pvar w \\
  \end{array}
\end{displaymath}

However, there is an important restriction: closures are not allowed
in SF patterns. Intuitively this means that all quoted variables are
associated with the identity substitution and hence depend on the
entire context in which they occur. Parameter variables may be
associated with weakening substitutions. This allows us to easily
infer the type of quoted variables and parameter variables as we type
check a pattern. This is described by the judgment:
\[
\boxed{\Psi\vdash\ptj R A \Gamma}: \mbox{Pattern $R$ has type $A$ in $\Psi$ and binds $\Gamma$}
\]

\begin{figure}
  \centering
\begin{displaymath}
  \begin{array}{c}
    \multicolumn{1}{l}{\boxed{\Psi\vdash_v \ptj {\pvar w} {\const a}{\Gamma}}:
      \mbox{Parameter Pattern $\pvar w$ has type $\const a$ in $\Psi$ and binds $\Gamma$}}
\vs

    \infer[\rl{tp-pvar}]
    {\Psi \vdash_v \ptj {\pparvar p} {\const a}{p:[\Psi \vdash \const a]}}{}

    \quad

    \infer[\rl{tp-pvar-\#}]
    {\Psi,y\oft\_ \vdash_v \ptj {\pparvar w}{\const a}{\Gamma}}
    {\Psi \vdash_v \ptj {\pvar w}{\const a}{\Gamma}}

\vs

    \multicolumn{1}{l}{\boxed{\Psi\vdash\ptj R A \Gamma}: \mbox{Pattern $R$ has type $A$ in $\Psi$ and binds $\Gamma$}}\vs

    \infer[\rl{tp-lam}]
    {\Psi\vdash\ptj{\lam x R} {\const a \to A} \Gamma}
    {\Psi,x\oft\const a\vdash\ptj R A \Gamma}

    \quad

    \infer[\rl{tp-box}]
    {\Psi\vdash\ptj{\termbox R}{\boxd A} \Gamma}
    {\cdot\vdash\ptj R A \Gamma}

    \vs

    \infer[\rl{tp-mvar}]
    {\Psi\vdash\ptj{\pmetavar u} {\const a} {u\oft[\Psi\vdash \const a]}}
    {}

\vs

    \infer[\rl{tp-constr}]
    {\Psi\vdash\ptj{\app{\const c}{\many R}} {\const a} {\Gamma}}
    {\Sigma(\const c) = A & \Psi\vdash\ptj {\many R} {A/\const a} \Gamma}

    \vs

    \infer[\rl{tp-pvar}]
    {\Psi \vdash\ptj{\pvar w} {\const a} {\Gamma}}
    {\Psi \vdash_v \ptj{\pvar w} {\const a} {\Gamma}}
\quad

    \infer[\rl{tp-var}]
    {\Psi\vdash\ptj{x} {\const a} {\cdot}}
    {\Psi(x) = \const a}

    \vs

    \multicolumn{1}{l}{\boxed{\Psi\vdash\ptj {\many M}{A / B}{\Gamma}}:
    \mbox{Pat. Spine $\many{R}$ has type $A$ and target $B$ and binds $\Gamma$}}\vs

   \infer[\rl{tp-sp-em}]
   {\Psi\vdash \ptj \cdot {\const a/\const a} \cdot}
   {}

   \quad

   \infer[\rl{tp-sp}]
    {\Psi\vdash\ptj {\app N {\many M}} {A \to B/\const a} {\Gamma,\Gamma'}}
    {\Psi\vdash\ptj N A \Gamma & \Psi\vdash\ptj {\many M} {B/\const a} {\Gamma'}}
  \end{array}
\end{displaymath}
  \caption{Typing Rules for SF Patterns}
  \label{fig:mlctp}
\end{figure}

Figure~\ref{fig:mlctp} shows the typing rules for SF patterns. They
closely follow the typing of SF terms. The more interesting ones are
the parameter patterns as they illustrate the built-in weakening.





Further, the matching algorithm for SF patterns degenerates to simple
first-order matching \citep{PientkaPfenning:CADE03} and can be defined
straightforwardly. 
However, it is worth considering the matching rules for parameter
patterns. As matching will only consider well-typed terms, we know
that in the rules $\rl{m-pv}$ and $\rl{m-pv-\#}$ the variable $x$ is
well-typed in the context $\hat\Psi$.

\begin{displaymath}
  \begin{array}{c}
    \multicolumn{1}{l}{\boxed{\Gamma;\hat\Psi \vdash_v w \unif x/\rho} :
      \mbox{Param. Patt. $w$ matches var. $x$ from $\hat\Psi$
        producing $\rho$.}}
\vs

    \infer[\rl{m-pv}]
    {\code p\oft [\Psi \vdash A];\hat\Psi \vdash_v \pparvar p \unif x/\cdot,[\hat\Psi \vdash x]/\code p}
    {}

\vs

    \infer[\rl{m-pv-\#}]
    {\Gamma;{\hat\Psi,y} \vdash_v \pparvar w \unif x/ \rho}
    {x \neq y & \Gamma; \hat\Psi \vdash_v w \unif x / \rho}

\vs
    \multicolumn{1}{l}{\boxed{\Gamma;\hat\Psi \vdash R \unif M/\rho} :
      \mbox{$M$ matches pattern $R$ with vars. in $\hat\Psi$
        producing $\rho$.}}
\vs
    \infer[\rl{m-$\lambda$}]
    {\Gamma;\hat\Psi \vdash \lam x R \unif \lam x M / \rho}
    {\Gamma;\hat\Psi,x \vdash R \unif M / \rho}

    \quad

    \infer[\rl{m-bv}]
    {\cdot;\hat\Psi \vdash x \unif x / \cdot}
    {}
 \vs
    \infer[\rl{m-box}]
    {\Gamma;\hat\Psi \vdash \termbox{R} \unif \termbox{M} / \rho}
    {\Gamma;\cdot \vdash R \unif M / \rho}

    \vs

    \infer[\rl{m-cc}]
    {\Gamma;\hat\Psi \vdash \app {\const c} {\many R} \unif \app {\const c} {\many M}/\rho_0,\dots,\rho_n}
    {\text{for all } R_i \in \many R \text{ such as } \Gamma;\hat\Psi \vdash R_i \unif M_i/\rho_i}

    \vs

    \infer[\rl{m-cv}]
    {\code u\oft [\Psi \vdash A];\hat\Psi \vdash \pmetavar u \unif M/\cdot,[\hat\Psi \vdash M]/\code u}
    {}

\quad

    \infer[\rl{m-pv}]
    {\Gamma ; \hat\Psi \vdash \pvar w \unif x/\rho}
    {\Gamma ; \hat\Psi \vdash_v \pvar w \unif x/\rho}

  \end{array}
\end{displaymath}

Finally, it has another important consequence: closures only appear in
the branches of case-expressions. As \sml has a call-by-value
semantics, we know the instantiations of quoted variables and
parameter variables when they appear in the body of a case-expression
and all closures are eliminated by applying the substitution eagerly.
Given these conditions the matching operation remains first-order. An
alternative way of explaining this is that in fact we have a
degenerate case of the pattern fragment~\citep{Miller91iclp} of
higher-order unification where all unification variables are applied
to different variables and all the variables in the context. In this
situation we only need first order
matching~\citep{PientkaPfenning:CADE03}, which enables efficient
pattern matching compilation.



\subsection{Typing Rules for \sml with Contextual Types}
We now add the following typing rules for contextual objects and
pattern matching to the typing rules of \sml:
\begin{displaymath}
  \begin{array}{c}
    \infer[\rl{t-ctx-obj}]
    {\Gamma \vdash [\hat\Psi \vdash M] \checks [\Psi \vdash \const a]}
    {\Gamma;\Psi \vdash M \oft \const a}

    \\[.7em]

    \infer[\rl{t-cm}]
    {\Gamma \vdash \cmatche i \branches \checks \tau}
    {\Gamma \vdash i \synths [\Psi \vdash \const a]
    & \forall b\in\many b\mathrel{.}\Gamma \vdash b \checks [\Psi \vdash \const a] \to \tau}

    \\[.7em]

    \infer[\rl{t-cbranch}]
    {\Gamma \vdash [\Psi \vdash R] \mapsto e \checks [\Psi \vdash \const a] \to \tau }
    {\Psi\vdash R \oft \const a \downarrow \Gamma'
    & \Gamma,\Gamma' \vdash e \checks \tau}
  \end{array}
\end{displaymath}

The typing rule for contextual objects (rule \rl{t-ctx-obj}) simply
invokes the typing judgment for contextual objects. Notice, that we
need the context $\Gamma$ when checking contextual objects, as they
may contain quoted variables from $\Gamma$.

Extending the operational semantics to handle contextual SF objects is
also straightforward.

Additionally, we need rules to evaluate the matching of contextual
types that largely follow the rules for pattern matching
\rl{e-match-succ} and \rl{e-match-fail} but use the matching operation
defined for contextual objects.

Finally, the operational semantics needs to be extended with rules to
support the new constructs. The rules, in Figure~\ref{fig:exsos} for
contextual terms simply take the terms apart and re-build them,
except for quoted variables that are looked up in the environment(rule
\rl{ec-qvar}) and for terms with substitutions where the substitution
operation is immediately applied(rule \rl{ec-sub}). 
The rules for the new pattern matching construct are analogous to the
ones from \sml.

\begin{figure}
  \centering
\begin{displaymath}
  \begin{array}{c}
    \multicolumn{1}{l}{\boxed{e\evalsto\rho v}: \mbox{Expression $e$ evaluates to value $v$ in environment $\rho$}}\\[.5em]
    \infer[\rl{ec-lam}]
    {[\hat\Psi \vdash \lam x M]\evalsto\rho [\hat\Psi \vdash \lam x N]}
    {[\hat\Psi,x \vdash M] \evalsto\rho [\hat\Psi,x \vdash N]}

    \vs

    \infer[\rl{ec-box}]
    {[\hat\Psi \vdash \termbox M] \evalsto\rho [\hat\Psi \vdash \termbox N]}
    {[\cdot \vdash \termbox M] \evalsto\rho [\cdot \vdash \termbox N]}

    \vs

    \infer[\rl{ec-var}]
    {[\hat\Psi \vdash x] \evalsto\rho [\hat\Psi \vdash x]}
    {}

    \vs

    \infer[\rl{ec-constr}]
    {[\hat\Psi \vdash \app {\const c} {\many M}] \evalsto\rho [\hat\Psi \vdash \app {\const c} {\many N}]}
    {\forall M\in\many M.\ M\evalsto\rho N}

    \vs

    \infer[\rl{ec-qvar}]
    {\mvar u \evalsto\rho M}
    {\rho(\code u) = M}

    \quad

    \infer[\rl{ec-pvar}]
    {\pparvar x \evalsto\rho y}
    {\rho(\code x) = y}

    \quad

    \infer[\rl{ec-ppvar}]
    {\ppparvar x \evalsto{\dotsub \rho M z} y}
    {\rho(\code x) = y}

    \vs

    \infer[\rl{ec-sub}]
    {M\sub\sigma\Phi\Psi\evalsto \rho N}
    {(\sub\sigma\Phi\Psi M) \evalsto\rho N}

    \vs

    \infer[\rl{ec-mf}]
    {(\cmatche e {{\cbranche \Psi R {e_b}}\,\cbranches})\evalsto\rho v}
    {e \evalsto\rho [\hat\Psi \vdash M]
    & \Psi\vdash R \nunif M
    & (\cmatche e \cbranches) \evalsto\rho v}

    \vs

    \infer[\rl{ec-ms}]
    {(\cmatche e {{\cbranche \Psi R {e_b}}\,\cbranches})\evalsto\rho v}
    {e \evalsto\rho [\hat\Psi \vdash M]
    & \Gamma';\Psi\vdash R \unif M /\rho'
    & e_b \evalsto{\rho,\rho'} v}

  \end{array}
\end{displaymath}
  \caption{Extended Operational Semantics}
  \label{fig:exsos}
\end{figure}

\section{\sml with GADTs}

So far, in this chapter, we reviewed how to support contextual types
and contextual objects in a standard functional programming language.
This allows us to define syntactic structures with binders and
manipulate them with the guarantee that variables will not escape
their scopes. This brings some of the benefits of the Beluga system to
mainstream languages focusing on writing programs instead of proofs. A
naive implementation of this language extension requires augmenting
the type checker and operational semantics of the host language. This
is a rather significant task -- especially if it includes implementing
a compiler for the extended language.  However, we can take advantage
of the powerful type-system in modern functional languages to make the
implementation more straight forward. Concretely, we use Generalized
Abstract Data Types (GADTs) as a target to our translation. GADTs are
a generalization of algebraic data-types that allow types to be
indexed by other types. This mild form type dependency is enough to
implement the SF. In the literature, GADTs have been introduced
several times under different names: phantom
types~\citep{Cheney:2003}, guarded recursive
types~\citep{Xi03:guarded} and equality types~\citep{Sheard:2008}.
In this section, we describe how
to embed \sml with contextual types in a functional language with
GADTs, called \smlg, based on $\lambda_{2,G\mu}$
by~\citet{Xi03:guarded}. The choice of this target language is
motivated by the fact that it is close to what realistic typed
languages already offer (e.g.: OCaml and Haskell) and it directly
lends itself to an implementation.

\begin{displaymath}
  \begin{array}{lrcl}
    \mbox{Signatures} & \Sigma & \bnfas & \cdot \bnfalt \Sigma, D\oft (*,\dots,*)\to * \bnfalt
    {\constr k}\oft\alle {\many{\alpha}} \tau \to \tapp D{\many\tau}\\
    \mbox{Types} & \tau & \bnfas & \tapp D {\many\tau} \bnfalt \alle \alpha\tau \bnfalt \tau_1\to\tau_2\bnfalt
    \alpha\bnfalt\tau_1\times\tau_2 \\
    \mbox{Expressions} & e & \bnfas & x \bnfalt \aconstr k{\many \tau} e \bnfalt \fixe f \tau e \bnfalt
                      \app {e_1}{e_2} \bnfalt \pair {e_1}{e_2} \bnfalt \lame x e\\
                      & & & \bnfalt \lete x{e_1}{e_2} \bnfalt
                      \matche e {\many b}\\
    & & & \bnfalt\Lame\alpha e \bnfalt
                      \tapp e\tau \bnfalt \pair {e_1}{e_2}\\
    \mbox{Branch} & b & \bnfas & pat \mapsto e \\
    \mbox{Pattern} & pat & \bnfas & x \bnfalt \aconstr k{\many\alpha}{pat}\bnfalt\pair{pat_1}{pat_2}\\
    \mbox{Exp. Ctx.} & \Gamma & \bnfas & \cdot \bnfalt \Gamma, x\oft \tau\\
    \mbox{Type Ctx.} & \Delta & \bnfas & \cdot \bnfalt \Delta,\alpha \bnfalt \Delta, \tau_1\equiv\tau_2
  \end{array}
\end{displaymath}

\smlg contains polymorphism and GADTs, which makes it a good ersatz
OCaml that is still small and easy to reason about. GADTs are
particularly convenient, since they allow us to track invariants about
our objects in a similar fashion to dependent types. Compared to \sml,
\smlg's signatures now store type constants and constructors that are
parametrized by other types. We show the typing judgments for the
language in Fig.~\ref{fig:smlgadt}. The term language is similar to
\sml, with the addition of the usual terms for supporting abstraction
over types (i.e. $\Lame \alpha e$) and type applications (i.e.
$\tapp e \tau$).

The language is type-checked with the two typing judgments, one for
terms and one for patterns:
\begin{itemize}
\item $\boxed{\Delta;\Gamma\vdash e\oft\tau}$ : expression $e$ is of type $\tau$
  in typing context $\Delta$ and expression context $\Gamma$.
\item $\boxed{\Delta_o\vdash\ptj{pat}\tau\Delta;\Gamma}$ : $pat$ is of
  type $\tau$ and binds type variables in $\Delta$ and variables in
  $\Gamma$.
\end{itemize}

\begin{sidewaysfigure}
  \centering
\begin{displaymath}
  \begin{array}{c}
    \multicolumn{1}{l}
    {\boxed{\Delta;\Gamma\vdash e\oft\tau}: \mbox{$e$ is of type $\tau$ in typing context $\Delta$ and expression context $\Gamma$.}} \vs

    \infer[\rl{g-con}]
    {\Delta;\Gamma\vdash \aconstr k{\many \tau}e \oft \tapp{D}{\many\tau}}
    {\Sigma(\constr k) = \alle{\many{\alpha}} \tau_1 \to \tapp D{\many\alpha} &
                             \Delta;\Gamma\vdash e \oft \tapp{\tau_1}{\many\tau} & \Delta\vdash \many{\tau} \wfj}

    \quad

    \infer[\rl{g-app}]
    {\Delta;\Gamma\vdash\app {e_1}{e_2}\oft\tau_2}
    {\Delta;\Gamma\vdash e_1\oft \tau_1\to\tau_2
     & \Delta;\Gamma\vdash e_2\oft\tau_1}

    \vs

    \infer[\rl{g-pair}]
    {\Delta;\Gamma\vdash\pair {e_1}{e_2}\oft\tau_1\times\tau_2}
    {\Delta;\Gamma\vdash e_1\oft\tau_1
      & \Delta;\Gamma\vdash e_1\oft\tau_2}

    \quad

    \infer[\rl{g-var}]
    {\Delta;\Gamma\vdash x\oft\tau}
    {\Gamma(x) = \tau}

    \quad

    \infer[\rl{g-fix}]
    {\Delta;\Gamma\vdash \fixe f \tau e \oft \tau}
    {\Delta;\Gamma,f\oft\tau\vdash e\oft\tau}

    \vs

    \infer[\rl{g-lam}]
    {\Delta;\Gamma\vdash\lame x e\oft\tau_1\to\tau_2}
    {\Delta;\Gamma,x\oft\tau_1\vdash e\oft\tau_2}

    \quad

    \infer[\rl{g-tapp}]
    {\Delta;\Gamma\vdash\tapp e{\many\tau_1}\oft\tapp{\tau}{\many\tau_1}}
    {\Delta;\Gamma\vdash e\oft \alle {\many\alpha}{\tau} & \Delta;\Gamma\vdash\many\tau_1\wfj}

    \quad

    \infer[\rl{g-Lam}]
    {\Delta;\Gamma\vdash\Lame \alpha e\oft\alle\alpha\tau}
    {\Delta,\alpha;\Gamma\vdash e\oft\tau}

    \vs

    \infer[\rl{g-let}]
    {\Delta;\Gamma\vdash\lete x{e_1}{e_2}\oft\tau}
    {\Delta;\Gamma\vdash e_1\oft\tau_1 & \Delta;\Gamma,x\oft\tau_1\vdash e_2\oft \tau}

    \quad

    \infer[\rl{g-match}]
    {\Delta;\Gamma\vdash\matche e{\many b}\oft\tau}
    {\Delta;\Gamma\vdash e\oft\tau_1 & \text{for all }i.\Delta;\Gamma\vdash b_i\oft\tau_1\to\tau}

    \\[.7em]

    \infer[\rl{g-branch}]
    {\Delta;\Gamma\vdash pat\mapsto e\oft \tau_1\to\tau_2}
    {\Delta\vdash\ptj{pat}\tau_1\Delta';\Gamma'
    & \Delta,\Delta';\Gamma,\Gamma'\vdash e\oft\tau_2 }

    \\[.7em]

    \multicolumn{1}{l}
    {\boxed{\Delta_o\vdash\ptj{pat}\tau\Delta;\Gamma}: \mbox{$pat$ is of type $\tau$ and binds variables in $\Delta$ and $\Gamma$}}\\[.5em]

    \infer[\rl{gp-var}]
    {\Delta_0\vdash\ptj x \tau \cdot;x\oft\tau}
    {}

   \\[.7em]

    \infer[\rl{gp-pair}]
    {\Delta_0\vdash\ptj{\pair{pat_1}{pat_2}}{\tau_1\times\tau_2}{\Delta_1,\Delta_2;\Gamma_1,\Gamma_2}}
    {\Delta_0\vdash\ptj{pat_1}{\tau_1}\Delta_1;\Gamma_1
    & \Delta_0\vdash\ptj{pat_2}{\tau_2}\Delta_2;\Gamma_2}

    \\[.7em]

    \infer[\rl{gp-con}]
    {\Delta_o\vdash\ptj{\aconstr k{\many\alpha} pat} {\tapp D {\many{\tau_2}}} \many{\alpha},\many{\tau_1}\equiv\many{\tau_2},\Delta;\Gamma}
    {\Sigma(\constr k) = \alle{\many{\alpha}}\tau\to\tapp D{\many{\tau_1}}
   & \Delta_0,\many{\alpha},\many{\tau_1}\equiv\many{\tau_2}\vdash\ptj{pat}\tau\Delta;\Gamma}
  \end{array}
\end{displaymath}
\caption{The Typing of \smlg}
  \label{fig:smlgadt}
\end{sidewaysfigure}

The expressive power of pattern matching is greatly enhanced by the
presence of a limited form of dependent types (i.e. types that depend
on other types).

Of particular interest are the type equalities introduced in the type
context in the \rl{gp-con} rule. Let's consider an example to
understand this (a complete discussion is fully developed
by~\citet{Xi03:guarded}). As an example, we define the type of vectors
as lists indexed by their length (i.e. the classical dependent types
example).
\begin{displaymath}
\begin{array}{rcl}
  \Sigma & = & \n{z} \oft *, \n{s}\oft * \to *,\n{vec}\oft(*, *)\to *,\\
         & & \n{Nil}\oft \alle\beta\tapp{\n{vec}}{\n{z},\beta},\\
         & & \n{Cons}\oft\alle{\alpha,\beta}(\beta\times\tapp{\n{vec}}{\alpha,\beta})\to\tapp{\n{vec}}{\tapp {\n{s}}\alpha, \beta}\\
\end{array}
\end{displaymath}
In the signature we need to define the type level encoding of natural
numbers: types \n{z} and \n{s}, and the type of vectors \n{vec} that
are indexed by their length and the type of their elements. Finally,
we define the constructor \n{Nil} of empty vectors of length zero,
thus the first index is the type \n{z}. And finally the constructor
\n{Cons} that puts together an element of type $\beta$ and a vector
of length $\alpha$ to form a vector of length $\alpha + 1$, that is
$\tp{s}\alpha$ in \smlg.

Vectors are interesting because they make many functions type safe by
relying on the more expressive type system. As an example, in
Figure~\ref{fig:zipsmlg} shows the function \n{zip} that joins two
vectors of the same length and produces a vector of the same lengths
containing pairs of values taken from the original vectors:

\begin{figure}
  \centering
\begin{displaymath}
\begin{array}{l}
  \code{fix}\,\n{zip}\oft \alle{\alpha,\beta_1,\beta_2}\\
  {\quad (\tapp{\n{vec}}{\alpha,\beta_1} \times \tapp{\n{vec}}{\alpha,\beta_2})
  \to\tapp{\n{vec}}{\alpha,\pair{\beta_1}{\beta_2}}} = \\[.2em]
  \Lame{\alpha,\beta_1,\beta_2} {\lame v {\matche v {}}}\\[.2em]
  |\ (\tapp{\n{Nil}}{\beta_1},\tapp{\n{Nil}}{\beta_2})\mapsto \tapp{\n{Nil}}{\beta_1\times\beta_2}\\[.2em]
  |\ (\tapp{\n{Cons}}{\alpha_1,\beta_1}\pair{x}{xs}, \tapp{\n{Cons}}{\alpha_2,\beta_2}\pair{y}{ys}\mapsto\\[.2em]
  \quad \tapp{\n{Cons}}{\alpha_1,\beta_1\times\beta_2}\pair{\pair x y} {\app{\tapp {\n{zip}}{\alpha_1,\beta_1,\beta_2}{\pair{xs}{ys}}}}
\end{array}
\end{displaymath}
  \caption{Zip in \smlg}
  \label{fig:zipsmlg}
\end{figure}

In this case, type refinement happens for example in the second
branch, where in the body we have the choice of referring to the
length as $\alpha_1$ or $\alpha_2$, knowing they have to be the same.
In fact, when typing the pattern in rule \rl{gp-con} context $\Delta$
is extended with $\alpha\equiv\alpha_1$ and $\alpha\equiv\alpha_2$
from which is easy to conclude that $\alpha_1\equiv\alpha_2$.
Additionally, the type constraint that the vectors are of the same
length makes it obvious that the diagonal cases (those when one vector
has more elements than the other) are impossible.

Since \smlg has strong type separation the operational semantics, in
Figure~\ref{fig:smlgsos}, is similar to the semantics for \sml, after
all, type information is irrelevant at run-time. The interested reader
can find the meta-theory in~\citep{Xi03:guarded}.
We define values and environments as:
\begin{displaymath}
  \begin{array}{lrcll}
    \mbox{Values} & v & \bnfas & \app {\aconstr k{\many\tau} v} \bnfalt \pair{v_1}{v_2}\bnfalt (\lame x e)\bclo{\theta;\rho} \bnfalt (\Lame \alpha e)\bclo{\theta;\rho} \\
    \mbox{Environments} & \theta;\rho & \bnfas & \cdot ;\cdot \bnfalt \theta;(\rho, v/x)\bnfalt (\theta,\tau/\alpha);\rho\\
  \end{array}
\end{displaymath}
And with them the semantics are implemented with two judgments, one to
evaluate expressions and the other to compute matching, the matching
operation simply augments the environment inside of the branches in
case expressions:
\begin{itemize}
\item $\boxed{e\evalsto{\theta;\rho} v}$ : Expression $e$ evaluates to value $v$ in environment $\rho$.
\item $\boxed{pat \unif v\backslash\theta;\rho}$ : Value $v$ matches pattern $pat$ and produces env. $\theta;\rho$.
\end{itemize}

\begin{figure*}
  \centering
\begin{displaymath}
  \begin{array}{c}
    \multicolumn{1}{l}{\boxed{e\evalsto{\theta;\rho} v}: \mbox{Expression $e$ evaluates to value $v$ in envirnoment $\rho$}}\\[.5em]
    \infer[\rl{ge-app}]
    {(\app {e_1}{e_2})\evalsto{\theta;\rho} v}
    {e_1\evalsto{\theta;\rho} (\lame x e)\bclo{\theta';\rho'}
    & e_2\evalsto{\theta;\rho} v'
    & e \evalsto{\theta';\rho',v'/x} v}

    \vs

    \infer[\rl{ge-tapp}]
    {\tapp e \tau\evalsto{\theta;\rho} v}
    {e\evalsto{\theta;\rho}\Lame\alpha{e'}
    & e'\evalsto{\theta,\tau/\alpha;\rho} v}

    \vs

    \infer[\rl{ge-var}]
    {x \evalsto{\theta;\rho} v}
    {\rho(x) = v}

    \quad

    \infer[\rl{ge-let}]
    {(\lete x {e_1} {e_2}) \evalsto{\theta;\rho} v}
    {e_1 \evalsto{\theta;\rho} v_1
    & e_2 \evalsto{\theta;\rho,v_1/x} v}

    \vs

    \infer[\rl{ge-match}]
    {(\matche e {{\branche 1 {e_b}}\cons\branches})\evalsto{\theta;\rho} v}
    {e \evalsto{\theta;\rho} v_1
    & {pat_1} \unif v \backslash\theta';\rho'
    & e_b \evalsto{\theta,\theta';\rho,\rho'} v}

    \vs

    \infer[\rl{ge-constr}]
    {\aconstr k{\many\tau} e \evalsto{\theta;\rho} \aconstr k {\many\tau} v }
    {e \evalsto{\theta;\rho} v}

    \quad

    \infer[\rl{ge-lam}]
    {(\lame x e) \evalsto{\theta;\rho} (\lame x e)\bclo{\theta;\rho}}
    {}

    \vs

    \infer[\rl{ge-Lam}]
    {(\Lame \alpha e) \evalsto{\theta;\rho} (\Lame \alpha e)\bclo{\theta;\rho}}
    {}

    \vs

    \infer[\rl{ge-pair}]
    {\pair{e_1}{e_2}\evalsto{\theta;\rho} \pair{v_1}{v_2}}
    {e_1 \evalsto{\theta;\rho} v_1
    & e_2 \evalsto{\theta;\rho} v_2}

    \quad

    \infer[\rl{ge-fix}]
      {\fixe f t e \evalsto{\theta;\rho} v}
      {e \evalsto{\theta;(\rho,\fixe f t e/f)} v}

    \\[.7em]

    \multicolumn{1}{l}
    {\boxed{pat \unif v\backslash\theta;\rho}: \mbox{Value $v$ matches pattern $pat$ and produces env. $\theta;\rho$.}}\\[.5em]

    \infer[\rl{gm-v}]
    {x\unif v \backslash\cdot;(v/x)}
    {}

    \quad

    \infer[\rl{gm-p}]
    {(\pair{pat_1}{pat_2}\unif\pair{v_1}{v_2} \backslash \theta_1,\theta_2;\rho_1,\rho_2}
    {pat_1\unif v_1 \backslash \theta_1;\rho_1
    & pat_2\unif v_2 \backslash \theta_2;\rho_2}

    \vs

    \infer[\rl{gm-c}]
    {\aconstr k {\many\alpha} {pat} \unif \aconstr k {\many\tau} v \backslash (\many\tau/\many\alpha,\theta);\rho}
    {v\unif pat / \theta;\rho}

  \end{array}
\end{displaymath}
  \caption{\smlg Big-Step Operational Semantics}
  \label{fig:smlgsos}
\end{figure*}


\section[Deep Embedding of SF]{Deep Embedding of SF into \smlg} \label{sec:ic}

We now show how to translate objects and types defined in the
syntactic framework SF into \smlg using a deep embedding. We take
advantage of the advanced features of \smlg's type system to fully
type-check the result. Our representation of SF objects and types is
inspired by~\citep{Benton:12} but uses GADTs instead of full dependent
types. We add the idea of typed context shifts instead of renamings to
represent weakening. This is necessary to be able to completely erase
types at run-time.

To ensure SF terms are well-scoped and well-typed, we define SF types
in \smlg and index their representations by their type and
context. The following types are only used as indices for
GADTs. Because of that, they do not have any term constructors.
\begin{displaymath}
  \begin{array}{rcl}
    \Sigma & = & \n{base}\oft * \to *,  \n{arr}\oft(*, *)\to *, \n{boxed}\oft * \to *,\\
    & & \n{prod}\oft (*,*)\to *, \n{unit}\oft *\\
  \end{array}
\end{displaymath}

We define three type families, one for each of SF's type
constructors. It is important to note the number of type parameters
they require. Base types take one parameter: a type from the
signature. Function types simply have an input and output
type. Finally, boxes contain just one type.

Terms are also indexed by the contexts in which they are valid. To
this effect, we define two types to statically represent contexts.
Analogously to the representation of types, these two types are only
used during type-checking and there will be no instances at run-time.
The type \n{nil} represents an empty context and thus has no
parameters. And the constructor \n{cons} has two parameters the first
one is the rest of the context and the second one is the type of the
top-most variable so far.
\begin{displaymath}
  \begin{array}{rcl}
    \Sigma & = & \dots,\n{nil}\oft *, \n{cons}\oft (*, *) \to * \\
  \end{array}
\end{displaymath}

We show the the encoding well-typed SF objects and
types in Fig.~\ref{fig:sfc}. Every declaration is parametrized with the type of constructors
that the user defined inside of the \lstinline!@@@signature!
blocks.

\begin{figure}
  \footnotesize
  \centering
  \begin{displaymath}
    \begin{array}{rcl}
      \Sigma & = & \dots, \n{var}\oft(*,*)\to *,\\[.2em]
             & &\n{Top}\oft\alle{\gamma,\alpha} \tp{var}{\tp{cons}{\gamma, \alpha},\alpha},\\[.2em]
             & & \n{Pop}\oft\alle{\gamma,\alpha,\beta} \tp{var}{\gamma,\alpha} \to
                 \tp{var}{\tp{cons}{\gamma, \beta}, \alpha},\\[.2em]
             & & \n{sftm}\oft (*, *)\to *, \n{sp}\oft (*, *, *)\to *\\[.2em]
             & & \n{Lam} \oft\alle{\gamma,\alpha,\tau} \tp{sftm}{\tp{cons}{\gamma,\tp{base}\alpha},\tau} \to
                 \tp{sftm}{\gamma,\tp{arr}{\tp{base}\alpha,\tau}} ,\\[.2em]
             & & \n{Var} \oft\alle{\gamma,\alpha} \tp{var}{\gamma,\alpha}\to
                 \tp{sftm}{\gamma,\tp{base}\alpha} ,\\[.2em]
             & & \n{Box} \oft\alle{\gamma,\tau} \tp{sftm}{\cdot,\tau}\to\tp{sftm}{\gamma,\tau} ,\\[.2em]
             & & \n{C} \oft\alle{\gamma,\tau,\alpha} \tp{con}{\tau,\alpha}\times \tp{sp}{\gamma,\tau} \to
                 \tp{sftm}{\gamma, \tp{base}\alpha} ,\\[.2em]
             & & \n{Empty}\oft \alle{\gamma,\tau} \tp{sp}{\gamma,\tau,\tau},\\[.2em]
             & & \n{Cons}\oft \alle{\gamma,\tau_1,\tau_2,\tau_3} \tp{sftm}{\gamma, \tau_1} \times
                 \tp{sp}{\gamma, \tau_2, \tau_3} \to \tp{sp}{\gamma,\tp{arr}{\tau_1,\tau2},\tau_3},\\[.2em]
             & & \n{shift}\oft(*,*)\to *,\\[.2em]
             & & \n{Id} \oft\alle{\gamma} \tp{shift}{\gamma,\gamma} ,\\[.2em]
             & & \n{Suc} \oft\alle{\gamma,\delta,\alpha} \tp{shift}{\gamma,\delta} \to \tp{shift}{\tp{cons}{\gamma,\tp{base}\alpha},\delta} ,\\[.2em]
             & & \n{sub}\oft(*,*)\to *,\\[.2em]
             & & \n{Shift} \oft\alle{\gamma,\delta} \tp{shift}{\gamma,\delta}\to\tp{sub}{\gamma,\delta} ,\\[.2em]
             & & \n{Dot} \oft\alle{\gamma,\delta,\tau} \tp{sub}{\gamma,\delta} \times \tp{sftm}{\gamma,\tau} \to \tp{sub}{\gamma,\tp{cons}{\delta,\tau}}\\[.2em]
    \end{array}
  \end{displaymath}
  \caption{Syntactic Framework Definition}
  \label{fig:sfc}
\end{figure}

The specification takes the form of the type
$\n{con} \oft (*, *) \to *$, where $\n{con}$ is the name of a
constructor from the signature indexed by the type of its parameters
and the base type they produce. So, all the SF definitions the user
makes add constructors, in our closure conversion example some
constructors could be for instance: \lstinline!capp!,
\lstinline!clam!, \lstinline!clo!.

Variables and terms are indexed by two types, the first parameter is
always their context and the second is their type. The type \n{var}
represents variables with two constructors: \n{Top} represents the
variable that was introduced last in the context and if \n{Top}
corresponds to the \debruijn index 0 then the constructor \n{Pop}
represents the successor of the variable that it takes as parameter.
It is interesting to consider the parameters of these constructors.
\n{Top} is simply indexed by its context and type (variables $\gamma$
and $\alpha$ respectively). On the other hand, \n{Pop} requires three
type parameter: the first $\gamma$ represents a context, $\alpha$ the
resulting type of the variable, and $\beta$ the type of the extension
of the context. These parameter make it so that if we apply the
constructor \n{Pop} to a variable of type $\alpha$ in context $\gamma$
we obtain a variable of type $\alpha$ in the context $\gamma$ extended
with type $\beta$. \lstinline!Top! and \lstinline!Pop! contexts cannot
be empty because they refer to existing variables. The constructor
\lstinline!Top! is indexed by a context with at least one variable,
and its type is the same as the top variable. On the other hand, the
constructor \lstinline!Pop! is also indexed by a non-empty context but
its type corresponds to the type of the variable that it takes as a
parameter.

As mentioned, terms described by the type family \n{sftm} are indexed by
their context and their type. It is interesting to check in some
detail how the indices of the term constructors follow the typing
rules from Fig.~\ref{fig:sftyp}. The constructor for lambda terms
(\n{Lam}), extends the context $\gamma$ with base type $\alpha$ and
then it produces a term in $\gamma$ of function type from the base
type $\alpha$ to the type of the body $\tau$. The constructor for
boxes simply forces its body to be closed by using the context type
\n{nil}. The constructor \n{Var} simply embeds variables as terms.
Finally the \n{C} constructor has two parameters, one is the name of
the constructor from the user's definitions that constrains the type
of the second parameter, the other is the term of the appropriated
type.

The definition of substitution is a modified presentation of the
substitution for well-scoped \debruijn indices, as for example
presented in~\citep{Benton:12}. We define two types, \n{sub} and
\n{shift} indexed by two contexts, the domain and the range of the
substitutions. Substitutions are either a shift (constructor
\n{Shift}) or the combination of a term for the top-most variable and
the rest of the substitution (constructor \n{Dot}).

Our implementation differs from~\citet{Benton:12} in the
representation of renamings. Benton et.al define substitutions and
renamings, the latter as a way of representing shifts. However to
compute a shift, they need the context that they use to index the
data-types. Hence, contexts are not erasable during run-time. As we do
want contexts to be erasable at run-time, we cannot use
renamings. Instead, we replace renamings with typed shifts (defined in
type \n{shift}), that encode how many variables we are shifting
over. This is encoded in the indices of shifts.

Variables that use \debruijn indices and shifts ultimately correspond
to natural numbers, as they encode how many binders we need to
traverse, or how many variables a shift adds. However, we use
specialized data types because we explicitly carry the typing
information in the indices. Finally, we omit the function implementing
the substituti0on as it is standard. We will simply mention that we
implement a function \code{apply\_sub} of type:

\begin{displaymath}
\alle{\gamma,\delta,\tau}{\tp{sftm}{\gamma,\tau} \to
  \tp{sub}{\gamma,\delta} \to \tp{sftm}{\delta,\tau}}
\end{displaymath}

That applies a substitution moving a term from context $\gamma$ to
context $\delta$.

\filbreak
The finding the path to a variable example from
Section~\ref{sec:path}, contains a signature that defines the untyped
$\lambda$-calculus. This would result in a new type, and two new
constructors for the existing type \lstinline!con!. We will prefix the
definitions with \lstinline!def_! to create unique names:
\begin{itemize}
\item A new type: $\n{def\_tm} \oft *$
\item A constructor for applications:\\
  $\n{def\_app} \oft  \tp{con} {\tp{base}{\n{def\_tm}}, \tp{base}{\n{def\_tm}} \times \tp{base}{\n{def\_tm}}}$
\item A constructor for abstractions:\\
  \hspace*{-1.5em}$\n{def\_lam} \oft\tp{con} {\tp{base}{\n{def\_tm}}, \tp{arr}{\tp{base}{\n{def\_tm}}, \tp{base}{\n{def\_tm}}}} $
\end{itemize}
To implement the signature we declare a new type for the terms of the
defined language ($\n{def\_tm}$), and we extend the type of SF constructors, with as
many constructors as our signature had ($\n{def\_app}$ and
$\n{def\_lam}$).

\section[From Contextual Types to GADTs]{From \sml with Contextual Types  to \smlg}\label{sec:trans}

In this section, we translate \sml with contextual types into the
lower level \smlg. Because our embedding of the syntactic framework SF
in \smlg is intrinsically typed, there is no need to extend the
type-checker to accommodate contextual objects. Further, recall that
we restricted quoted variables and parameter variables such that the
matching operation remains first order. In addition, as our deep
embedding uses a representation with canonical names (namely \debruijn
indices), we are able to translate pattern matching into \smlg's
pattern matching; thus there is no need to extend the operational
semantics of the language.

The translation we describe in this section provides the footprint of
an implementation of it to directly generate OCaml code, as \smlg is
essentially a subset of OCaml. It therefore shows how to extend a
functional programming language such as OCaml with the syntactic
framework with minimal impact on OCaml's compiler.

Our main translation of \sml to \smlg uses the following main operations:
\begin{displaymath}
  \begin{array}{rl}
    \eval{\tau},\eval{\Xi},\eval{\Gamma}:& \mbox{Translate types, signatures and contexts.}\\
    \eval{e}_{\Gamma\vdash\tau}:& \mbox{Type directed translation of expressions.}\\
    \eval{pat}_{\Gamma\vdash\tau}^{\Gamma'}:& \mbox{Translates patterns and outputs $\Gamma'$ the}\\
    & \mbox{context of the bound variables.}\\
  \end{array}
\end{displaymath}

We begin by translating SF types and contexts into \smlg types. These
types are used to index terms in the implementation of SF:
\begin{displaymath}
  \begin{array}{lrcl}
\mbox{SF Types:} & \eval{\const a \to A} & = & \tp{arr}{\const a, \eval A} \\[.2em]
    & \eval{\boxd A} & = & \tp{boxed}{\eval A}\\[.2em]
    & \eval{\const a} & = & \const a\\[.7em]

\mbox{SF Contexts:} & \eval{\cdot} & = & \tp{nil}{} \\[.2em]
 & \eval{\Psi, x \oft \const a} & = & \tp{cons}{\eval{\Psi}, \const a}  \\[.2em]
  \end{array}
\end{displaymath}

The translation of SF terms is directed by their contextual type
$\Psi\vdash A$, because it needs the context to perform the
translation of names to \debruijn indices and the types to
appropriately index the terms.
\begin{displaymath}
  \begin{array}{lrcl}
\mbox{SF Terms:} & \eval{\lam x M}_{\Psi\vdash \const a \to A} & = & \tp{Lam}{\tp{cons}{\eval{\Psi},\const a},\eval A} \eval{M}_{\Psi,\const a\vdash A}\\[.2em]
& \eval{\termbox M}_{\Psi\vdash \boxd A} & = & \tp{Box}{\eval\Psi,\eval A}{\eval{M}_{\cdot\vdash A}}\\[.2em]
 & \eval{x}_{\Psi\vdash\const a} & = & \tp{Var}{\eval\Psi,\const a}{\eval{x}_\Psi^v} \\[.2em]
 & \eval{\app{\const c} {\many M}}_{\Psi\vdash\const a} & = & \tp{C}{\eval\Psi,\eval A,\const a}{(\const c, \eval{\many M}_{\Psi\vdash A\downarrow\const a})} \\
  &                                              & & \text{with } \Sigma(\const c) = A\\[.2em]
  & \eval{M\sub\sigma\Phi\Psi}_{\Psi\vdash A} & = & \code{apply\_sub}\eval{M}_{\Phi\vdash A} \eval{\sigma}_{\Psi\vdash\Phi}\\[.2em]
  & \eval{\mvar u}_{\Psi\vdash A} & = & u\\[.4em]
  & \eval{\pparvar v}_{\Psi\vdash\const a} & = & \tp{Var}{\eval\Psi,\const a}{\eval{\code v}_{\Psi\vdash\const a}}\\[.2em]
                 & \eval{\app N {\many M}}_{\Psi\vdash A\to B\downarrow\const a} & = & \tp{Cons}{\eval\Psi, \tp{arr}{\eval A, \eval B},\const a}\\
    & & & ~~~ {(\eval{N}_{\Psi\vdash A},\eval{\many M}_{\Psi\vdash B\downarrow\const a})}\\[.2em]
  & \eval{\cdot}_{\Psi\vdash\const a\downarrow\const a} & = & \tp{Empty}{\eval{\Psi},\const a,\const a}\\[.2em]

    \mbox{SF Vars.:} & \eval{x}_{\Psi, x\oft\const a\vdash\const a} ^v & = & \tp{Top}{\eval\Psi,\const a}{}\\[.2em]
    & \eval{y}_{\Psi, x\oft\const b\vdash\const a} ^v & = & \tp{Pop}{\eval\Psi,\const a,\const b} \eval{y}_{\Psi\vdash\const a}^v \\[.2em]

  \mbox{Param. Vars:}  & \eval{\code x}_{\Psi\vdash\const a} & = & x\\[.2em]
  & \eval{\pparvar v}_{\Psi,y\oft\const a'\vdash\const a} & = & {\tp{Pop}{\eval\Psi,\const a,\const a'} \eval{\code v}_{\Psi\vdash\const a}}\\[.2em]
  \end{array}
\end{displaymath}

There are are three kinds of variables in the syntactic framework SF:
bound variables, quoted variables and parameter variables. Each kind
requires a different translation strategy. Bound variables are
translated into \debruijn indices where the numbers are encoded using
the constructors \n{Top} and \n{Pop}. Quoted variables are simply
translated into the \smlg variables they quote. And finally the
parameter variables are translated into a \n{Var} constructor to
indicate that the resulting expression is an SF variable, and the
shifts (indicated by extra '\n{\#}') are translated to applications of
the constructor \n{Pop}.

Notice how substitutions are not part of the representation. They are
translated to the eager application of \code{apply\_sub}, an OCaml
function that performs the substitution. Before we call
\code{apply\_sub} we translate the substitution. This amounts to
generating the right shift for empty substitutions and otherwise
recursively translating the terms and the substitution.


We also need to translate SF patterns into \smlg expressions with the
right structure. The special cases are:
\begin{itemize}
\item Variables are translated to \debruijn indexes.
\item Quoted variables simply translate  to \smlg variables.
\item Parameter variables translate to a
  pattern that matches only variables by specifying the
  \n{Var} constructor.
\end{itemize}

The translation of patterns follows the same line as the translation
of terms, however, we do not use the indices of type variables in
\smlg patterns. This is indicated by writing an underscore.

Additionally, the translation of substitutions amounts to generating the right shift for
empty substitutions and otherwise recursively translating the terms and the substitution:
\begin{displaymath}
  \begin{array}{rcl}
    \eval{\cdot}_{\Psi}^{\Psi} & = & \tp{Shift}{\eval\Psi,\eval\Psi} (\tp{Id}{\eval\Psi,\eval\Psi})\\[.2em]
    \eval{\cdot}_{\Psi,x\oft\const a}^{\Phi} & = & \tp{Shift} {\eval{\Psi,x\oft\const a},\eval\Phi}\\
    & & \quad (\tp{Suc}{\eval{\Psi,x\oft\const a},\eval\Phi} {\eval{\cdot}_\Psi^\Phi})\\[.2em]
    \eval{\sigma,x/M}_{\Psi}^{\Phi,x\oft A} & = & \tp{Dot} {\eval\Psi,\eval\Phi,\const a} (\eval{\sigma}_\Psi^\Phi, \eval{M}_{\Phi\vdash\const a})\\[.2em]
  \end{array}
\end{displaymath}

Our Babybel prototype (see also our examples from the beginning)
exploits the power GADTs to model context relations and relies on
OCaml's type reconstruction engine \citep{Garrigue:2013} to infer the
omitted indices.

\begin{displaymath}
  \begin{array}{lrcl}
\mbox{SF Patterns:} & \eval{\lam x R}_{\Psi\vdash \const a \to A} ^\Gamma& = & \tp{Lam}{\_,\_,\_} \eval{R}_{\Psi,\const a\vdash A}^\Gamma\\[.2em]
 & \eval{\termbox R}_{\Psi\vdash \boxd A}^\Gamma & = & \tp{Box}{\_, \_}{\eval{R}_{\cdot\vdash A}^\Gamma}\\[.2em]
 & \eval{x}_{\Psi\vdash\const a}^\cdot & = & \tp{Var}{\_,\_}{\eval{x}_\Psi^p} \\[.2em]
 & \eval{\app{\const c} {\many R}}_{\Psi\vdash\const a}^\Gamma & = & \tp{C}{\_,\_,\_}{\eval{\many R}_{\Psi\vdash A\downarrow \const a}^\Gamma}\\
    & & &  \hspace{1cm} \text{with } \Sigma(\const c) = A \\[.2em]
 & \eval{\mvar u}_{\Psi\vdash A}^{u\oft\eval{\Psi\vdash A}} & = & u\\[.2em]
 & \eval{\pparvar x}_{\Psi\vdash\const a}^{x\oft\eval{\Psi\vdash A}} & = & \tp{Var}{\_,\_}{x}\\[.2em]
 & \eval{\ppparvar x}_{\Psi,y\oft\_\vdash\const a}^{x\oft\eval{\Psi,y\oft\_\vdash A}} & = & \tp{Var}{\_,\_}({\tp{Pop}{\_,\_,\_} x})\\[.2em]

 & \eval{\app R {\many {R'}}}_{\Psi\vdash A\to B\downarrow\const a}^{\Gamma,\Gamma'} & = &
      \tp{Cons}{\_,\_,\_,\_}(\eval{R}_{\Psi\vdash A}^\Gamma, \eval{\many{R'}}_{\Psi\vdash B\downarrow\const a}^{\Gamma'} )\\[.2em]
 & \eval{\cdot}_{\Psi\vdash\const a\downarrow\const a}^{\cdot} & = & \tp{Empty}{\_,\_}\\[.7em]
\mbox{SF Variables:} & \eval{x}_{\Psi, x\oft\const a} ^p & = & \tp{Top}{\_,\_}\\[.2em]
& \eval{y}_{\Psi, x\oft\const b} ^p & = & \tp{Pop}{\_,\_,\_}{\eval{y}_\Psi^p} \\[.2em]
  \end{array}
\end{displaymath}



\begin{figure}
  \centering
\begin{displaymath}
  \begin{array}{rcl}
    \multicolumn{3}{l}{\boxed{\text{Translating Types:}}}\vs
    \eval{D} & = & \tapp D {} \\
    \eval{\tau_1\to\tau_2} & = & \eval{\tau_1} \to \eval{\tau_2}\\
    \eval{[\Psi\vdash A]}& = & \eval{A} \\
    \eval{\many\tau} & = & \eval{\tau_1},\dots,\eval{\tau_n}\vs

    \multicolumn{3}{l}{\boxed{\text{Translating Signatures:}}}\vs
    \eval{\cdot} & = & \cdot \\
    \eval{\Xi, D} & = & \eval{\Xi}, D\oft *\\
    \eval{\Xi, \constr k\oft \many\tau \to D} & = & \eval{\Xi}, \constr k \oft \eval{\many\tau} \to \tapp D{} \vs

    \multicolumn{3}{l}{\boxed{\text{Translating Contexts:}}}\vs
    \eval{\cdot} & = & \cdot \\
    \eval{\Gamma, x \oft \tau} & = & \eval{\Gamma}, x\oft \eval\tau \vs

    \multicolumn{3}{l}{\boxed{\text{Translating SF Signatures:}}}\vs
    \eval{\cdot} & = & \cdot \\
    \eval{\Sigma, \const a\oft \code{type}} & = & \eval{\Sigma}, \const a\oft * \\
    \eval{\Sigma, \const c\oft A \to \const a} & = & \eval{\Sigma}, \const c\oft \eval{A}\to \const a\vs

    \multicolumn{3}{l}{\boxed{\text{Translating SF Contexts:}}}\vs
    \eval{\cdot} & = & \tp{nil}{} \\
    \eval{\Psi, x \oft \const a} & = & \tp{cons}{\eval{\Psi}, \const a}  \\
  \end{array}
\end{displaymath}

  \caption{Translating Types, Signatures, and Contexts}
  \label{fig:trantypsigctx}
\end{figure}

Figure~\ref{fig:trantypsigctx} shows the translation of types,
signatures and contexts for the computational language and the index
language.

The translation of \sml expressions into \smlg directly follows the
structure of programs in \sml and is type directed to fill in the
required types for the \smlg representation. Figure~\ref{fig:transml}
contains the translation.

\begin{figure}
  \centering
\begin{displaymath}
  \begin{array}{rcl}
    \multicolumn{3}{l}{\boxed{\text{Translating expressions:}}}\vs
    \eval{x}_{\Gamma\vdash\tau} & = & x \\
    \eval{\app{\constr k}{\many e}}_{\Gamma\vdash D} & = & \aconstr k {~} {\eval{\many e}_{\Gamma\vdash\many\tau}} \text{ with } \Xi(\constr k) = \many\tau\to D\\
    \eval{\fune f x e}_{\Gamma\vdash\tau_1\to\tau_2} & = & \fixe f {\eval{\tau_1\to\tau_2}} {\lame x {\eval{e}_{\Gamma,x\oft\tau_1\vdash\tau_2}}}\\
 \eval{\app i e}_{\Gamma\vdash\tau} & = & \app{\eval{i}_{\Gamma\vdash\tau_1\to\tau}}{\eval{e}_{\Gamma\vdash\tau_1}}\\
                                & & \text{with } \Gamma\vdash i\synths \tau_1\to\tau\\
    \eval{\lete x i e}_{\Gamma\vdash\tau} & = & \lete x  {\eval{i}_{\Gamma\vdash\tau_1}} {\eval{e}_{\Gamma,x\oft\tau_1\vdash\tau}}\\
                                & & \text{with } \Gamma\vdash i\synths \tau_1\\
    \eval{\matche i \branches}_{\Gamma\vdash\tau} & = & \matche {\eval{i}_{\Gamma\vdash\tau_1}} {\eval{\branches}_{\Gamma\vdash\tau_1\to\tau}}\\
                                & & \text{with } \Gamma\vdash i \synths\tau_1\\
    \eval{e_1,\dots,e_n}_{\Gamma\vdash\many\tau} & = & \eval{e_1}_{\Gamma\vdash\tau_1},\dots,\eval{e_n}_{\Gamma\vdash\tau_n}\\
    \eval{[\hat\Psi\vdash M]}_{\Gamma\vdash[\Psi\vdash A]} & = & \eval{M}_{\Psi\vdash A} \\
    \eval{\cmatche i \cbranches}_{\Gamma\vdash\tau} & =& \matche {\eval{i}_{\Gamma\vdash[\Psi\vdash A]}} {\eval{\cbranches}_{\Gamma\vdash[\Psi\vdash A]\to \tau}}\\
                                & & \text{with } \Gamma\vdash i \synths [\Psi\vdash A]\vs

    \multicolumn{3}{l}{\boxed{\text{Translating branches and patterns:}}}\vs

\eval{pat\mapsto e}_{\Gamma\vdash\tau_1\to\tau_2} & = & \eval{pat}_{\Gamma\vdash\tau_1}^{\Gamma'} \mapsto \eval{e}_{\Gamma,\Gamma'\vdash \tau_2}\vs

 \eval{x}_{\Gamma\vdash\tau}^{x\oft\tau} & = & x \\[.2em]

    \eval{\app {\constr{k}} {\many {pat}}}_{\Gamma\vdash D}^{\Gamma'} & = & \aconstr k {}
    {\eval{\many{pat}}_{\Gamma\vdash\many\tau}^{\Gamma'}}\\[.2em]

  & & \hspace{1cm} \text{with } \Xi(\constr k)=\many\tau\to D\vs

   \multicolumn{3}{l}{\boxed{\text{Translating branches:}}}\vs
   \eval{[\Psi\vdash R]\mapsto e}_{\Gamma\vdash[\Psi\vdash A]\to\tau} & = &
     \eval{R}_{\Psi\vdash A}^{\Gamma'} \mapsto \eval{e}_{\Gamma,\Gamma'\vdash \tau} \\
  \end{array}
\end{displaymath}
  \caption{Translating Computational Expressions}
  \label{fig:transml}
\end{figure}


Finally we show that the translation from \sml with contextual types into \smlg preserves types.

\begin{thm}[Main]$\;$
 \begin{enumerate}
 \item If $\Gamma\vdash e\checks \tau$ then $\cdot;\eval\Gamma\vdash\eval{e}_{\Gamma\vdash\tau} \oft \eval\tau$.
\item If $\Gamma\vdash i \synths \tau$ then $\cdot;\eval\Gamma\vdash\eval{i}_{\Gamma\vdash\tau} \oft \eval\tau$.
  \end{enumerate}
\end{thm}

Our result follows form mutual induction on the typing derivations and
relies on several lemmas that deal with the other judgments and
context lookups. Appendix~\ref{sec:bbproofs} contains the proofs.

\begin{lem}[Ambient Context]\label{lem:ctx}
If $\Gamma(\code{u}) = [\Psi \vdash \const a]$ then\\
$ \eval{\Gamma} (u) =  \tp{sftm}{ \eval{\Psi}, \const a}$.
\end{lem}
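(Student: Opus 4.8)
The plan is to proceed by structural induction on the ambient context $\Gamma$ (equivalently, on the derivation of the lookup $\Gamma(\code u) = [\Psi \vdash \const a]$). The crucial structural observation is that the context translation of Figure~\ref{fig:trantypsigctx} is defined entry-by-entry, $\eval{\Gamma, x \oft \tau} = \eval\Gamma, x \oft \eval\tau$ with $\eval{\cdot} = \cdot$, so translation neither reorders nor renames declarations. Consequently lookup commutes with it: whatever type $\Gamma$ assigns to $\code u$, the translated context $\eval\Gamma$ assigns to $u$ the translation of exactly that type. It therefore suffices to track the single declaration for $\code u$ through the translation.

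First I would dispatch the base case $\Gamma = \cdot$, which is vacuous since $\cdot(\code u)$ is undefined and the antecedent cannot hold. For the inductive step $\Gamma = \Gamma', y \oft \tau$, I would split on whether $y = \code u$. If $y \neq \code u$, then $\Gamma(\code u) = \Gamma'(\code u) = [\Psi \vdash \const a]$ and likewise $\eval\Gamma(u) = \eval{\Gamma'}(u)$, so the induction hypothesis applied to $\Gamma'$ closes the case immediately. If $y = \code u$, then $\tau = [\Psi \vdash \const a]$ and $\eval\Gamma(u) = \eval{\tau} = \eval{[\Psi \vdash \const a]}$, reducing the goal to a direct computation of the type translation.

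The remaining step is to compute $\eval{[\Psi \vdash \const a]}$, and this is where the only genuine subtlety lies. To land on the claimed right-hand side, the contextual-type clause must send $[\Psi \vdash A]$ to the SF-term family indexed by the translated context and type, i.e. $\eval{[\Psi \vdash A]} = \tp{sftm}{\eval\Psi, \eval A}$, in agreement with the deep embedding of Figure~\ref{fig:sfc}, where every SF object of type $A$ in context $\Psi$ is represented by an $\smlg$ value of type $\tp{sftm}{\eval\Psi, \eval A}$; this is also precisely what makes the term-level clause $\eval{\mvar u}_{\Psi \vdash A} = u$ type-check. Specialising to the base type $A = \const a$ and using the SF-type clause $\eval{\const a} = \const a$ then yields $\eval{[\Psi \vdash \const a]} = \tp{sftm}{\eval\Psi, \const a}$, which is exactly the statement. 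The induction itself is entirely routine; the part to get right is simply ensuring that the contextual-type translation is the $\n{sftm}$-indexed one, so that this final computation lands on $\tp{sftm}{\eval\Psi, \const a}$ rather than on the bare $\eval{\const a}$.
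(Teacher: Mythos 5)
Your proof coincides with the paper's: the paper's entire argument for this lemma is ``Induction on the structure of $\Gamma$'', which is precisely the induction you carry out, with the base case vacuous and the inductive step split on whether the top declaration is the one for $\code u$. Your closing remark that the contextual-type clause must be read as $\eval{[\Psi\vdash A]} = \tp{sftm}{\eval\Psi,\eval A}$ (rather than the bare $\eval A$ printed in Figure~\ref{fig:trantypsigctx}) is the correct reading and is exactly how the lemma is consumed in the paper's main theorem.
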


\begin{lem}[Terms]\label{lem:tm}$\;$
  \begin{enumerate}
  \item   If $\Gamma;\Psi\vdash M \oft A$ then $\cdot;\eval\Gamma\vdash\eval{M}_{\Psi\vdash A} \oft \eval{\Psi\vdash A}$.
  \item If $\Gamma ; \Psi \vdash \sigma \oft \Phi$ then $\cdot ; \eval{\Gamma} \vdash \eval{\sigma}_ {\Psi \vdash \Phi} : \eval{\Psi\vdash\Phi}$
  \end{enumerate}
\end{lem}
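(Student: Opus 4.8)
The plan is to prove the two statements by a single mutual structural induction on the syntactic-framework typing derivations, generalised so that every auxiliary judgment through which the translation $\eval{\cdot}$ recurses is carried alongside. Concretely, besides the term judgment $\Gamma;\Psi\vdash M\oft A$ and the substitution judgment $\Gamma;\Psi\vdash\sigma\oft\Phi$, I would add as further components of the induction the spine judgment $\Gamma;\Psi\vdash\many M\oft A/B$, the parameter-variable judgment $\Gamma;\Psi\vdash_v\pvar v\oft\const a$, and bound-variable lookup $\Psi(x)=\const a$. For each I state the matching preservation property, reading $\eval{\Psi\vdash A}$ as the index type $\tp{sftm}{\eval\Psi,\eval A}$, $\eval{\Psi\vdash\Phi}$ as $\tp{sub}{\eval\Psi,\eval\Phi}$, a spine as an $\n{sp}$-indexed value, and a variable as a $\tp{var}{\eval\Psi,\const a}$-indexed de~Bruijn index. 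Because the clauses of $\eval{\cdot}$ in Figure~\ref{fig:sfc} call into one another, these statements must be established together.

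I would dispatch the variable cases first, since they are the leaves of the recursion and where the index bookkeeping is most delicate. For a bound variable with $\Psi(x)=\const a$, the translation $\eval{x}^v_\Psi$ walks $\Psi$ emitting $\n{Pop}$ constructors and finishing with $\n{Top}$; an inner induction on the position of $x$ shows the de~Bruijn index it produces has type $\tp{var}{\eval\Psi,\const a}$, exactly matching the polymorphic types of $\n{Top}$ and $\n{Pop}$, after which $\n{Var}$ lifts it to a term. For a quoted variable $\mvar u$ with $\Gamma(u)=[\Psi\vdash\const a]$ I appeal directly to Lemma~\ref{lem:ctx}, which gives $\eval\Gamma(u)=\tp{sftm}{\eval\Psi,\const a}$, so the $\smlg$ variable $u$ already sits at the required type. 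Parameter variables follow from the $\vdash_v$ component: each leading $\#$ beyond the first becomes a $\n{Pop}$, and the whole is wrapped in $\n{Var}$.

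For the structural cases I would line each syntactic-framework typing rule up with the corresponding clause of $\eval{\cdot}$ and check that the GADT constructor, instantiated at the index types the translation dictates, consumes the inductively-typed sub-results and yields $\tp{sftm}{\eval\Psi,\eval A}$. Rule $\rl{t-lam}$ uses $\n{Lam}$ with the induction hypothesis on the body in the extended context $\eval{\Psi,x\oft\const a}=\tp{cons}{\eval\Psi,\const a}$; rule $\rl{t-box}$ uses $\n{Box}$ with the hypothesis on the closed body in $\tp{nil}{}$; rule $\rl{t-con}$ uses $\n{C}$ together with the spine statement and the translated signature entry assigning $\const c$ the $\n{con}$-type $\tp{con}{\eval A,\const a}$; and the spine rules feed $\n{Empty}$ and $\n{Cons}$. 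The explicit-substitution (closure) term $M\sub\sigma\Phi\Psi$ is translated to the eager application $\code{apply\_sub}\,\eval M_{\Phi\vdash A}\,\eval\sigma_{\Psi\vdash\Phi}$, which type-checks because $\code{apply\_sub}$ has the stated type $\alle{\gamma,\delta,\tau}\tp{sftm}{\gamma,\tau}\to\tp{sub}{\gamma,\delta}\to\tp{sftm}{\delta,\tau}$, using part~(1) on $M$ and part~(2) on $\sigma$. Part~(2) is the companion induction: $\rl{t-dot-sub}$ maps to $\n{Dot}$ with part~(1) on the term component, and the empty/weakening substitutions map to $\n{Shift}$ over $\n{Id}$ and iterated $\n{Suc}$ via a short inner induction mirroring the recursion in the substitution-translation clauses.

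The main obstacle I expect is keeping the deep-embedding index types aligned with the syntactic-framework types throughout, rather than any conceptual difficulty. Three spots demand care: reconciling where a signature-level type $\const a$ must appear wrapped as $\tp{base}{\const a}$ — as it does in the instantiations of $\n{Var}$, $\n{Lam}$, and $\n{C}$ — against the bare $\const a$ produced by the type translation; ensuring the signature translation assigns each syntactic-framework constructor $\const c$ a $\n{con}$-index type $\tp{con}{\eval A,\const a}$ so that $\n{C}$'s first component type-checks; and threading the three-argument $\n{sp}$ index so that the source and target types of a translated spine agree with the function type that $\n{C}$ expects to decompose. A missing $\n{base}$ wrapper or an off-by-one in a context index would silently break preservation, so I would fix a precise convention for $\eval{\cdot}$ on syntactic-framework types at the outset and verify every constructor's index instantiation against it; with that convention pinned down, each case reduces to a direct check against Figure~\ref{fig:sfc}.
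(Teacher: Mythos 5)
Your proposal is correct and takes essentially the same route as the paper: a mutual induction on the SF typing derivations, with the bound-variable case discharged by the correctness of the de~Bruijn position computation, the quoted-variable case by Lemma~\ref{lem:ctx}, the closure case by the stated type of \code{apply\_sub} together with both induction hypotheses, and the remaining constructors matched clause-by-clause against the GADT signature. You are merely more explicit than the paper about the auxiliary spine and parameter-variable judgments (which the paper folds into ``similar for the other cases'') and about the index-alignment bookkeeping.
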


\begin{lem}[Pat.]\label{lem:pat}
  If $\vdash\ptj {pat} \tau \Gamma$ then $\cdot\vdash\ptj{\eval{pat}_{\Psi\vdash A}^\Gamma}{\eval \tau}{\Gamma}$.
\end{lem}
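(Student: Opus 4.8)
The plan is to prove the statement by structural induction on the pattern-typing derivation, treating the two classes of patterns separately. For the ordinary \sml patterns (a variable $x$, or a fully applied data constructor $\app{\constr k}{\many{pat}}$, typed by \rl{t-pat-var} and \rl{t-pat-con} in Figure~\ref{fig:smltyp}) the argument is routine: these translate homomorphically to \smlg's first-order patterns, each \smlg data constructor carries the same argument types as its \sml counterpart under $\eval{\cdot}$, and the induction hypotheses on the argument patterns discharge \rl{gp-con} directly. The interesting work is entirely in the \emph{contextual} patterns $[\hat\Psi \vdash R]$, whose inner SF pattern $R$ is translated by the ``SF Patterns'' clauses of Section~\ref{sec:trans}. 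I would therefore reduce to an auxiliary claim, proved by induction on the SF pattern-typing judgment $\Psi \vdash \ptj R A \Gamma$ of Figure~\ref{fig:mlctp}, that $\cdot \vdash \eval{R}_{\Psi\vdash A}^{\Gamma}$ is a well-typed \smlg pattern of type $\tp{sftm}{\eval\Psi, \eval A}$ binding the translated context $\eval\Gamma$.

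Second, I would go through the SF cases one at a time, in each case reading off the translation clause and matching it against the intended GADT constructor from Figure~\ref{fig:sfc}. For $\lam x R$ (rule \rl{tp-lam}) the translation emits $\tp{Lam}{\_,\_,\_}$ applied to $\eval{R}$, and the induction hypothesis on $R$ in the extended context $\Psi, x\oft\const a$ supplies the premise; similarly $\termbox R$ (rule \rl{tp-box}) emits $\tp{Box}{\_,\_}$ and appeals to the hypothesis with the empty context. The constructor case $\app{\const c}{\many R}$ (rule \rl{tp-constr}) requires a nested induction on the spine judgment, sending \rl{tp-sp} and \rl{tp-sp-em} to the \n{Cons}/\n{Empty} clauses. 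For a meta-variable $\pmetavar u$ (rule \rl{tp-mvar}) the translation is just the \smlg variable $u$, which binds $u$ at a contextual type; that its translated type is exactly $\tp{sftm}{\eval\Psi, \eval A}$ is precisely Lemma~\ref{lem:ctx}. Finally the bound-variable and parameter-variable cases (rules \rl{tp-var}, \rl{tp-pvar}, and the weakening rule \rl{tp-pvar-\#}) need a separate sub-induction showing that the de~Bruijn translation $\eval{x}^p$ and the iterated \n{Pop} generated by the extra sharps build a \n{var}-indexed term whose context and type indices track the position of the variable inside $\Psi$.

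The step I expect to be the main obstacle is reconciling the fact that the pattern translation deliberately erases all \smlg type indices to underscores (as the paper notes, ``we do not use the indices of type variables in \smlg patterns'') with \smlg's pattern-typing rule \rl{gp-con}, which instead introduces equality constraints $\many{\tau_1}\equiv\many{\tau_2}$ into the type context $\Delta$. The crux will be to verify, for each deep-embedding constructor, that unifying its declared result indices against the expected type $\tp{sftm}{\eval\Psi,\eval A}$ generates exactly the equalities needed both to pin down the existentially quantified index variables ($\gamma$, $\alpha$, $\tau$ in Figure~\ref{fig:sfc}) and to satisfy the typing premises for the subpatterns delivered by the induction hypotheses. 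In effect one must show that no information is lost by the erasure: the index discipline of Figure~\ref{fig:sfc} is a faithful image of the SF typing rules of Figure~\ref{fig:mlctp}. This alignment is most delicate in the parameter-variable/weakening case and the bound-variable case, where the shift structure of the translated term must correspond precisely to the depth of the variable in the context; everything else is bookkeeping that mirrors the term translation already handled in Lemma~\ref{lem:tm}.
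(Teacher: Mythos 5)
Your proposal matches the paper's approach: the paper proves this lemma by induction on the pattern-typing derivation, and the auxiliary claim you identify for the contextual case is exactly what the paper factors out as Lemma~\ref{lem:ctxpat}, itself proved by induction on the SF pattern-typing judgment with the meta-variable case resting on Lemma~\ref{lem:ctx}. The paper's own proof is far terser than your sketch and does not work through the index-erasure reconciliation you flag as the main obstacle, so your extra detail is a refinement of the same argument rather than a different route.
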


\begin{lem}[Ctx. Pat.]\label{lem:ctxpat}
  If $\Psi\vdash\ptj R A \Gamma$ then $\cdot\vdash\ptj{\eval{R}_{\Psi\vdash A}^\Gamma}{\eval{\Psi \vdash A}}{\Gamma}$.
\end{lem}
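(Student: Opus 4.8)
The plan is to prove Lemma~\ref{lem:ctxpat} by structural induction on the derivation of the SF pattern typing judgment $\Psi \vdash R \oft A \downarrow \Gamma$ of Figure~\ref{fig:mlctp}, running it simultaneously with two auxiliary inductions that the main judgment invokes: one on the parameter-pattern judgment $\Psi \vdash_v \pvar w \oft \const a \downarrow \Gamma$ and one on the pattern-spine judgment. For each rule I will compute the pattern translation $\eval{R}_{\Psi \vdash A}^{\Gamma}$ and exhibit a matching \smlg pattern-typing derivation (Figure~\ref{fig:smlgadt}) built from the deep-embedding constructors of Figure~\ref{fig:sfc}. The observation I will use throughout is that the index translation is compositional: $\eval{\cdot} = \tp{nil}{}$, $\eval{\Psi, x\oft\const a} = \tp{cons}{\eval\Psi, \const a}$, and the type formers $\to$ and $\boxd{\cdot}$ translate to $\tp{arr}{\cdot,\cdot}$ and $\tp{boxed}{\cdot}$; these match exactly the index shapes appearing in the result types of \n{Lam}, \n{Box}, \n{Var}, \n{C} and \n{Cons}. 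Recall from Lemma~\ref{lem:ctx} that the contextual type $[\Psi \vdash \const a]$ is represented in \smlg by $\tp{sftm}{\eval\Psi, \const a}$, so $\eval{\Psi \vdash A} = \tp{sftm}{\eval\Psi, \eval A}$ is the target against which every translated pattern must check.

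The structural cases are routine. For \rl{tp-lam} the translation is $\tp{Lam}{\_,\_,\_}\,\eval{R'}_{\Psi,\const a\vdash A'}^{\Gamma}$; the induction hypothesis on the premise $\Psi, x\oft\const a \vdash R' \oft A' \downarrow \Gamma$ types the sub-pattern against $\tp{sftm}{\tp{cons}{\eval\Psi,\const a},\eval{A'}}$, which is precisely the argument type of \n{Lam}, whose result index reconstructs $\tp{sftm}{\eval\Psi,\tp{arr}{\const a,\eval{A'}}} = \eval{\Psi \vdash \const a \to A'}$. Similarly \rl{tp-box} uses \n{Box} with the premise derived in the empty context, relying on $\eval{\cdot} = \tp{nil}{}$ to meet \n{Box}'s closedness requirement; \rl{tp-constr} together with the spine rules \rl{tp-sp-em} and \rl{tp-sp} uses \n{C}, \n{Empty} and \n{Cons}, threading the target base type $\const a$ through the spine exactly as the SF spine judgment does; and \rl{tp-var} uses \n{Var} applied to the \debruijn index $\eval{x}_\Psi^p$, whose well-typedness at $\tp{var}{\eval\Psi,\const a}$ follows by a short inner induction over the \n{Top}/\n{Pop} encoding mirroring the lookup $\Psi(x)=\const a$ (the same argument already underlies the variable case of Lemma~\ref{lem:tm}). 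The base cases that \emph{produce} a binding are \rl{tp-mvar} and the parameter rules: a quoted-variable pattern $\pmetavar u$ translates to the bare \smlg variable $u$ and is typed by \rl{gp-var} producing $\Gamma = u\oft\eval{\Psi \vdash \const a}$, while $\pparvar p$ translates to $\tp{Var}{\_,\_}{p}$ and the extra-sharp form $\ppparvar{}$ stacks \n{Pop} constructors; the auxiliary induction on $\vdash_v$ shows that each use of \rl{tp-pvar-\#}, which extends $\Psi$ by one anonymous variable, corresponds to one \n{Pop} shift typed against the extended context.

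The step I expect to be the genuine obstacle is \emph{not} any individual constructor but the interaction with the GADT pattern rule \rl{gp-con}, which for a constructor of declared type $\forall \many\alpha.\,\tau \to \tapp D{\many{\tau_1}}$ binds the anonymous index arguments (the underscores) as fresh type variables and records the equations $\many{\tau_1} \equiv \many{\tau_2}$ between the constructor's \emph{declared} output indices and the indices $\many{\tau_2}$ of the type being checked against, here $\eval{\Psi \vdash A}$. I therefore have to verify, case by case, that under the instantiation read off from the translation these equations are satisfiable — concretely that the declared result index of each embedding constructor unifies on the nose with $\eval{\Psi \vdash A}$ once the context is encoded via $\tp{nil}{}$/$\tp{cons}{\cdot,\cdot}$ and the SF type via $\tp{arr}$, $\tp{boxed}$ and the base-type wrapper. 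Keeping the base-type wrapping consistent — so that a base type $\const a$ occurring as a codomain index lines up with the $\tp{base}{\const a}$ appearing inside \n{Var}, \n{Lam} and friends — is the one place demanding care, and I will settle it by fixing a single convention for $\eval{\const a}$ as an index and checking that both the translation clauses and the signatures of Figure~\ref{fig:sfc} respect it. Since SF patterns, unlike SF terms, admit neither closures nor substitutions, there is no substitution case to discharge, so once the \rl{gp-con} bookkeeping is settled the remaining obligations are purely computational and follow from the induction hypotheses.
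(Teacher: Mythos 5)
Your proposal is correct and takes essentially the same approach as the paper: the paper's own proof is a structural induction on the derivation of $\Psi\vdash\ptj R A \Gamma$, singling out the quoted-variable case \rl{tp-mvar} (translated to a bare \smlg variable and typed by \rl{gp-var}) as the interesting one and declaring all other cases similar. Your extra attention to the \rl{gp-con} index-equation bookkeeping and the $\tp{base}{\cdot}$ wrapping convention is additional diligence the paper elides rather than a different route.
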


Given our set-up, the proofs for the are straightforward by induction
on the structure of $\Gamma$ and the typing derivation for
Lemma~\ref{lem:tmproof}, \ref{lem:patproof}, \ref{lem:ctxpatproof}.

\section{A Proof of Concept Implementation}\label{sec:impl}

In this section, we describe the implementation\footnote{Available at
  \url{www.github.com/fferreira/babybel/}} of \babybel which uses the
ideas described in this chapter. One major difference is that Babybel
translates OCaml programs that use syntax extensions for contextual SF
types and terms and translates them into pure OCaml with GADTs. In
fact, even our input OCaml programs may use GADTs to for example
describe context relations on SF contexts (see also our examples from
Sec.\ref{sec:examples}).

The presence of GADTs in our source language also means that we can
specify precise types for functions where we can quantify over
contexts. Let's revisit some of the types of the programs that we
wrote earlier in Sec.\ref{sec:examples}:
\begin{itemize}
\item \lstinline!rewrite: $\gamma$. [$\gamma$ |- tm]->[$\gamma$ |-  tm]!:
  we implicitly quantify over all contexts \lstinline!$\gamma$!
  and then we take a potentially open term and return another term in
  the same context. These constraints imposed in the types are due to
  being able to index types with types thanks to GADTs.
\item \lstinline!get_path: $\gamma$. [$\gamma$,x:tm |- tm]->path!: In
  this case we quantify over all contexts, but the input of the
  function is some term in a non-empty context. On the left of the
  turn-style our type mandates that there is at least one assumption
  of type \lstinline!tm!. Notice how the return type is not inside a
  box(i.e.: square braces) because it is a regular OCaml recursive
  type and not a contextual type. We can mix and match as necessary.
\item \lstinline!conv: $\gamma$ $\delta$.($\gamma$, $\delta$) rel->[$\gamma$ |- tm]->[$\delta$ |- ctm]!:\\
  This final example shows that we can also use the contexts to index
  regular OCaml GADTs. In this function we are translating between
  terms in different representations. Naturally, their contexts
  contain assumptions of different type. To be able to translate
  between these different context representations, it is necessary to
  establish a relation between these contexts. So we need to define a
  special OCaml type (i.e.:\lstinline!rel!) that relates variable to
  variable in each contexts.
\end{itemize}

By embedding the SF in OCaml using contextual types, we can combine and use the
impure features of OCaml. Our example, in Section~\ref{sec:path} takes
advantage of them in our implementation of backtracking with exceptions. Additionally,
performing I/O or using references works seamlessly in the prototype.


The presence of GADTs in our target language also makes the actual
implementation of Babybel simpler than the theoretical description, as
we take advantage of OCaml's built-in type reconstruction. In addition
to GADTs, our implementation depends on several OCaml extensions. We
use Attributes from Section~7.18 of the reference
manual~\citep{Ocaml:2016} and strings to embed the specification of
our signature. We use quoted strings from Section~7.20 to implement
the boxes for terms (\lstinline!{t|...|t}!) and patterns
(\lstinline!{p|...|p}!). All these appear as annotations in the
internal Abstract Syntax Tree in the compiler implementation. To
perform the translation (based on Section~\ref{sec:trans}) we define a
PPX rewriter as discussed in Section~23.1 of the OCaml manual. In our
rewriter, we implement a parser for the framework SF and translate all
the annotations using our embedding.

The Babybel prototype has been used to implement several case studies
that expand from the examples at the beginning of this chapter. The
Babybel distribution contains the following case studies besides some
smaller programs used as a test suite.
\begin{itemize}
\item The example that removes syntactic sugar from Section~\ref{sec:synsugar}.
\item The example that finds the path to a variable from Section~\ref{sec:path}.
\item The closure conversion example from Section~\ref{sec:cloconv}.
\item Inferring types for MiniML.
\item An evaluator for an untyped MiniML.
\item Comparing $\lambda$-terms up to alpha renaming.
\item Translating the $\lambda$-calculus to a CPS form.
\end{itemize}

Currently, implementing these programs was straightforward and not
having to think about the representation of binders was liberating.
Along with the examples there are a couple of compiler phases, it would be
interesting to implement a small compiler combining them and
generating some code. Much remains to be done, but the prototype is
already usable\footnote{Given that type-checking is done by OCaml on
  the translation, the error messages are far from perfect, but
  otherwise it is easy to use.}.

\section{Related Work}

\babybel and the syntactic framework SF are derived from ideas that
originated in proof checkers based on the logical framework LF such as
the Twelf system~\citep{Pfenning99cade}. In the same category are the
proof and programming languages
Delphin~\citep{Poswolsky:DelphinDesc08} and
Beluga~\citep{Pientka:CADE15} that offer a computational language on
top of the LF. In many ways, this chapter and \babybel are a
distillation of Beluga's ideas applied to a mainstream programming
language. As a consequence, we have shown that we can get some of the
benefits from Beluga at a much lower cost, since we do not have to
build a stand-alone system or extend the compiler of an existing
language to support contexts, contextual types and objects.

Our approach of embedding an LF specification language into a host
language is in spirit related to the systems Abella~\citep{Gacek:JAR12}
and Hybrid~\citep{Felty12} that use a two-level approach. In these
systems we embed the specification language (typically hereditary
harrop formulas) in first-order logic (or a variant of it).  While our
approach is similar in spirit, we focus on embedding SF specifications
into a programming language instead of embedding it into a proof
theory. Moreover, our embedding is type preserving by construction.


There are also many approaches and tools that specifically add support
for writing programs that manipulate syntax with binders -- even if
they do not necessarily use HOAS. FreshML~\citep{Shinwell:ICFP03} and
C$\alpha$ml~\citep{Pottier:2006} extend OCaml's data types with the
ideas of names and binders from nominal logic~\citep{Pitts:2003}. In
these system, name generation is an effect, and if the user is not
careful variables may extrude their scopes. Purity can be enforced by
adding a proof system with a decision procedure that statically
guarantees that no variable escapes its scope~\citep{Pottier:LICS07}.
This adds some complexity to the system. We feel that \babybel's
contextual types offer a simpler formalism to deal with bound
variables. On the other hand, \babybel's approach does not try to
model variables that do not have lexical scope, like top-level
definitions. Another related language is Romeo~\citep{Stansifer:2014}
that uses ideas from Pure FreshML to represent binders. Where our
system statically catches variables escaping their scope, the Romeo
system either throws a run time exception or uses an SMT solver to
prove the absence of scoping issues. The Hobbits for
Haskell~\citep{Westbrook:2011} system is implemented in a similar way
to ours, using quasi-quoting but they use a different formalism based
on the concepts of names and freshness. In general, parametric HOAS
(PHOAS)~\citep{Chlipala:ICFP08} reuses the extensional (as in black
box functions) function space of the language, while our approach
introduces a distinction between an intensional and extensional
function space. A particular approach is described
in~\citep{WashburnWeirich:JFP06} where the authors propose a library
that uses catamorphisms to compute over a parametric HOAS
representation. This is a powerful approach but requires a different
way of implementing recursive functions.

The separate intensional function space from SF allows us to
model binding and supports pattern matching. The extentional function
space allows us to write recursive functions. Following essentially
work started by Despeyroux, Schuermann, and
Pfenning~\citep{Despeyroux97} and then later in the work on Beluga, we
use a modality to embed syntactic objects characterized using the
intensional function space into our programs. This is done while
preserving the host language's (ML) extensional function space that
allows for computation.

\section{Conclusion}

In this chapter, we describe the syntactic framework SF (a simply
typed variant of the logical framework LF with the necessity modality
from S4) and explain the embedding of SF into a functional programming
language using contextual types. This gives programmers the ability to
write programs that manipulate abstract syntax trees with binders
while knowing at type checking time that no variables extrude their
scope. We also show how to translate the extended language back into a
first order representation. For this, we use \debruijn indices and
GADTs to implement the SF in \smlg. Important characteristics of the
embedding are that it preserves the phase separation, making types
(and thus contexts) erasable at run-time. This allows pattern matching
to remain first-order and thus it is possible to compile with the
traditional algorithms.

Finally, we describe \babybel an implementation of these ideas that
embeds SF in OCaml using Contextual Types. The embedding is flexible
enough that we can take advantage of the more powerful type system in
OCaml to make the extension more powerful. We use GADTs in our
examples to express more powerful invariants (e.g. that the
translation preserves the context).




\chapter{Contextual Types and Type Theory}\label{sec:ctandtt}

\newcommand{\sn}[1]{\ensuremath{\mathsf{#1}}} 


\newcommand{\eqa}{\ensuremath{\mathrel{\equiv_\alpha}}} 
\newcommand{\eqas}{\ensuremath{\mathrel{\equiv_{\alpha,\sigma}}}} 
\newcommand{\rew}{\ensuremath{\mathrel{\leadsto_\sigma}}}
\newcommand{\stepm}{\ensuremath{\mathrel{\leftrightsquigarrow^*}}} 

\newcommand{\eqctx}{\ensuremath{\mathrel{\overset{ctx}\equiv}}}
\newcommand{\eqkind}{\ensuremath{\mathrel{\overset{kind}\equiv}}}
\newcommand{\eqtyp}{\ensuremath{\mathrel{\equiv}}}


\newcommand{\set}[1]{\ensuremath{\code{set}_{#1}}}
\newcommand{\lfkind}{\ensuremath{\star}\xspace}
\newcommand{\opit}[3]{\ensuremath{(#1 \oft #2)\mathrel{\to} #3}}
\newcommand{\mopit}[3]{\ensuremath{{\overrightarrow{(#1 \oft #2)}}\mathrel{\to} #3}}
\newcommand{\opits}[3]{\ensuremath{(\widehat{#1} \oft #2)\mathrel{\twoheadrightarrow} #3}}
\newcommand{\mopits}[3]{\ensuremath{\overrightarrow{(\widehat{#1} \oft #2)}\mathrel{\twoheadrightarrow} #3}}
\newcommand{\ctx}{\code{ctx}}
\newcommand{\elam}[2]{\ensuremath{\lambda #1. #2}}
\newcommand{\eemb}[1]{\ensuremath{\underline{#1}}}
\newcommand{\eip}[1]{\ensuremath{.{#1}}}
\newcommand{\eimp}{\code{impossible}\ }
\newcommand{\eprog}[3]{\ensuremath{\eapp #1 #2 \mathrel{=} #3}}
\newcommand{\evar}[1]{#1}
\newcommand{\easub}{\ensuremath{\mathrel{:=}}}
\newcommand{\esub}[2]{\ensuremath{[#2\easub #1]}} 
\newcommand{\sesub}[2]{\ensuremath{\{#2\easub #1\}}} 
\newcommand{\essub}[2]{\ensuremath{\app{\{#1\}}{#2}}}
\newcommand{\ecssub}[2]{\ensuremath{\app{\{\{#1\}\}}{#2}}}
\newcommand{\etsub}[2]{\ensuremath{\app{[#1]}{#2}}}
\newcommand{\edot}[3]{\ensuremath{#1;#3\easub #2}}

\newcommand{\etts}{\ensuremath{\triangleright}} 
\newcommand{\ebox}[2]{\ensuremath{[\hat{#1}\mathrel{\etts}{#2}]}}
\newcommand{\etbox}[2]{\ensuremath{[{#1}\vdash{#2}}]}
\newcommand{\boxt}[2]{\etbox{#1}{#2}}

\newcommand{\eapp}[2]{\ensuremath{#1\,#2}}
\newcommand{\iskind}{\ \code{is kind}}
\newcommand{\isctx}{\ \code{is ctx}}
\newcommand{\isectx}{\ \code{is erased}}
\newcommand{\isvalid}{\ \code{valid}}


\newcommand{\pconst}[2]{\ensuremath{\app{\aconst{#1}}{\vec{#2}}}}
\newcommand{\pinac}[1]{\ensuremath{.#1}}
\newcommand{\spconst}[2]{\ensuremath{\sconst{#1}\,\,\vec{#2}}}
\newcommand{\psnoc}[2]{\ensuremath{#1,\,\avar{#2}}}


\newcommand{\aprod}{\ensuremath{\mathrel{\otimes}}}
\newcommand{\ato}{\ensuremath{\twoheadrightarrow}}
\newcommand{\aann}[2]{\ensuremath{#1^{#2}}}
\newcommand{\aapp}[2]{\ensuremath{#1\mathrel{\text{\textquotesingle}}#2}}
\newcommand{\aconst}[1]{\ensuremath{\mathbf {#1}}}
\newcommand{\sconst}[1]{\ensuremath{\widehat{\mathbf {#1}}}}
\newcommand{\ac}[1]{\ensuremath{\mathbf {#1}}}
\newcommand{\afor}[2]{\ensuremath{#1:=\,#2}}
\newcommand{\adot}[3]{\ensuremath{#1;\afor {\widehat{#2}}{#3}}}
\newcommand{\alam}[1]{\ensuremath{\widehat\lambda#1.}}
\newcommand{\ashift}[1]{\ensuremath\uparrow^{#1}}
\newcommand{\ashiftone}{\ensuremath\uparrow}
\newcommand{\aclo}[2]{\ensuremath{#1[#2]}}
\newcommand{\aempty}{\ensuremath{^{\wedge}}}
\newcommand{\aectx}{\ensuremath{\emptyset}}
\newcommand{\asnoc}[3]{\ensuremath{#1,\,\avar{#2}\oft #3}}

\newcommand{\naid}[1]{\ensuremath{\code{id}_{#1}}}
\newcommand{\aid}{\ensuremath{\code{id}}}
\newcommand{\avar}[1]{\ensuremath{\widehat{#1}}}
\newcommand{\acomp}{\ensuremath{\mathrel{\circ}}}

\newcommand{\aunbox}[2]{\ensuremath{\lfloor{#1}\rfloor_{#2}}}
\newcommand{\acunbox}[1]{\aunbox{#1}{}}

\newcommand{\alen}[1]{\ensuremath{|#1|}}

\newcommand{\aeq}{\ensuremath{\mathrel{\equiv}}}
\newcommand{\aleq}[1]{\ensuremath{\mathrel{\overset{#1}{\sim}}}}
\newcommand{\wfc}{\ensuremath{\ \text{wfc}}}


\newcommand{\annlam}[2]{\elam{{#1}^{#2}}}
\newcommand{\annif}[3]{\ensuremath{\code{if}\, #1 \,\code{then}\, #2 \,\code{else}\, #3}}
\newcommand{\anncase}[3]{\ensuremath{\code{case}\, #1\,\code{inl}\, x \mapsto #2\,\mid\code{inr}\, x \mapsto #3}}
\newcommand{\anncasepr}[3]{\ensuremath{\code{case}\, #1\,\code{inl \_}\, \mapsto #2\,\mid\code{inr \_}\, \mapsto #3}}
\newcommand{\anninl}[1]{\code{inl}^{#1}\,}
\newcommand{\anninr}[1]{\code{inr}^{#1}\,}


\newcommand{\Var}{\ensuremath{\mathcal{V}}\xspace}
\newcommand{\Exp}{\code{Exp}\xspace}
\newcommand{\Neu}{\code{Neu}\xspace}
\newcommand{\CanC}{\ensuremath{\mathcal{C}}\xspace}
\newcommand{\Can}[2]{\ensuremath{\CanC_{#1,#2}}}
\newcommand{\inT}[3]{\ensuremath{\llbracket #1 \rrbracket _{#2,#3} }}
\newcommand{\SN}{\ensuremath{\code{SN}}}


\newcommand{\whnff}{\code{whnf}\xspace}
\newcommand{\appf}{\code{app}}
\newcommand{\lookupf}{\code{lookup}}
\newcommand{\envf}{\code{env}}
\newcommand{\shiftf}{\code{shift}}


\section{Introduction}

This is the last chapter\footnote{This is the last chapter before the
  concluding remarks and conclusion.}, but in a sense it is not the
end but the beginning of what we hope will be a fruitful research
path. This chapter presents the first steps on answering an important
question in this field, namely how to integrate contextual types and
the logical framework LF with a reasoning language that provides full
dependent types. Here, we develop a system that extends the ideas of
Chapter~\ref{chp:babybel} to a system that combines the logical
framework LF~\citep{Harper93jacm} with an extended Martin-L\"of style
type theory~\citep{Martin-Loef84a}
(MLTT). 
As in Babybel, the embedding uses contextual
types~\citep{Nanevski:ICML05} to mediate between both layers and allow
the system to effortlessly represent open objects.

Furthermore, this system can be seen also as an extension of the
theory of Beluga~\citep{Pientka:CADE15} into full dependent types. The
resulting system supports the definition of abstract syntax with
binders and the use of substitutions. This simplifies proofs about
systems with binders by providing for free the substitution lemma and
lemmas about the equational theory of substitutions. As in Beluga, we
mediate between specifications and computations via contextual types.
However, unlike Beluga, we can embed and use computations directly in
contextual objects and types, hence we allow the arbitrary mixing of
specifications and computations. Moreover, dependent types allow for
reasoning about proofs in addition to reasoning about LF
specifications. This resulted in the Orca language, with its prototype
implementation available at \url{https://github.com/orca-lang/orca}.





From an expressivity perspective, Orca and Beluga differ as when using the
former the user can reason about specifications \textbf{and}
computational functions. While the latter cannot reason about
computations and instead needs to specify the computations in a
relational style in LF.


The focus of this chapter is to describe the theory that is
implemented by the Orca language (where the prototype already
implements some extensions to the theory presented here) and to
describe some ideas for an appropriate surface language for this
theory. In particular, we will discuss how the prototype tries to
alleviate the mediation between the two main syntactic categories (the
specifications in LF and the reasoning language) by performing what we
call a box inference pass on the surface language. Much remains to be
done. As we discuss in the future work section a particularly
important next step is to work on the meta-theory of this language.
This will settle the answer to the question how to combine LF
specifications with dependent types, while this chapter just hints at
the answer.

\section{Example: Translating Boolean Types}\label{sec:tranbool}

Let's consider a simple example of translating between two languages
with a different representation of boolean types. Starting from a
simply typed $\lambda$-calculus with boolean type:

\begin{displaymath}
  \begin{array}{rlcl}
    \text{Types} & t & \bnfas & \code{bool} \bnfalt t \to t'\\
    \text{Expressions} & e & \bnfas & \code{tt} \bnfalt \code{ff} \bnfalt
                                \annif e {e'}{e''}\bnfalt \annlam x t e
                                \bnfalt \app e {e'} \bnfalt x\\
    \text{Context} & \Gamma & \bnfas & \cdot \bnfalt \Gamma,x\oft t
  \end{array}
\end{displaymath}

We want to translate into a language that lacks booleans, but instead
it has unit and sums:
\begin{displaymath}
  \begin{array}{rlcl}
    \text{Types} & T & \bnfas & \code{1} \bnfalt T + T'  \bnfalt T \to T'\\
    \text{Expressions} & E & \bnfas & \code{()} \bnfalt \anninl T E \bnfalt \anninr T E\\
    & & \bnfalt & (\anncase E {E'}{E''})\\
    & & \bnfalt & \annlam x T E \bnfalt \app E {E'} \bnfalt x\\
    \text{Context} & \Delta & \bnfas & \cdot \bnfalt \Delta,x\oft T
  \end{array}
\end{displaymath}

The typing for both languages is as one would expect. For completeness,
Figure~\ref{fig:srctyp} contains the typing rules for the source
language and Figure~\ref{fig:tgttyp} contains the typing for the target language.

The idea for the translation is that type \code{bool} from the source
language can be represented by type $\code{1} + \code{1}$ in the
target language, and that there is a mechanical way to translate
expressions in a way that preserves types.

\begin{figure}
  \centering
  \begin{displaymath}
    \begin{array}{c}
      \multicolumn{1}{l}{\boxed{\Gamma\vdash e \oft t} : \text{$e$ is of type $t$ in context $\Gamma$}}\vs

      \infer[\rl{t-tt}]{\Gamma\vdash \code{tt} \oft \code{bool}}{}

      \quad

      \infer[\rl{t-ff}]{\Gamma\vdash \code{ff} \oft \code{bool}}{}

      \vs

      \infer[\rl{t-if}]
      {\Gamma\vdash \annif e {e'}{e''} \oft t}
      {\Gamma\vdash e \oft \code{bool}
      & \Gamma\vdash e' \oft t
      & \Gamma\vdash e'' \oft t}

      \vs

      \infer[\rl{t-lam}]
      {\Gamma\vdash \annlam x t e \oft t \to t'}
      {\Gamma,x\oft t\vdash e \oft t'}

      \quad

      \infer[\rl{t-app}]
      {\Gamma\vdash \app e {e'} \oft t}
      {\Gamma\vdash e\oft t' \to t
      & \Gamma\vdash e' \oft t'}

      \vs

      \infer[\rl{t-var}] {\Gamma\vdash x \oft t} {x\oft t \in \Gamma}
    \end{array}
  \end{displaymath}
  \caption{The Typing of the Source Language}
  \label{fig:srctyp}
\end{figure}

\begin{figure}
  \centering
  \begin{displaymath}
    \begin{array}{c}
      \multicolumn{1}{l}{\boxed{\Delta\vdash E \oft T} : \text{$E$ is of type $T$ in context $\Delta$}}\vs

      \infer[\rl{T-unit}]{\Delta\vdash \code{()} \oft \code{1}}{}

      \quad

      \infer[\rl{T-var}] {\Delta\vdash x \oft T} {x\oft T \in \Delta}

      \vs

      \infer[\rl{T-inl}]
      {\Delta\vdash \anninl T E \oft T' + T}
      {\Delta\vdash E \oft T'}

      \quad

      \infer[\rl{T-inr}]
      {\Delta\vdash \anninr T E \oft T + T'}
      {\Delta\vdash E \oft T'}

      \vs

      \infer[\rl{T-case}]
      {\Delta\vdash \anncase E {E'}{E''} \oft T}
      {\Delta\vdash E \oft T' + T''
      & \Delta,x\oft T'\vdash E' \oft T
      & \Delta,x\oft T''\vdash E'' \oft T}

      \vs

      \infer[\rl{T-lam}]
      {\Delta\vdash \annlam x T E \oft T \to T'}
      {\Delta,x\oft T\vdash E \oft T'}

      \quad

      \infer[\rl{T-app}]
      {\Delta\vdash \app E {E'} \oft T}
      {\Delta\vdash E\oft T' \to T
      & \Delta\vdash E' \oft T'}
    \end{array}
  \end{displaymath}
  \caption{The Typing of the Target Language}
  \label{fig:tgttyp}
\end{figure}

The translation of types is a function from types in the source
language to types in the target language and it is defined as follows:

\begin{displaymath}
  \begin{array}{rcl}
    \tran{\code{bool}} & = & \code{1} + \code{1} \\
    \tran{t \to t'} & = & \tran t \to \tran {t'}
  \end{array}
\end{displaymath}

Finally, the translation of expressions depends on the translation of
types and it is as follows:
\begin{displaymath}
  \begin{array}{rcl}
    \tran{\code{tt}} & = & \anninl {\code{unit}} \code{()} \\
    \tran{\code{ff}} & = & \anninr {\code{unit}} \code{()} \\
    \tran{\annif e {e'} {e'}} & = & \anncasepr {\tran e} {\tran {e'}} {\tran {e''}} \\
    \tran{\annlam x t e} & = & \annlam x {\tran t} {\tran e} \\
    \tran{\app e {e'}} & = & \app {\tran e} {\tran {e'}} \\
    \tran{x} & = & x \\
  \end{array}
\end{displaymath}

This translation can be implemented as a function that computes the
expression in the target language. Moreover, it would be interesting to
establish some properties of the translation. One such property is the
fact that the translation is type preserving. Notice, that because
both languages have different types, we say that the idea of type
preservation is \emph{up-to} the notion of translated types.

Let's consider how to formalize this in Orca in the most direct way.
This is as opposed to formalizing an existing proof. The idea is to
minimize the lemmas and complexity of the formalization. To this end,
let's specify the source and the target languages in an intrinsically
typed way. That is, using dependent types in the logical framework LF
to easily restrict the expressions to those that are well-typed.

\lstset{language=Orca}
\begin{lstlisting}
spec s-tp : * where
| bool : s-tp
| arr : s-tp ->> s-tp ->> s-tp

spec s-exp : s-tp ->> * where
| app : (s : s-tp) ->> (t : s-tp) ->>
        s-exp (arr s t) ->> s-exp s ->> s-exp t
| lam : (s : s-tp) ->> (t : s-tp) ->>
        (s-exp s ->> s-exp t) ->> s-exp (arr s t)
| tt : s-exp bool
| ff : s-exp bool
| if : (t : s-tp) ->>
       s-exp bool ->> s-exp t ->> s-exp t ->> s-exp t
\end{lstlisting}

In this code, we define two new types families, first \lstinline!s-tp!
for types and for expressions \lstinline!s-exp!. Notice how the
expressions form a type family where expressions are indexed by their
types (i.e.: the function from \lstinline!s-tp! to \lstinline!*! that
represents the base kind for LF specifications). The declaration of
constructors is straightforward even if the lack of implicit
parameters makes it verbose. Specifically, applications take two
expressions, the first one a function from \lstinline!s! to
\lstinline!t! and a parameter of appropriate type to produce an
expression of type \lstinline!t!. The constructor \lstinline!lam!
builds an abstraction from an expression with one bound variable
(using HOAS we represent the binder using the LF function space) to
produce an expression of function type. Then the two values of boolean
type, and finally expressions of boolean type are eliminated with
\lstinline!if! expressions. It is straightforward to see how these
definitions follow from the typing rules in Figure~\ref{fig:srctyp}.
Notice how, for LF specifications (declared with \lstinline!spec!), we
use a special function space \lstinline!(x : $\alpha$) ->> $\beta$! to
indicate the LF function space. Also, as it is commonly done, for non
dependent functions we use \lstinline!$\alpha$ ->> $\beta$! when
$\alpha$ does not occur in $\beta$.

\begin{lstlisting}
spec t-tp : * where
| tunit : t-tp
| tsum : t-tp ->> t-tp ->> t-tp
| tarr : t-tp ->> t-tp ->> t-tp

spec t-exp : t-tp ->> * where
| tapp : (s : t-tp) ->> (t : t-tp) ->>
         t-exp (tarr s t) ->> t-exp s ->> t-exp t
| tlam : (s : t-tp) ->> (t : t-tp) ->>
         (t-exp s ->> t-exp t) ->> t-exp (tarr s t)
| tone : t-exp tunit
| tinl : (s : t-tp) ->> (t : t-tp) ->>
         t-exp s ->> t-exp (tsum s t)
| tinr : (s : t-tp) ->> (t : t-tp) ->>
         t-exp t ->> t-exp (tsum s t)
| tcase : (s : t-tp) ->> (t : t-tp) ->> (r : t-tp) ->>
          t-exp (tsum s t) ->> (t-exp s ->> t-exp r) ->>
                             (t-exp t ->> t-exp r) ->>
          t-exp r
\end{lstlisting}

The target language is similar to the source language, and it is
represented in a analogous way. Notice, how we model the binders in
the branches of the case expression with the logical framework's
function space.

The translation of types is represented with a function from source
types to target types. We use the keyword \lstinline!def! to introduce
a top level function definition with pattern matching. It is
implemented as follows:
\begin{lstlisting}
def tran-tp : (|- s-tp) -> (|- t-tp) where
| bool => tsum tunit tunit
| (arr s t) => tarr (tran-tp s) (tran-tp t)
\end{lstlisting}

First notice the type specification after the colon in the first line
where the turn-styles specify that these are specification types and
that they are closed types (after all these two languages types have
no binders in them). Then the implementation of the function goes by
pattern matching on the argument and producing the appropriate target
type according to the definition. In Orca, definitions are elaborated
in a type directed way to allow the user to omit the boxes when
possible. This is explained further when we describe the prototype in
Section~\ref{sec:orcaprototype} and we show the result of the
reconstruction of boxes for the translation function in
Figure~\ref{fig:tranelab}.

We proceed similarly to implement the translation of expressions.
However, let's consider the type for the translation function first. A
first idea could be:
\begin{lstlisting}
def tran : (st : |- s-tp) -> (|- s-exp st) ->
             (|- t-exp (tran-tp st)) where
...
\end{lstlisting}

This type indicates that given a closed source expression of type
\lstinline!st! we can produce a closed target expression whose type
will be the translation of type \lstinline!st!. Notice how this is a
place we interleave specifications and computations as the resulting
expression's type is the result of actually calling the function
\lstinline!tran-tp! on the source expression. In a system without
computation in types like Beluga, we need to implement this using a
relation and this requires a couple of lemmas to show that the
relation is actually a function. As a reference
Appendix~\ref{chp:belugatran} contains a Beluga implementation of this
example. However, this type is not strong enough as the recursive
calls may go under binders. Thus we need to be able to translate open
expressions. We need to generalize this type to arbitrary contexts and
add a context relation that expresses that the source context and
target context grow in unison. Context relations are required because
so far, Orca contexts can contain assumptions of any kind. For this we
define a binary relation on contexts as follows:
\begin{lstlisting}
data rel: ctx -> ctx -> set where
| empty : rel 0 0
| cons : (g h : ctx) (t : |- s-tp) ->
         rel g h ->
         rel (g, x: s-exp t)
             (h, y: t-exp (tran-tp t))
\end{lstlisting}

In this relation the \lstinline!empty! constructor states that empty
contexts are related, and the \lstinline!cons! constructor shows that
if we have two related contexts we can extend both contexts with a
fresh related assumption.

\filbreak
With all this in place, we can write the translation function:
\begin{lstlisting}
def tran : (g h : ctx) -> rel g h ->
           (st : |- s-tp) -> (g |- s-tm st) ->
           (h |- t-tm (tran-tp st))
where
| g h r t (app s[^] .t m n) =>
    tapp (tran-tp s) (tran-tp t)
         (tran g h r (arr s t) m) (tran g h r s n)
| g h r (arr s[^] t[^]) (lam .s .t m) =>
    tlam (tran-tp s) (tran-tp t)
         (\x. tran (g,x:s-tm s) (h,x:t-tm (tran-tp s))
                   (cons g h s r) t (m x))
| g h r bool tt =>
    tinl tunit tunit tone
| g h r bool ff =>
    tinr tunit tunit tone
| g h r t (if .t b e1 e2) =>
    tcase tunit tunit (tran-tp t) (tran g h r bool b)
                      (\x. tran g h r t e1)
                      (\x. tran g h r t e2)
(* when translating variables it is either the top variable *)
| ._ ._ (cons g h .t r) t (g, x: s-tm t :> x) =>
    (h, x: t-tm (tran-tp t) :> x)
(* or a variable deeper in the context (hence the shift) *)
| ._ ._ (cons g h t r) s (v[^1]) =>
    tran g h r s v
\end{lstlisting}

The function \lstinline!tran! has type:
\begin{lstlisting}
(g h : ctx) -> rel g h -> (st : |- s-tp) -> <<(g |- s-exp st)>> ->
     <<(h |- t-exp (tran-tp st))>>
\end{lstlisting}

This function transforms expressions of type:
\lstinline!<<(g |- s-exp st)>>! into expressions of type
\lstinline!<<(h |- t-exp (tran-tp st))>>!, where it computes the type
of the source expression by embedding the computation
\lstinline!tran-tp st! in the resulting type. The computation proceeds
by pattern matching, in the patterns we write \lstinline!._! for
inaccessible patterns that the implementation fills in automatically.
Inaccessible patterns are a concept from Agda that simplifies
implementing pattern matching, their value is forced by the other
variables, for example in the pattern for lambda expressions
\lstinline!g h r (arr s[^] t[^]) (lam ._ ._ m)! the two inaccessible
patterns are forced by the arrow type to be \lstinline!s! and
\lstinline!t! respectively. In the same pattern, we want to indicate
that the types of expressions are closed, we annotate \lstinline!s!
and \lstinline!t! with the empty substitution (i.e.:\lstinline![^]!)
that prevents them from using variables from the context. We build
contextual types by combining a term with a context (e.g.:
\lstinline!g |- s-exp st!), for contextual terms we use a different
syntax to make it easier to disambiguate between types and terms,
contexts and terms are combined in the following way:
\lstinline!g, x: s-tm t :> x!.

This is a simple example, but even in its simplicity it showcases the
power of the Orca language. Even if the Orca prototype does not offer many of
the features present in Beluga, this example shows a hint to what
is coming in the future. In the next section, we peek behind the
curtains into Orca's type theory.


\section{Orca's Core Calculus}\label{sec:coreorca}

Orca's core calculus is composed of two separate languages, a
\emph{reasoning} and \emph{computation} language on one side and a
\emph{specification} language on the other side. We refer to the
former as the computation language when we want to emphasize that we
can write programs with it, and as the reasoning language when trying
to emphasize that we can write proofs with it.

The syntax for the reasoning language is a fully dependently typed
theory, extended with contextual types and terms (in red in the
syntax), it is exactly as one would expect a type theory to be. As is
the case with Agda~\citep{Norell:phd07} and Idris~\citep{Brady:JFP13},
we do not have recursion in the calculus, but we do top-level
recursive definitions that allow for defining inductive functions. We
say contextual objects are boxed specifications as a reference to the
contextual modality in contextual modal type
theory~\citep{Nanevski:ICML05}. The type for contexts is a
simplification that avoids discussing classifiers for contexts (i.e.:
context schemas). This is a limitation of the current presentation of
the theory, and it makes discussing totality and coverage difficult.
In short, it would be necessary to add schemas to be able to describe
a coverage checking algorithm.

\begin{displaymath}
  \begin{array}{rlcll}
    \multicolumn{5}{c}{\boxed{\text{Reasoning Language}}}\vs
    \text{Terms} & E, S, T & \bnfas & \set n & \text{A universe of level $n$}\\
    & & \bnfalt & \opit x S T & \text{A function type}\\
    & & \bnfalt & \elam x E & \text{An abstraction} \\
    & & \bnfalt & \eapp E {E'} & \text{An application} \\
    & & \bnfalt & x & \text{A variable} \\
    & & \bnfalt & \aconst c & \text{A constant} \\
    & & \bnfalt & \ebox \Psi M & \text{\color{red}A contextual term}\\
    & & \bnfalt & \etbox \Psi \alpha & \text{\color{red}A contextual type}\\
    & & \bnfalt & \ctx & \text{\color{red}The type of contexts}
  \end{array}
\end{displaymath}

Conversely, the specification language is a version of the logical
framework LF, but instead of meta-variables, it supports a direct
embedding of computations together with a substitution, to unbox the
result of computation into a specification object. Of course, the
computation needs to produce a result of specification type.

\begin{displaymath}
  \begin{array}{rlcll}
    \multicolumn{5}{c}{\boxed{\text{Specification Language}}}\vs

    \text{Kinds} & K & \bnfas & \lfkind & \text{The base kind}\\
    & & \bnfalt & \opits x \alpha K & \text{A kind family} \vs

    \text{Families} & \alpha, \beta & \bnfas & \aconst a & \text{A base type} \\
    & & \bnfalt & \opits x \alpha \beta & \text{A type family} \\
    & & \bnfalt & \app \alpha M & \text{A family instance}\vs

    \text{Objects} & M, N & \bnfas & \alam x M & \text{An abstraction}\\
    & & \bnfalt & \aapp M N & \text{An application} \\
    & & \bnfalt & \avar x & \text{A variable} \\
    & & \bnfalt & \sconst c & \text{A constant} \\
    & & \bnfalt & \aunbox E \sigma & \text{\color{red} An unboxed computation}\vs

    \text{Substitutions} & \sigma & \bnfas & \aempty & \text{An empty substitution} \\
    & & \bnfalt & \naid n & \text{An identity substitution,}\\
    & & & & \text{with weakening} \\
    & & \bnfalt & \adot \sigma x M & \text{A substitution extension} \vs

    \text{Context} & \Psi & \bnfas & \aectx & \text{Empty context} \\
    & & \bnfalt & \acunbox x & \text{\color{red} An unboxed context variable} \\
    & & \bnfalt & \asnoc \Psi x \alpha & \text{A context extension}\vs

    \text{Erased Context} & \hat\Psi & \bnfas & \aectx & \text{An empty context}\\
    & & \bnfalt & \acunbox x & \text{\color{red} An unboxed context variable}\\
    & & \bnfalt & \psnoc{\hat\Psi} x &  \text{A variable declaration}
  \end{array}
\end{displaymath}

Finally, we define contexts for computational variables, that as in
Chapter~\ref{chp:babybel} will be the meta-variables for incomplete
objects, and a signature to define types and constants for computation
terms and for the specification language.

\begin{displaymath}
  \begin{array}{rlcll}
    \multicolumn{5}{c}{\boxed{\text{Signature and Computational Context}}}\vs

    \text{Context} & \Gamma & \bnfas & \cdot & \text{Empty context}\\
    & & \bnfalt & \Gamma,x\oft T & \text{An assumption} \vs

    \text{Signature} & \Sigma & \bnfas & \cdot & \text{Empty signature}\\
    & & \bnfalt & \Sigma,\const c \oft T & \text{A comp. constant} \\
    & & \bnfalt & \Sigma,\const a \oft K & \text{A spec. type} \\
    & & \bnfalt & \sigma,\sconst c \oft \alpha & \text{A spec. constant} \\
  \end{array}
\end{displaymath}

We present a Martin-L\"of style type theory with an infinite hierarchy
of universes extended with contextual types $[\Psi \vdash \alpha]$
which represent a specification type $\alpha$ in an open context
$\Psi$. Specification types are classified by a single universe
$\star$, together with an intensional function space
$\opits x \alpha \beta$ and constants. Substitutions can be an empty
substitution $\aempty$ which weakens closed objects, an identity
substitution $\naid n$ that weakens the context with $n$ elements (we
write $\aid = \naid 0$ as syntactic sugar), or a substitution extended
with an object for a variable $\adot \sigma x M$. Contexts are either
empty $\aectx$ or a context extended with a new assumption
$\asnoc \Psi x A$.

Type theory terms and specification objects are typed with two
different judgments $\Gamma \vdash E : T$ and
$\Gamma; \Psi \vdash M: \beta$, respectively. For instance, we type
the type theory functions and specifications functions in the
following way:
\begin{displaymath}
  \begin{array}{c}
    \infer[\rl{t-fun}]
    {\Gamma\vdash\elam x E \oft \opit x S T}
    {\Gamma,x\oft S\vdash E \oft T}

\quad

    \infer[\rl{s-lam}]
    {\Gamma;\Psi\vdash \alam x M \oft \opits x \alpha \beta}
    {\Gamma;\Psi,\widehat{x}\oft\alpha \vdash M \oft \beta}
  \end{array}
\end{displaymath}

The two function spaces differ by the contexts they act on and that
the computational function space is extensional (i.e.: we can compute
with these functions) while the specification function space is
intensional (i.e.: we can inspect these functions with pattern
matching) in the sense of \citet{Pfenning:LICS01}. Type theory
$\lambda$-abstractions introduce variables in the computational
context $\Gamma$ while specification $\widehat\lambda$-abstractions
use the specification context $\Psi$. The $\beta$-rule for each
$\lambda$-abstraction uses its own substitution operation for its
corresponding class of variables denoted respectively $\{\theta\}E$
and $[\sigma] M$. The definition for the substitution operations is
presented in Figures~\ref{fig:compsubst}, \ref{fig:compsubstspecs},
\ref{fig:compsubstctx}, and \ref{fig:lfsubstspecs}. This is in
contrast to $\aunbox E\sigma$ for the closure that embeds a
computation of boxed type into a specification and is defined by the
following introduction rule:
\[
  \infer[\rl{s-clo}]
  {\Gamma;\Psi\vdash \aunbox E \sigma \oft [\sigma]\alpha}
  {\Gamma;\Psi\vdash \sigma \oft \Phi & \Gamma\vdash E \oft \etbox \Phi \alpha}
\]

\begin{figure}
  \centering
  \begin{displaymath}
    \begin{array}{lcl}
      \multicolumn{3}{c}{\boxed{\text{Substitution in the reasoning language}}} \vs
      \essub\theta {\set n} & = & \set n\\
      \essub\theta {\opit x S T} & = & \opit x {\essub\theta S} {\essub{\edot\theta x x} T} \text{ for $x$ free in $\theta$} \\
      \essub\theta {\elam x E} & = & \elam x {(\essub{\edot\theta x x} E)} \text{ for $x$ free in $\theta$}\\
      \essub\theta {\eapp E {E'}} & = & \eapp {(\essub\theta E)} {(\essub\theta {E'})}\\
      \essub\theta {x} & = & E \text{ if } x \easub E \text{ is in $\theta$}\\
      \essub\theta {x} & = & x \text{ if } x \text{ is not in $\theta$}\\
      \essub\theta {\aconst c} & = & \aconst c\\
      \essub\theta {\ebox \Psi M} & = & [{\essub\theta{\hat\Psi}}\mathrel{\etts} {\essub\theta M}]\\
      \essub\theta {\etbox \Psi \alpha} & = & \etbox {\essub\theta \Psi} {\essub\theta\alpha}
    \end{array}
  \end{displaymath}
  \caption{Computation Substitution}
  \label{fig:compsubst}
\end{figure}

\begin{figure}
  \centering
  \begin{displaymath}
    \begin{array}{lcl}
      \multicolumn{3}{c}{\boxed{\text{Substitution in objects of the specification language}}} \vs
      \essub\theta {\alam x M} & = & \alam x {(\essub\theta M)}\\
      \essub\theta {\aapp M N} & = & \aapp {(\essub\theta M)} {(\essub\theta N)}\\
      \essub\theta {\avar x} & = & \avar x\\
      \essub\theta {\sconst c} & = & \sconst c\\
      \essub\theta {\aunbox E \sigma} & = & \aunbox {(\essub\theta E)}{(\essub\theta\sigma)}\vs
      \multicolumn{3}{c}{\boxed{\text{Substitution in families of the specification language}}} \vs
      \essub\theta {\aconst a} & = & \aconst a\\
      \essub\theta {\opits x \alpha \beta} & = & \opits x {(\essub\theta\alpha)} {(\essub\theta\beta)}\\
      \essub\theta {\app \alpha M} & = & \app {(\essub\theta\alpha)}{(\essub\theta M)} \\
    \end{array}
  \end{displaymath}
  \caption{Computation Substitution in Specifications}
  \label{fig:compsubstspecs}
\end{figure}

\begin{figure}
  \centering
  \begin{displaymath}
    \begin{array}{lcl}
      \multicolumn{3}{c}{\boxed{\text{Substitution in the specification contexts}}} \vs
      \essub\theta {\aectx} & = & \aectx\\
      \essub\theta {\asnoc\Psi x \alpha} & = & \asnoc {(\essub\theta\Psi)} x {(\essub\theta\alpha)}\vs
      \multicolumn{3}{c}{\boxed{\text{Substitution in the specification substitutions}}} \vs
      \essub\theta {\aempty} & = & \aempty\\
      \essub\theta {\naid n} & = & \naid n\\
      \essub\theta {\adot \sigma x M} & = & \adot {(\essub\theta\sigma)} x {(\essub\theta M)}\\
    \end{array}
  \end{displaymath}
  \caption{Computation Substitution in Contexts and Substitutions}
  \label{fig:compsubstctx}
\end{figure}

\begin{figure}
  \centering
  \begin{displaymath}
    \begin{array}{lcl}
      \multicolumn{3}{c}{\boxed{\text{LF Substitution in objects}}} \vs
      \etsub\sigma {\alam x M} & = & \alam x {(\etsub{\adot\sigma x {\avar x}} M)}\\
      \etsub\sigma {\aapp M N} & = & \aapp {(\etsub\sigma M)} {(\etsub\sigma N)}\\
      \etsub\sigma {\avar x} & = & M \quad\quad \text{ if $\avar x \easub M$ is in $\sigma$}\\
      \etsub\sigma {\avar x} & = & \avar x \quad\quad \text{ if $\avar x$ is not in $\sigma$}\\
      \etsub\sigma {\sconst c} & = & \sconst c\\
      \etsub\sigma {\aunbox E {\sigma'}} & = & \aunbox {E}{(\etsub\sigma{\sigma'})}\vs
      \multicolumn{3}{c}{\boxed{\text{LF Substitution in families}}} \vs
      \etsub\sigma {\aconst a} & = & \aconst a\\
      \etsub\sigma {\opits x \alpha \beta} & = & \opits x {(\etsub\sigma\alpha)} {(\etsub{\adot\sigma x {\avar x}}\beta)}
                                                 \text{ with $\avar x$ fresh in $\sigma$}\\
      \etsub\sigma {\app \alpha M} & = & \app {(\etsub\sigma\alpha)}{(\etsub\sigma M)} \vs
      \multicolumn{3}{c}{\boxed{\text{LF Substitution in kinds}}} \vs
      \etsub\sigma {\lfkind} & = & \lfkind \\
      \etsub\sigma {\opits x \alpha K} & = & \opits x {\etsub\sigma\alpha} {(\etsub{\adot\sigma x {\avar x}} K)}
                                             \text{ with $\avar x$ fresh in $\sigma$}\\
    \end{array}
  \end{displaymath}
  \caption{Specification Substitution}
  \label{fig:lfsubstspecs}
\end{figure}

In addition to the description of the term calculus we present a
prototype implementation\footnote{Available
  at:\url{http://github.com/orca-lang/orca}} for our type theory which
supplements the calculus with recursive functions and Agda-style
dependent pattern matching \citep{Norell:phd07} extended to allow
matching on specifications including specification-level
$\lambda$-abstractions and thus abstracting over binders.






In conclusion, we present a theory that allows for embedding
contextual LF specifications into a fully dependently typed language
that simplifies proofs about structures with syntactic binders (such
as programming languages and logics). Moreover, we have a prototype,
the Orca system, in which we implement some example proofs.

\subsection{The Typing of Programs}

\begin{figure}
  \centering
\begin{displaymath}
  \begin{array}{c}
    \multicolumn{1}{l}
    {\boxed{\Gamma\vdash E \oft T}\mathrel{:}\mbox{$E$ is of type $T$ in context $\Gamma$}}\vs

    \infer[\rl{t-set}]
    {\Gamma \vdash \set n \oft \set {(n+1)}}
    {}

    \quad

    \infer[\rl{t-var}]
    {\Gamma\vdash x \oft T}
    {x\oft T \in \Gamma}

    \quad

    \infer[\rl{t-con}]
    {\Gamma\vdash \aconst c \oft T}
    {\aconst c \oft T \in \Sigma}

    \vs

   \infer[\rl{t-pi-1}]
    {\Gamma\vdash \opit x S T \oft \set 0}
    {\Gamma\vdash S \oft \set n & \Gamma,x\oft S\vdash T \oft \set 0}

   \vs

   \infer[\rl{t-pi-2}]
    {\Gamma\vdash \opit x S T \oft \set {(\text{max}\ n\ (m+1))}}
    {\Gamma\vdash S \oft \set n & \Gamma,x\oft S\vdash T \oft \set {(m+1)}}

   \vs

    \infer[\rl{t-fun}]
    {\Gamma\vdash\elam x E \oft \opit x S T}
    {\Gamma,x\oft S\vdash E \oft T}

    \quad

    \infer[\rl{t-app}]
    {\Gamma\vdash \eapp E {E'} \oft \sesub {E'} x T}
    {\Gamma\vdash E \oft \opit x S T & \Gamma\vdash E' \oft S}

   \vs

    \infer[\rl{t-conv}]
    {\Gamma\vdash E \oft T}
    {\Gamma\vdash E \oft S & \Gamma\vdash S \equiv T}

    \vs

    \infer[\rl{t-box}^*]
    {\Gamma\vdash \ebox \Psi M \oft \etbox \Psi \alpha}
    {\Gamma\vdash \Psi \isctx & \Gamma;\Psi\vdash M \oft \alpha}

    \vs

    \infer[\rl{t-spec}^*]
    {\Gamma\vdash \etbox \Psi \alpha \oft \set 0}
    {\Gamma\vdash \Psi \isctx & \Gamma;\Psi\vdash \alpha \oft \lfkind}
  \end{array}
\end{displaymath}
  \caption{Typing for Computations}
  \label{fig:ocomptyp}
\end{figure}

Type checking depends on these mutually dependent typing judgments:
\begin{itemize}
\item $\boxed{\Gamma\vdash E \oft T}$ : $E$ is of type $T$ in context $\Gamma$.
\item $\boxed{\Gamma\vdash \Psi \isctx}$ : $\Psi$ is a valid specification context in ctx. $\Gamma$.
\item $\boxed{\Gamma;\Psi\vdash \alpha \iskind}$ : $\alpha$ is a syntactic type in ctx. $\Gamma$ and spec. ctx. $\Psi$.
\item $\boxed{\Gamma;\Psi\vdash \alpha \oft K}$ : $\alpha$ is a syntactic type in ctx. $\Gamma$ and spec. ctx. $\Psi$.
\item $\boxed{\Gamma;\Psi\vdash M \oft \alpha}$ : $M$ is of type $\alpha$ in ctx. $\Gamma$ and spec. ctx. $\Psi$.
\item $\boxed{\Gamma;\Psi\vdash \sigma \oft \Phi}$ : $\sigma$ transports types from spec. context $\Phi$ to ctx. $\Psi$.
\end{itemize}
and these definitional equality judgments:
\begin{itemize}
\item $\boxed{\Gamma\vdash E \equiv E' \oft T}$ : $E$ is equal to $E'$
  at type $T$ in context $\Gamma$.

\item $\boxed{\Gamma;\Psi\vdash M \equiv N \oft \alpha}$ : $M$ is
  equal to $N$ at type $\alpha$ in contexts $\Gamma$ and $\Psi$.

\item $\boxed{\Gamma;\Psi\vdash K\equiv K'}$ : $K$ is equal to $K'$ in
  contexts $\Gamma$ and $\Psi$.

\item $\boxed{\Gamma;\Psi\vdash \alpha\equiv\beta \oft K}$ : $\alpha$
  is equal to $\beta$ at kind $K$ in contexts $\Gamma$ and
  $\Psi$.

\item $\boxed{\Gamma\vdash \Psi \equiv \Phi \oft \ctx}$ : $\Psi$ is
  equal to $\Phi$ in context $\Gamma$.

\item $\boxed{\Gamma\vdash \hat\Psi \equiv \hat\Phi \oft \ctx}$ :
    Erased context $\hat\Psi$ is equal to $\hat\Phi$ in context $\Gamma$.

\end{itemize}

The typing for computations appears in Figure~\ref{fig:ocomptyp} and
the typing for specifications appears in Figures~\ref{fig:ospectypI}
and~\ref{fig:ospectypII}. While rules for computational equality
appear in Figure~\ref{fig:oeqrulescomp} and for specifications in
Figures~\ref{fig:oeqrulesspec}, \ref{fig:oeqtypekind}, and
\ref{fig:oeqctx}. Rules for equivalences and congruence rules are
omitted as they are as expected.

Most rules are as expected, however the interesting ones appear with a
star in their names. These rules, reproduced below deal with the
embedding of specifications in computations and vice versa.
\begin{displaymath}
  \begin{array}{c}
    \infer[\rl{t-box}^*]
    {\Gamma\vdash \ebox \Psi M \oft \etbox \Psi \alpha}
    {\Gamma\vdash \Psi \isctx & \Gamma;\Psi\vdash M \oft \alpha}

    \vs

    \infer[\rl{s-unbox}^*]
    {\Gamma;\Psi\vdash \aunbox E \sigma \oft \etsub \sigma \alpha}
    {\Gamma;\Psi\vdash \sigma \oft \Phi & \Gamma\vdash E \oft \etbox \Phi \alpha}
  \end{array}
\end{displaymath}

The rule \rl{t-box} embeds an object together with its context in a
computation (of contextual type). And the \rl{s-unbox} rule allows
specifications to refer to computations, in particular this is
essential to represent incomplete terms where the role of
meta-variables is taken by computational variables inside an unbox
construction.

\begin{figure}
  \centering
\begin{displaymath}
  \begin{array}{c}
    \multicolumn{1}{l}
    {\boxed{\Gamma\vdash \Psi \isctx}\mathrel{:}\mbox{$\Psi$ is a valid specification context in ctx. $\Gamma$.}}\vs

    \infer[\rl{c-empty}]
    {\Gamma\vdash \aectx \isctx}
    {}

    \quad

    \infer[\rl{c-hyp}]
    {\Gamma\vdash \asnoc \Psi x \alpha \isctx}
    {\Gamma\vdash \Psi \isctx & \Gamma;\Psi\vdash \alpha\oft \lfkind}

    \vs

    \infer[\rl{c-ctx-var}]
    {\Gamma \vdash \acunbox x \isctx}
    {x\oft \ctx  \in \Gamma}

    \vs

    \multicolumn{1}{l}
    {\boxed{\Gamma\vdash \hat\Psi \isectx}\mathrel{:}\mbox{$\hat\Psi$ is a valid erased context in ctx. $\Gamma$.}}\vs

    \infer[\rl{cp-empty}]
    {\Gamma\vdash \aectx \isectx}
    {}

    \quad

    \infer[\rl{cp-hyp}]
    {\Gamma\vdash \psnoc \Psi x \isectx}
    {\Gamma\vdash \Psi \isectx}

    \vs

    \infer[\rl{cp-ctx-var}]
    {\Gamma \vdash \acunbox x \isectx}
    {x \oft \ctx \in \Gamma}

    \vs

    \multicolumn{1}{l}
    {\boxed{\Gamma;\Psi\vdash K \iskind}\mathrel{:}\mbox{$\alpha$ is a syntactic kind in ctx. $\Gamma$ and spec. ctx. $\Psi$.}}\vs

   \infer[\rl{s-kind}]
    {\Gamma;\Psi \vdash \lfkind \iskind }
    {}

    \vs

    \infer[\rl{s-pi-k}]
    {\Gamma;\Psi \vdash \opits x \alpha K \iskind}
    {\Gamma;\Psi \vdash \alpha \oft \lfkind & \Gamma;\Psi,x \oft \alpha\vdash K \iskind}

    \vs

    \multicolumn{1}{l}
    {\boxed{\Gamma;\Psi\vdash \alpha \oft K}\mathrel{:}\mbox{$\alpha$ is a syntactic type in ctx. $\Gamma$ and spec. ctx. $\Psi$.}}\vs

    \infer[\rl{s-base}]
    {\Gamma;\Psi\vdash \const a \oft K}
    {\const a \oft K \in \Sigma}

    \quad

    \infer[\rl{s-fam}]
    {\Gamma;\Psi\vdash \opits x \alpha \beta \oft \lfkind}
    {\Gamma;\Psi\vdash \alpha \oft \lfkind & \Gamma; \asnoc \Psi x \alpha\vdash \beta \oft \lfkind}

    \vs

    \infer[\rl{s-ins}]
    {\Gamma;\Psi\vdash \app \alpha M \oft \esub M x K}
    {\Gamma;\Psi\vdash \alpha\oft \opits x \beta K & \Gamma;\Psi\vdash M \oft \beta}

    \vs

    \infer[\rl{s-eqk}]
    {\Gamma;\Psi\vdash \alpha\oft K}
    {\Gamma;\Psi\vdash \alpha\oft K' & \Gamma;\Psi\vdash K \equiv K'}
  \end{array}
\end{displaymath}
  \caption{Typing Rules for Specifications (I)}
  \label{fig:ospectypI}
\end{figure}

\begin{figure}
  \centering
\begin{displaymath}
  \begin{array}{c}
    \multicolumn{1}{l}
    {\boxed{\Gamma;\Psi\vdash M \oft \alpha}\mathrel{:}\mbox{$M$ is of type $\alpha$ in ctx. $\Gamma$ and spec. ctx. $\Psi$.}}\vs

    \infer[\rl{s-lam}]
    {\Gamma;\Psi\vdash \alam x M \oft \opits x \alpha \beta}
    {\Gamma;\asnoc \Psi x \alpha \vdash M \oft \beta}

    \vs

    \infer[\rl{s-app}]
    {\Gamma;\Psi\vdash \aapp M N \oft \esub N x \beta}
    {\Gamma;\Psi\vdash M \oft \opits x \alpha \beta & \Gamma;\Psi\vdash N \oft \alpha}

    \vs

    \infer[\rl{s-var}]
    {\Gamma;\Psi\vdash \avar x\oft \alpha}
    {\avar x\oft \alpha \in \Psi}

    \quad

    \infer[\rl{s-con}]
    {\Gamma;\Psi\vdash \sconst c \oft \alpha}
    {\sconst c \oft \alpha \in \Sigma}

    \vs

    \infer[\rl{s-unbox}^*]
    {\Gamma;\Psi\vdash \aunbox E \sigma \oft \etsub \sigma \alpha}
    {\Gamma;\Psi\vdash \sigma \oft \Phi & \Gamma\vdash E \oft \etbox \Phi \alpha}

    \vs

    \infer[\rl{s-eq}]
    {\Gamma;\Psi\vdash M \oft \alpha}
    {\Gamma;\Psi\vdash M\oft \alpha' & \Gamma;\Psi\vdash \alpha \equiv \alpha'\oft \lfkind}

    \vs

    \multicolumn{1}{l}
    {\boxed{\Gamma;\Psi\vdash \sigma \oft \Phi}\mathrel{:}\mbox{$\sigma$ transports types from spec. context $\Phi$ to ctx. $\Psi$.}}\vs

    \infer[\rl{s-empty}]
    {\Gamma;\Psi\vdash \aempty \oft \aectx}
    {}

    \quad

    \infer[\rl{s-id}]
    {\Gamma;\Psi,\Psi'\vdash \naid n \oft \Psi}
    {|\Psi'| =  n}

    \vs

    \infer[\rl{s-ex}]
    {\Gamma;\Psi\vdash \adot \sigma x M \oft \asnoc \Phi x \alpha}
    {\Gamma;\Psi\vdash \sigma \oft\Phi & \Gamma;\Psi\vdash M \oft [\sigma] \alpha}
  \end{array}
\end{displaymath}
  \caption{Typing Rules for Specifications (II)}
  \label{fig:ospectypII}
\end{figure}

\subsection {Equality}

Due to the presence of fully dependent types Orca's computational
language does not have a distinction between type checking and
evaluation (as evaluation may happen at type check time). As usual,
the conversion rule (rule \rl{t-conv} in Figure~\ref{fig:ocomptyp}) is
where there may be some computation to compare two types, that may
not be yet in normal form. Equality (and computation), as typing, is
shown in several judgments, one for the computational/reasoning
language and one for each syntactic category of the specification
language.

Again equality is defined in Figure~\ref{fig:oeqrulescomp} for
computations and for specifications in Figures~\ref{fig:oeqrulesspec},
\ref{fig:oeqtypekind}, and \ref{fig:oeqctx}.

\begin{figure}
  \centering
\begin{displaymath}
  \begin{array}{c}
    \multicolumn{1}{l}
    {\boxed{\Gamma\vdash E \equiv E' \oft T}\mathrel{:}\mbox{$E$ is equal to $E'$ at type $T$ in context $\Gamma$}.}\vs

    \infer[\rl{e-beta}]
    {\Gamma\vdash \eapp {(\elam x E)} E' \equiv \essub {x \easub {E'}} E \oft \essub {x \easub {E'}} T}
    {\Gamma,x\oft S \vdash E\oft T
   & \Gamma\vdash E' \oft S }

     \vs

     \infer[\rl{e-box-unbox}]
     {\Gamma\vdash \ebox \Psi {\aunbox E \aid} \equiv E \oft \etbox \Psi \alpha}
     {\Gamma\vdash E \oft \etbox \Psi\alpha}

     \vs

     \infer[\rl{e-box-1}]
     {\Gamma\vdash \ebox\Psi M \equiv \ebox\Phi{M'} \oft \etbox \Psi \alpha}
     {\Gamma\vdash\hat\Psi\equiv\hat\Phi\oft\ctx & \Gamma;\Psi\vdash M \equiv M' \oft \alpha}

     \vs

    \infer[\rl{e-box-2}]
    {\Gamma\vdash \etbox \Psi \alpha \equiv \etbox \Phi \beta \oft \set 0}
    {\Gamma\vdash \Psi \equiv \Phi \oft \ctx
   & \Gamma;\Psi\vdash \alpha \equiv \beta \oft \lfkind}




  \end{array}
\end{displaymath}
  \caption{Equality Rules for Computations}
  \label{fig:oeqrulescomp}
\end{figure}

\begin{figure}
  \centering
\begin{displaymath}
  \begin{array}{c}
    \multicolumn{1}{l}
    {\boxed{\Gamma;\Psi\vdash M \equiv N \oft \alpha}\mathrel{:}
        \mbox{$M$ is equal to $N$ at type $\alpha$ in contexts $\Gamma$ and $\Psi$}.}\vs

     \infer[\rl{es-beta}]
     {\Gamma;\Psi\vdash \aapp{(\alam x M)} N \equiv (\{\hat x \easub N\}M)  \oft (\{\hat x\easub N\}\beta)}
     {\Gamma;\asnoc \Psi x \alpha \vdash M \oft \beta
     & \Gamma;\Psi \vdash N \oft \alpha}

     \vs

    \infer[\rl{es-eta}]
    {\Gamma;\Psi\vdash \alam x {\app M x}  \equiv M \oft \opits x \alpha \beta}
    {\Gamma;\Psi\vdash M\oft \opits x \alpha \beta}

    \vs

     \infer[\rl{es-unbox}]
     {\Gamma ; \Psi \vdash \aunbox{\ebox \Phi M}{\sigma} \equiv \etsub \sigma M \oft \etsub \sigma \alpha}
     {\Gamma ; \Psi \vdash \sigma : \Phi & \Gamma ; \Phi \vdash M : \alpha}
  \end{array}
\end{displaymath}
  \caption{Equality Rules for Specification Terms}
  \label{fig:oeqrulesspec}
\end{figure}

\begin{figure}
  \centering
\begin{displaymath}
  \begin{array}{c}
    \multicolumn{1}{l}
    {\boxed{\Gamma;\Psi\vdash K\equiv K'}\mathrel{:}
        \mbox{$K$ is equal to $K'$ in contexts $\Gamma$ and $\Psi$}.}\vs

    \infer[\rl{ek-base}]
    {\Gamma;\Psi\vdash \lfkind\equiv \lfkind}
    {}

    \quad

    \infer[\rl{}]
    {\Gamma;\Psi\vdash \opits x \alpha K \equiv \opits x \beta {K'}}
    { \Gamma;\Psi\vdash \alpha\equiv\beta\oft\lfkind
  & \Gamma;\asnoc\Psi x \alpha\vdash K \equiv K'}

    \vs

    \multicolumn{1}{l}
    {\boxed{\Gamma;\Psi\vdash \alpha\equiv\beta \oft K}\mathrel{:}
        \mbox{$\alpha$ is equal to $\beta$ at kind $K$ in contexts $\Gamma$ and $\Psi$}.}\vs

    \infer[\rl{ef-base}]
    {\Gamma;\Psi\vdash \aconst a \equiv \aconst a \oft K}
    {}

    \vs

    \infer[\rl{ef-pi}]
    {\Gamma;\Psi\vdash \opits x \alpha \beta \equiv \opits x {\alpha'} {\beta'} \oft \lfkind}
    {\Gamma;\Psi\vdash \alpha\equiv\alpha'\oft\lfkind
  & \Gamma;\asnoc\Psi x \alpha\vdash \beta\equiv\beta'\oft\lfkind}

    \vs

    \infer[\rl{ef-ins}]
    {\Gamma;\Psi\vdash \app \alpha M \equiv \app \beta N\oft [x\easub M]K}
    {\Gamma;\Psi\vdash \alpha \equiv \beta \oft \opits x \delta K
  & \Gamma;\Psi\vdash M\equiv N\oft \delta}

  \end{array}
\end{displaymath}
  \caption{Equality Rules for Specification Types and Kinds}
  \label{fig:oeqtypekind}
\end{figure}

\begin{figure}
  \centering
\begin{displaymath}
  \begin{array}{c}
    \multicolumn{1}{l}
    {\boxed{\Gamma\vdash \Psi \equiv \Phi \oft \ctx}\mathrel{:}
    \mbox{$\Psi$ is equal to $\Phi$ in context $\Gamma$.}}\vs

    \infer[\rl{ec-empty}]
    {\Gamma\vdash \aectx \equiv \aectx \oft \ctx}
    {}

    \quad

    \infer[\rl{ec-ctx}]
    {\Gamma\vdash \acunbox x \equiv \acunbox x \oft \ctx}
    {}

    \vs

    \infer[\rl{ec-cons}]
    {\Gamma\vdash \asnoc\Psi x \alpha \equiv \asnoc\Phi x \beta \oft\ctx}
    {\Gamma\vdash \Psi\equiv\Phi\oft\ctx
  & \Gamma;\Psi\vdash \alpha\equiv\beta \oft \lfkind}

    \vs

    \multicolumn{1}{l}
    {\boxed{\Gamma\vdash \hat\Psi \equiv \hat\Phi \oft \ctx}\mathrel{:}
    \mbox{Erased context $\hat\Psi$ is equal to $\hat\Phi$ in context $\Gamma$.}}\vs

    \infer[\rl{ee-empty}]
    {\Gamma\vdash \aectx \equiv \aectx \oft \ctx}
    {}

    \quad

    \infer[\rl{ee-ctx}]
    {\Gamma\vdash \acunbox x \equiv \acunbox x \oft \ctx}
    {}

    \vs

    \infer[\rl{ee-cons}]
    {\Gamma\vdash \psnoc\Psi x \equiv \psnoc\Phi x \oft\ctx}
    {\Gamma\vdash \Psi\equiv\Phi\oft\ctx}
  \end{array}
\end{displaymath}
  \caption{Equality Rules for Contexts}
  \label{fig:oeqctx}
\end{figure}

\section{Definitions and Pattern Matching}

In Section~\ref{sec:coreorca}, we extended a type theory with
contextual types and their introduction forms (i.e.: contextual
terms). However, to make use of these definitions it is necessary to
have an elimination form. To that effect in these section we extend
the language with top-level definitions by pattern matching. These
definitions take apart values by using simultaneous pattern matching
in the style of Agda~\citep{Norell:phd07}.
\begin{displaymath}
  \begin{array}{rlcll}
    \multicolumn{5}{c}{\boxed{\text{Patterns for Definitions}}}\vs

    \text{Pattern} & P & \bnfas & \pconst c P & \text{A fully applied constant}\\
    & & \bnfalt & x & \text{A variable} \\
    & & \bnfalt & \pinac P & \text{An inaccessible pattern}\\
    & & \bnfalt & \ebox \Psi {\mathcal{P}} & \text{\color{red} A contextual pattern}\vs

    \text{Spec. Pattern} & \mathcal{P} & \bnfas & \spconst c {\mathcal{P}} & \text{A fully applied constant}\\
    & & \bnfalt & \alam x {\mathcal{P}} & \text{An abstraction}\\
    & & \bnfalt & \avar x & \text{A bound variable}\\
    & & \bnfalt & \pinac {\mathcal{P}}& \text{An inaccessible pattern}\\
    & & \bnfalt & \aunbox x {\bar\sigma} & \text{\color{red} An unboxed meta-variable}\vs

    \text{Pattern Subst.} & \bar\sigma & \bnfas & \aempty & \text{An empty substitution} \\
    & & \bnfalt & \naid n & \text{An identity substitution,}\\
    & & & & \text{with weakening} \\
  \end{array}
\end{displaymath}

The syntax for patterns is straightforward, we highlight in red where
computations embed specifications and vice-versa. Notice, how in
patterns only variables can be unboxed as it is not clear what it
would mean to pattern match against a computation and not a value.
Furthermore, the substitution is weaker as only empty and identity
with weakening substitutions are allowed. Because we use fully applied
constructors for patterns we define the patterns $\pconst c P$ and
$\spconst c {\mathcal P}$ that are applied to a spine of parameters.
The type-checking of pattern spines uses its own typing judgment where
we check a spine against a type that produces as a result the final
type of the constant applied to the pattern. Figure~\ref{fig:opattyp}
shows the typing of patterns.

\begin{figure}
  \centering
  \begin{displaymath}
    \begin{array}{c}
      \multicolumn{1}{l}
      {\boxed{\Gamma\vdash P \oft T}\mathrel{:}\mbox{$P$ is of type $T$ in context $\Gamma$}.}\vs

      \infer[\rl{tp-con}]
      {\Gamma\vdash \pconst c P \oft T}
      {\aconst c \oft S \in \Sigma
      & \Gamma\vdash \vec P \oft S \spineret T}

      \quad

      \infer[\rl{tp-var}]
      {\Gamma\vdash x \oft T}
      {x\oft T \in \Gamma}

      \vs

      \infer[\rl{tp-inac}]
      {\Gamma\vdash \pinac P \oft T}
      {\Gamma\vdash P \oft T}

      \quad

      \infer[\rl{tp-box}]
      {\Gamma\vdash\ebox \Psi{\mathcal{P}} \oft \etbox \Psi \alpha}
      {\Gamma;\Psi\vdash \mathcal P \oft \alpha}

      \vs

      \multicolumn{1}{l}
      {\boxed{\Gamma\vdash \overrightarrow{P} \oft S \spineret T}\mathrel{:}
        \mbox{Spine $\overrightarrow P$ is of type $S$ and produces type $T$ in context $\Gamma$}.}\vs

      \infer[\rl{tp-spine}]
      {\Gamma\vdash \app P {\vec P} \oft \opit x S {S'} \spineret T}
      {\Gamma\vdash P\oft S & \Gamma\vdash{\vec P} \oft \essub {x \easub P} S' \spineret T}

      \quad

      \infer[\rl{tp-nil}]
      {\Gamma\vdash \cdot \oft T \spineret T}
      {}

      \vs

      \multicolumn{1}{l}
      {\boxed{\Gamma;\Psi\vdash \mathcal P \oft \alpha}\mathrel{:}
        \mbox{$\mathcal P$ is of type $\alpha$ in contexts $\Gamma$ and $\Psi$}.}\vs

      \infer[\rl{sp-const}]
      {\Gamma;\Psi\vdash \spconst c {\mathcal P} \oft \etsub \sigma \beta}
      {\sconst c\oft \mopits x \alpha \beta \in \Sigma
      & \Gamma;\Psi\vdash \vec{\mathcal P} \oft \vec\alpha
      & \sigma = \overrightarrow{\afor x {\mathcal P_i}}}

      \vs

      \infer[\rl{sp-lam}]
      {\Gamma;\Psi\vdash \alam x {\mathcal P} \oft \opits x \alpha \beta}
      {\Gamma;\asnoc \Psi x \alpha \vdash \mathcal P \oft \beta}

      \quad

      \infer[\rl{sp-var}]
      {\Gamma;\Psi\vdash \avar x \oft \alpha}
      {\avar x\oft \alpha \in \Psi}

      \vs

      \infer[\rl{sp-inac}]
      {\Gamma;\Psi\vdash \pinac{\mathcal P} \oft \alpha}
      {\Gamma;\Psi\vdash \mathcal P \oft \alpha}

     \quad

     \infer[\rl{sp-unbox}]
     {\Gamma;\Psi\vdash \aunbox x \sigma \oft \etsub \sigma \alpha}
     {\Gamma;\Phi\vdash x \oft \alpha
     & \Gamma;\Psi\vdash \sigma \oft \Phi}

     \vs

      \multicolumn{1}{l}
      {\boxed{\Gamma;\Psi\vdash \overrightarrow{\mathcal P} \oft \alpha \spineret \beta}\mathrel{:}
        \mbox{Spine $\overrightarrow {\mathcal P}$ is of type $\alpha$ and produces type $\beta$ in $\Gamma$ and $\Psi$}.}\vs

      \infer[\rl{sp-spine}]
      {\Gamma;\Psi\vdash \app {\mathcal P} {\vec {\mathcal P}} \oft \opits x \alpha {\alpha'} \spineret \beta}
      {\Gamma;\Psi\vdash \mathcal P\oft \alpha & \Gamma\vdash{\vec {\mathcal P}} \oft \esub {\mathcal P} x {\alpha'} \spineret \beta}

      \quad

      \infer[\rl{sp-nil}]
      {\Gamma;\Psi\vdash \cdot \oft \beta \spineret \beta}
      {}
    \end{array}
  \end{displaymath}
  \caption{Typing Rules for Patterns}
  \label{fig:opattyp}
\end{figure}

Having defined patterns we move on to define top-level pattern
matching definitions:
\begin{displaymath}
  \begin{array}{ll}
    \text{Type annotation} & \aconst f \oft T \\
    \text{Equations} & \Gamma_1\mathrel{.} \eapp {\aconst f} {\vec P_1} = E_1\\
    & \multicolumn{1}{c}{\vdots}\\
    & \Gamma_n\mathrel{.} \eapp {\aconst f} {\vec P_n} = E_n\\
  \end{array}
\end{displaymath}

A definition by pattern matching introduces a new constant and its
type followed by its pattern matching equations. Each equation
contains a context that binds all the variables used in the pattern, a
set of pattern and the right hand side of the equation that contains
the term that performs the computation when this equation is matched.
For each equation, patterns use every variable from the context
exactly once. Because non-linear patterns are unavoidable in dependent
pattern matching, all non-linear occurrences of a variable must be in
inside inaccessible patterns (where the value of the pattern is forced
by the type and thus is not really a discrimination pattern). This is
not enforced in the typing rules and has to be checked by the
implementation. Inaccessible patterns provide a nice way of
implementing the computation of pattern matching by using
discrimination trees, this is mostly a practical concern at this
point. The constant defined in a function by pattern matching may be
used inside the right hand side of any equations. As functions may
appear in types and be executed during type-checking, they should be
total. In this chapter we do not discuss how to check termination and
coverage to establish totality, but rather leave that as future work.
However the user is required to provide functions where pattern
matching is covering and that recursive calls are
terminating. 

A valid pattern matching definition extends the signature and the
reduction behaviour of the system where each of the clauses becomes a
new reduction rule.

The typing of equations is done using the typing judgment in
Figure~\ref{fig:patmatchtyp} that depends on the typing of patterns
presented in Figure~\ref{fig:opattyp}.

\begin{figure}
  \centering
  \begin{displaymath}
    \begin{array}{c}
      \multicolumn{1}{l}
      {\boxed{\vdash \Gamma_i\mathrel{.}\eapp{\aconst f}{\vec{P_i}} = E_i\isvalid}}:
      \text{The equation is valid}\vs

      \infer[\rl{t-pat}]
      {\vdash \Gamma_i\mathrel{.}\eapp{\aconst f}{\vec P} = E_1 \isvalid}
      {\Gamma_i\vdash \eapp{\aconst f}{\vec P_1} \oft T
      & \Gamma_i\vdash E_i \oft T'
      & \Gamma_i\vdash T \equiv T'}
    \end{array}
  \end{displaymath}
  \caption{Typing Equations}
  \label{fig:patmatchtyp}
\end{figure}


Definitions by pattern matching extend the signature $\Sigma$ with a
new constant (i.e.: the name of the function) and possibly multiple
pattern matching equations that become new computation rules. We
extend the syntax of the system as follows to accommodate for
equations:
\begin{displaymath}
  \begin{array}{rlcll}
    \multicolumn{5}{c}{\boxed{\text{Extended Signature and Simultaneous Substitutions}}}\vs

    \text{Signature} & \Sigma & \bnfas & \dots \\
    & & \bnfalt & \Sigma, \Gamma_i\mathrel{.}\eapp{\aconst f}{\vec P_i} = E_i & \text{Pattern matching equation} \vs

    \text{Substitution} & \theta & \bnfas & \cdot & \text{An empty substitution}\\
    & & \bnfalt & \edot \theta  E x & \text{A substitution extension}
  \end{array}
\end{displaymath}

Definitions by pattern matching compute by extending computation using
the equations in the definition in the following form:
\begin{displaymath}
  \infer[\rl{e-equ}]
  {\Gamma\vdash \eapp{\aconst f}{\vec E} \equiv \essub \theta {E_i} \oft \essub \theta T}
  {\aconst f \oft \mopit x S T \in \Sigma
  & \Gamma\vdash\vec E\oft \vec S
  & \Gamma_1\mathrel{.}\eapp{\aconst f}{\vec P_i} = E_i \in \Sigma
  & \Gamma\vdash \vec P_i \doteq \vec E / \theta}
\end{displaymath}

This requires the computation of higher-order matching (i.e.:
operation $\Gamma\vdash P \doteq E/\theta$) between patterns and the
terms passed as parameter to the function. A sensible implementation,
(and it is indeed the Orca implementation does), tries the equations
in the order they where defined. This is done so as to respect the
user's specified order in the case of overlapping patterns.


\section{The Prototype Implementation} \label{sec:orcaprototype}

As seen in the example from Section~\ref{sec:tranbool}, the ideas from
Section~\ref{sec:coreorca} have been implemented in a prototype
language called Orca (following the marine mammal tradition of
languages like Beluga and Delphin~\citep{Poswolsky:DelphinDesc08}).
The OCaml implementation is available on
\href{https://github.com/orca-lang/orca}{Github}\footnote{\url{https://github.com/orca-lang/orca}}.

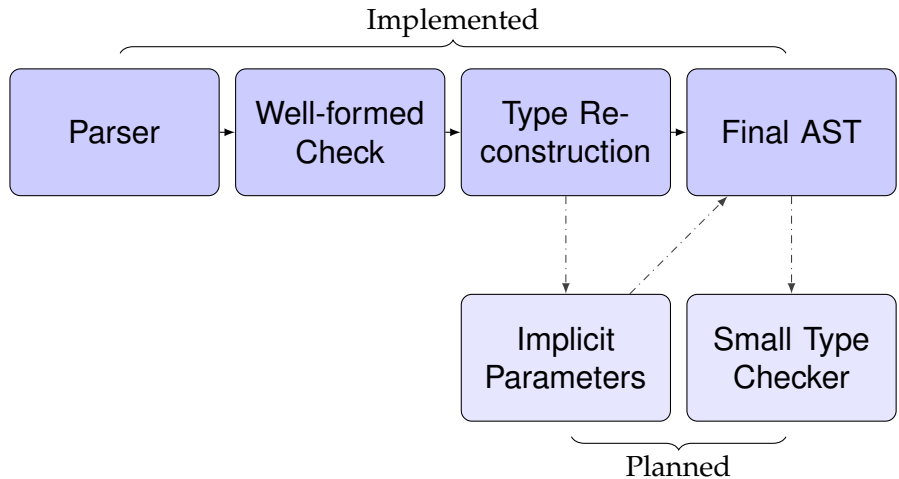
\begin{figure}
  \centering
\tikzstyle{block} = [ rectangle
                    , draw
                    , fill=blue!20
                    , text width=5em
                    , text centered
                    , rounded corners
                    , minimum height=4em
                    , auto
                    , node distance=3cm
                    , text width=2.5cm
                    , font=\sffamily
                    ]

\tikzstyle{line} = [ draw
                   , -latex
                   ]

\tikzstyle{dottedline} = [ draw
                         , -latex
                         , dashdotted
                         , color=black!75
                         ]

\begin{tikzpicture}
  \node[block] (parser) {Parser};
  \node[block, right of=parser] (form) {Well-formed Check};
  \node[block, right of=form] (recon)  {Type Reconstruction};
  \node[block, right of=recon] (ast) {Final AST};

  \node[block, below of=recon, fill=blue!10] (impl) {Implicit Parameters};
  \node[block, below of=ast, fill=blue!10] (check) {Small Type Checker};

  \draw[line] (parser) -- (form);
  \draw[line] (form) -- (recon);
  \draw[line] (recon) -- (ast);

  \draw[dottedline] (recon) -- (impl);
  \draw[dottedline] (impl) -- (ast);

  \draw[dottedline] (ast) -- (check);

  \draw [decorate,decoration={brace,amplitude=5pt,raise=2mm}]
  (parser.85) -- (ast.95) node [black,midway,yshift=6mm]
  {\small Implemented};

  \draw [decorate,decoration={brace,amplitude=5pt,raise=2mm, mirror}]
  (impl.275)--(check.265)  node  [black,midway,yshift=-0.3cm,text width=3cm, align=center,anchor=north] (EHF)
  {\small Planned};
\end{tikzpicture}
  \caption{The Orca Pipeline}
  \label{fig:orcapipe}
\end{figure}

Figure~\ref{fig:orcapipe} shows the pipeline for the prototype. The
first line shows the existing pipeline, and the phases in the second
line show planned features that have not been implemented yet. The
first two phases form the front end of the language. First, the parser
produces the first version of the AST (Abstract Syntax Tree).
Subsequently, the AST generated by the parser  is checked for some well
formed properties:
\begin{itemize}
\item It disambiguates between constructors, variables, and bound variables.
\item It checks the scope of variables and definitions.
\item It transforms the bound variables of the specification framework
  to \debruijn indices.
\end{itemize}

The idea behind this design is to have a very permissive
\emph{parser}, and a second pass that removes ambiguities and can
report accurate error messages\footnote{Even if the current
  implementation does not exploit this to have great error messages}.
The language does not enforce restrictions on naming conventions for
user defined identifiers. Constructors, variable names and definitions
can be composed of capitals, symbols and characters from any writing
system supported by the Unicode standard (with some restriction for
keywords and the symbols used in the grammar itself). The
\emph{well-formed check} phase checks that variables and definitions
respect their scopes and prepares an intermediate AST that is
susceptible to type reconstruction and checking. As a side note, this
phase is the perfect place to add mix-fix
parsing~\citep{Danielsson:2011} in the style of the Agda language.
As a remark, the lack of enforced conventions for names is done for
flexibility, but for large developments it would be important to
develop such a convention. We expect an ad-hoc convention to emerge
once larger Orca examples are implemented.

The third pass, type reconstruction, generates the final version of
the AST with all the type checking information included. This pass
performs two main tasks:
\begin{itemize}
\item It disambiguates the computation and
specification terms by inserting boxes and un-boxes appropriately.
\item It simultaneously type checks user input.
\end{itemize}

\begin{figure}
  \centering
\begin{lstlisting}[basicstyle=\scriptsize\ttfamily]
def tran : (g h : ctx) -> rel <<[>>g<<]>> <<[>>h<<]>> ->
           (st : |- s-tp) -> (g |- s-exp <<'>> st) ->
           (h |- t-exp <<' [>>(tran-tp st)<<][^]>>)
where
| g h r t <<[g :> >>(app s[^] .t<<[^]>> m n)<<]>> =>
    <<[h :> >>tapp <<' [>>(tran-tp s)<<][^] ' [>>(tran-tp t)<<][^]>>
      <<' [>>(tran g h r <<[g :> >>(arr <<' [>>s<<][^] ' [>>t<<][^]>>)<<]>> m)<<]>>
      <<' [>>(tran g h r s n)<<]]>>

| g h r <<[:> >>(arr <<'>> s[^] <<'>> t[^])<<]>>
        <<[g :>>>(lam <<'>> .s<<[^] '>> .t<<[^] '>> m)<<]>> =>
    <<[h :> >>tlam <<' [>>(tran-tp s)<<][^] ' [>>(tran-tp t)<<][^] '>>
          (\x. <<[>>tran (g, x:s-exp s[^])
                     (h, x:t-exp <<[>>tran-tp s<<][^]>>)
                     (cons g h s r)
                     t
                     (<<[g, x:s-exp s[^] :> >>m<<[^1] ' >> x<<]>>)
                <<][^1;x]>>)<<]>>

| g h r <<[:> >> bool<<] [g :> >>tt<<]>> =>
  <<[h :> >>tinl <<'>> tunit <<'>> tunit <<'>> tone <<]>>

| g h r <<[:> >> bool<<] [g :> >>ff<<]>> =>
  <<[h :> >>tinr <<'>> tunit <<'>> tunit <<'>> tone <<]>>

| g h r t <<[g :> >>(if <<'>> .t<<[^]>> <<'>> b <<'>> e1 <<'>> e2)<<]>> =>
    <<[h :> >>tcase <<'>> tunit <<'>> tunit <<' [>>(tran-tp t)<<][^] '>>
          <<[>>(tran g h r bool b)<<] '>>
          (\x. <<[>>tran g h r t e1<<][^1]>>) <<'>>
          (\x. <<[>>tran g h r t e2<<][^1]>>)<<]>>

| .<<[g, x:s-exp ' t[^]]>> .<<[h, x:t-exp ' [tran-tp t][^]]>>
  (cons g h .t r) t (g, x: s-exp t<<[^]>> :> x) =>
      (h, x: t-exp <<[>>(tran-tp t)<<][^]>> :> x)

| .<<[g, x:s-exp ' t[^]]>> .<<[h, x:t-exp ' [tran-tp t][^]]>>
  (cons g h t r) s (v[^1]) =>
    <<[>>tran g h r s v<<][^1]>>
\end{lstlisting}
  \caption{Boolean Translation After Box Inference}
  \label{fig:tranelab}
\end{figure}

Following the spirit from Chapter~\ref{chp:recon}, the boxes are
reconstructed by a type directed algorithm that by construction
produces well typed expressions as a result. The AST produced by this
phase, is complete, and can be type-checked without having to infer or
complete any type information. A small type-checker would make the
trusted computing base of the system small. Similarly, having implicit
parameter reconstruction would make writing programs easier to write.
Both aspects are planned additions to the Orca prototype.


\section{Related Work}

The idea of embedding a specification framework in a computational
$\lambda$-calculus using a modality was first presented
in~\citep{Despeyroux97} where they embed a simply-typed version of LF
in a simply typed lambda calculus extended with a box modality based
on that from S4. As mentioned, in Orca the reasoning framework is the
logical framework LF~\citep{Harper93jacm} and the computational
framework is a dependently typed system~\citep{Martin-Loef79a}. This
presents a very first step towards answering the long standing
question of combining HOAS and dependent types.

Moreover, Orca can be seen as an extension of the Beluga
language \citep{Pientka:CADE15}. As such it uses the logical framework
LF to represent syntax and judgments using higher-order abstract
syntax (HOAS). However, it extends Beluga's first order reasoning
language to a dependently typed system where computations can be
embedded in specifications. The Twelf system~\citep{Pfenning99cade}
also implements LF where they implement computation using logic
programming instead of a functional framework. The Delphin
system~\citep{Schuermann:ESOP08} manipulates LF objects using
functional programs in a similar way to Beluga but without support for
inductive types or contextual types.

Other approaches that use a two-level system with specifications and
reasoning are the Abella prover~\citep{Baelde:AbellaTutorial} and
Hybrid framework~\citep{Felty12}, these two systems provide a
specification logic on top of a proof-theoretic reasoning logic.
Abella is a standalone proof assistant, while Hybrid can be
implemented on existing proof assistants, implementations for Coq and
Isabelle exist.

Traditionally, implementing HOAS in systems without a specification
framework is troublesome because of the positivity restriction for
inductive types. To overcome this problem, weaker forms of HOAS use an
abstract type to represent variables and obtain a strictly positive
type. Examples of this are the work of~\citet{Chlipala:ICFP08}
and~\citet{Despeyroux:1994:HAS:645708.664171}. Weaker forms of HOAS
are convenient but they do not get substitution from $\beta$-reduction
in the host language and make dealing with open terms less convenient.

Another option is to implement binders using de~Bruijn indices. A
convenient way is using well-scoped de~Bruijn
indices~\citep{Altenkirch:TLCA93} (and later extended to intrinsically
typed terms in ~\citep{Benton:JAR12}), where substitution is
implemented as an inductive data type and its structural properties
need to be proved manually. There are implementations of de~Bruijn
indices that provide and facilitate proving the required lemmas. One
example is Autosubst~\citep{Schaefer2015} that implements a decision
procedure to compare terms with explicit substitutions. The Nameless
Painless~\citep{Pouillard:2011} approach is based on de~Bruijn
indices, but it simplifies the arithmetic reasoning by hiding the
numbers in an abstract ``world'' representation. Finally,
GMeta~\citep{Lee2012} simplifies implementing first-order binding
representations (e.g.: de~Bruijn indices) using generic programming to
provide proofs for the lemmas about binders.

More recently, \citet{Allais:2017} propose a way of encoding syntax
(as opposed to syntax and judgments) in Agda using a generic type and
scope preserving semantics. This allows them to prove lemmas about the
semantics and then reuse them to implement renaming, substitution,
normalization by evaluation and CPS transformations.

Finally, a possible approach is to use nominal
logic~\citep{Pitts:2003} that uses an infinite set of atoms (abstract
names) to replace the position based approach. These ideas have been
implemented in Nominal~\citep{Urban:JAR08} package of
Isabelle~\citep{Nipkow-Paulson-Wenzel:2002}.

Finally, thanks to its specification framework, Orca is
well-positioned to implement meta-programming in dependently typed
systems where syntax can be reflected into LF objects where
computational functions can manipulate them. There is interest in
doing this in existing systems as a way of implementing tactics and
elaboration, for example the MTAC~\citep{ziliani:2015} system for Coq,
or elaboration reflection~\citep{Christiansen:2016} for Idris
implement meta-programming (The Agda proof assistant also implements
Idris-style reflection). What Orca offers to meta-programming is the
ability of processing structures with binders in a type safe and
convenient way, instead, for example, of the de~Bruijn representation
that the Idris reflection offers.

\section{Conclusion}

In this chapter we describe Orca, as an idea and a prototype that
implements a Martin-L\"{o}f style type theory with a specification
language based on the logical framework LF. This language can be
thought of as a dependently typed Beluga, and thus continues the theme
of cetacean inspired names (after Beluga and Delphin). The design of
Orca can also be seen as a next step extending the design of the
Babybel system from Chapter~\ref{chp:babybel} to total languages and
full LF and thus continues that line of work. However, this chapter
does not address the full spectrum of possibilities and questions
about Orca. This should not be seen as a limitation but as an
aspiration. Exploring the meta-theory of the system, its expressivity
and the fields where it will be applicable is all exciting future
work. Some of this work is already in progress, particularly the
meta-theory and improving and extending the prototype with new ideas.
However, the current state of affairs is promising, we show how Orca
can interleave computation and specifications, and how this saves some
lemmas that would be necessary in systems that need to represent
specification computation as relations. Moreover, the Orca
distribution contains several examples that hint to the power of
reasoning not only about specifications and inductive types like in
Beluga, but also about functions. Some of these examples are: a small
type preservation proof, the translation we used as an example here, a
proof that a particular computation preservers types (illustrated with
the simplest computation function, the copy function), a conversion
between two styles of operational semantics and several small programs
that we use as the beginnings of a test harness. In conclusion, Orca
represents a new bold first step in the reasoning about open objects
and offers a path to explore extensions to the ideas presented in this
chapter.


\chapter{Conclusion}
In this thesis, after introducing the problem of reasoning about open
terms using a HOAS representation, we show how to simplify writing
these programs using inference of omitted arguments, how to write
programs by integrating contextual types and HOAS with existing
programming languages, and finally we present the Orca system that
does the first step towards integrating a Martin-L\"of's style type
theory with contextual types and the logical framework LF, this allows
for the interleaving of specifications and proofs/computations.
Therefore, Orca, once the meta-theory is proven and the totality check
implemented, allows for proofs about meta-theoretic properties of
specifications together with proofs about computations which allows
the user to prove properties of their computations over
specifications. This provides for the seamless combination of programs
and proofs.

\section{Future Work}

\subsection{Implicit Parameter Reconstruction}

Implicit parameter reconstruction makes our lives easier, by allowing
us to concentrate in what is important. Reconstruction only fills in
parts that are forced by the surrounding program. As future work, one
would like to extend the kinds of information that can be inferred.
Furthermore, as the current algorithm works for indexed type systems,
it would need to be extended for full dependent types to support
reconstruction of Orca programs. Having a formally specified type
reconstruction for Orca will dramatically improve its ergonomics.
Another interesting approach would be to specify reconstruction as
form elaboration reflection~\citep{Christiansen:2016} that would allow
for reconstruction to be done in Orca code itself. Orca's
specification framework should make for a great target for this
reflection mechanism because the LF framework could be used to
represent the program being elaborated.

\subsection{Contextual Types and Programming Languages}

In the future, we plan to implement our approach also in other
languages. In particular, it would be natural to implement our
approach in Haskell. GHC Haskell offers the required syntax extension
mechanism and its more powerful type system offers interesting
possibilities. With regards to the implementation, the PPX syntax
extension mechanism is a bit limited (e.g.: the syntax of type
annotations cannot be extended), it would be interesting to either
extend PPX or to replace PPX with CamlP4 that would allow for a more
seamless syntax extension. And finally, while the current features of
Babybel allow for many interesting use cases, it would be interesting
to develop more substantial use cases to both validate the convenience
our approach and to possibly justify new features that a larger
program might require. For example, implementing the compiler for a
small language would neatly showcase the use of the syntactic
framework.

\subsection{Contextual Types and Type Theory}

This chapter represents an open path to future research, which is
another way to say that there are lots of open questions regarding
this subjects. So far, there is no meta-theory for the core calculus
we presented, as future work it would be important to show that
normalization is preserved when extending a dependently typed calculus
with the logical framework LF using contextual types in the way
described in this chapter\footnote{There is ongoing work on draft of
  meta-theory for this}. Also on the side of the theory, certain
needed features are missing to make it more expressive, like
substitution variables~\citep{Cave:LFMTP13}. Finally, inspired by the
Babybel theory where contexts are erased at run-time, in this calculus
contexts are run-time irrelevant. This means that properties about
contexts need to be established using inductive predicates (e.g.: the
\lstinline!rel! predicate in the example from
Section~\ref{sec:tranbool}). Adding Beluga style context schemas would
allow for more powerful use of contexts.

On the side of the Orca prototype, the immediate future work is
mechanically making sure that the functions are total, that is
implementing coverage and termination checking. Implementing coverage
would extend ideas from the Beluga system~\citep{Pientka:Cover10} and
for termination many choices are possible. The author would like to
explore the idea of sized types~\citep{Hughes:POPL96} and
\citep{Abel:phdthesis}. And from here the possibilities expand, as
Orca can be used as a vehicle for research in this area. Examples of
this are: adding generic judgments as in the work by
\citet{MillerTiu:TCL05} to have Abella~\citep{Gacek:IJCAR08} and
Delphin style introduction of fresh names, and exploring co-inductive
types with co-patterns.


\appendix

\chapter{Proof of Soundness of Reconstruction}\label{sec:completesoundnessproof}

We begin with some lemmas that we will use to establish the main result:

\setcounter{lem}{0}
\begin{lem}[Implicit parameter instantiation] \label{lem:inst} Let's
  consider the judgment:
  $\Theta;\Delta;\Gamma\vdash E\oft T\recon E_1\oft T_1/\Theta_1$,
  where $\Theta_1$ is a weakening of $\Theta$.

  We want to prove that, if $\rho_g$ is a grounding instantiation such as $\cdot\vdash\rho_g\oft\Theta_1$
  where we split $\rho_g = \rho_g',\rho_g''$ and $\cdot\vdash\rho_g'\oft\Theta$ and
  $\cdot;\ahsub{\rho_g'}\Delta;\ahsub{\rho_g'}\Gamma\vdash\ahsub{\rho_g'}E\oft\ahsub{\rho_g'}T$ then
  $\cdot;\ahsub{\rho_g}\Delta;\ahsub{\rho_g}\Gamma\vdash\ahsub{\rho_g}E_1\oft\ahsub{\rho_g}T_1$.
\end{lem}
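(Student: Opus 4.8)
The natural strategy is induction on the derivation of $\Theta;\Delta;\Gamma\vdash E\oft T\recon E_1\oft T_1/\Theta_1$, which is built from exactly the two rules $\elimp$ and $\elimpd$ of Fig.~\ref{fig:elexpsyn}; note that neither rule touches $\Gamma$, so it is carried along passively. First I would record an auxiliary observation: both rules only ever extend the hole context (by a single fresh hole in $\elimp$, not at all in $\elimpd$), so $\Theta_1$ is always a weakening of $\Theta$. This is what makes the split $\rho_g=\rho_g',\rho_g''$ with $\cdot\vdash\rho_g'\oft\Theta$ well-defined, and it lets me restrict any grounding instantiation of $\Theta_1$ to a grounding instantiation of any prefix of $\Theta_1$.

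The base case $\elimpd$ is then immediate. Here $E_1=E$, $T_1=S$, and $\Theta_1=\Theta$, so $\rho_g''=\cdot$ and $\rho_g=\rho_g'$; the desired conclusion $\cdot;\ahsub{\rho_g}\Delta;\ahsub{\rho_g}\Gamma\vdash\ahsub{\rho_g}E_1\oft\ahsub{\rho_g}T_1$ is literally the hypothesis.

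The work is in the step case $\elimp$, where $T=\piei{X\oft U}{T_0}$ and the derivation continues with $\genHole(\hole Y\oft\Delta\vdash U)=C$ together with $(\Theta,\hole Y\oft\Delta\vdash U);\Delta\vdash \yux E {\ibox C}\oft\asub{C/X}{T_0}\recon E_1\oft T_1/\Theta_1$. Writing $\Theta^+=\Theta,\hole Y\oft\Delta\vdash U$, I split the given $\rho_g$ as $\rho_g^{+\prime},\rho_g^{+\prime\prime}$ with $\cdot\vdash\rho_g^{+\prime}\oft\Theta^+$; since $\Theta^+$ extends $\Theta$ by one hole, $\rho_g^{+\prime}=\rho_g',\ahsub{\rho_g^{+\prime}}C/\hole Y$, agreeing with the lemma's $\rho_g'$ on $\Theta$. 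To supply the premise needed to invoke the induction hypothesis on the recursive elaboration, I must establish $\cdot;\ahsub{\rho_g^{+\prime}}\Delta;\ahsub{\rho_g^{+\prime}}\Gamma\vdash\ahsub{\rho_g^{+\prime}}(\yux E {\ibox C})\oft\ahsub{\rho_g^{+\prime}}(\asub{C/X}{T_0})$. Because $\hole Y$ is fresh, grounding by $\rho_g^{+\prime}$ coincides with grounding by $\rho_g'$ on $E$, $\Delta$, and $\Gamma$, so the lemma's hypothesis (read in synthesis mode) gives that $\ahsub{\rho_g'}E$ synthesizes $\piei{X\oft\ahsub{\rho_g'}U}{\ahsub{\rho_g'}T_0}$. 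I then apply $\rl{t-app-index}$ with the grounded argument $\ahsub{\rho_g^{+\prime}}C$, whose typing $\ahsub{\rho_g'}\Delta\vdash\ahsub{\rho_g^{+\prime}}C\oft\ahsub{\rho_g'}U$ follows from $\hole Y\in\Theta_1$ being grounded by the well-typed instantiation $\rho_g$. Finally, applying the induction hypothesis to the recursive judgment, with $\rho_g$ split as $\rho_g^{+\prime},\rho_g^{+\prime\prime}$, yields exactly the goal.

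The main obstacle is precisely this typing step for $\yux E {\ibox C}$: it leans on the abstract properties of the index domain assumed at the start of the elaboration description — that $\genHole$ produces a term standing for its hole at the declared type, that a grounding instantiation assigns each hole a hole-free object well-typed at the grounded type in the grounded context, and that hole instantiation commutes with the index-level substitution $\asub{C/X}{\cdot}$, so that $\ahsub{\rho_g^{+\prime}}(\asub{C/X}{T_0})=\asub{(\ahsub{\rho_g^{+\prime}}C)/X}{\ahsub{\rho_g'}T_0}$ lines up with the conclusion of $\rl{t-app-index}$. Stating these commutation and freshness bookkeeping facts precisely — rather than the induction itself — is where the care is required.
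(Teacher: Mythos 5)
Your proposal is correct and follows essentially the same route as the paper, which proves this lemma by induction on the two rules $\elimp$ and $\elimpd$, dismissing the base case as trivial and the step case as direct. You simply supply the bookkeeping (the split of $\rho_g$ over the extended hole context, the use of $\rl{t-app-index}$, and the commutation of hole instantiation with the index-level substitution) that the paper's two-line sketch leaves implicit.
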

\begin{proof}
  The proof follows by induction on the rules of the judgment where
  the base case for $\elimpd$ is trivial and the inductive step for
  $\elimp$ has also a very direct proof.
\end{proof}

\begin{lem}[Pattern elaboration] \label{lem:patelab}~
  \begin{enumerate}
  \item If $\Theta;\Delta\vdash pat\recon\Pi\Delta_1;\Gamma_1 . Pat\oft T/\Theta_1;\rho_1$ and
     $\rho_r$ is a further refinement substitution, such as $\Theta_2\vdash\rho_r\oft\Theta_1$ and
    $\ep$ is a ground lifting substitution, such as $\Delta_i\vdash\ep\oft\Theta_1$ then\\
    $\Delta_i,\ahsub{\ep}\ahsub{\rho_r}\Delta_1;\ahsub{\ep}\ahsub{\rho_r}\Gamma_1\vdash
    \ahsub{\ep}\ahsub{\rho_r}Pat\checks\ahsub{\ep}\ahsub{\rho_r}T$.

  \item If $\Theta;\Delta\vdash pat\reconChk{T}\Pi\Delta_1;\Gamma_1 . Pat/\Theta_1;\rho_1$ and
     $\rho_r$ is a further refinement substitution, such as $\Theta_2\vdash\rho_r\oft\Theta_1$ and
    $\ep$ is a ground lifting substitution, such as $\Delta_i\vdash\ep\oft\Theta_1$ then\\
    $\Delta_i,\ahsub{\ep}\ahsub{\rho_r}\Delta_1;\ahsub{\ep}\ahsub{\rho_r}\Gamma_1\vdash
    \ahsub{\ep}\ahsub{\rho_r}Pat\checks\ahsub{\ep}\ahsub{\rho_r\circ\rho_1} T$.

  \item If $\Theta;\Delta\vdash\wvec{pat}\reconChk{T}\Pi\Delta_1;\Gamma_1 . \wvec{Pat}\spineret S/\Theta_1;\rho_1$ and
     $\rho_r$ is a further refinement substitution, such as $\Theta_2\vdash\rho_r\oft\Theta_1$ and
    $\ep$ is a ground lifting substitution, such as $\Delta_i\vdash\ep\oft\Theta_1$ then\\
    $\Delta_i,\ahsub{\ep}\ahsub{\rho_r}\Delta_1;\ahsub{\ep}\ahsub{\rho_r}\Gamma_1\vdash
    \ahsub{\ep}\ahsub{\rho_r}\wvec{Pat}\checks\ahsub{\ep}\ahsub{\rho_r\circ\rho_1} T\spineret\ahsub{\ep}\ahsub{\rho_r}S$.
  \end{enumerate}
\end{lem}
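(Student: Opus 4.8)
The plan is to prove all three statements simultaneously by induction on the three pattern-elaboration derivations of Figure~\ref{fig:elpattern}, since the synthesis, checking, and spine judgments call one another through \elpcon, \elpann, and \elpsyn. In each case I would invert the final elaboration rule, invoke the appropriate induction hypotheses on the premises, and reassemble a target pattern-typing derivation from Figure~\ref{fig:comptyp}. The extra refinement substitution $\rho_r$ is not decoration: it is precisely the induction-loading device that makes the step cases close, as explained below. Throughout I assume the two soundness properties stated earlier for the index language: that index-level elaboration $\Theta;\Delta\vdash c\reconChk U C/\Theta';\Delta';\rho$ yields an index object that, after any ground lifting, checks against the corresponding lift of $U$; and that unification $\Theta;\Delta\vdash C_1\doteq C_2/\Theta';\rho'$ satisfies $\Theta'\vdash\rho'\oft\Theta$ together with $\ahsub{\rho'}C_1=\ahsub{\rho'}C_2$.

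The base cases are direct. For \elpvar I close with \rl{t-pvar}, noting $\ahsub{\ep}\ahsub{\rho_r}\ahsub{\ids{\Theta_1}}T=\ahsub{\ep}\ahsub{\rho_r}T$; for the empty-spine rule \elspempty I close with \rl{t-snil}; and for \elpindex I hand the index object to index-level soundness and wrap the result with \rl{t-pindex}. The case \elpcon uses the spine hypothesis (part~3) and \rl{t-pcon}, where I exploit that the signature type $\Sigma(\const c)$ is closed, so every hole instantiation and lifting acts as the identity on it. The cases \elspex and \elspim are closed with \rl{t-spi}: in \elspex the leading index object is typed by index-level soundness, and in \elspim the generated hole $\genHole(\hole Y\oft\Delta.U)$ is the argument inserted for the implicit quantifier, which the lifting $\ep$ sends to a fresh index variable of type $U$; in both I rely on a commutation lemma $\ahsub{\ep}\ahsub{\rho}\asub{C/X}T=\asub{\ahsub{\ep}\ahsub{\rho}C/X}\ahsub{\ep}\ahsub{\rho}T$ so that the refined spine type lines up with the premise.

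The heart of the argument, and the step I expect to be most delicate, is the interaction of the hole instantiation $\rho$ with the lifting $\ep$ in \elpsyn, \elspcmp, and \elspex. In each of these the elaborated output applies a \emph{later} instantiation $\rho'$ (from unification, or from elaborating the remainder of a spine) to an \emph{earlier} pattern and context, while the hole instantiation recorded in the conclusion is the composite $\rho'\circ\rho$. The induction hypothesis for the earlier premise, however, is stated over the intermediate hole context $\Theta'$, not the final $\Theta''$. This is exactly where $\rho_r$ earns its keep: when descending into the earlier premise I instantiate its refinement argument as $\rho_r\circ\rho'$ (legitimate because $\Theta_2\vdash\rho_r\oft\Theta''$ and $\Theta''\vdash\rho'\oft\Theta'$ compose to $\Theta_2\vdash\rho_r\circ\rho'\oft\Theta'$), and the conclusion of the hypothesis then reads $\ahsub{\ep}\ahsub{\rho_r}\ahsub{\rho'}Pat$, matching the goal after using $\ahsub{\rho_r\circ\rho'}=\ahsub{\rho_r}\ahsub{\rho'}$. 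For \elpsyn the required type equality $\ahsub{\rho'}S=\ahsub{\rho'}\ahsub{\rho}T$ comes straight from unification soundness; for \elspcmp and \elspex the second (spine) premise is handled with $\rho_r$ unchanged and the two results are combined by \rl{t-sarr} and \rl{t-spi} respectively.

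What remains is a short, self-contained inventory of substitution-algebra facts that the above manipulations presuppose: associativity of hole-instantiation composition and its compatibility with lifting ($\ahsub{\ep}\ahsub{\rho'\circ\rho}=\ahsub{\ep}\ahsub{\rho'}\ahsub{\rho}$), commutation of lifting and hole instantiation with index-level substitution $\asub{C/X}{\cdot}$, and the fact that a ground lifting composed after a hole instantiation is again a ground lifting on the smaller hole context. I expect these to be routine once stated precisely; the only genuine care is in \elspim, where I must check that the hole freshly introduced for the implicit argument is covered by the lifting $\ep$ over the final context, so that $\ahsub{\ep}\ahsub{\rho}C$ is a well-typed index object and the spine can be typed against $\asub{C/X}T$ by \rl{t-spi}. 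With the generalized statement (carrying $\rho_r$) as the induction invariant, no further strengthening should be needed, and the plain pattern-elaboration lemma of the main text follows by taking $\rho_r=\ids{\Theta_1}$.
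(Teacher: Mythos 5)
Your proposal is correct and follows essentially the same route as the paper's proof: a simultaneous induction on the three elaboration judgments, closing each case with the corresponding target typing rule, using index-level soundness for embedded index objects, and—crucially—using the further refinement substitution $\rho_r$ as the induction-loading device, instantiated as the composite $\rho_r\circ\rho'$ when appealing to the hypothesis on an earlier premise (the paper does exactly this with $\rho_3\circ\rho_2$ in the \elspcmp{} and \elspex{} cases). The only detail the paper spells out that you leave implicit is the weakening of the earlier premise's contexts $\Delta_1;\Gamma_1$ into the final $\Delta_2;\Gamma_2$ via the decomposition $\Delta_2=\ahsub{\rho_2}\Delta_1,\Delta'_2$, but this is routine and does not change the argument.
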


\begin{proof}
  \begin{flushleft}
By simultaneous induction on the first derivation.
\vs
For (1):
\paragraph{Case} $\mathcal{D} : \Theta;\Delta\vdash\yux{\const{c}}{\wvec{pat}}\recon
\Pi\Delta_1;\Gamma_1 . \yux{\const{c}}{\wvec{Pat}}\oft S/\Theta_1;\rho_1$\vs

$\Sigma(\const{c})=T$\\
$\Theta;\Delta\vdash\wvec{pat}\reconChk{T}
\Pi\Delta_1;\Gamma_1 . \wvec{Pat}\spineret S/\Theta_1;\rho_1$
\hfill by assumption\\

$\Delta_i,\ahsub{\ep}\ahsub{\rho_r}\Delta_1;\ahsub{\ep}\ahsub{\rho_r}\Gamma_1\vdash
    \ahsub{\ep}\ahsub{\rho_r}\wvec{Pat}\checks\ahsub{\ep}\ahsub{\rho_r\circ\rho_1} T\spineret\ahsub{\ep}\ahsub{\rho_r}S$
\hfill by i.h. (3)\\

Note that types in the signature (i.e. $\Sigma$) are ground so $\ahsub{\ep}\ahsub{\rho_r\circ\rho_1} T = T$\\

$
\Delta_i,\ahsub{\ep}\ahsub{\rho_r}\Delta_1;\ahsub{\ep}\ahsub{\rho_r}\Gamma_1\vdash
\yux{\const c}{(\ahsub{\ep}\ahsub{\rho_r}\wvec{Pat})}\checks\ahsub{\ep}\ahsub{\rho_r}S$
\hfill by \rl{t-pcon}.\\

$
\Delta_i,\ahsub{\ep}\ahsub{\rho_r}\Delta_1;\ahsub{\ep}\ahsub{\rho_r}\Gamma_1\vdash
\ahsub{\ep}\ahsub{\rho_r}(\yux{\const c}{\wvec{Pat})}\checks\ahsub{\ep}\ahsub{\rho_r}S$\\
\hfill by properties of substitution\\
which is what we wanted to show.\vs

For (2):
\paragraph{Case}$\mathcal{E}:
\Theta;\Delta\vdash x\reconChk{T}
      \Pi \Delta_1~;~\underbrace{x\oft T}_{\Gamma_1} . x ~/~\Theta ; \ids\Theta$

$\Gamma_1(x)= T$
\hfill by $x$ being the only variable in $\Gamma_1$\\

$\ahsub\ep\ahsub{\rho_r}\Gamma_1 = \ahsub\ep\ahsub{\rho_r}\Gamma_1 T$
\hfill by applying $\ep$ and $\rho_r$ to $\Delta_1$, $\Gamma_1$ and $T$\\

$\ahsub\ep\ahsub{\rho_r}\Delta_1;\ahsub\ep\ahsub{\rho_r}\Gamma_1\vdash
x \checks \ahsub\ep\ahsub{\rho_r}T $
\hfill by rule \rl{t-pvar}\\
which is what we wanted to prove \vs

For (3):
\paragraph{Case} $\mathcal{F} : \Theta;\Delta\vdash pat~~\wvec{pat}\reconChk{T_1\to T_2}
\Pi\Delta_2;\Gamma_1,\Gamma_2 . (\ahsub{\rho'}Pat)~~\wvec{Pat}\spineret S/\Theta_2;\rho_2\circ\rho_1$\vs

$\Theta;\Delta\vdash pat\reconChk{T_1}\Pi\Delta_1;\Gamma_1 . Pat/\Theta_1;\rho_1$\\
$\Theta_1;\Delta_1\vdash\wvec{pat}\reconChk{\ahsub\rho{T_2}}
\Pi\Delta_2;\Gamma_2 . \wvec{Pat}\spineret S/\Theta_2;\rho_2$
\hfill by assumption\\

$\Theta_2\vdash\rho_2\oft\Theta_1$\hfill by invariant of rule\\
$\Theta_3\vdash\rho_3\circ\rho_2\oft\Theta_1$\hfill \emph{(further refinement substitution)} by composition\\
$\Delta_i\vdash\ep\oft\Theta_3$\hfill lifting substitution\\

$\Delta_i,\ahsub\ep\ahsub{\rho_3\circ\rho_2}\Delta_1;\ahsub\ep\ahsub{\rho_3\circ\rho_2}\Gamma_1\vdash
\ahsub\ep\ahsub{\rho_3\circ\rho_2}Pat\checks\ahsub\ep\ahsub{\rho_3\circ\rho_2\circ\rho_1}T_1$\\
\hfill by i.h. on (1). \onestar\\

$\Delta_i,\ahsub\ep\ahsub{\rho_3}\Delta_2;\ahsub\ep\ahsub{\rho_3}\Gamma_2\vdash\ahsub\ep\ahsub{\rho_3}\wvec{Pat}\checks
\ahsub\ep\ahsub{\rho_3\circ\rho_2\circ\rho_1}T_2\spineret\ahsub\ep\ahsub{\rho_3} S$\\
\hfill by i.h. on (2)\\

we note that in pattern elaboration we have:\\
$\Delta_2=\ahsub{\rho_2}\Delta_1,\Delta'_2$\\
\hfill $\Delta_2$ is the context $\Delta_1$ with the hole instantiation applied and
some extra assumptions (i.e. $\Delta'_2$))).\\
and $\Gamma_2=\ahsub{\rho_2}\Gamma_1,\Gamma'_2$\\
\hfill $\Gamma_2$ is the context $\Gamma_1$ with the hole instantiation applied and
some extra assumptions (i.e. $\Gamma'_2$).\\

and we can weaken \onestar\ to:\\
$\Delta_i,\ahsub\ep\ahsub{\rho_3\circ\rho_2}\Delta_1,\ahsub\ep\ahsub{\rho_3}\Delta'_2;
\ahsub\ep\ahsub{\rho_3\circ\rho_2}\Gamma_1,\ahsub\ep\ahsub{\rho_3}\Gamma'_2\vdash
\ahsub\ep\ahsub{\rho_3\circ\rho_2}Pat\checks\ahsub\rho\ahsub{\rho_3\circ\rho_2\circ\rho_1}T_1$\\

$\Delta_i,\ahsub\ep\ahsub{\rho_3}\Delta_2;\ahsub\ep\ahsub{\rho_3}\Gamma_2\vdash
(\ahsub\ep\ahsub{\rho_3\circ\rho_2}Pat)(\ahsub\ep\ahsub{\rho_3}\wvec{Pat})
\checks\ahsub\ep\ahsub{\rho_3\circ\rho_2\circ\rho_1}T_1\to
\ahsub\ep\ahsub{\rho_3\circ\rho_2\circ\rho_1}T_2\spineret\ahsub\ep\ahsub{\rho_3} S$\\
\hfill by \rl{t-sarr}.\\

$\Delta_i,\ahsub\ep\ahsub{\rho_3}\Delta_2;\ahsub\ep\ahsub{\rho_3}\Gamma_2\vdash
\ahsub\ep\ahsub{\rho_3}(\ahsub{\rho_2}Pat\,\wvec{Pat})
\checks\ahsub\ep\ahsub{\rho_3\circ\rho_2\circ\rho_1}(T_1\to T_2)\spineret\ahsub\ep\ahsub{\rho_3} S$
\hfill by properties of substitution\\
which is what we wanted to show.\vs

\paragraph{Case} $\mathcal{F} :
\Theta;\Delta\vdash\yux{\ibox c}{\wvec{pat}}\reconChk{\pie{X\oft U}T}
\Pi\Delta_2;\Gamma_2 . \yux{(\ahsub{\rho_1}\ibox{C})}{\wvec{Pat}}\spineret S/\Theta_2;\rho_2\circ\rho_1$\vs

$\Theta;\Delta\vdash c\reconChk{U} C/\Theta_1;\Delta_1;\rho_1$\\
$\Theta_1;\Delta_1 \vdash\wvec{pat}\reconChk{\asub{C/X}{\ahsub{\rho_1} T}}
\Pi\Delta_2;\Gamma_2 . \wvec{Pat}\spineret S/\Theta_2;\rho_2$
\hfill by assumption\\

$\Theta_2\vdash\rho_2\oft\Theta_1$\hfill by invariant of rule\\
$\Theta_3\vdash\rho_3\circ\rho_2\oft\Theta_1$\hfill \emph{(further refinement substitution)} by composition\\
$\Delta_i\vdash\ep\oft\Theta_3$\hfill lifting substitution\\

$\Delta_i,\ahsub\ep\ahsub{\rho_3\circ\rho_2}\Delta_1\vdash
\ahsub\ep\ahsub{\rho_3\circ\rho_2} C\checks\ahsub\ep\ahsub{\rho_3\circ\rho_2\circ\rho_1} U$\\
\hfill by property of the index language\onestar\\

$\Delta_i,\ahsub\ep\ahsub{\rho_3}\Delta_2;\ahsub\ep\ahsub{\rho_3}\Gamma_2\vdash
\ahsub\ep\ahsub{\rho_3}\wvec{Pat}\checks
\ahsub\ep\ahsub{\rho_3\circ\rho_2}([C/X]\ahsub{\rho_1}T)\spineret\ahsub\ep\ahsub{\rho_3}S$\\
\hfill by i.h. (3)\\

as before, we note that:\\
$\Delta_2=\ahsub{\rho_2}\Delta_1,\Delta'_2$
\hfill $\Delta_2$ is the context $\Delta_1$ with the hole instantiation applied and
some extra assumptions (i.e. $\Delta'_2$).\\

and we can weaken \onestar to:\\
$\Delta_i,\ahsub\ep\ahsub{\rho_3\circ\rho_2}\Delta_1,\ahsub\ep\ahsub{\rho_3}\Delta'_2\vdash
\ahsub\ep\ahsub{\rho_3\circ\rho_2} C\checks\ahsub\ep\ahsub{\rho_3\circ\rho_2\circ\rho_1} U$\\

Note that $\ahsub\ep\ahsub{\rho_3\circ\rho_2}([C/X]\ahsub{\rho_1}T) =
[(\ahsub\ep\ahsub{\rho_3\circ\rho_2}C)/X](\ahsub\ep\ahsub{\rho_3\circ\rho_2\circ\rho_1}T)$\\
by properties of substitution\\

$\Delta_i,\ahsub\ep\ahsub{\rho_3}\Delta_2;\ahsub\ep\ahsub{\rho_3}\Gamma_2\vdash
\yux{\ibox{\ahsub\ep\ahsub{\rho_3\circ\rho_2} C}}{(\ahsub\ep\ahsub{\rho_3}\wvec{Pat})}\checks
\pie{X\oft (\ahsub\ep\ahsub{\rho_3\circ\rho_2\circ\rho_1} U)}(\ahsub\ep\ahsub{\rho_3\circ\rho_2\circ\rho_1}T)
\spineret\ahsub\ep\ahsub{\rho_3}S$
\hfill by \rl{t-spi}\\

$\Delta_i,\ahsub\ep\ahsub{\rho_3}\Delta_2;\ahsub\ep\ahsub{\rho_3}\Gamma_2\vdash
\ahsub\ep\ahsub{\rho_3}(\yux{\ibox{\ahsub{\rho_2} C}}{\wvec{Pat})})\checks
\ahsub\ep\ahsub{\rho_3\circ\rho_2\circ\rho_1}(\pie{X\oft U}T)
\spineret\ahsub\ep\ahsub{\rho_3}S$\\
\hfill by properties of substitution\\
which is what we wanted to show.\vs

\paragraph{Case} $\mathcal{F} : \Theta;\Delta\vdash\wvec{pat}\reconChk{\piei{X\oft U}T}
\Pi\Delta_1;\Gamma_1 . \yux{(\ahsub{\rho_1}C)}{\wvec{Pat}}\spineret S/\Theta_1;\rho_1$\vs

$\genHole (\hole Y: \Delta.U) = C$\\
$\Theta,\hole Y\oft{\Delta.U};\Delta\vdash\wvec{pat}\reconChk{\asub{C/X}T}
\Pi\Delta';\Gamma' . \wvec{Pat}/\Theta';\rho\spineret S$
\hfill by assumption\\

$\Theta,?Y\oft \Delta.U;\Delta\vdash C\checks U$ \hfill by genhole invariant\\
$\Delta_i,\ahsub\ep\ahsub{\rho_r\circ\rho_1}\Delta
\vdash\ahsub\ep\ahsub{\rho_r\circ\rho_1}C\checks \ahsub\ep\ahsub{\rho_r\circ\rho_1}U$\\
\hfill applying substitutions $\ep, \rho_r and \rho_1$\\

noting that $\Delta_1 = \ahsub{\rho_1}\Delta,\Delta'_1$\\

$\Delta_i,\ahsub\ep\ahsub{\rho_r}(\ahsub{\rho_1}\Delta,\Delta'_1)
\vdash\ahsub\ep\ahsub{\rho_r\circ\rho_1}C\checks \ahsub\ep\ahsub{\rho_r\circ\rho_1}U$
\hfill by weakening\\

$\Delta_i,\ahsub\ep\ahsub{\rho_r}\Delta_1;\ahsub\ep\ahsub{\rho_r}\Gamma'\vdash\ahsub\ep\ahsub{\rho_r}\wvec{Pat}
\checks\ahsub\ep\ahsub{\rho_r\circ\rho_1}[C/X]T
\spineret\ahsub\ep\ahsub{\rho_r}S$\\
\hfill by i.h. (3)\\

$\Delta_i,\ahsub\ep\ahsub{\rho_r}\Delta_1;\ahsub\ep\ahsub{\rho_r}\Gamma'\vdash\ahsub\ep\ahsub{\rho_r}\wvec{Pat}
\checks[\ahsub\ep\ahsub{\rho_r\circ\rho_1}C/X](\ahsub\ep\ahsub{\rho_r\circ\rho_1}T)
\spineret\ahsub\ep\ahsub{\rho_r}S$\\
\hfill by properties of substitution\\

$\Delta_i,\ahsub\ep\ahsub{\rho_r}\Delta_1;\ahsub\ep\ahsub{\rho_r}\Gamma'\vdash
\yux{\ibox{\ahsub\ep\ahsub{\rho_r\circ\rho_1}C}}{\ahsub\ep\ahsub{\rho_r}\wvec{Pat}}\checks
\piei{X\oft \ahsub\ep\ahsub{\rho_r\circ\rho_1}U}{(\ahsub\ep\ahsub{\rho_r\circ\rho_1}T)}
\spineret\ahsub\ep\ahsub{\rho_r}S$\\
\hfill by \rl{t-spi}\\

$\Delta_i,\ahsub\ep\ahsub{\rho_r}\Delta_1;\ahsub\ep\ahsub{\rho_r}\Gamma'\vdash
\ahsub\ep\ahsub{\rho_r}\yux{\ibox{\ahsub{\rho_1}C}}{\wvec{Pat}}\checks
\ahsub\ep\ahsub{\rho_r\circ\rho_1}(\piei{X\oft U}T)
\spineret\ahsub\ep\ahsub{\rho_r}S$\\
\hfill by properties of substitution\\
which is what we wanted to show\vs
  \end{flushleft}
\end{proof}

We can now prove the theorem:
\begin{thm}[Soundness]~
  \begin{enumerate}
  \item If $\Theta;\Delta;\Gamma\vdash \rclo e \theta \reconChk T E/\Theta_1;\rho_1$
    then for any grounding hole instantiation $\rho_g$ s.t.
    $\cdot \vdash \rho_g : \Theta_1$ and $\rho_0 = \rho_g \circ \rho_1$, we have\\
    $\ahsub{\rho_0}\Delta;\ahsub{\rho_0}\Gamma\vdash \ahsub{\rho_g}E \checks \ahsub{\rho_0}T$.

  \item If $\Theta;\Delta;\Gamma\vdash \rclo e \theta \recon E\oft T/\Theta_1;\rho_1$
    then for any grounding hole instantiation $\rho_g$ s.t. $\cdot \vdash \rho_g :
    \Theta_1$ and $\rho_0 = \rho_g\circ\rho_1$, we have\\
    $\ahsub{\rho_0}\Delta;\ahsub{\rho_0}\Gamma\vdash \ahsub{\rho_g}E \synths\ahsub{\rho_g}T$.

  \item If $\Delta;\Gamma\vdash \rclo {pat\mapsto e} \theta\reconChk {S\to T}
    \Pi\Delta';\Gamma' . Pat : \theta' \mapsto E$
    then $\Delta;\Gamma\vdash\Pi\Delta';\Gamma' . Pat : \theta' \mapsto E\checks S\to T$.
  \end{enumerate}
\end{thm}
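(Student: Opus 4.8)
The plan is to prove the three statements simultaneously by mutual induction on the elaboration derivation, since the judgments are mutually recursive: checking invokes synthesis through \elsyn, synthesis invokes checking in the argument premises of \elapp and \elmapp, and both invoke branch elaboration through \elcase. For each elaboration rule I would apply the induction hypothesis to every premise and then reassemble the conclusion using the matching typing rule from Figure~\ref{fig:comptyp}, pushing the hole-instantiation and refinement substitutions through with standard substitution lemmas. The structural cases \elfn, \elmlam, \elmlami, \elann, \elmapp and \elmappi are routine once the substitutions are tracked, relying only on the assumption that the index-level substitution preserves well-typedness.

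The central bookkeeping concern is the composition of hole instantiations and the asymmetry between the two modes: in checking mode the type $T$ is an input expressed over $\Theta$, so the conclusion is stated over $\rho_0=\rho_g\circ\rho_1$, whereas in synthesis mode $T$ is output over $\Theta_1$, so grounding with $\rho_g$ alone suffices. In a two-premise rule such as \elapp the premises produce $\rho_1$ with $\Theta_1\vdash\rho_1\oft\Theta$ and $\rho_2$ with $\Theta_2\vdash\rho_2\oft\Theta_1$, and the rule returns $\rho_2\circ\rho_1$. Given a grounding $\cdot\vdash\rho_g\oft\Theta_2$, I would apply the induction hypothesis to the second premise using $\rho_g$ and to the first premise using the grounding $\rho_g\circ\rho_2$, which is well-typed by composition. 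Since $\rho_0=\rho_g\circ\rho_2\circ\rho_1$, the two resulting derivations sit in compatible contexts once substitutions are pushed inward, and \rl{t-app} combines them.

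For the synthesis cases \elvar and \elconst, the elaborated term is built by inserting holes for the implicit prefix of the looked-up type via the auxiliary judgment $\Theta;\Delta\vdash E\oft T\recon E'\oft T'/\Theta'$; here I would invoke the Implicit parameter instantiation lemma (Lemma~\ref{lem:inst}) to conclude that $E'$ synthesizes $T'$ under any grounding. For \elsyn, where we switch from synthesis to checking and run index-level unification $\Theta_1;\ahsub{\rho}\Delta\vdash T_1\doteq\ahsub\rho T/\Theta_2;\rho'$, I would appeal to the stated soundness of unification, namely $\ahsub{\rho'}T_1=\ahsub{\rho'}\ahsub{\rho}T$, so that after grounding the synthesized type and the checked type coincide and \rl{t-syn} applies.

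The hard part will be the case rule \elcase together with the branch rule \elbranch. The scrutinee synthesizes a closed type $S$, and each branch is elaborated through \elsubst, which elaborates $pat$ to $Pat$ with synthesized index context $\Delta_p$, program context $\Gamma_p$, type $S_p$ and residual holes $\Theta_p$, then lifts the holes to fresh index variables via $\Delta_p'\vdash\ep\oft\Theta_p$, and finally computes a refinement $\theta_R=\theta_r,\theta_p$ by unifying $\ahsub{\ep}S_p$ with $S$. To discharge \rl{t-branch} I would invoke the Pattern elaboration lemma (Lemma~\ref{lem:patelab}), whose strengthening by a further refinement substitution $\rho_r$ is exactly what lets the pattern-spine induction go through, to obtain that $\ahsub{\ep}Pat$ checks against the lifted type; then show from soundness of the index-level unification that $\Delta_r\vdash\theta_R\oft(\Delta,\Delta_p',\ahsub{\ep}\Delta_p)$ splits into a valid scrutinee refinement $\Delta_r\vdash\theta_r\oft\Delta$ and a pattern refinement $\theta_p$, so that $Pat$ has the refined type $\asub{\theta_r}S$. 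The genuine difficulty is reconciling the three distinct substitutions in play, the lifting $\ep$, the computation-level refinement $\theta_r$, and the hole instantiation $\rho$, and showing they commute so that the elaborated body, checked under $\Delta_r;\asub{\theta_r}\Gamma,\Gamma_r$ against $\asub{\theta_r}T$, lines up precisely with the premises of \rl{t-branch}. Because the branch judgment returns an empty residual hole context, statement~(3) needs no grounding instantiation; but verifying that every hole has indeed been absorbed by $\ep$ is what makes this case delicate.
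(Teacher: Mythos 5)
Your proposal is correct and follows essentially the same route as the paper's own proof: simultaneous induction on the elaboration derivation, Lemma~\ref{lem:inst} for the variable and constant cases, the $\rho_r$-strengthened pattern elaboration lemma for branches, the instantiation of the outer grounding as a composition (e.g.\ $\rho_g'\circ\rho_2$) for two-premise rules, and the splitting of the unifier $\theta_R$ into a scrutinee refinement $\theta_r$ and a pattern refinement $\theta_p$ in the branch case. The points you flag as delicate (commuting $\ep$, $\theta_r$, and $\rho$, and the emptiness of the residual hole context in branches) are exactly where the paper's appendix proof spends its effort.
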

\begin{proof}
  \begin{flushleft}
By simultaneous induction on the first derivation.

For (1):

\paragraph{Case} ${\mathcal{D}} : \Theta;\Delta;\Gamma\vdash\rclo{\casee e {\wvec b}}\theta\reconChk T
    \casee E {\wvec B} /\Theta';\rho$\vs

$\Theta;\Delta;\Gamma\vdash\rclo e\theta\recon E\oft S/\cdot;\rho$
\hfill by inversion on \rl{\elcase}\\

$\ahsub\rho\Delta;\ahsub\rho\Gamma\vdash\rclo{\wvec b}{\ahsub\rho\theta}\reconChk{S\to \ahsub{\rho}T}\wvec B$
\hfill by inversion on \rl{\elcase}\\

for any grounding hole inst. $\rho'$ we have
$\ahsub\rho\Delta;\ahsub\rho\Gamma\vdash E\synths S$
\hfill by I.H. noting $\rho'=\cdot$ and $\rho'\circ\rho=\rho$\\

$\ahsub\rho\Delta;\ahsub\rho\Gamma\vdash B\oft S\to\ahsub\rho T$
\hfill for every branch by (3)\\

$\ahsub\rho\Delta;\ahsub\rho\Gamma\vdash\casee E \wvec{B}\checks\ahsub\rho T$
\hfill by \rl{t-case}\vs
Note that because $E$ is ground then the only grounding hole to instantiate is the empty substitution.
\vs

\paragraph{Case} $\mathcal{D} :
\Theta;\Delta;\Gamma\vdash\rclo{\fne x e}\theta \reconChk{T_1\to T_2} \fne x E/\Theta_1;\rho_1$
\vs

$\Theta;\Delta;\Gamma,x\oft T_1\vdash \rclo e\theta \reconChk{T_2} E/\Theta_1;\rho_1$
\hfill by assumption\\

for any grounding hole inst. $\rho_g$ we have:
$\ahsub{\rho_0}\Delta;\ahsub{\rho_o}(\Gamma,x\oft T_1)\vdash\ahsub{\rho_g} E \checks \ahsub{\rho_0} T_2$
\hfill by i.h. (1) with $\rho_0=\rho_g\circ\rho_1$\\

$\ahsub{\rho_0}\Delta;(\ahsub{\rho_o}\Gamma),x\oft(\ahsub{\rho_0}T_1)\vdash\ahsub{\rho_g} E \checks \ahsub{\rho_0} T_2$\\
\hfill by properties of substitution\\

$\ahsub{\rho_0}\Delta;\ahsub{\rho_o}\Gamma\vdash\fne x {(\ahsub{\rho_g} E)} \checks(\ahsub{\rho_0}T_1)\to(\ahsub{\rho_0} T_2)$
\hfill by \rl{t-fn}\\

$\ahsub{\rho_0}\Delta;\ahsub{\rho_o}\Gamma\vdash\ahsub{\rho_g}(\fne x E) \checks\ahsub{\rho_0}(T_1)\to T_2)$\\
\hfill by properties of substitution\\

which is what we wanted to show
\vs

\paragraph{Case} $\mathcal{D} :
\Theta;\Delta;\Gamma\vdash\rclo{\mlame X e}\theta \reconChk{\pie{X\oft U}T} \mlame X E/\Theta_1;\rho_1$
\vs

$\Theta;\Delta,X\oft U;\Gamma\vdash \rclo e{\theta, X/X} \reconChk{T} E/\Theta_1;\rho_1$
\hfill by assumption\\

for any grounding hole inst. $\rho_g$ we have: $\ahsub{\rho_0}(\Delta,X\oft U);\ahsub{\rho_0}\Gamma\vdash
\ahsub{\rho_g}E \checks \ahsub{\rho_o}T$
\hfill by i.h.(1) with $\rho_0=\rho_g\circ\rho_1$\\

$\ahsub{\rho_o}\Delta,X\oft (\ahsub{\rho_0}U);\ahsub{\rho_0}\Gamma\vdash
\ahsub{\rho_g}E \checks \ahsub{\rho_o}T$
\hfill by properties of subst\\

$\ahsub{\rho_o}\Delta;\ahsub{\rho_0}\Gamma\vdash
\mlame X {\ahsub{\rho_g}E} \checks \pie{X\oft\ahsub{\rho_0} U}(\ahsub{\rho_o}T)$
\hfill by \rl{t-mlam}\\

$\ahsub{\rho_o}\Delta;\ahsub{\rho_0}\Gamma\vdash
\ahsub{\rho_g}\mlame X E \checks\ahsub{\rho_0}\pie{X\oft U}T$\\
\hfill by properties of substitution\\
which is what we wanted to show\vs

\paragraph{Case}$\mathcal{D} :
\Theta;\Delta;\Gamma\vdash\rclo{e}\theta \reconChk{\piei{X\oft U}T} \mlame X E/\Theta_1;\rho_1$\vs

this case follows the same structure as the previous\vs

\noindent
\paragraph{Case} $\mathcal{D} :
\Theta;\Delta;\Gamma\vdash\rclo{\ibox{c}}\theta\reconChk{\ibox U}\ibox C/\Theta_1;\rho_1$\vs

\noindent
$\Theta;\Delta\vdash\rclo c \theta\reconChk U C/\Theta_1;\rho_1$
\hfill by assumption\\

for any grounding inst. $\rho_g$ we have $\ahsub{\rho_0}\Delta;\ahsub{\rho_0}\Gamma\vdash \ahsub{\rho_g}C\checks \ahsub{\rho_0}U$\\
\hfill by properties of the index language and $\rho_0=\rho_g\circ\rho_1$\\

$\ahsub{\rho_0}\Delta;\ahsub{\rho_0}\Gamma\vdash \ahsub{\rho_g}\ibox C\checks \ahsub{\rho_0} \ibox U$
\hfill by \rl{t-box} and properties of subst.\\

which is what we wanted to show\vs

\paragraph{Case} $\mathcal{D} :
\Theta;\Delta;\Gamma\vdash \rclo e\theta \reconChk T
                \ahsub{\rho_2} E/\Theta_2;\rho_2\circ\rho_1$\vs

$\Theta;\Delta;\Gamma\vdash \rclo e\theta \recon E\oft T_1/\Theta_1;\rho_1$ \\
$\Theta_1 ;\ahsub{\rho_1}{\Delta} \vdash T_1 \doteq \ahsub{\rho_1} T/\Theta_2;\rho_2$
\hfill by assumption \\

for any grounding inst. $\rho_g$ we have
$\ahsub{\rho_o}\Delta;\ahsub{\rho_0}\Gamma\vdash\ahsub{\rho_g}E\synths\ahsub{\rho_g}T_1$
\hfill by i.h. (2) where $\rho_o=\rho_g\circ\rho_1$ \quad \onestar\\

for any grounding inst. $\rho_g'$ we have
$\ahsub{\rho_g'\circ\rho_2}T_1 = \ahsub{\rho_g'\circ\rho_2\circ\rho_1}T$
\hfill by prop of unification and applying a grounding subst \twostars\\

$\ahsub{\rho_g'\circ\rho_2\circ\rho_1}\Delta;\ahsub{\rho_g'\circ\rho_2\circ\rho_1}\Gamma\vdash
\ahsub{\rho_g'\circ\rho_2} E\synths\ahsub{\rho_g'\circ\rho_2}T_1$
\hfill from \onestar using $\rho_g=\rho_g'\circ\rho_2$\\

$\ahsub{\rho_g'\circ\rho_2\circ\rho_1}\Delta;\ahsub{\rho_g'\circ\rho_2\circ\rho_1}\Gamma\vdash
\ahsub{\rho_g'\circ\rho_2} E\synths\ahsub{\rho_g'\circ\rho_2\circ\rho_1}T$
\hfill by \twostars\\

$\ahsub{\rho_g'\circ\rho_2\circ\rho_1}\Delta;\ahsub{\rho_g'\circ\rho_2\circ\rho_1}\Gamma\vdash
\ahsub{\rho_g'\circ\rho_2} E\checks\ahsub{\rho_g'\circ\rho_2\circ\rho_1}T$
\hfill by \rl{t-syn}\\
which is what we wanted to show\vs

For(2):
\paragraph{Case}$\mathcal{E} :
\Theta;\Delta;\Gamma\vdash\rclo{\yux {e}{\ibox c}}\theta \recon
   \yux{E_1}{\ibox C}\oft \asub{C/X}(\ahsub{\rho_2}T)/\Theta_2;\rho_2\circ\rho_1$
\vs

$\Theta;\Delta;\Gamma\vdash \rclo {e} \theta\recon E_1\oft \pie{X\oft U}T/\Theta_1;\rho_1$\\
$\Theta_1;\ahsub{\rho_1}\Delta\vdash \rclo c {\ahsub{\rho_1}\theta}\reconChk U
C/\Theta_2;\rho_2$
\hfill by assumption\\

for any grounding instantiation $\rho_g$
s.t. $\cdot\vdash\rho_g\oft \Theta_1$ we have
$\ahsub{\rho_g\circ\rho_1}\Delta;\ahsub{\rho_g\circ\rho_1}\Gamma\vdash
\ahsub{\rho_g}E_1\synths\ahsub{\rho_g}\pie{X\oft U}T$
\hfill by i.h. (2)\onestar\\

for any grounding instantiation $\rho_g'$
s.t. $\cdot\vdash\rho_g'\oft \Theta_2$ we have
$\ahsub{\rho_g'\circ\rho_2\circ\rho_1}\Delta\vdash
\ahsub{\rho_g'}C\checks \ahsub{\rho_g'\circ\rho_2} U$
\hfill by soundness of index reconstruction\\

$\ahsub{\rho_g'\circ\rho_2\circ\rho_1}\Delta;\ahsub{\rho_g'\circ\rho_2\circ\rho_1}\Gamma\vdash
\ahsub{\rho_g'\circ\rho_2}E_1\synths\ahsub{\rho_g'\circ\rho_2}\pie{X\oft U}T$
\hfill Note that in \onestar $\cdot\vdash\rho_g\oft\Theta_1$ so we can instantiate $\rho_g=\rho_g'\circ\rho_2$\\

$\ahsub{\rho_g'\circ\rho_2\circ\rho_1}\Delta;\ahsub{\rho_g'\circ\rho_2\circ\rho_1}\Gamma\vdash
\ahsub{\rho_g'\circ\rho_2}E_1\synths\pie{X\oft (\ahsub{\rho_g'\circ\rho_2}U)}(\ahsub{\rho_g'\circ\rho_2}T)$\\
\hfill by properties of substitutions \\

$\ahsub{\rho_g'\circ\rho_2\circ\rho_1}\Delta;\ahsub{\rho_g'\circ\rho_2\circ\rho_1}\Gamma\vdash
\yux{(\ahsub{\rho_g'\circ\rho_2}E_1)}{\ahsub{\rho_g'}C}\synths
\asub{\ahsub{\rho_g'}C\}/X}(\ahsub{\rho_g'\circ\rho_2}T)$\\
\hfill by \rl{t-app-index} \\

$\ahsub{\rho_g'\circ\rho_2\circ\rho_1}\Delta;\ahsub{\rho_g'\circ\rho_2\circ\rho_1}\Gamma\vdash
\ahsub{\rho_g'}(\yux{(\ahsub{\rho_2}E_1)}{C})\synths
\ahsub{\rho_g'}(\asub{C/X}(\ahsub{\rho_2}T)$\\
\hfill by properties of substitutions \\
which is what we wanted to show.\vs

\paragraph{Case} $\mathcal{E} :
\Theta ; \Delta; \Gamma \vdash \rclo x \theta \recon E_1 \oft T_1
  ~/~\Theta_1; \ids{\Theta_1}$ \vs

$\Gamma(x) = T$ \\
$\Theta;\Delta;\Gamma\vdash x\oft T\recon E_1\oft T_1/\Theta_1$
\hfill by assumption\\

$\Delta;\Gamma\vdash x\synths T$
\hfill by rule \rl{t-var}\onestar\\

for any grounding inst. $\rho_g$ s.t. $\cdot\vdash\Theta_1$ we have:\\
$\ahsub{\rho_g\circ\rho_1}\Delta;\ahsub{\rho_g\circ\rho_1}\Gamma\vdash \ahsub{\rho_g}E_1 \oft \ahsub{\rho_g}T_1$
\hfill by \onestar, weakening and lemma~\ref{lem:inst} with $\rho_1 = \ids{\Theta_1}$\\

which is what we wanted to show\vs

For (3):
\paragraph{Case}$\mathcal{F} :
\Delta;\Gamma\vdash\rclo{pat\mapsto e}\theta\reconChk{S\to T}\Pi\Delta_r;\Gamma_r .Pat'\oft\theta\mapsto E$
\vs

$\Delta\vdash pat\reconChk S \Pi\Delta_r;\Gamma_r.Pat\oft\theta_r\mid\theta_e$
\hfill by assumption\\

$\cdot;\cdot\vdash pat\recon Pat : S'/\Theta_p;\Delta_p;\Gamma_p\mid\cdot$\\
$\Delta_p' \vdash \rho : \Theta_p$ and
$\Gamma_r = \asub{\theta_p}\ahsub\rho{\Gamma_p}$, $Pat' = \asub{\theta_p}\ahsub\rho{Pat}$
\hfill by inversion on \elsubst\\

$\Delta'_p,\ahsub\rho{\Delta_p};\ahsub\rho{\Gamma_p}\vdash\ahsub\rho{Pat}\checks\ahsub\rho{S'}$
\hfill by pattern elaboration lemma\\

$\Delta,\Delta'_p,\ahsub\rho{\Delta_p}\vdash\ahsub\rho{S'}\doteqdot S/\Delta_r, \theta$
\hfill by inversion on \elsubst\\

where we can split $\theta$ as $\theta=\theta_r,\theta_i,\theta_e$ so that:
$\left\{
  \begin{array}{l}
    \Delta_r \vdash\theta_r\oft\Delta \\
    \Delta_r \vdash\theta_i\oft\Delta'_p \\
    \Delta_r \vdash  \theta_i,\theta_e \oft \Delta'_p, \ahsub\rho{\Delta_p}
  \end{array}\right.$\\

let $\theta_p = \theta_i,\theta_e$\\
$\asub{\underbrace{\theta_i,\theta_e}_{\theta_p}}\ahsub\rho{S'} = \asub{\theta_r}S$
\hfill by soundness of unification and the fact that $\Delta$ and
$\Delta'_p,\ahsub\rho{\Delta_p}$ are distinct\\

$\Delta_r;\asub{\theta_p}\ahsub\rho{\Gamma_p}\vdash\asub{\theta_p}\ahsub\rho{Pat}\checks\asub{\theta_p}\ahsub\rho{S'}$
\hfill by substitution lemma\\

$\Delta_r;\underbrace{\asub{\theta_p}\ahsub\rho{\Gamma_p}}_{\Gamma_r}
\vdash\underbrace{\asub{\theta_p}\ahsub\rho{Pat}}_{Pat'}\checks\asub{\theta_r}S$
\hfill by $\asub\theta\ahsub\rho{S'}=\asub{\theta_r}S$\\

$\cdot ; \Delta_r;\asub{\theta_r}\Gamma,\Gamma_r\vdash\rclo e{\theta_r\circ\theta,~\theta_e}
\reconChk{\asub{\theta_r}{T}} E/\cdot;\cdot$
\hfill by assumption\\

$\Delta_r;\asub{\theta_r}\Gamma,\Gamma_r\vdash E \checks \asub{\theta_r}T$
\hfill by (1)\\

$\Delta;\Gamma\vdash\Pi\Delta_r;\Gamma_r . Pat'\oft\theta_r\mapsto E\checks S\to T$
\hfill by \rl{t-branch}\\
which is what we wanted to show.
\end{flushleft}
\end{proof}
\chapter{Babybel's Translation Meta-theory}\label{sec:bbproofs}

\begin{lem}[Ambient Context]\label{lem:ctxproof}
If $\Gamma(\code{u}) = [\Psi \vdash \const a]$ then\\
$ \eval{\Gamma} (u) =  \tp{sftm}{ \eval{\Psi}, \const a}$.
\end{lem}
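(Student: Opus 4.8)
The plan is to proceed by a short induction on the structure of the ambient context $\Gamma$, following exactly the recursive definition of context translation $\eval{\cdot} = \cdot$ and $\eval{\Gamma, x\oft \tau} = \eval{\Gamma}, x\oft \eval{\tau}$ from Figure~\ref{fig:trantypsigctx}. The statement is really a bookkeeping fact relating variable lookup in $\Gamma$ with variable lookup in $\eval{\Gamma}$, so the induction is driven by the position at which $u$ is found; equivalently one may read it as an induction on the (implicit) lookup derivation for $\Gamma(\code{u}) = [\Psi \vdash \const a]$.

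First I would dispatch the base case $\Gamma = \cdot$: here the premise $\Gamma(\code{u}) = [\Psi \vdash \const a]$ can never hold, so the conclusion is vacuously true. For the step case $\Gamma = \Gamma', y\oft \tau$ I would split on whether $y$ is the variable $u$ being looked up. If $y = u$, then lookup forces $\tau = [\Psi \vdash \const a]$, and unfolding the context-translation clause gives $\eval{\Gamma}(\code{u}) = \eval{[\Psi \vdash \const a]}$; it then remains only to appeal to the clause for translating contextual types, which realizes $[\Psi \vdash \const a]$ as the GADT representation $\tp{sftm}{\eval{\Psi}, \const a}$ of SF terms of base type $\const a$ in the translated context $\eval{\Psi}$ (cf. the $\n{sftm}$ family in Figure~\ref{fig:sfc}). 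If instead $y \neq u$, lookup skips the top declaration, so $\Gamma'(\code{u}) = [\Psi \vdash \const a]$ and the induction hypothesis delivers $\eval{\Gamma'}(\code{u}) = \tp{sftm}{\eval{\Psi}, \const a}$; since $\eval{\Gamma}(\code{u}) = \eval{\Gamma'}(\code{u})$ by the same skip, we conclude.

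There is no real obstacle here; the single point requiring care is the interpretation of the contextual-type clause: the translation of $[\Psi \vdash \const a]$ must be precisely $\tp{sftm}{\eval{\Psi}, \const a}$, i.e. an assumption of contextual type is realized in \smlg as an $\n{sftm}$ object indexed by its translated context \emph{and} its type index $\eval{\const a} = \const a$, rather than collapsing to the bare type index alone. This lemma is exactly the fact that will be invoked in the \rl{t-qvar} and \rl{t-pvar} cases when establishing the Terms lemma (Lemma~\ref{lem:tm}), where the translated quoted or parameter variable must be typed against $\eval{\Psi \vdash \const a}$.
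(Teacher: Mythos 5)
Your proof is correct and takes essentially the same approach as the paper, whose entire proof of this lemma is the one-line ``Induction on the structure of $\Gamma$''; you have simply spelled out the base and step cases that the paper leaves implicit. Your closing observation that the translation of an assumption of contextual type $[\Psi \vdash \const a]$ must be read as the indexed representation $\tp{sftm}{\eval{\Psi}, \const a}$ (rather than the bare $\eval{\const a}$ suggested by the displayed clause $\eval{[\Psi\vdash A]} = \eval{A}$) is exactly the point on which the lemma, and its later use in the \rl{t-qvar} case of Lemma~\ref{lem:tmproof}, depends.
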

\begin{proof}
Induction on the structure of $\Gamma$.
\end{proof}

\begin{lem}[Terms]\label{lem:tmproof}$\;$
  \begin{enumerate}
  \item   If $\Gamma;\Psi\vdash M \oft A$ then $\cdot;\eval\Gamma\vdash\eval{M}_{\Psi\vdash A} \oft \eval{\Psi\vdash A}$.
  \item If $\Gamma ; \Psi \vdash \sigma \oft \Phi$ then $\cdot ; \eval{\Gamma} \vdash \eval{\sigma}_ {\Psi \vdash \Phi} : \eval{\Psi\vdash\Phi}$
  \end{enumerate}
\end{lem}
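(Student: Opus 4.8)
The plan is to prove parts (1) and (2) simultaneously by induction on the SF typing derivations, first strengthening the statement so that the auxiliary judgments through which the translation recurses are also covered by the induction hypothesis. The term translation appeals to three derived judgments: bound-variable lookup $\eval{x}^v$ (used in rule \rl{t-var}), the parameter-variable judgment $\Gamma;\Psi\vdash_v\pvar v\oft\const a$ (used in \rl{t-pvar}), and the spine judgment $\Gamma;\Psi\vdash\many M\oft A/\const a$ (used in \rl{t-con}). I would therefore add to the induction: if $\Psi(x)=\const a$ then $\cdot;\eval\Gamma\vdash\eval{x}^v_{\Psi\vdash\const a}\oft\tp{var}{\eval\Psi,\const a}$; the analogous claim for parameter variables, landing again in $\tp{var}{\eval\Psi,\const a}$; and that $\eval{\many M}_{\Psi\vdash A\downarrow\const a}$ inhabits the spine type $\tp{sp}{\eval\Psi,\eval A,\const a}$. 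These are exactly the types the enclosing constructors expect, so the generalized statement closes under mutual recursion.

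Each term rule then corresponds to a single \smlg constructor from Fig.~\ref{fig:sfc}, and the work in each case is to check that the explicit type arguments emitted by the translation instantiate that constructor's polymorphic signature so that the synthesized index agrees with the target type $\eval{\Psi\vdash A}=\tp{sftm}{\eval\Psi,\eval A}$. For \rl{t-lam} the hypothesis on the extended context (whose translation is $\tp{cons}{\eval\Psi,\const a}$) supplies the argument of \n{Lam}; for \rl{t-con} the spine hypothesis supplies the argument of \n{C}, together with the constructor name read off $\Sigma(\const c)$; for \rl{t-var} and \rl{t-pvar} the variable hypotheses supply the argument of \n{Var}. The quoted-variable rule \rl{t-qvar} is discharged directly by the already-established Ambient Context lemma (Lemma~\ref{lem:ctxproof}): since $\eval{\Gamma}(u)=\tp{sftm}{\eval\Psi,\const a}$, the \smlg variable $u$ type-checks by \rl{g-var}. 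The closure rule \rl{t-sub} uses the stated type of \code{apply\_sub} with the term hypothesis on $M$ and the part-(2) hypothesis on $\sigma$; here one must keep the source and target contexts straight, since \code{apply\_sub} moves a term from $\gamma$ to $\delta$. The case needing the most care is \rl{t-box}, where \n{Box} lifts a closed term (context translated to $\tp{nil}{}$) to one valid in an arbitrary context, and the $\boxd$ index must line up with what \n{Box} produces.

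Part (2) proceeds by induction on the substitution typing. The extension rule \rl{t-dot-sub} is immediate from \n{Dot} applied to the hypotheses on the smaller substitution and on the newly substituted term, which is precisely where part (2) depends on part (1). The genuinely delicate case, and the one I expect to be the main obstacle, is the empty/weakening substitution: its translation emits a typed shift $\tp{Shift}{\dots}(\tp{Suc}{\dots}(\cdots\tp{Id}{\dots}))$, and showing this inhabits the right $\tp{sub}{\dots}$ type requires an inner induction on the source context that matches one \n{Suc} to each variable and threads the $\tp{shift}{\dots}$ index discipline through. Unlike every other case this is not a one-step constructor match but a length-indexed iteration, so it is where the index bookkeeping is most likely to bite. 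Once that case is in place, the remaining obligations --- variable lookup, the direction of the substitution indices, and the equalities relating $\eval{\cdot}$ on contexts to the $\tp{cons}{\dots}$/$\tp{nil}{}$ structure --- are routine and follow by unfolding the definitions of the previous section.
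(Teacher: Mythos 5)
Your proposal follows essentially the same route as the paper's proof: induction on the SF typing derivation, matching each rule to its \smlg{} constructor, discharging \rl{t-qvar} via the Ambient Context lemma and \rl{t-sub} via the stated type of \code{apply\_sub}. The only difference is that you make explicit (as added induction hypotheses for variables, parameter variables, and spines, and as an inner induction for the iterated \code{Suc} shifts) the bookkeeping that the paper compresses into phrases like ``by the correctness of our translation function'' and ``similar for the other cases''; this is a faithful elaboration rather than a different argument.
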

\begin{proof}
  \begin{flushleft}
Induction on the typing derivation.

\paragraph{Case}
$\mathcal{D} = \infer[\rl{t-var}]
    {\Gamma ; \Psi \vdash x \oft \const a}
    {\Psi(x) = \const a}
$
\\[1em]
$\cdot ; \eval \Gamma \vdash \tp{Var}{\eval{\Psi},~\const a}~k : \tp{sftm}{\eval{\Psi},~\const a}$ \hfill
by the correctness of our translation function that computes  the position $k$ of $x$ in $\Psi$.
\\[0.5em]
$\cdot;\eval\Gamma\vdash \eval{x}_{\Psi\vdash\const a} : \eval{\Psi\vdash \const a}$ \hfill by definition\\

\paragraph{Case}
$\mathcal{D} = \infer[\rl{t-sub}]
{\Gamma ; \Phi \vdash M[\sigma]^{\Phi}_{\Psi} : A}
{\Gamma ; \Psi \vdash M \oft A &
                   \Gamma ; \Phi \vdash \sigma \oft \Psi }$\vs

$\cdot ; \eval{\Gamma} \vdash \eval{M}_{\Psi \vdash A} \oft \eval{\Psi
  \vdash A}$ \hfill by i.h.\\[0.5em]
$\cdot ; \eval{\Gamma} \vdash \eval{\sigma}_{\Phi \vdash \Psi} \oft
\eval{\Phi \vdash \Psi}$ \hfill by i.h.
\\[0.5em]
$e = \code{apply\_sub}~\eval{M}_{\Psi\vdash A}~ \eval{\sigma}_{\Phi\vdash\Psi}$ and\\
$\cdot ; \eval{\Gamma} \vdash e : \eval {\Phi \vdash A}$ \hfill by property of $\code{apply\_sub}$

\paragraph{Case}
$\mathcal{D} = \infer[\rl{t-qvar}]
{\Gamma ; \Psi \vdash \mvar u \oft \const a}
{\Gamma(\code{u}) = [\Psi \vdash \const a]}$
\\[1em]
$\eval\Gamma(\code{u}) = \tp{sftm}{ \eval{\Psi, \const a}}$ \hfill by
Lemma \ref{lem:ctxproof}  \\[0.5em]
$\cdot ; \eval \Gamma \vdash \eval{\mvar u} \oft \eval {\Psi \vdash \const a}$
\hfill by rule $\rl{g-var}$ and definition

\paragraph{Case}
$\mathcal{D} = \infer[\rl{t-lam}]
    {\Gamma ; \Psi \vdash \lam x M \oft \const a \to A}
    {\Gamma ; \Psi, x\oft \const a \vdash M \oft A}$\vs

$\cdot;\eval\Gamma\vdash \eval{M}_{\Psi,\const a\vdash A} \oft \eval
{\Psi, \const a \vdash A}$ \hfill by i.h.\\[0.5em]
$\eval{\Psi \vdash \const a \to A} = \tp{sftm}{\eval \Psi, \eval{\const a \to A}} = \tp{sftm}{\eval \Psi, \tp{arr}{\const a, \eval A}     }$\hfill by definition
\\[0.5em]
$\cdot;\eval\Gamma\vdash \tp{Lam~}{\tp{cons~}{\eval{\Psi},\const a, \eval A~}\,}~\eval{M}_{\Gamma,\const a\vdash A} : \eval{\Psi \vdash \const a \to A}$\hfill
by using \rl{g-con} \\[1em]

Similar for the other cases.
\end{flushleft}
\end{proof}

\begin{lem}[Pat.]\label{lem:patproof}
  If $\vdash\ptj {pat} \tau \Gamma$ then $\cdot\vdash\ptj{\eval{pat}_{\Psi\vdash A}^\Gamma}{\eval \tau}{\Gamma}$.
\end{lem}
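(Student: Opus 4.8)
The plan is to prove this by induction on the derivation of $\vdash\ptj{pat}{\tau}{\Gamma}$, exactly mirroring the proofs of Lemmas~\ref{lem:ctxproof} and~\ref{lem:tmproof}. Because \sml patterns are generated by only two productions --- a variable $x$ and a fully applied constructor $\yux{\constr k}{\many{pat}}$ --- there are precisely two cases to consider, corresponding to the rules \rl{t-pat-var} and \rl{t-pat-con} of Figure~\ref{fig:smltyp}.

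First I would dispatch the variable case. Here the derivation is the axiom \rl{t-pat-var}, so $pat = x$, the type $\tau$ is arbitrary, and the bound context is $\Gamma = x\oft\tau$. The pattern translation of Figure~\ref{fig:transml} gives $\eval{x}^{\Gamma} = x$, and the goal follows immediately from the \smlg rule \rl{gp-var}, which types $x$ against $\eval{\tau}$ and produces the binding $x\oft\eval{\tau}$ together with an empty type-variable context; no induction hypothesis is required.

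The substantive case is \rl{t-pat-con}, where $pat = \yux{\constr k}{\many{pat}}$, $\tau = D$, $\Xi(\constr k) = \many\tau\to D$, and $\Gamma = \Gamma_1,\ldots,\Gamma_n$ with a sub-derivation $\vdash\ptj{pat_i}{\tau_i}{\Gamma_i}$ for each $i$. The plan is to invoke the induction hypothesis on each sub-derivation, obtaining $\cdot\vdash\ptj{\eval{pat_i}^{\Gamma_i}}{\eval{\tau_i}}{\Gamma_i}$, and then to reassemble the pieces with the \smlg rule \rl{gp-con}. To line the premises up I would appeal to the signature translation of Figure~\ref{fig:trantypsigctx}, which sends $\constr k\oft\many\tau\to D$ to $\constr k\oft\eval{\many\tau}\to\tapp{D}{}$; this guarantees that the translated constructor expects exactly the arguments $\eval{\tau_i}$ and produces $\tapp{D}{} = \eval{D}$, matching the translation $\eval{\yux{\constr k}{\many{pat}}}^{\Gamma} = \aconstr k{}{\eval{\many{pat}}^{\Gamma}}$.

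The main obstacle I anticipate is bureaucratic rather than mathematical: the \smlg pattern judgment $\Delta_o\vdash\ptj{pat}{\tau}{\Delta;\Gamma}$ carries a type-variable context $\Delta$ and, in \rl{gp-con}, introduces a block of equalities $\many{\tau_1}\equiv\many{\tau_2}$, whereas the \sml judgment has no such apparatus. The observation that settles this is that \sml is monomorphic --- its constructors carry no type parameters --- so under $\eval{\cdot}$ every $\constr k$ becomes a \smlg constructor with an empty quantifier prefix $\many\alpha$; hence the equality premises of \rl{gp-con} are vacuous and the bound type context stays $\cdot$. I would record this as a short remark so that the instance of \rl{gp-con} we use degenerates to $\cdot\vdash\ptj{\aconstr k{}{\eval{\many{pat}}}}{\eval{D}}{\cdot;\Gamma}$, which is the desired conclusion. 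The remaining work --- concatenating the $\Gamma_i$ and checking the translated spine of arguments against $\eval{\many\tau}$ --- then proceeds componentwise, just as in Lemma~\ref{lem:tmproof}.
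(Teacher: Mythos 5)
Your proof is correct and takes essentially the same route as the paper, whose entire argument for this lemma is the single line ``by induction on the type derivation for patterns''; your two cases, the appeal to the signature translation of Figure~\ref{fig:trantypsigctx}, and the observation that monomorphic \sml constructors render the $\many\alpha$ prefix and equality premises of \rl{gp-con} vacuous are precisely the details that one-liner leaves implicit. Nothing further is needed.
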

\begin{proof}
   By induction on the type derivation for patterns.\vs
\end{proof}

\begin{lem}[Ctx. Pat.]\label{lem:ctxpatproof}
  If $\Psi\vdash\ptj R A \Gamma$ then $\cdot\vdash\ptj{\eval{R}_{\Psi\vdash A}^\Gamma}{\eval{\Psi \vdash A}}{\Gamma}$.
\end{lem}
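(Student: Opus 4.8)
The plan is to prove the statement by structural induction on the derivation of $\Psi\vdash\ptj R A \Gamma$, carried out simultaneously with the analogous statements for the pattern-spine judgment $\Psi\vdash\ptj{\many M}{A/B}{\Gamma}$ and the parameter-pattern judgment $\Psi\vdash_v\ptj{\pvar w}{\const a}{\Gamma}$, since these occur as premises of the rules in Fig.~\ref{fig:mlctp}. The whole argument is a mirror image of the proof of Lemma~\ref{lem:tmproof}: the typing rules for SF patterns track the typing rules for SF terms, and the clauses defining $\eval{R}_{\Psi\vdash A}^\Gamma$ track the clauses defining $\eval{M}_{\Psi\vdash A}$, so each case is discharged by applying the induction hypothesis to the immediate subpatterns and then appealing to the \smlg pattern-typing rule \rl{gp-con} instantiated at the corresponding GADT constructor from Fig.~\ref{fig:sfc}. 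Because the translation writes underscores for all type indices, every use of \rl{gp-con} leaves the type-variable context empty, so the only binding context we must track is $\Gamma$, exactly as in the statement.

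First I would dispatch the structural cases. For \rl{tp-lam} the translation yields $\tp{Lam}{\_,\_,\_}\,\eval{R}_{\Psi,\const a\vdash A}^\Gamma$; the induction hypothesis gives $\cdot\vdash\ptj{\eval{R}_{\Psi,\const a\vdash A}^\Gamma}{\eval{\Psi,\const a\vdash A}}{\Gamma}$, and since $\eval{\Psi,\const a\vdash A}=\tp{sftm}{\tp{cons}{\eval\Psi,\const a},\eval A}$ is precisely the argument type of \n{Lam}, rule \rl{gp-con} produces a pattern whose result type is $\eval{\Psi\vdash\const a\to A}$, as required. The case \rl{tp-box} is identical, with \n{Box} forcing the body's context to $\tp{nil}{}$ to match the premise $\cdot\vdash\ptj R A \Gamma$. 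The constructor case \rl{tp-constr} (using \n{C}) together with the spine cases \rl{tp-sp} and \rl{tp-sp-em} (using \n{Cons} and \n{Empty}) are handled the same way, reading the argument and result indices off the signatures of Fig.~\ref{fig:sfc} and matching them against the translations of SF types and contexts from Fig.~\ref{fig:trantypsigctx}; here the $\Gamma$ contexts of the subderivations are simply concatenated, as in the source rule. The quoted-variable case \rl{tp-mvar} translates $\pmetavar u$ to a bare \smlg pattern variable $u$, which binds $u\oft\eval{\Psi\vdash\const a}$ by rule \rl{gp-var}, exactly the binding recorded in the conclusion.

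The cases requiring the most care are the variable cases. For a bound variable (rule \rl{tp-var}) I would reuse the positional-translation correctness already invoked in the \rl{t-var} case of Lemma~\ref{lem:tmproof}: by induction on the structure of $\Psi$, whenever $\Psi(x)=\const a$ the \debruijn translation $\eval{x}_\Psi^p$ is a well-typed pattern of type $\tp{var}{\eval\Psi,\const a}$, with the base $\eval{x}_{\Psi,x\oft\const a}^p=\tp{Top}{\_,\_}$ matching \n{Top} and the step $\eval{y}_{\Psi,x\oft\const b}^p=\tp{Pop}{\_,\_,\_}\,\eval{y}_\Psi^p$ matching \n{Pop}; wrapping this in \n{Var} yields a term pattern of type $\tp{sftm}{\eval\Psi,\const a}$. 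The parameter-pattern judgment is treated the same way, with \rl{tp-pvar} corresponding to the base \n{Top} (and binding the parameter variable via \rl{gp-var}) and \rl{tp-pvar-\#} contributing one further \n{Pop}, the extra context entry $y\oft\_$ being threaded through the index of \n{Pop} so as to realise the built-in weakening. The main obstacle throughout is purely this index bookkeeping: verifying that the existentially quantified indices of each GADT constructor can be instantiated so that \rl{gp-con} delivers exactly the context/type pair dictated by $\eval{\Psi\vdash A}$, and that successive \n{Pop}'s keep $\eval\Psi$ in step with the source context $\Psi$. There is no genuine difficulty beyond this arithmetic, which is why, given the set-up and the term-level Lemma~\ref{lem:tmproof}, the result is essentially immediate.
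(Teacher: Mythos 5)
Your proposal is correct and follows essentially the same route as the paper: induction on the pattern-typing derivation (mutually with the spine and parameter-pattern judgments), matching each translation clause against the typing of the corresponding GADT constructor, with the quoted-variable case discharged by \rl{gp-var} exactly as in the paper's only explicitly worked case. The paper merely states the remaining cases are similar, so your additional detail on the structural and de~Bruijn variable cases is consistent elaboration rather than a different argument.
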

\begin{proof}
  \begin{flushleft}
By induction on the typing derivation. The interesting case is the one where $R$ is a pattern variable.

\paragraph{Case:}
$\mathcal{D} = \infer[\rl{tp-mvar}]
    {\Psi\vdash\ptj{\pmetavar u} {\const a} {u\oft[\Psi\vdash \const a]}}{}$\vs

$\cdot\vdash \ptj {u} {\tp{sftm}{\eval\Psi, \const a}} {\cdot;u\oft \tp{sftm}{\eval\Psi, \const a}}$ \hfill by \rl{gp-var}\\[0.5em]
$\cdot\vdash \ptj{\eval{\pmetavar u}^{u\oft[\Psi\vdash \const a]}_{\const a}} {\eval{\Psi\vdash \const a}} {\cdot;\eval {u\oft[\Psi\vdash \const a]}}$
\hfill by definition
\\[1em]
The other cases are similar.
\end{flushleft}
\end{proof}

\begin{thm}[Main]$\;$
 \begin{enumerate}
 \item If $\Gamma\vdash e\checks \tau$ then $\cdot;\eval\Gamma\vdash\eval{e}_{\Gamma\vdash\tau} \oft \eval\tau$.
\item If $\Gamma\vdash i \synths \tau$ then $\cdot;\eval\Gamma\vdash\eval{i}_{\Gamma\vdash\tau} \oft \eval\tau$.
  \end{enumerate}
\end{thm}

\begin{proof}
  \begin{flushleft}
By mutual induction on the type derivations.\\

\paragraph{Case}
    $ \mathcal{D} = \infer[\rl{t-ctx-obj}]
    {\Gamma \vdash [\hat\Psi \vdash M] \checks [\Psi \vdash \const a]}
    {\Gamma;\Psi \vdash M \oft \const a}$\vs

$\cdot ; \eval{\Gamma} \vdash \eval{M}_{\Psi\vdash \const a} \oft \eval {\Psi \vdash \const a}$ \hfill from Lemma \ref{lem:tmproof}.\\[0.5em]
$\eval{\Psi\vdash\const a} = \tp{sftm}{ \eval{\Psi}, \eval{A}}$ and $\eval{\Psi\vdash\const M}_{\Psi\vdash \const a} = \eval{M}_{\Psi\vdash \const a}$ \hfill \\[0.5em]
$\eval{\Gamma} \vdash \eval{\hat\Psi\vdash M]}_{\Psi\vdash\const a}
\oft \eval{[\Psi \vdash \const a]}$ \hfill by definition

\paragraph{Case}
$\mathcal{D} = \infer[\rl{t-cm}]
{\Gamma \vdash \cmatche i \branches \checks \tau}
{\Gamma \vdash i \synths [\Psi \vdash \const a] &
  \forall b\in\many b\mathrel{.}\Gamma \vdash b \checks [\Psi \vdash \const a] \to \tau}$\vs

%
We note that each $b_i \in \many b$ is of the form $[\Psi\vdash R] \mapsto e$.\\[0.5em]
$\Psi \vdash \ptj R  A \Gamma$ \hfill \\[0.5em]
$\Gamma, \Gamma' \vdash e \checks \tau$  \hfill by typing inversion\\[0.5em]
$\cdot\vdash\ptj {\eval{R}^{\Gamma'}_{\Psi\vdash \const a}} {\const a} \Gamma'$ \hfill by Lemma \ref{lem:ctxpatproof}\\[0.5em]
$\cdot;\eval{\Gamma,\Gamma'}\vdash \eval{e}_{\Gamma,\Gamma'\vdash \tau} \oft \eval\tau$ \hfill by i.h. (1).\\[0.5em]
$\cdot;\Gamma,\Gamma' \vdash \eval{R}^{\Gamma'}_{\Psi\vdash\const a} \mapsto \eval{e}_{\Gamma,\Gamma'} \oft \const a \to \eval\tau$ \hfill by \rl{g-branch} \\[0.5em]
$\cdot;\eval{\Gamma} \vdash \eval{i}_{\Gamma\vdash\const a} \oft \const a$ \hfill by I.H (2). \\[0.5em]
$\cdot ; \eval{\Gamma} \vdash \eval{\cmatche i \branches}_{\Gamma  \vdash \tau} : \eval{\tau}$ \hfill
by $\rl{g-match}$

\paragraph{Case}
$\mathcal{D} =
\infer[\rl{t-emb}]
{\Gamma\vdash\embe i \checks \tau}
{\Gamma\vdash i \synths \tau' \qquad \tau=\tau'}$\vs

$\cdot ; \eval{\Gamma} \vdash \eval{i}_{\Gamma \vdash \tau} : \eval{\tau}$ \hfill by i.h.(2)\\[1em]

The other cases for part 1) are similar.

\paragraph{Case}
$\mathcal{D} = \infer[\rl{t-var}]
{\Gamma \vdash x \synths \tau}
{\Gamma(x) = \tau}$\vs

$\cdot ; \eval{\Gamma} \vdash \eval{x}_{\Gamma \vdash \tau} : \eval{\tau}$ \hfill trivial using \rl{g-var}.

\paragraph{Case}
$\mathcal{D} = \infer[\rl{t-app}]
{\Gamma\vdash \app{i}{e} \synths \tau}
{\Gamma\vdash i \synths \tau'\to\tau &
  \Gamma\vdash e \checks \tau'}$\vs

$\cdot; \eval{\Gamma} \vdash \eval{i}_{\Gamma\vdash\tau'\to\tau} \oft \eval{\tau' \to \tau}$ \hfill by i.h.\\[0.5em]
$\cdot; \eval{\Gamma} \vdash \eval{i}_{\Gamma\vdash\tau'\to\tau} \oft \eval{\tau'} \to \eval\tau$ \hfill by definition \\[0.5em]
$\cdot;\eval{\Gamma} \vdash \eval{e}_{\Gamma\vdash\tau'} \oft \eval{\tau'}$ \hfill by i.h.\\[0.5em]
$\cdot ; \eval{\Gamma} \vdash \eval{\app i e}_{\Gamma \vdash \tau} : \eval{\tau}$ \hfill by \rl{g-app}\\[1em]
The other cases for part 2) are similar.
\end{flushleft}
\end{proof}


\chapter[Translating Booleans in Beluga]{A Beluga Implementation of Orca's Translation Example}\label{chp:belugatran}

This chapter contains an implementation of the example from
Section~\ref{sec:tranbool} written using the Beluga language. We do
not explain this example in detail, but it is here as a reference to
show how similar the two languages are.

We start by defining the languages and translation of types using the
logical framework LF, this is similar to the Orca implementation,
except that where Orca uses a function to implement the translation of
types here we have a relation that is inhabited by the pairs of source
types and their translations. Notice the schema declaration that
allows for storing blocks of assumptions in the context thus avoiding
the need to declare a context relation.

\lstset{language=Beluga}
\begin{lstlisting}
LF s-tp : type =
| bool : s-tp
| arr : s-tp -> s-tp -> s-tp
;

LF s-tm : s-tp -> type =
| app : s-tm (arr S T) -> s-tm S -> s-tm T
| lam : (s-tm S -> s-tm T) -> s-tm (arr S T)
| tt : s-tm bool
| ff : s-tm bool
| ife : s-tm bool -> s-tm T -> s-tm T -> s-tm T
;

LF t-tp : type =
| tunit : t-tp
| tsum : t-tp -> t-tp -> t-tp
| tarr : t-tp -> t-tp -> t-tp
;

LF t-tm : t-tp -> type =
| tapp : t-tm (tarr S T) -> t-tm S -> t-tm T
| tlam : (t-tm S -> t-tm T) -> t-tm (tarr S T)
| tone : t-tm tunit
| tinl : t-tm s -> t-tm (tsum S T)
| tinr : t-tm t -> t-tm (tsum S T)
| tcase : t-tm (tsum S T) ->
          (t-tm S -> t-tm R) ->
          (t-tm T -> t-tm R) ->
          t-tm R
;

LF tran-tp : s-tp -> t-tp -> type =
| t-bool : tran-tp bool (tsum tunit tunit)
| t-arr : tran-tp S S' ->
          tran-tp T T' ->
          tran-tp (arr S T) (tarr S' T')
;

schema ctx = block (s: s-tm S, t: t-tm T, tr: tran-tp S T;
\end{lstlisting}

The type \lstinline!tran-tp! replaces the Orca function with the same
name. But then, we need to show that the relation is defined for all
terms, and that it is deterministic. We prove both facts in the
\lstinline!comp-tran-tp! and \lstinline!unique-tran-tp! lemmas,
respectively.

\begin{lstlisting}
LF ex-tran : s-tp -> type =
| ex : {T : t-tp} tran-tp S T -> ex-tran S
;

rec comp-tran-tp : {S : [|- s-tp]} [|- ex-tran S] =
mlam S => case [|- S] of
| [|- bool] => [|- ex (tsum tunit tunit) t-bool]
| [|- arr S S'] =>
  let [|- ex T TR] = comp-tran-tp [|- S] in
  let [|- ex T' TR'] = comp-tran-tp [|- S'] in
  [|- ex (tarr T T') (t-arr TR TR')]
;

LF eq : t-tp -> t-tp -> type =
| refl : eq T T
;

rec unique-tran-tp : (g:ctx) [g |- tran-tp S[] T[]] -> [|- tran-tp S T'] -> [|- eq T T'] =
fn tr1 => fn tr2 => case tr1 of
| [g |- t-bool] =>
  let [|- t-bool] = tr2 in
  [|- refl]
| [g |- t-arr T1 T2] =>
  let [|- t-arr T1' T2'] = tr2 in
  let [|- refl] = unique-tran-tp [g |- T1] [|- T1'] in
  let [|- refl] = unique-tran-tp [g |- T2] [|- T2'] in
  [|- refl]
;
\end{lstlisting}

\filbreak
With these lemmas in place, the translation can be implemented, notice
how the context relation was made superfluous by the information
carried in the context schema. Context schemas for Orca remain an
unexplored subject.

\begin{lstlisting}
rec tran : (g : ctx) [|- tran-tp S T] -> [g |- s-tm S[]] -> [g |- t-tm T[]] =
fn tr => fn e => case e of
| {M : [g|- s-tm (arr S[] T[])]} [g |- app M N] =>
  let [|- ex S' TR] = comp-tran-tp [|- S] in
  let [|- TR'] = tr in
  let [g |- M'] = tran [|- t-arr TR TR'] [g |- M] in
  let [g |- N'] = tran [|- TR] [g |- N] in
  [g |- tapp M' N']

| [g |- lam \x.M] =>
  let [|- t-arr TR TR'] :
        [|- tran-tp (arr S T) (tarr S' T')] = tr
  in
  let [g, x: block s:s-tm S[]
        , t:t-tm S'[]
        , tr:tran-tp S[] S'[] |- M'[..,x.t]] =
      tran [|- TR']
           [g, x: block s:s-tm S[]
             , t:t-tm S'[]
             , tr:tran-tp S[] S'[] |- M[..,x.s]]
  in
  [g |- tlam \x.M']

| [g |- tt] =>
  let [|- t-bool] = tr in
  [g |- tinr tone]

| [g |- ff] =>
  let [|- t-bool] = tr in
  [g |- tinl tone]

| [g |- ife C M N] =>
  let [g |- C'] = tran [|- t-bool] [g |- C] in
  let [g |- M'] = tran tr [g |- M] in
  let [g |- N'] = tran tr [g |- N] in
  [g |- tcase C' (\x. M'[..]) (\x. N'[..])]

| {#p:[g |- block s:s-tm S[]
                , t:t-tm T[]
                , tr:tran-tp S[] T[]]} [g |- #p.s] =>
  let [|- refl] = unique-tran-tp [g |- #p.tr] tr in
  [g |- #p.t]
;
\end{lstlisting}



\begin{thebibliography}{111}
\providecommand{\natexlab}[1]{#1}
\providecommand{\url}[1]{\texttt{#1}}
\expandafter\ifx\csname urlstyle\endcsname\relax
  \providecommand{\doi}[1]{doi: #1}\else
  \providecommand{\doi}{doi: \begingroup \urlstyle{rm}\Url}\fi

\bibitem[Abel(2006)]{Abel:phdthesis}
A.~Abel.
\newblock \emph{A Polymorphic Lambda-Calculus with Sized Higher-Order Types}.
\newblock PhD thesis, Ludwig-Maximilians University, 2006.

\bibitem[Abel and Pientka(2011)]{Abel:TLCA11}
A.~Abel and B.~Pientka.
\newblock Higher-order dynamic pattern unification for dependent types and
  records.
\newblock In L.~Ong, editor, \emph{10th {I}nternational {C}onference on {T}yped
  {L}ambda {C}alculi and {A}pplications {(TLCA'11)}}, Lecture Notes in Computer
  Science (LNCS 6690), pages 10--26. Springer, 2011.

\bibitem[Allais et~al.(2017)Allais, Chapman, McBride, and McKinna]{Allais:2017}
G.~Allais, J.~Chapman, C.~McBride, and J.~McKinna.
\newblock Type-and-scope safe programs and their proofs.
\newblock In \emph{Proceedings of the 6th ACM SIGPLAN Conference on Certified
  Programs and Proofs}, CPP 2017, pages 195--207. ACM, 2017.

\bibitem[Altenkirch(1993)]{Altenkirch:TLCA93}
T.~Altenkirch.
\newblock A formalization of the strong normalization proof for {System F} in
  {LEGO}.
\newblock In M.~Bezem and J.~F. Groote, editors, \emph{International Conference
  on Typed Lambda Calculi and Applications (TLCA '93)}, volume 664 of
  \emph{Lecture Notes in Computer Science}, pages 13--28. Springer, 1993.
\newblock ISBN 3-540-56517-5.

\bibitem[Asperti et~al.(2012)Asperti, Ricciotti, Coen, and Tassi]{Asperti:2012}
A.~Asperti, W.~Ricciotti, C.~S. Coen, and E.~Tassi.
\newblock A bi-directional refinement algorithm for the calculus of
  (co)inductive constructions.
\newblock \emph{Logical Methods in Computer Science}, 8:\penalty0 1--49, 2012.

\bibitem[Augustsson(1985)]{Augustsson:FPCA85}
L.~Augustsson.
\newblock Compiling pattern matching.
\newblock In J.-P. Jouannaud, editor, \emph{Functional Programming Languages
  and Computer Architecture (FPCA'85)}, volume 201 of \emph{Lecture Notes in
  Computer Science (LNCS)}, pages 368--381. Springer, 1985.

\bibitem[Baelde et~al.(2014)Baelde, Chaudhuri, Gacek, Miller, Nadathur, Tiu,
  and Wang]{Baelde:AbellaTutorial}
D.~Baelde, K.~Chaudhuri, A.~Gacek, D.~Miller, G.~Nadathur, A.~Tiu, and Y.~Wang.
\newblock Abella: {A} system for reasoning about relational specifications.
\newblock \emph{Journal of Formalized Reasoning}, 7\penalty0 (2):\penalty0
  1--89, 2014.

\bibitem[Barendregt(1992)]{Barendregt:1992}
H.~P. Barendregt.
\newblock Lambda calculi with types.
\newblock In S.~Abramsky, D.~M. Gabbay, and S.~E. Maibaum, editors,
  \emph{Handbook of Logic in Computer Science (Vol. 2)}, pages 117--309. Oxford
  University Press, Inc., New York, NY, USA, 1992.
\newblock ISBN 0-19-853761-1.

\bibitem[Belanger et~al.(2013)Belanger, Monnier, and Pientka]{Belanger:CPP13}
O.~S. Belanger, S.~Monnier, and B.~Pientka.
\newblock Programming type-safe transformations using higher-order abstract
  syntax.
\newblock In G.~Gonthier and M.~Norrish, editors, \emph{Third International
  Conference on Certified Programs and Proofs (CPP'13)}, Lecture Notes in
  Computer Science (LNCS 8307), pages 243--258. Springer, 2013.

\bibitem[Benton et~al.(2012{\natexlab{a}})Benton, Hur, Kennedy, and
  McBride]{Benton:JAR12}
N.~Benton, C.~Hur, A.~Kennedy, and C.~McBride.
\newblock Strongly typed term representations in coq.
\newblock \emph{J. Autom. Reasoning}, 49\penalty0 (2):\penalty0 141--159,
  2012{\natexlab{a}}.

\bibitem[Benton et~al.(2012{\natexlab{b}})Benton, Hur, Kennedy, and
  McBride]{Benton:12}
N.~Benton, C.-K. Hur, A.~J.~. Kennedy, and C.~McBride.
\newblock Strongly typed term representations in {C}oq.
\newblock \emph{Journal of Automated Reasoning}, 49\penalty0 (2):\penalty0
  141--159, 2012{\natexlab{b}}.

\bibitem[Boespflug and Pientka(2011)]{Boespflug:LFMTP11}
M.~Boespflug and B.~Pientka.
\newblock Multi-level contextual modal type theory.
\newblock In G.~Nadathur and H.~Geuvers, editors, \emph{6th International
  Workshop on Logical Frameworks and Meta-languages: Theory and Practice
  (LFMTP'11)}, Electronic Proceedings in Theoretical Computer Science (EPTCS),
  2011.

\bibitem[Brady(2013)]{Brady:JFP13}
E.~Brady.
\newblock Idris, a general-purpose dependently typed programming language:
  Design and implementation.
\newblock \emph{Journal of Functional Programming}, 23:\penalty0 552--593, 9
  2013.

\bibitem[Cave and Pientka(2012)]{Cave:POPL12}
A.~Cave and B.~Pientka.
\newblock Programming with binders and indexed data-types.
\newblock In \emph{{39th Annual {ACM} SIGPLAN-SIGACT Symposium on Principles of
  Programming Languages (POPL'12)}}, pages 413--424. ACM Press, 2012.

\bibitem[Cave and Pientka(2013)]{Cave:LFMTP13}
A.~Cave and B.~Pientka.
\newblock First-class substitutions in contextual type theory.
\newblock In \emph{Proceedings of the Eighth ACM SIGPLAN International Workshop
  on Logical Frameworks and Meta-Languages: Theory and Practice (LFMTP'13)},
  pages 15--24. ACM Press, 2013.
\newblock ISBN 978-1-4503-2382-6.

\bibitem[Cave et~al.(2014)Cave, Ferreira, Panangaden, and Pientka]{Cave:POPL14}
A.~Cave, F.~Ferreira, P.~Panangaden, and B.~Pientka.
\newblock Fair reactive programming.
\newblock In S.~Jagannathan and P.~Sewell, editors, \emph{The 41st Annual {ACM}
  {SIGPLAN-SIGACT} Symposium on Principles of Programming Languages, {POPL}
  '14}, pages 361--372. {ACM}, 2014.

\bibitem[Cervesato et~al.(2002)Cervesato, Pfenning, Walker, and
  Watkins]{Cervesato02tr}
I.~Cervesato, F.~Pfenning, D.~Walker, and K.~Watkins.
\newblock A concurrent logical framework {II}: Examples and applications.
\newblock Technical Report CMU-CS-02-102, Department of Computer Science,
  Carnegie Mellon University, 2002.

\bibitem[Chen and Xi(2005)]{Chen05:CombiningProgramsWithProofs}
C.~Chen and H.~Xi.
\newblock Combining programming with theorem proving.
\newblock In O.~Danvy and B.~C. Pierce, editors, \emph{10th International
  Conference on Functional Programming}, pages 66--77, 2005.

\bibitem[Cheney and Hinze(2003{\natexlab{a}})]{ChH03Pha}
J.~Cheney and R.~Hinze.
\newblock First-class phantom types.
\newblock Technical Report CUCIS TR2003-1901, Cornell University,
  2003{\natexlab{a}}.

\bibitem[Cheney and Hinze(2003{\natexlab{b}})]{Cheney:2003}
J.~Cheney and R.~Hinze.
\newblock First-class phantom types.
\newblock Technical report, Cornell University, 2003{\natexlab{b}}.

\bibitem[Chlipala(2008)]{Chlipala:ICFP08}
A.~J. Chlipala.
\newblock Parametric higher-order abstract syntax for mechanized semantics.
\newblock In J.~Hook and P.~Thiemann, editors, \emph{13th ACM SIGPLAN
  {I}nternational {C}onference on {F}unctional {P}rogramming (ICFP'08)}, pages
  143--156. ACM, 2008.

\bibitem[Christiansen and Brady(2016)]{Christiansen:2016}
D.~Christiansen and E.~Brady.
\newblock Elaborator reflection: Extending idris in idris.
\newblock In \emph{Proceedings of the 21st ACM SIGPLAN International Conference
  on Functional Programming}, ICFP 2016, pages 284--297, New York, NY, USA,
  2016. ACM.

\bibitem[Church(1940)]{Church:1940}
A.~Church.
\newblock A formulation of the simple theory of types.
\newblock \emph{The Journal of Symbolic Logic}, 5:\penalty0 56--68, 6 1940.

\bibitem[Constable(1986)]{Constable:1986}
R.~L. Constable.
\newblock \emph{Implementing Mathematics with the Nuprl Proof Development
  System}.
\newblock Prentice-Hall, Inc., Upper Saddle River, NJ, USA, 1986.
\newblock ISBN 0-134-51832-2.

\bibitem[Coquand and Huet(1988)]{Coquand:1988}
T.~Coquand and G.~Huet.
\newblock The calculus of constructions.
\newblock \emph{Information and Computation}, 76\penalty0 (2):\penalty0 95 --
  120, 1988.

\bibitem[Danielsson and Norell(2011)]{Danielsson:2011}
N.~A. Danielsson and U.~Norell.
\newblock \emph{Parsing Mixfix Operators}, pages 80--99.
\newblock Springer Berlin Heidelberg, 2011.

\bibitem[Davies and Pfenning(2001)]{Davies:ACM01}
R.~Davies and F.~Pfenning.
\newblock A modal analysis of staged computation.
\newblock \emph{Journal of the ACM}, 48\penalty0 (3):\penalty0 555--604, 2001.
\newblock \doi{10.1145/382780.382785}.

\bibitem[de~Bruijn(1972)]{de1972lambda}
N.~de~Bruijn.
\newblock {Lambda calculus notation with nameless dummies, a tool for automatic
  formula manipulation, with application to the Church-Rosser theorem}.
\newblock \emph{Indag. Math}, 34\penalty0 (5):\penalty0 381--392, 1972.

\bibitem[de~Bruijn(1983)]{debruijn:1983}
N.~G. de~Bruijn.
\newblock \emph{AUTOMATH, a Language for Mathematics}, pages 159--200.
\newblock Springer Berlin Heidelberg, Berlin, Heidelberg, 1983.

\bibitem[de~Bruijn(1991)]{deBruijNAMS:1991}
N.~G. de~Bruijn.
\newblock {C}hecking {M}athematics with {C}omputer {A}ssistance.
\newblock \emph{Notices of the American Mathematical Society}, 38\penalty0
  (1):\penalty0 8--15, 1991.

\bibitem[Despeyroux and Hirschowitz(1994)]{Despeyroux:1994:HAS:645708.664171}
J.~Despeyroux and A.~Hirschowitz.
\newblock Higher-order abstract syntax with induction in coq.
\newblock In \emph{Proceedings of the 5th International Conference on Logic
  Programming and Automated Reasoning}, LPAR '94, pages 159--173, London, UK,
  UK, 1994. Springer-Verlag.
\newblock ISBN 3-540-58216-9.

\bibitem[Despeyroux et~al.(1997)Despeyroux, Pfenning, and
  Sch{\"u}rmann]{Despeyroux97}
J.~Despeyroux, F.~Pfenning, and C.~Sch{\"u}rmann.
\newblock Primitive recursion for higher-order abstract syntax.
\newblock In \emph{Proceedings of the Third International Conference on Typed
  Lambda Calculus and Applications (TLCA'97)}, pages 147--163. Springer, 1997.
\newblock Extended version available as Technical Report CMU-CS-96-172,
  Carnegie Mellon University.

\bibitem[Dowek et~al.(1996)Dowek, Hardin, Kirchner, and
  Pfenning]{Dowek96jicslp}
G.~Dowek, T.~Hardin, C.~Kirchner, and F.~Pfenning.
\newblock Unification via explicit substitutions: The case of higher-order
  patterns.
\newblock In M.~Maher, editor, \emph{{Joint International Conference and
  Symposium on Logic Programming}}, pages 259--273. MIT Press, Sept. 1996.

\bibitem[Dunfield and Pientka(2009)]{Dunfield:coverage08}
J.~Dunfield and B.~Pientka.
\newblock Case analysis of higher-order data.
\newblock In \emph{International Workshop on Logical Frameworks and
  Meta-Languages: Theory and Practice (LFMTP'08)}, volume 228 of
  \emph{Electronic Notes in Theoretical Computer Science (ENTCS)}, pages
  69--84. Elsevier, June 2009.

\bibitem[Dybjer(1994)]{Dybjer:1994}
P.~Dybjer.
\newblock {I}nductive {F}amilies.
\newblock \emph{Formal Aspects of Computing}, 6\penalty0 (4):\penalty0
  440--465, Jul 1994.
\newblock ISSN 1433-299X.

\bibitem[Felty and Momigliano(2012)]{Felty12}
A.~Felty and A.~Momigliano.
\newblock Hybrid: A definitional two-level approach to reasoning with
  higher-order abstract syntax.
\newblock \emph{Journal of Automated Reasoning}, 48\penalty0 (1):\penalty0
  43--105, 2012.

\bibitem[Ferreira and Pientka(2014)]{Ferreira:2014}
F.~Ferreira and B.~Pientka.
\newblock Bidirectional elaboration of dependently typed programs.
\newblock In \emph{Proceedings of the 16th International Symposium on
  Principles and Practice of Declarative Programming}, PPDP '14, pages
  161--174. ACM, 2014.

\bibitem[Ferreira and Pientka(2017)]{Ferreira:2017}
F.~Ferreira and B.~Pientka.
\newblock Programs using syntax with first-class binders.
\newblock In H.~Yang, editor, \emph{Programming Languages and Systems: 26th
  European Symposium on Programming, ESOP 2017, Uppsala, Sweden}, pages
  504--529. Springer Berlin Heidelberg, 2017.

\bibitem[Ferreira et~al.(2017)Ferreira, Thibodeau, and Pientka]{Ferreira:2017b}
F.~Ferreira, D.~Thibodeau, and B.~Pientka.
\newblock Dependent type theory with contextual types.
\newblock In \emph{23rd International Conference on Types for Proofs and
  Programs {TYPES} 2017, Budapest, Hungary}, pages 61--62, 2017.

\bibitem[Fiore and Hur(2008)]{Fiore:2008}
M.~Fiore and C.-K. Hur.
\newblock Term equational systems and logics: (extended abstract).
\newblock \emph{Electronic Notes in Theoretical Computer Science},
  218:\penalty0 171 -- 192, 2008.
\newblock ISSN 1571-0661.
\newblock Proceedings of the 24th Conference on the Mathematical Foundations of
  Programming Semantics (MFPS XXIV).

\bibitem[Gabbay and Pitts(1999)]{Gabbay:LICS99}
M.~Gabbay and A.~Pitts.
\newblock A new approach to abstract syntax involving binders.
\newblock In G.~Longo, editor, \emph{Proceedings of the 14th Annual Symposium
  on Logic in Computer Science ({LICS}'99)}, pages 214--224. IEEE Computer
  Society Press, 1999.

\bibitem[Gacek(2008)]{Gacek:IJCAR08}
A.~Gacek.
\newblock The {A}bella interactive theorem prover (system description).
\newblock In \emph{4th International Joint Conference on Automated Reasoning},
  volume 5195 of \emph{Lecture Notes in Artificial Intelligence}, pages
  154--161. Springer, 2008.

\bibitem[Gacek et~al.(2012)Gacek, Miller, and Nadathur]{Gacek:JAR12}
A.~Gacek, D.~Miller, and G.~Nadathur.
\newblock A two-level logic approach to reasoning about computations.
\newblock \emph{Journal of Automated Reasoning}, 49\penalty0 (2):\penalty0
  241--273, 2012.

\bibitem[Garrigue and R{\'e}my(2013)]{Garrigue:2013}
J.~Garrigue and D.~R{\'e}my.
\newblock Ambivalent types for principal type inference with gadts.
\newblock In \emph{Proceedings of the 11th Asian Symposium on Programming
  Languages and Systems (APLAS'13)}, Lecture Notes in Computer Science (LNCS
  8301), pages 257--272. Springer, 2013.
\newblock ISBN 978-3-319-03541-3.

\bibitem[Gonthier et~al.(2013)Gonthier, Asperti, Avigad, Bertot, Cohen,
  Garillot, Roux, Mahboubi, O'Connor, Biha, Pasca, Rideau, Solovyev, Tassi, and
  Th{\'e}ry]{Gonthier:2013}
G.~Gonthier, A.~Asperti, J.~Avigad, Y.~Bertot, C.~Cohen, F.~Garillot, S.~L.
  Roux, A.~Mahboubi, R.~O'Connor, S.~O. Biha, I.~Pasca, L.~Rideau, A.~Solovyev,
  E.~Tassi, and L.~Th{\'e}ry.
\newblock \emph{A Machine-Checked Proof of the Odd Order Theorem}, pages
  163--179.
\newblock Springer Berlin Heidelberg, Berlin, Heidelberg, 2013.

\bibitem[Harper and Licata(2007)]{HarperLicata:JFP07}
R.~Harper and D.~R. Licata.
\newblock {Mechanizing Metatheory in a Logical Framework}.
\newblock \emph{{Journal of Functional Programming}}, 17\penalty0
  (4-5):\penalty0 613--673, 2007.

\bibitem[Harper and Pfenning(2005)]{Harper03tocl}
R.~Harper and F.~Pfenning.
\newblock On equivalence and canonical forms in the {LF} type theory.
\newblock \emph{ACM Transactions on Computational Logic}, 6\penalty0
  (1):\penalty0 61--101, 2005.

\bibitem[Harper et~al.(1993)Harper, Honsell, and Plotkin]{Harper93jacm}
R.~Harper, F.~Honsell, and G.~Plotkin.
\newblock A framework for defining logics.
\newblock \emph{Journal of the ACM}, 40\penalty0 (1):\penalty0 143--184,
  January 1993.

\bibitem[Harrison(2009)]{Harrison09a}
J.~Harrison.
\newblock Hol light: An overview.
\newblock In S.~Berghofer, T.~Nipkow, C.~Urban, and M.~Wenzel, editors,
  \emph{TPHOLs}, volume 5674 of \emph{Lecture Notes in Computer Science}, pages
  60--66. Springer, 2009.
\newblock ISBN 978-3-642-03358-2.

\bibitem[Harrison et~al.(2014)Harrison, Urban, and Wiedijk]{Harrison:2014}
J.~Harrison, J.~Urban, and F.~Wiedijk.
\newblock History of interactive theorem proving.
\newblock In J.~H. Siekmann, editor, \emph{Computational Logic}, volume 9
  Suplement C of \emph{Handbook of the History of Logic}, pages 135 -- 214.
  North-Holland, 2014.

\bibitem[Howard(1980)]{Howard80}
W.~A. Howard.
\newblock The formulae-as-types notion of construction.
\newblock In J.~P. Seldin and J.~R. Hindley, editors, \emph{{To H.~B.~Curry;
  Essays on Combinatory Logic, Lambda Calculus and Formalism}}, pages 479 --
  490. Academic Press, 1980.

\bibitem[Hughes et~al.(1996)Hughes, Pareto, and Sabry]{Hughes:POPL96}
J.~Hughes, L.~Pareto, and A.~Sabry.
\newblock Proving the correctness of reactive systems using sized types.
\newblock In \emph{23rd ACM SIGPLAN-SIGACT {S}ymposium on {P}rinciples of
  {P}rogramming {L}anguages (POPL'96)}, pages 410--423, New York, NY, USA,
  1996. ACM.
\newblock ISBN 0-89791-769-3.

\bibitem[Jacob-Rao(2017)]{JacobRao:2017}
R.~Jacob-Rao.
\newblock Well-founded recursion in terms and types.
\newblock Master's thesis, McGill University, 2017.

\bibitem[Jia et~al.(2008)Jia, Vaughan, Mazurak, Zhao, Zarko, Schorr, and
  Zdancewic]{Jia:2008}
L.~Jia, J.~A. Vaughan, K.~Mazurak, J.~Zhao, L.~Zarko, J.~Schorr, and
  S.~Zdancewic.
\newblock Aura: A programming language for authorization and audit.
\newblock In \emph{13th ACM SIGPLAN {I}nternational {C}onference on
  {F}unctional {P}rogramming}, pages 27--38. ACM, 2008.

\bibitem[Lee et~al.(2012)Lee, Oliveira, Cho, and Yi]{Lee2012}
G.~Lee, B.~C. D.~S. Oliveira, S.~Cho, and K.~Yi.
\newblock \emph{GMeta: A Generic Formal Metatheory Framework for First-Order
  Representations}, pages 436--455.
\newblock Springer Berlin Heidelberg, 2012.

\bibitem[Leroy et~al.(2016{\natexlab{a}})Leroy, Doligez, Frisch, Garrigue,
  R\'emy, and Vouillon]{Ocaml:2016}
X.~Leroy, D.~Doligez, A.~Frisch, J.~Garrigue, D.~R\'emy, and J.~Vouillon.
\newblock \emph{The OCaml System Release 4.03 -- Documentation and user's
  manual}.
\newblock Institut National de Recherche en Informatique et en Automatique,
  2016{\natexlab{a}}.

\bibitem[Leroy et~al.(2016{\natexlab{b}})Leroy, Doligez, Frisch, Garrigue,
  R\'emy, and Vouillon]{ocamlManual}
X.~Leroy, D.~Doligez, A.~Frisch, J.~Garrigue, D.~R\'emy, and J.~Vouillon.
\newblock \emph{The OCaml system (release 4.04): Documentation and user's
  manual}.
\newblock Institut National de Recherche en Informatique et en Automatique,
  2016{\natexlab{b}}.

\bibitem[Luo(1994)]{Luo:1994}
Z.~Luo.
\newblock \emph{Computation and reasoning: a type theory for computer science.}
\newblock Oxford University Press, Inc., New York, NY, USA, 1994.

\bibitem[Luther(2001)]{Luther:IJCAR01}
M.~Luther.
\newblock More on implicit syntax.
\newblock In R.~Gore, A.~Leitsch, and T.~Nipkow, editors, \emph{{First
  International Joint Conference on Automated Reasoning (IJCAR'01)}}, Lecture
  Notes in Artificial Intelligence (LNAI) 2083, pages 386--400. Springer, 2001.

\bibitem[Martin-L{\"o}f(1982)]{Martin-Loef79a}
P.~Martin-L{\"o}f.
\newblock Constructive mathematics and computer programming.
\newblock In \emph{6-th International Congress for Logic, Methodology and
  Philosophy of Science, 1979}, pages 153--175. North-Holland, 1982.

\bibitem[Martin-L{\"o}f(1984)]{Martin-Loef84a}
P.~Martin-L{\"o}f.
\newblock \emph{Intuitionistic {Type Theory}}, volume~1 of \emph{Studies in
  Proof Theory Lecture Notes}.
\newblock Bibliopolis, Napoli, 1984.

\bibitem[Miller(1991)]{Miller91iclp}
D.~Miller.
\newblock Unification of simply typed lambda-terms as logic programming.
\newblock In \emph{{8th International Logic Programming Conference}}, pages
  255--269. MIT Press, 1991.

\bibitem[Miller and Palamidessi(1999)]{Miller:1999}
D.~Miller and C.~Palamidessi.
\newblock Foundational aspects of syntax.
\newblock \emph{ACM Comput. Surv.}, 31\penalty0 (3es), Sept. 1999.

\bibitem[Miller and Tiu(2005)]{MillerTiu:TCL05}
D.~Miller and A.~Tiu.
\newblock A proof theory for generic judgments.
\newblock \emph{ACM Transactions on Computational Logic}, 6\penalty0
  (4):\penalty0 749--783, 2005.
\newblock ISSN 1529-3785.

\bibitem[Nanevski et~al.(2008)Nanevski, Pfenning, and Pientka]{Nanevski:ICML05}
A.~Nanevski, F.~Pfenning, and B.~Pientka.
\newblock {C}ontextual {M}odal {T}ype {T}heory.
\newblock \emph{{ACM Transactions on Computational Logic}}, 9\penalty0
  (3):\penalty0 1--49, 2008.

\bibitem[Nipkow et~al.(2002)Nipkow, Paulson, and
  Wenzel]{Nipkow-Paulson-Wenzel:2002}
T.~Nipkow, L.~C. Paulson, and M.~Wenzel.
\newblock \emph{Isabelle/{HOL}: A Proof Assistant for Higher-Order Logic},
  volume 2283 of \emph{Lecture Notes in Computer Science}.
\newblock Springer, 2002.

\bibitem[Nordstr{\"o}m et~al.(1990)Nordstr{\"o}m, Petersson, and
  Smith]{NordstroemPetersonSmith90a}
B.~Nordstr{\"o}m, K.~Petersson, and J.~M. Smith.
\newblock \emph{Programming in Martin-L{\"o}fs Type Theory. An introduction}.
\newblock Clarendon Press, Oxford, 1990.

\bibitem[Norell(2007)]{Norell:phd07}
U.~Norell.
\newblock \emph{Towards a practical programming language based on dependent
  type theory}.
\newblock PhD thesis, Department of Computer Science and Engineering, Chalmers
  University of Technology, Sept. 2007.
\newblock {T}echnical {R}eport 33D.

\bibitem[Paulin-Mohring(1993)]{Paulin-Mohring:TLCA93}
C.~Paulin-Mohring.
\newblock Inductive definitions in the system coq - rules and properties.
\newblock In M.~Bezem and J.~F. Groote, editors, \emph{International Conference
  on Typed Lambda Calculi and Applications(TLCA '93)}, volume 664 of
  \emph{Lecture Notes in Computer Science}, pages 328--345. Springer, 1993.

\bibitem[Paulson(1988)]{Paulson88cade}
L.~C. Paulson.
\newblock Isabelle: The next seven hundred theorem provers.
\newblock In E.~Lusk and R.~Overbeek, editors, \emph{9th International
  Conference on Automated Deduction (CADE-9)}, pages 772--773, Argonne,
  Illinois, 1988. Springer Verlag Lecture Notes in Computer Science (LNCS) 310.
\newblock System abstract.

\bibitem[Pfenning(1989)]{Pfenning89lics}
F.~Pfenning.
\newblock {Elf}: A language for logic definition and verified meta-programming.
\newblock In \emph{Fourth Annual Symposium on Logic in Computer Science}, pages
  313--322, Pacific Grove, California, June 1989. IEEE Computer Society Press.

\bibitem[Pfenning(2001)]{Pfenning:LICS01}
F.~Pfenning.
\newblock Intensionality, extensionality, and proof irrelevance in modal type
  theory.
\newblock In J.~Halpern, editor, \emph{Proceedings of the 16th Annual Symposium
  on Logic in Computer Science (LICS'01)}, pages 221--230, Boston,
  Massachusetts, June 2001. IEEE Computer Society Press.

\bibitem[Pfenning and Davies(2001)]{Pfenning01mscs}
F.~Pfenning and R.~Davies.
\newblock A judgmental reconstruction of modal logic.
\newblock \emph{Mathematical Structures in Computer Science}, 11\penalty0
  (4):\penalty0 511--540, 2001.

\bibitem[Pfenning and Elliott(1988)]{Pfenning88pldi}
F.~Pfenning and C.~Elliott.
\newblock Higher-order abstract syntax.
\newblock In \emph{{ACM SIGPLAN } Symposium on Language Design and
  Implementation (PLDI'88)}, pages 199--208, June 1988.

\bibitem[Pfenning and Paulin-Mohring(1990)]{Pfenning:1990}
F.~Pfenning and C.~Paulin-Mohring.
\newblock \emph{Inductively defined types in the Calculus of Constructions},
  pages 209--228.
\newblock Springer-Verlag, New York, NY, 1990.

\bibitem[Pfenning and Sch{\"u}rmann(1999)]{Pfenning99cade}
F.~Pfenning and C.~Sch{\"u}rmann.
\newblock System description: {Twelf} --- a meta-logical framework for
  deductive systems.
\newblock In H.~Ganzinger, editor, \emph{{16th International Conference on
  Automated Deduction (CADE-16)}}, Lecture Notes in Artificial Intelligence
  (LNAI 1632), pages 202--206. Springer, 1999.

\bibitem[Pientka(2008)]{Pientka:POPL08}
B.~Pientka.
\newblock A type-theoretic foundation for programming with higher-order
  abstract syntax and first-class substitutions.
\newblock In \emph{{35th Annual {ACM} SIGPLAN-SIGACT Symposium on Principles of
  Programming Languages (POPL'08)}}, pages 371--382. ACM Press, 2008.

\bibitem[Pientka(2009)]{Pientka:TOCL09}
B.~Pientka.
\newblock Higher-order term indexing using substitution trees.
\newblock \emph{ACM Transactions on Computational Logic}, 11\penalty0
  (1):\penalty0 1--40, 2009.

\bibitem[Pientka(2013)]{Pientka:JFP13}
B.~Pientka.
\newblock An insider's look at {LF} type reconstruction: {E}verything you
  (n)ever wanted to know.
\newblock \emph{Journal of Functional Programming}, 1\penalty0 (1--37), 2013.

\bibitem[Pientka and Abel(2015)]{Pientka:TLCA15}
B.~Pientka and A.~Abel.
\newblock Structural recursion over contextual objects.
\newblock In T.~Altenkirch, editor, \emph{13th International Conference on
  Typed Lambda Calculi and Applications (TLCA'15)}. Leibniz International
  Proceedings in Informatics (LIPIcs) of Schloss Dagstuhl, 2015.

\bibitem[Pientka and Cave(2015)]{Pientka:CADE15}
B.~Pientka and A.~Cave.
\newblock Inductive {B}eluga:{P}rogramming {P}roofs ({S}ystem {De}scription).
\newblock In A.~P. Felty and A.~Middeldorp, editors, \emph{25th International
  Conference on Automated Deduction (CADE-25)}, Lecture Notes in Computer
  Science (LNCS 9195), pages 272--281. Springer, 2015.

\bibitem[Pientka and Dunfield(2010{\natexlab{a}})]{Pientka:Cover10}
B.~Pientka and J.~Dunfield.
\newblock {C}overing all bases: design and implementation of case analysis for
  contextual objects.
\newblock Technical report, McGill University, 2010{\natexlab{a}}.

\bibitem[Pientka and Dunfield(2010{\natexlab{b}})]{Pientka:IJCAR10}
B.~Pientka and J.~Dunfield.
\newblock Beluga: a framework for programming and reasoning with deductive
  systems ({System Description}).
\newblock In J.~Giesl and R.~Haehnle, editors, \emph{{5th International Joint
  Conference on Automated Reasoning (IJCAR'10)}}, Lecture Notes in Artificial
  Intelligence (LNAI 6173), pages 15--21. Springer-Verlag, 2010{\natexlab{b}}.

\bibitem[Pientka and Pfenning(2003)]{PientkaPfenning:CADE03}
B.~Pientka and F.~Pfenning.
\newblock Optimizing higher-order pattern unification.
\newblock In F.~Baader, editor, \emph{19th International Conference on
  Automated Deduction (CADE-19)}, Lecture Notes in Artificial Intelligence
  (LNAI) 2741, pages 473--487. Springer-Verlag, 2003.

\bibitem[Pierce and Turner(2000)]{Pierce:2000}
B.~C. Pierce and D.~N. Turner.
\newblock Local type inference.
\newblock \emph{ACM Transaction on Programming Languages and Systems},
  22\penalty0 (1):\penalty0 1--44, jan 2000.

\bibitem[Pitts(2003)]{Pitts:2003}
A.~Pitts.
\newblock Nominal logic, a first order theory of names and binding.
\newblock \emph{Information and Computation}, 186\penalty0 (2):\penalty0
  165--193, Nov. 2003.
\newblock ISSN 0890-5401.

\bibitem[Pollack(1990)]{Pollack90}
R.~Pollack.
\newblock Implicit syntax.
\newblock In G.~Huet and G.~Plotkin, editors, \emph{Proceedings of First
  Workshop on Logical Frameworks}, pages 421--434, 1990.

\bibitem[Poswolsky and Sch{\"u}rmann(2009)]{Poswolsky:DelphinDesc08}
A.~Poswolsky and C.~Sch{\"u}rmann.
\newblock System description: {Delphin}---a functional programming language for
  deductive systems.
\newblock In \emph{International Workshop on Logical Frameworks and
  Meta-Languages: Theory and Practice (LFMTP'08)}, volume 228 of
  \emph{Electronic Notes in Theoretical Computer Science (ENTCS)}, pages
  135--141. Elsevier, 2009.

\bibitem[Poswolsky and Sch{\"u}rmann(2008)]{Schuermann:ESOP08}
A.~B. Poswolsky and C.~Sch{\"u}rmann.
\newblock Practical programming with higher-order encodings and dependent
  types.
\newblock In \emph{{17th European Symposium on Programming (ESOP '08)}}, volume
  4960, pages 93--107. Springer, 2008.

\bibitem[Pottier(2006)]{Pottier:2006}
F.~Pottier.
\newblock An overview of {C}$\alpha$ml.
\newblock \emph{Electronic Notes in Theoretical Computer Science}, 148\penalty0
  (2):\penalty0 27 -- 52, 2006.
\newblock Proceedings of the ACM-SIGPLAN Workshop on {ML} (ML'05).

\bibitem[Pottier(2007)]{Pottier:LICS07}
F.~Pottier.
\newblock Static name control for {FreshML}.
\newblock In \emph{22nd {IEEE} Symposium on Logic in Computer Science
  (LICS'07)}, pages 356--365. IEEE Computer Society, July 2007.

\bibitem[Pouillard(2011)]{Pouillard:2011}
N.~Pouillard.
\newblock Nameless, painless.
\newblock In \emph{Proceedings of the 16th ACM SIGPLAN International Conference
  on Functional Programming}, ICFP '11, pages 320--332. ACM, 2011.
\newblock ISBN 978-1-4503-0865-6.

\bibitem[Sch{\"a}fer et~al.(2015)Sch{\"a}fer, Tebbi, and Smolka]{Schaefer2015}
S.~Sch{\"a}fer, T.~Tebbi, and G.~Smolka.
\newblock Autosubst: Reasoning with de bruijn terms and parallel substitutions.
\newblock In C.~Urban and X.~Zhang, editors, \emph{6th International Conference
  of Interactive Theorem Proving ({ITP})}, Lecture Notes in Computer Science
  (9236), pages 359--374. Springer, Aug 2015.

\bibitem[Schrijvers et~al.(2009)Schrijvers, {Peyton Jones}, Sulzmann, and
  Vytiniotis]{Schrijvers:2009}
T.~Schrijvers, S.~{Peyton Jones}, M.~Sulzmann, and D.~Vytiniotis.
\newblock Complete and decidable type inference for gadts.
\newblock In \emph{14th ACM SIGPLAN {I}nternational {C}onference on
  {F}unctional {P}rogramming}, ICFP '09, pages 341--352. ACM, 2009.

\bibitem[Sch{\"u}rmann et~al.(2001)Sch{\"u}rmann, Despeyroux, and
  Pfenning]{Schurmann:TCS01}
C.~Sch{\"u}rmann, J.~Despeyroux, and F.~Pfenning.
\newblock Primitive recursion for higher-order abstract syntax.
\newblock \emph{Theoretical Computer Science}, 266\penalty0 (1-2):\penalty0
  1--57, 2001.

\bibitem[Sheard and Pasalic(2008)]{Sheard:2008}
T.~Sheard and E.~Pasalic.
\newblock Meta-programming with built-in type equality.
\newblock \emph{Electronic Notes in Theoretical Computer Science},
  199:\penalty0 49 -- 65, 2008.
\newblock Proceedings of the Fourth International Workshop on Logical
  Frameworks and Meta-Languages (LFM 2004).

\bibitem[Shinwell et~al.(2003)Shinwell, Pitts, and Gabbay]{Shinwell:ICFP03}
M.~R. Shinwell, A.~M. Pitts, and M.~J. Gabbay.
\newblock {FreshML}: programming with binders made simple.
\newblock In \emph{8th International Conference on Functional Programming
  (ICFP'03)}, pages 263--274. ACM Press, 2003.

\bibitem[Slind and Norrish(2008)]{Slind:2008}
K.~Slind and M.~Norrish.
\newblock \emph{A Brief Overview of HOL4}, pages 28--32.
\newblock Springer Berlin Heidelberg, Berlin, Heidelberg, 2008.

\bibitem[Stansifer and Wand(2014)]{Stansifer:2014}
P.~Stansifer and M.~Wand.
\newblock Romeo: A system for more flexible binding-safe programming.
\newblock In \emph{Proceedings of the 19th {ACM} {SIGPLAN} International
  Conference on Functional Programming}, ICFP '14, pages 53--65, 2014.

\bibitem[{The Coq Developement Team}(2016)]{CoqManual}
{The Coq Developement Team}.
\newblock \emph{The Coq Proof Assistant Reference Manual v. 8.6.1}.
\newblock Institut National de Recherche en Informatique et en Automatique,
  2016.

\bibitem[Urban(2008)]{Urban:JAR08}
C.~Urban.
\newblock Nominal techniques in {I}sabelle/{HOL}.
\newblock \emph{Journal of Automated Reasoning}, 40\penalty0 (4):\penalty0
  327--356, 2008.

\bibitem[Washburn and Weirich(2008)]{WashburnWeirich:JFP06}
G.~Washburn and S.~Weirich.
\newblock Boxes go bananas: Encoding higher-order abstract syntax with
  parametric polymorphism.
\newblock \emph{Journal of Functional Programming}, 18\penalty0 (01):\penalty0
  87--140, 2008.

\bibitem[Watkins et~al.(2002)Watkins, Cervesato, Pfenning, and
  Walker]{Watkins02tr}
K.~Watkins, I.~Cervesato, F.~Pfenning, and D.~Walker.
\newblock A concurrent logical framework {I}: Judgments and properties.
\newblock Technical Report CMU-CS-02-101, Department of Computer Science,
  Carnegie Mellon University, 2002.

\bibitem[Weirich et~al.(2017)Weirich, Voizard, de~Amorim, and
  Eisenberg]{Weirich:2017}
S.~Weirich, A.~Voizard, P.~H.~A. de~Amorim, and R.~A. Eisenberg.
\newblock A specification for dependent types in {H}askell.
\newblock \emph{Proc. ACM Program. Lang.}, 1\penalty0 (ICFP):\penalty0
  31:1--31:29, Aug. 2017.
\newblock ISSN 2475-1421.

\bibitem[Westbrook et~al.(2011)Westbrook, Frisby, and Brauner]{Westbrook:2011}
E.~Westbrook, N.~Frisby, and P.~Brauner.
\newblock Hobbits for {H}askell: a library for higher-order encodings in
  functional programming languages.
\newblock In \emph{4th ACM Symposium on Haskell (Haskell'11)}, pages 35--46.
  ACM, 2011.

\bibitem[Xi(2004)]{Xi03:ATS}
H.~Xi.
\newblock Applied type system.
\newblock In \emph{TYPES 2003}, volume 3085 of \emph{Lecture Notes in Computer
  Science}, pages 394--408. Springer, 2004.

\bibitem[Xi(2007)]{Xi:JFP}
H.~Xi.
\newblock Dependent ml an approach to practical programming with dependent
  types.
\newblock \emph{Journal of Functional Programming}, 17:\penalty0 215--286, 3
  2007.

\bibitem[Xi and Pfenning(1999)]{XI99popl}
H.~Xi and F.~Pfenning.
\newblock Dependent types in practical programming.
\newblock In \emph{26th ACM SIGPLAN-SIGACT Symposium on Principles of
  Programming Languages (POPL'99)}, pages 214--227. ACM Press, 1999.

\bibitem[Xi et~al.(2003)Xi, Chen, and Chen]{Xi03:guarded}
H.~Xi, C.~Chen, and G.~Chen.
\newblock Guarded recursive datatype constructors.
\newblock In \emph{30th ACM SIGPLAN-SIGACT Symposium on Principles of
  Programming Languages (POPL '03)}, pages 224--235. ACM Press, 2003.

\bibitem[Zenger(1997)]{Zenger:TCS97}
C.~Zenger.
\newblock Indexed types.
\newblock \emph{Theoretical Computer Science}, 187\penalty0 (1-2):\penalty0
  147--165, 1997.

\bibitem[Ziliani et~al.(2015)Ziliani, Dreyer, Krishnaswami, Nanevski, and
  Vafeiadis]{ziliani:2015}
B.~Ziliani, D.~Dreyer, N.~R. Krishnaswami, A.~Nanevski, and V.~Vafeiadis.
\newblock Mtac: A monad for typed tactic programming in coq.
\newblock \emph{Journal of Functional Programming}, 25, 2015.

\end{thebibliography}


\end{document}